 
\documentclass[hidelinks,12pt,twoside,openany]{book}

\usepackage{amsmath,amssymb,amsthm,bm,bussproofs,caption,color,csquotes,enumitem,etoc,etoolbox,float,geometry,graphicx,hyperref,lipsum,listings,mathtools,mleftright,moreenum,multicol,physics,relsize,subcaption,tikz,thmtools,url,wrapfig,xcolor}
\usepackage[utf8]{inputenc}
\usepackage[english]{babel} 
\usepackage[pagestyles]{titlesec}
\usepackage[parfill]{parskip}
\usepackage[framemethod=TikZ]{mdframed}
\usepackage[style=alphabetic,backend=bibtex]{biblatex}
\usepackage[nodisplayskipstretch]{setspace}
\bibliography{ref}

\usetikzlibrary{calc,decorations.markings,shapes.callouts,arrows,intersections,cd,backgrounds}
\tikzcdset{every label/.append style = {font = \normalsize}}

\geometry{left=2.4cm, right=2.4cm, top=2.4cm, bottom=2.1cm}
\linespread{1.39}  

\DeclareFontEncoding{LS1}{}{}
\DeclareFontSubstitution{LS1}{stix}{m}{n}
\DeclareSymbolFont{stix_symbols}{LS1}{stixscr}{m}{n}
\SetSymbolFont{stix_symbols}{bold}{LS1}{stixscr}{b}{n}
\DeclareSymbolFontAlphabet{\mathscr}{stix_symbols}

\newcounter{theorem}[chapter] 
\setcounter{theorem}{0}
\renewcommand{\thetheorem}{\arabic{chapter}.\arabic{theorem}}
\newenvironment{theorem}[2][]{%
    \refstepcounter{theorem}%
    \ifstrempty{#1}%
    {
        \mdfsetup{%
            frametitle={%
                \tikz[baseline=(current bounding box.east),outer sep=0pt]
                \node[anchor=east,rectangle,fill=gray!20]
                {\strut Theorem~\thetheorem};
            }
        }
    }%
    {
        \mdfsetup{%
            frametitle={%
                \tikz[baseline=(current bounding box.east),outer sep=0pt]
                \node[anchor=east,rectangle,fill=gray!20]
                {\strut Theorem~\thetheorem.~#1};
            }
        }%
    }%
    \mdfsetup{
        innertopmargin=0pt,%
        linecolor=gray!20,%
        linewidth=2pt,%
        topline=true,%
        skipabove=15pt,%
        nobreak=true,%
        frametitleaboveskip=\dimexpr-\ht\strutbox\relax
    }
    \begin{mdframed}[]\relax%
    \ifstrempty{#2}{}{\label{#2}}
}{\end{mdframed}}

\newenvironment{lemma}[2][]{
    \refstepcounter{theorem}%
    \ifstrempty{#1}%
    {
        \mdfsetup{%
            frametitle={%
                \tikz[baseline=(current bounding box.east),outer sep=0pt]
                \node[anchor=east,rectangle,fill=gray!20]
                {\strut Lemma~\thetheorem};
            }
        }
    }%
    {
        \mdfsetup{%
            frametitle={%
                \tikz[baseline=(current bounding box.east),outer sep=0pt]
                \node[anchor=east,rectangle,fill=gray!20]
                {\strut Lemma~\thetheorem.~#1};
            }
        }%
    }%
    \mdfsetup{
        innertopmargin=0pt,
        linecolor=gray!20,
        linewidth=2pt,
        topline=true,
        skipabove=15pt,
        nobreak=true,
        frametitleaboveskip=\dimexpr-\ht\strutbox\relax
    }
    \begin{mdframed}[]\relax%
    \ifstrempty{#2}{}{\label{#2}}
}{\end{mdframed}}


\newenvironment{proposition}[2][]{%
    \refstepcounter{theorem}%
    \ifstrempty{#1}%
    {
        \mdfsetup{%
            frametitle={%
                \tikz[baseline=(current bounding box.east),outer sep=0pt]
                \node[anchor=east,rectangle,fill=gray!20]
                {\strut Proposition~\thetheorem};
            }
        }
    }%
    {
        \mdfsetup{%
            frametitle={%
                \tikz[baseline=(current bounding box.east),outer sep=0pt]
                \node[anchor=east,rectangle,fill=gray!20]
                {\strut Proposition~\thetheorem.~#1};
            }
        }%
    }%
    \mdfsetup{
        innertopmargin=0pt,
        linecolor=gray!20,
        linewidth=1.5pt,
        topline=true,
        skipabove=15pt,
        nobreak=true,
        frametitleaboveskip=\dimexpr-\ht\strutbox\relax
    }
    \begin{mdframed}[]\relax%
    \ifstrempty{#2}{}{\label{#2}}
}{\end{mdframed}}

\newenvironment{corollary}[2][]{%
    \refstepcounter{theorem}%
    \ifstrempty{#1}%
    {
        \mdfsetup{%
            frametitle={%
                \tikz[baseline=(current bounding box.east),outer sep=0pt]
                \node[anchor=east,rectangle,fill=gray!20]
                {\strut Corollary~\thetheorem};
            }
        }
    }%
    {
        \mdfsetup{%
            frametitle={%
                \tikz[baseline=(current bounding box.east),outer sep=0pt]
                \node[anchor=east,rectangle,fill=gray!20]
                {\strut Corollary~\thetheorem.~#1};
            }
        }%
    }%
    \mdfsetup{
        innertopmargin=0pt,
        linecolor=gray!20,
        linewidth=1.5pt,
        topline=true,
        skipabove=15pt,
        nobreak=true,
        frametitleaboveskip=\dimexpr-\ht\strutbox\relax
    }
    \begin{mdframed}[]\relax%
    \ifstrempty{#2}{}{\label{#2}}
}{\end{mdframed}}


\theoremstyle{definition}
\newtheorem*{remark}{Remark}
\newtheorem*{convention}{Convention}
\newtheorem{definition}[theorem]{Definition}
\newtheorem{example}[theorem]{Example}


\newcommand{\subtitle}[1]{%
  \posttitle{%
    \par\end{center}
    \begin{center}\large#1\end{center}
    \vskip0.5em}%
}

\begingroup
  \makeatletter
  \@for\theoremstyle:=theorem,corollary,lemma,definition,remark,plain\do{%
    \expandafter\g@addto@macro\csname th@\theoremstyle\endcsname{%
      \addtolength\thm@preskip\parskip
    }}
  \makeatother
\endgroup

\makeatletter
\renewenvironment{proof}[1][\proofname]{\par
\vspace{-\topsep}
\pushQED{\qed}%
\normalfont
\topsep0pt \partopsep0pt 
\list{}{
    \settowidth{\leftmargin}{\itshape\proofname:\hskip\labelsep}%
    \setlength{\labelwidth}{0pt}%
    \setlength{\itemindent}{-\leftmargin}%
}
\item[\hskip\labelsep
        \itshape
    #1\@addpunct{.}]\ignorespaces
}{%
\popQED\endlist\@endpefalse
\addvspace{6pt plus 6pt} 
}
\makeatother

\makeatletter
\def\mathcolor#1#{\@mathcolor{#1}}
\def\@mathcolor#1#2#3{%
  \protect\leavevmode
  \begingroup
    \color#1{#2}#3%
  \endgroup
}
\makeatother

\makeatletter
\def\@makechapterhead#1{%
  {\parindent \z@ \raggedright \normalfont
    \ifnum \c@secnumdepth >\m@ne
        \huge\bfseries \@chapapp\space \thechapter
        \par\nobreak
        \vskip 10\p@
    \fi
    \interlinepenalty\@M
    \Huge \bfseries #1\par\nobreak
    \vskip 20\p@
  }}
\def\@makeschapterhead#1{%
  {\parindent \z@ \raggedright
    \normalfont
    \interlinepenalty\@M
    \Huge \bfseries  #1\par\nobreak
    \vskip 20\p@
  }}
\makeatother

\allowdisplaybreaks

\apptocmd{\thebibliography}{\csname phantomchapter\endcsname\addcontentsline{toc}{chapter}{Bibliography}}{}{}


\newcommand{\C}{\ensuremath{\mathbb{C}}}
\newcommand{\R}{\ensuremath{\mathbb{R}}}

\newcommand{\Z}{\ensuremath{\mathbb{Z}}}
\newcommand{\N}{\ensuremath{\mathbb{N}}}
\renewcommand{\d}{\ensuremath{\mathrm{d}}}
\newcommand{\id}{\ensuremath{\operatorname{id}}}
\renewcommand{\i}{\ensuremath{\mathrm{i}}}
\newcommand{\e}{\ensuremath{\operatorname{e}}}
\newcommand{\pa}{\ensuremath{\partial}}
\newcommand{\img}{\ensuremath{\operatorname{im}}}
\renewcommand{\ker}{\ensuremath{\operatorname{ker}}}
\newcommand{\lag}{\ensuremath{\mathcal{L}}}
\newcommand{\ord}[1]{\left\langle #1 \right\rangle}

\newcommand{\RP}{\mathbb{RP}}

\newcommand{\Hom}{\operatorname{Hom}}

\renewcommand{\geq}{\geqslant}
\renewcommand{\leq}{\leqslant}

\renewcommand{\bar}[1]{\overline{#1}}
\renewcommand{\tilde}[1]{\widetilde{#1}}
\renewcommand{\hat}[1]{\widehat{#1}}

\newcommand{\formpow}[1]{{\left[ \!{\left[ #1 \right]}\! \right]}}
\newcommand{\cotb}{\mathrm T^*}
\newcommand{\tanb}{\mathrm T}
\newcommand{\smooth}[1]{\mathrm C^\infty{\left( #1 \right)}}
\newcommand{\qweyl}{Q_{\mathrm{W}}}
\newcommand{\frg}{\mathfrak{g}}
\newcommand{\frh}{\mathfrak{h}}
\newcommand{\mm}{\mathfrak{m}}
\newcommand{\dpoly}[1][]{\mathscr{D}^{#1}_{\mathsf{poly}}(M)}
\newcommand{\dpolyrd}[1][]{\mathscr{D}^{#1}_{\mathsf{poly}}(\R^d)}
\newcommand{\tpoly}[1][]{\mathscr{T}^{#1}_{\mathsf{poly}}(M)}
\newcommand{\tpolyrd}[1][]{\mathscr{T}^{#1}_{\mathsf{poly}}(\R^d)}
\newcommand{\linf}{\mathrm{L}_\infty}
\newcommand{\dec}{\text{d\'ec}}

\begin{document}

\thispagestyle{empty}
    \newgeometry{left=2.5cm, right=2.5cm, top=3cm, bottom=3cm}
  \begin{center}
    \doublespacing
    \fontencoding{T1}\fontfamily{put}\selectfont
        { \Huge {\bfseries {Deformation Quantisation \\ via Kontsevich Formality Theorem}} \par}
    {\large \vspace*{25mm} 
    {{\Large \textbf{Peize Liu}} \par}{St.\ Peter's College \\ University of Oxford \par} 
    \vspace*{15mm}}
    {\large {\includegraphics[width=35mm]{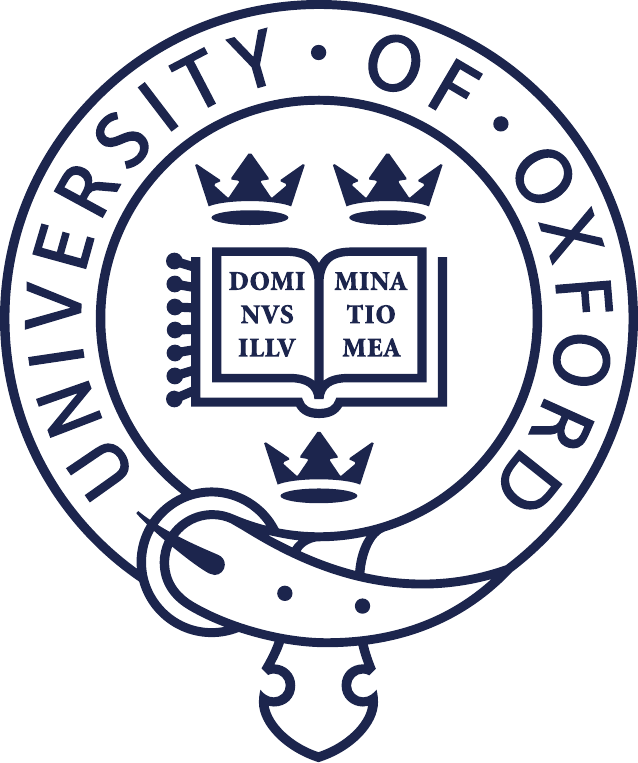} \par}
    \vspace*{10mm}
        {{A dissertation submitted in partial fulfilment \\ of the requirements for the degree of} \par}
        {\it {Master of Mathematical and Theoretical Physics} \par}
    \vspace*{2ex}
        {Trinity 2022}}
  \end{center}
  \restoregeometry

\newpage
\thispagestyle{plain}
\pagenumbering{roman}
\setcounter{page}{1} 
\begin{quote}
    \begin{center}
        \textbf{Abstract}
    \end{center}
    
    This dissertation is an exposition of Kontsevich's proof of the formality theorem and the classification of deformation quantisation on a Poisson manifold. We begin with an account of the physical background and introduce the Weyl--Moyal product as the first example. 
    Then we develop the deformation theory via differential graded Lie algebras and $\mathrm{L}_\infty$-algebras, which allows us to reformulate the classification of deformation quantisation as the existence of a $\mathrm{L}_\infty$-quasi-isomorphism between two differential graded Lie algebras, known as the formality theorem. Next we present Kontsevich's proof of the formality theorem in $\R^d$ and his construction of the star product. We conclude with a brief discussion of the globalisation of Kontsevich star product on Poisson manifolds.
\end{quote}

\vspace*{2em}
\begin{quote}
    \begin{center}
        \textbf{Acknowledgement}
    \end{center}

    I am grateful to my supervisor Prof.\ Christopher Beem for his guidance in the project. I would like to thank Haiqi Wu, Shuwei Wang, and Dekun Song for the constructive discussions. I would also like to thank Shuwei Wang for his kind assistance in \LaTeX\ related problems, without whom my dissertation would not be presented in such a clean manner. Finally, I would like to express my sincere appreciation to my college tutors, Prof.\ Bal\'{a}zs Szendr\H{o}i and Prof.\ Lionel Mason, who have provided invaluable support for my academic development along my undergraduate life.
\end{quote}

\vspace*{2em}
\begin{quote}
    \begin{center}
        \textbf{Declaration of Authorship}
    \end{center}

    I hereby declare that the dissertation I am submitting is entirely my own work except where otherwise indicated. It has not been submitted, either wholly or substantially, for another Honour School or degree of this University, or for a degree at any other institution. I have clearly signalled the presence of quoted or paraphrased material and referenced all sources. I have not copied from the work of any other candidate.
\end{quote}

\newpage
    \tableofcontents
\thispagestyle{plain}

\newpage
\pagenumbering{arabic}
\setcounter{page}{1}
\raggedbottom

\chapter{Introduction} 

Ever since the discovery of quantum theory in the early twentieth century, the problem of finding a correspondence between classical systems and quantum systems has been a central theme of theoretical physics. The defining features of quantum mechanics include the probabilistic interpretations of measurements and the non-commutativity of multiplications of physical observables. Proposed by Bayen \emph{et.\ al.\ }in \cite{bayen78I} and \cite{bayen78II}, deformation quantisation originates from an attempt of understanding quantisation ``\emph{as a deformation of the structure of the classical observable, rather than as a radical change in the natural of the observables.}'' In particular, the Hilbert space formalism of quantum mechanics is replaced by a deformation of classical mechanics on the phase space, which is known as a star product. In a mathematical formulation, deformation quantisation attempts to classify the star products on an associative algebra which recovers the commutative multiplication and the Poisson algebra in the formal classical limit $\hbar \to 0$. 

The Weyl--Moyal product proposed by Groenewold (\cite{groenewold46}) is well-known as the first example of deformation quantisation on the flat phase space. In 1983, De Wilde and Lecomte proved the existence of star products first on the cotangent bundle of a smooth manifold (\cite{lecomte83cotan}) and later on a symplectic manifold (\cite{lecomte83symp}). Independently, Fedosov (\cite{fedosov94}) proved the existence on a symplectic manifold in 1985 by constructing a flat connection of the Weyl bundle, which globalises the local Weyl--Moyal product in a coordinate-free manner. The classification up to equivalence on symplectic manifolds were studied by Deligne (\cite{deligne95}), Gutt (\cite{gut97equiv}, \cite{gutt99}), Weinstein and Xu (\cite{weinstein97}) \emph{et.\ al.} In contrast to the case of symplectic manifolds, where the Poisson bracket takes constant coefficients locally in the Darboux coordinates, deformation quantisation problems in the more general case of Poisson manifolds have greater complexity, since the local expression of Poisson bracket is arbitrary. 

In 1997, Kontsevich (\cite{kon97conjecture}) reformulated the problem as the existence of an $\linf$-quasi-isomorphism between two differential graded Lie algebras, which is known as the \emph{formality conjecture}. He proved the conjecture in the ground-breaking paper \cite{kon97}, in which he provided an explicit formula for the star product in $\R^d$. The work is among one of his four accomplishments in geometry (\cite{konwork}) for which he won the 1998 Fields Medal Prize. The proof of globalisation in an arbitrary smooth manifold was sketched in \cite{kon97} and was made precise by Cattaneo, Felder \& Tomassini in \cite{cattaneo02global} and also by Kontsevich in his subsequent work \cite{kontsevich01alg}. Kontsevich's formality theorem has inspired numerous development in symplectic geometry, algebraic geometry, and quantum field theory. 
\begin{itemize}
    \item The underlying idea of Kontsevich's construction comes from string theory. In the paper \cite{cattaneo00string} by Cattaneo \& Felder, Kontsevich's formula for the star product is interpreted as the perturbative expansion of the functional integral of Poisson sigma model, thus providing a Feynman path integral quantisation explanation to a construction arising in the context of canonical quantisation. 
    \item In \cite{tamarkin98}, Tamarkin provides another proof of the formality theorem in a pure algebraic setting using the theory of operads. He proved the formality of the differential graded Lie algebra associated to a finite dimensional vector space over a field of characteristic $0$. His approach is surveyed in \cite{kontsevich99op} and \cite{hinich03}.
    \item Deformation quantisation of smooth algebraic varieties over a field of characteristic $0$ is studied in \cite{kontsevich01alg} and \cite{yekutieli05}.
\end{itemize}

\subsubsection{Plan of the Dissertation}

In this dissertation, we will review Kontsevich's proof of formality theorem and the classification of deformation quantisation on a Poisson manifold. We will develop the deformation theory of differential graded Lie algebras and $\linf$-algebras required for the proof. The organisation of this dissertation is outlined as follows.

Chapter 1 is a general introduction to the problem of deformation quantisation and its physical background. In Section \ref{sec:quantisation}, we review the basics of classical and quantum mechanics and discuss the attempts of finding a categorical correspondence between the classical theory and the quantum theory. We sketch the proof of Groenewold no-go theorem which demonstrates the non-existence of a strict quantisation. In Section \ref{sec:star}, we introduce the theory of deformation quantisation and define the star products. Then we generalise the Weyl ordering and study the Weyl--Moyal product associated to the standard symplectic structure of $\R^{2n}$ as the first example of a star product in a flat space. 

Chapter 2 is a comprehensive review of deformation theory. In Section \ref{sec:defo_asso}, we generalise the star product to an algebraic setting of deformation of an associative algebra. Then we build up an associated differential graded Lie algebra (DGLA), whose general theory is studied in Section \ref{sec:DGLA}. We examine how the solutions of Maurer--Cartan equation of a DGLA control the deformation problem. In Section \ref{sec:poly_DGLA}, we specify again to the differential setting and construct two differential graded Lie algebras, $\tpoly$ and $\dpoly$, which correspond to the deformation of usual multiplication and that of Poisson bracket respectively. We prove that the Hochschild--Kostant--Rosenberg (HKR) map is a quasi-isomorphism of differential graded vector space between these DGLAs. The failure of the HKR map being a morphism of the Lie brackets suggests the necessity of a larger category then DGLA, for which we introduce the $\linf$-algebras in Section \ref{sec:L_infty}. Section \ref{sec:L_infty} builds up to the central result in deformation theory that homotopy equivalent DGLAs induce isomorphic deformation functors. Then we state the formality theorem, which claims that $\tpoly$ and $\dpoly$ are homotopy equivalent. This result allows us to classify deformation quantisations on any smooth manifold.

Chapter 3 explores the proof of the formality theorem due to Kontsevich (\cite{kon97}), Section \ref{sec:Kont_con} explains Kontsevich's construction of the $\linf$-quasi-isomorphism from $\tpolyrd$ to $\dpolyrd$. In particular an explicit star product is presented in $\R^d$ using admissible graphs and weight integrals over compactified configuration spaces. A sketch of the proof is given in Section \ref{sec:proof_formality}. With Stokes' theorem, the central problem reduces to the analysis of weight integral over two types of boundary strata of the configuration spaces. In Section 3.3, we outline the globalisation of Kontsevich's star product on a general Poisson manifold.

The major references of this dissertation are Kontsevich's pioneering paper \cite{kon97}, the expositions by Gutt (\cite{gutt01notes}), Keller (\cite{keller03}), and Cattaneo (\cite{cattaneo04notes}), and the excellent paper \cite{arnal02} by Arnal, Manchon \& Masmoudi which addresses the problem of signs in \cite{kon97}. For the general introduction to deformation quantisation, we also consult \cite{bordemann08} and \cite{weinstein95}. For the deformation theory of differential graded Lie algebras and $\linf$-algebras, we refer to \cite{doubek07}, \cite{manetti05}, \cite{jurvco19} and \cite{Fukaya01}.

In this dissertation we assume the results in algebra and geometry at the undergraduate level. Especially, we shall use differential geometry, homological algebra, and Lie algebras extensively, with the standard references \cite{chern99}, \cite{weibel}, and \cite{humphreys}. It is also helpful to have some knowledge in classical and quantum mechanics for understanding the material. The most relevant references are \cite{hall} and \cite{takhtajan}.

\section{Quantisation: From Classical to Quantum}\label{sec:quantisation}

In this section we review the physical backgrounds and motivations of quantisation. We also define all the geometric objects involved in the picture.

\subsection{Classical Mechanics}

In classical mechanics, a dynamical system is described by a smooth manifold $M$, whose dimension corresponds to the degree of freedom of that system. In the Hamiltonian formalism, the dynamics of the system is governed by the Hamilton's equations on the \textbf{phase space}, which is the cotangent bundle $\cotb M$ of $M$:
\begin{center}$
    \dfrac{\d q^i}{\d t} = \dfrac{\pa H}{\pa p_i}, \qquad \dfrac{\d p_i}{\d t} = -\dfrac{\pa H}{\pa q^i}, \qquad i = 1,...,n,
$\end{center}
where $(q^1,...,q^n,p_1,...,p_n)$ is a set of local coordinates on $\cotb M$ and $H\colon \cotb M \to \R$ is the Hamiltonian function. The cotangent bundle has a canonical symplectic form $\omega$, given in local coordinates by $\omega = \displaystyle\sum_{i=1}^{\dim M} \d q^i \wedge \d p_i$. Hamiltonian mechanics can be generalised on the setting of a symplectic manifold $(N,\omega)$ without difficulty.

\begin{definition}
    A \textbf{symplectic manifold} $(N,\omega)$ is a smooth manifold $N$ with a closed non-degenerate 2-form $\omega$, called the symplectic form. A diffeomorphism between two symplectic manifolds that preserve the symplectic form is called a symplectomorphism.
\end{definition}

An important result in symplectic geometry in the following:

\begin{lemma}[Darboux's Theorem]{}
    A $2n$-dimensional symplectic manifold is locally symplectomorphic to $(\R^{2n},\omega)$, where $\omega = \displaystyle\sum_{i=1}^n \d q^i \wedge \d p_i$ is the standard symplectic form on $\R^{2n}$.
\end{lemma}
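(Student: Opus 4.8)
The plan is to prove Darboux's theorem by the Moser trick, constructing the symplectomorphism as the time-one flow of a time-dependent vector field interpolating between the given form and the standard one.

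First I would reduce to a linear-algebra statement at a point. Fix $p \in N$. The tangent space $\tanb_p N$ carries the nondegenerate alternating form $\omega_p$, and by the standard classification of symplectic forms on a vector space there is a basis in which $\omega_p$ takes the standard shape $\sum_{i=1}^n \d q^i \wedge \d p_i$. Using such a basis to build a coordinate chart, I may assume from the outset that $N$ is an open neighbourhood of $0$ in $\R^{2n}$, that $\omega$ is a closed nondegenerate $2$-form on it, and that $\omega_0 = \omega_{\mathrm{std},0}$, where $\omega_{\mathrm{std}} = \sum_i \d q^i \wedge \d p_i$. The goal becomes: find a diffeomorphism $\varphi$ of a smaller neighbourhood of $0$ with $\varphi^* \omega_{\mathrm{std}} = \omega$ and $\varphi(0)=0$.

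Next I would set up the interpolation. Put $\omega_t = \omega + t(\omega_{\mathrm{std}} - \omega) = (1-t)\omega + t\,\omega_{\mathrm{std}}$ for $t \in [0,1]$. At $t=0$ this is $\omega$, at $t=1$ it is $\omega_{\mathrm{std}}$, and at the point $0$ every $\omega_t$ equals $\omega_{\mathrm{std},0}$, hence is nondegenerate there; by openness of nondegeneracy, after shrinking the neighbourhood each $\omega_t$ is nondegenerate on it for all $t \in [0,1]$ simultaneously. Each $\omega_t$ is closed. Since $\sigma := \omega_{\mathrm{std}} - \omega$ is closed and vanishes nowhere-obstructedly — more precisely, since the neighbourhood is contractible — the Poincar\'e lemma gives a $1$-form $\alpha$ with $\d\alpha = \sigma$, and by subtracting a constant-coefficient primitive I may also arrange $\alpha_0 = 0$. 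Then I define the time-dependent vector field $X_t$ by the equation $\iota_{X_t}\omega_t = -\alpha$, which has a unique smooth solution because $\omega_t$ is nondegenerate; note $X_t(0) = 0$. Let $\psi_t$ be its flow; because $X_t$ vanishes at $0$, the flow fixes $0$ and is defined on a possibly smaller neighbourhood of $0$ for all $t \in [0,1]$. The Cartan-formula computation
\[
\frac{\d}{\d t}\big(\psi_t^* \omega_t\big) = \psi_t^*\!\left( \mathcal{L}_{X_t}\omega_t + \frac{\d\omega_t}{\d t} \right) = \psi_t^*\!\left( \d\,\iota_{X_t}\omega_t + \iota_{X_t}\,\d\omega_t + \sigma \right) = \psi_t^*\big( -\d\alpha + 0 + \sigma \big) = 0
\]
shows $\psi_t^*\omega_t$ is constant in $t$, so $\psi_1^*\omega_{\mathrm{std}} = \psi_0^*\omega_0 = \omega$. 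Taking $\varphi = \psi_1$ and restricting to the neighbourhood where it is defined and a diffeomorphism onto its image finishes the proof, and composing with the initial linear change of coordinates produces the claimed Darboux chart.

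The main obstacle — really the only delicate point — is the book-keeping of domains: one must shrink the neighbourhood of $0$ several times (to keep all $\omega_t$ nondegenerate, to apply the Poincar\'e lemma on a contractible set, and to ensure the flow $\psi_t$ exists up to $t=1$), and one must verify that all these shrinkings can be done uniformly in $t \in [0,1]$ using compactness of the interval. The fact that $X_t$ vanishes at the base point is what makes the flow well-behaved near $0$; everything else is the routine Cartan-calculus identity displayed above.
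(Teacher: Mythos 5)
Your proof is correct: this is the standard Moser-trick argument, and the steps (linear normalisation at a point, the convex interpolation $\omega_t$, the Poincar\'e lemma primitive normalised to vanish at the base point so that $X_t(0)=0$, and the Cartan-formula computation showing $\psi_t^*\omega_t$ is constant) are all in order. Note, however, that the paper states Darboux's theorem purely as background and gives no proof of its own, so there is nothing to compare against; your write-up simply supplies the standard argument that the paper implicitly defers to the literature.
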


The symplectic form $\omega$ induces the symplectic involution $J\colon \cotb N \to \tanb N$ via $\omega(X,J(\xi)) = \xi(X)$ for $\xi \in \cotb N$ and $X \in \tanb N$, which is a linear isomorphism. For a smooth function $f \in \smooth{N}$, the Hamiltonian vector field associated to $f$ is defined by $X_f := J(\d f)$. For two smooth functions $f$ and $g$, we can define the \textbf{Poisson bracket} $\left\{ -,- \right\}\colon \smooth{N}\times \smooth{N} \to \smooth{N}$:
\[\left\{ f,g \right\} := \omega(X_f,X_g) = X_f(g).  \]
The Lie bracket of the vector fields and the Poisson bracket are related by
\begin{center}$
    [X_f, X_g] = X_{\left\{ f,g \right\}}.
$\end{center}
Alternatively, the Poisson bracket defines a skew-symmetric bivector field $\pi \in \Gamma(N,\bigwedge^2 \tanb N)$ such that the bracket is given by the pairing $\left\{ f,g \right\} = \pi(\d f \wedge \d g)$. In local coordinates, the corresponding matrices satisfy $\pi^{ij} = (\omega^{-1})_{ij}$. Darboux's theorem demonstrates that it is always possible to find a chart such that the Poisson bivector field $\pi$ has constant coefficients in the local coordinates. Furthermore, the coordinates can be chosen such that 
\begin{equation}
    \left\{ q^i, q^j \right\} = 0, \qquad \left\{ p_i, p_j \right\} = 0, \qquad \left\{ q^i, p_j \right\} = \delta^i_j, \qquad i,j=1,...,n.
    \label{equ:I_canon_com_cm}
\end{equation}
In these coordinates the Poisson bracket of two functions $f,g$ is given by 
\[ \left\{ f,g \right\} = \displaystyle\sum_{i=1}^{\dim M}\left(\dfrac{\pa f}{\pa q^i}\dfrac{\pa g}{\pa p_i} - \dfrac{\pa g}{\pa q^i}\dfrac{\pa f}{\pa p_i}\right). \]
In terms of the Poisson bracket, Hamilton's equations show that the time evolution of any smooth function $f(t)\colon N \times \R \to N$ is given by
\begin{equation}
    \dfrac{\d f}{\d t} = \left\{ f,H \right\}. \label{equ:I_time_ev_cm}
\end{equation}

It is straightforward to prove that the Poisson bracket satisfies the conditions for a Lie bracket on $\smooth{N}$. Moreover it is a derivation in both of its arguments. We extract these properties to make the following definition:

\begin{definition}
    Let $A$ be an associative algebra over a field $k$. A Poisson bracket $\left\{ -,- \right\}$ on $A$ is a $k$-bilinear map, satisfying:
    \begin{itemize}[nosep] 
        \item Skew-symmetry: $\left\{ u,v \right\} = -\left\{ v,u \right\}$;
        \item Jacobi identity: $\left\{ u, \left\{ v,w \right\} \right\} + \left\{ v, \left\{ w,u \right\} \right\} = \left\{ w, \left\{ u,v \right\} \right\} = 0$;
        \item Leibniz rule: $\left\{ uv, w \right\} = u\left\{ v,w \right\} + \left\{ u,w \right\}v$.
    \end{itemize}
    where $u,v,w \in A$. An algebra $A$ equipped with a Poisson bracket is called a \textbf{Poisson algebra}. If $A = \smooth{M}$ is the algebra of smooth functions on a smooth manifold $M$, then $M$ is called a \textbf{Poisson manifold}.
\end{definition}
\begin{remark}
    A symplectic manifold must be even-dimensional due to the non-degeneracy of the symplectic form. On the other hand, any smooth manifold can be a Poisson manifold, whose Poisson bracket is allowed to be degenerate.
\end{remark}

\subsection{Quantum Mechanics}

\newcommand{\hilb}{\mathcal{H}}
For convenience we consider the flat phase space $N = \R^{2n}$ in this part. According to the Dirac--von Neumann axiomatisation of quantum mechanics, a quantum mechanical system is described by a separable complex Hilbert space $\hilb$ with a self-adjoint operator $\hat H$ acting on $\hilb$, which is called the Hamiltonian. Every physical observable $f$ is associated with a self-adjoint operator $\hat f$ on a Hilbert space $\hilb$. The spectrum of $\hat f$ is the set of possible outcomes of measuring $\hat f$. In the Heisenberg picture, the time evolution of the operator $\hat f$ is controlled by the Hamiltonian:
\begin{equation}
    \dfrac{\d \hat f}{\d t} = \dfrac{1}{\i\hbar}[\hat f, \hat H] \label{equ:I_time_ev_qm}
\end{equation}
where $\hbar$ is the \textbf{reduced Planck constant} and $[-,-]$ is the commutator. The coordinates $q^i$ and $p_i$ of the phase space $\R^{2n}$ are upgraded to self-operators $\hat q^i$ and $\hat p_i$ which satisfy the canonical commutation relation:
\begin{equation}
    \left[ \hat q^i , \hat q^j \right] = 0, \qquad \left[ \hat p_i, \hat p_j \right] = 0, \qquad \left[ \hat q^i, \hat p_j \right] = \i\hbar \delta^i_j\cdot \id_{\hilb}, \qquad i,j = 1,...,n. \label{equ:I_canon_com_qm}
\end{equation}
By Stone--von Neumann theorem, there is an irreducible unitary representation of $\hilb$ on $\mathrm L^2(\R^n)$ which is unique up to isomorphism such that 
\begin{equation}
    \hat q^i\colon \psi(x) \longmapsto x_i\psi(x), \qquad \hat p_i \colon \psi(x) \longmapsto -\i\hbar\dfrac{\pa\psi}{\pa x_i}. \label{equ:I_hat_qp}
\end{equation}
which is often referred as the position representation in physics. Comparing  \eqref{equ:I_canon_com_cm}, \eqref{equ:I_time_ev_cm} with \eqref{equ:I_time_ev_qm}, \eqref{equ:I_canon_com_qm}, we recognise that a classical system is quantised to a quantum system under the naive correspondence
\begin{center}
    \begin{tikzcd}[row sep = 0em]
        {\smooth{\R^{2n}}} && {\{\text{self-adjoint operators on }\mathrm L^2(\R^n)\}} \\
        \left\{ -,- \right\} && \dfrac{1}{\i\hbar}[-,-]
        \arrow["Q", shift left=1, from=1-1, to=1-3]
        \arrow["{\hbar \to 0}", shift left=1, from=1-3, to=1-1]
        \arrow[from=2-1, to=2-3]
    \end{tikzcd}
\end{center}
which is called \textbf{canonical quantisation}. In particular, one may postulate that there exists a $\R$-linear quantisation map $Q$ which ``canonically quantises'' the coordinates while mapping the Poisson brackets to commutators: $[Q(f),Q(g)] = \i\hbar Q{\left( \left\{ f,g \right\} \right)}$ for all $f,g \in \smooth{\R^{2n}}$. In 1946, Groenewold proved in his thesis \cite{groenewold46} that such quantisation map does not exist. We outline the idea of the proof following \cite{hall}.

\begin{theorem}[Groenewold's No-Go Theorem]{thm:I_Groenewold}
    Let $P_{\leq m}$ be the set of polynomials in $\R[q^1,...,q^n;p_1,...,p_n]$ with $\deg \leq m$. We can identify $P_{\leq m}$ as a subset of $\smooth{\R^{2n}}$. On the other hand, let $\mathscr D(\R^n)$ be the space of differential operators with coefficients in $\R[x_1,...,x_n]$. Consider $\hat q^i, \hat p_i \in \mathscr D(\R^n)$ as given in \eqref{equ:I_hat_qp}. There is no linear map $Q\colon P_{\leq 4} \to \mathscr D(\R^n)$ such that
    \begin{itemize}[nosep]
        \item $Q(1) = \id$;
        \item $Q(q^i) = \hat q^i$ and $Q(p_i) = \hat p_i$ for all $i = 1,...,n$;
        \item $[Q(f),Q(g)] = \i\hbar Q(\left\{ f,g \right\})$ for all $f,g \in P_{\leq 3}$.
    \end{itemize}
\end{theorem}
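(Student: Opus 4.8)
The plan is to derive a contradiction by computing the image under $Q$ of a single degree-4 polynomial in two essentially different ways, exploiting that the Poisson bracket gives multiple factorizations of the same classical observable. For clarity I would work with $n=1$ (the general case follows by restricting to one conjugate pair $q^i, p_i$); write $q, p$ for the classical coordinates and $\hat q, \hat p$ for the operators in \eqref{equ:I_hat_qp}, which satisfy $[\hat q,\hat p] = \i\hbar$.

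First I would fix the quantization of quadratic observables. Using $\{q^2, p^2\} = 4qp$ together with the hypothesis that $Q$ intertwines brackets and commutators on $P_{\leq 3}$, and using $Q(q^2), Q(p^2)$ together with $Q(q),Q(p)$ already pinned down on generators, one shows by a short computation with the canonical commutation relations that $Q(q^2) = \hat q^2$, $Q(p^2)=\hat p^2$, and $Q(qp) = \tfrac12(\hat q\hat p + \hat p\hat q)$ (the symmetric, Weyl-ordered expression). The key point is that the bracket relations among $q^2, p^2, qp, q, p$ are rigid enough — together with the normalization $Q(1)=\id$ — to leave no freedom; any putative correction term would have to commute with both $\hat q$ and $\hat p$, hence be a scalar, and the scalar is then fixed by a bracket identity. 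Next I would push this one degree higher: from $\{q^2, qp\} = 2q^2 \cdot$ wait, rather from identities like $\{q^3, p\} = 3q^2$ and $\{q^2 p, p\} = 2qp$ one determines $Q(q^3) = \hat q^3$, $Q(q^2 p)$, $Q(q p^2)$, $Q(p^3)$ to be the corresponding Weyl-symmetrized operators, again because the ambiguity at each stage is a polynomial in $\hat q,\hat p$ of lower degree that is forced to vanish by commutator constraints.

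The heart of the argument is then the following: consider the classical identity
\[
\tfrac{1}{3}\{q^3, p^3\} = \tfrac{1}{3}\cdot 3 q^2 p^2 \cdot (\text{from }\{q^3,p^3\}=9q^2p^2) = 3 q^2 p^2,
\]
and also $\{q^2 p, q p^2\} = $ (a short Poisson computation) $= 3 q^2 p^2$ as well. Thus the degree-4 observable $q^2 p^2$ arises as a Poisson bracket of two cubics in two genuinely different ways. Applying $Q$ and the bracket–commutator hypothesis (valid since the cubics lie in $P_{\leq 3}$), we get
\[
3\,\i\hbar\, Q(q^2 p^2) = \tfrac{1}{3}[Q(q^3), Q(p^3)] = [Q(q^2 p), Q(q p^2)],
\]
so $[\hat q^3, \hat p^3]$ (up to the factor) must equal $[\,(q^2p)_{\mathrm W}, (qp^2)_{\mathrm W}\,]$ as operators in $\mathscr D(\R)$, where $(\cdot)_{\mathrm W}$ denotes the Weyl-ordered operator determined in the previous step. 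Computing both commutators explicitly using $[\hat q,\hat p]=\i\hbar$ — these are finite, elementary manipulations with polynomials in $\hat q, \hat p$ — yields two operators that differ by a nonzero multiple of $\hbar^3$ (the classic discrepancy is a term proportional to $\hbar^2$ relative to the leading $\i\hbar$ factor). This contradiction shows no such $Q$ exists on $P_{\leq 4}$.

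The main obstacle I anticipate is bookkeeping rather than conceptual: one must carefully verify that all the Poisson brackets invoked land inside $P_{\leq 3}$ so that the hypothesis applies (the brackets $\{q^3,p^3\}$, $\{q^2p, qp^2\}$ themselves are degree 4, but they are not what we feed into the hypothesis — we feed the cubic \emph{arguments}), and one must correctly propagate the Weyl-ordering formulas through to degree 3 without sign or combinatorial errors, since the final contradiction is a precise numerical mismatch and a dropped factor of $2$ would destroy it. A secondary subtlety is confirming the rigidity claims at each stage — i.e.\ that the only operator commuting with both $\hat q$ and $\hat p$ in $\mathscr D(\R^n)$ is a scalar — which follows because such an operator must have constant coefficients and commute with all $\pa/\pa x_i$, forcing degree $0$.
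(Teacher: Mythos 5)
Your proposal is correct and follows essentially the same route as the paper's proof: pin down $Q$ on $P_{\leq 3}$ as the Weyl-ordered quantisation (the paper does this by comparison with $\qweyl$, you by a rigidity argument that any correction commutes with $\hat q,\hat p$ and is then killed by a further bracket identity), and then exploit the two factorisations of $q^2p^2$ as a Poisson bracket of cubics. Note that your identity $\left\{ q^3,p^3 \right\} = 3\left\{ q^2p, qp^2 \right\}$ is literally the paper's identity $\left\{ q^3,p^3 \right\} = \tfrac{1}{12}\left\{ \left\{ q^3,p^2 \right\},\left\{ q^2,p^3 \right\} \right\}$ after evaluating the inner brackets, so the final contradiction is the same computation.
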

\begin{proof}
    Let $\qweyl$ be the Weyl quantisation map (\emph{cf.\ }Definition \ref{def:I_qweyl}). It can be proven that $[\qweyl(f),\qweyl(g)] = \i\hbar \qweyl(\left\{ f,g \right\})$ for all $f \in P_{\leq 2}$ and $g \in \R[q^1,...,q^n;p_1,...,p_n]$. Moreover, $\qweyl$ commutes with powers:
    \begin{center}$
        \qweyl((q^i)^m) = (\hat q^i)^m, \qquad \qweyl(p_i^m) = (\hat p_i)^m. 
    $\end{center}
    The next step is to show that if $Q$ satisfies the conditions in the Groenewold's Theorem, then $Q = \qweyl$ on $P_{\leq 3}$. Finally we consider $n=1$. Let $q := q^1$ and $p := p_1$. Suppose that the required map $Q$ exists. Note that we have the following identity of Poisson brackets:
    \begin{center}$
        \left\{ q^3, p^3 \right\} - \dfrac{1}{12}\left\{ \left\{ q^3, p^2 \right\}, \left\{ q^2, p^3 \right\} \right\} = 0.
    $\end{center}
    On the other hand, if we apply the map $Q$ on the left-hand side, then $Q = \qweyl$ on this expression. After some computation we find that
    \begin{center}$
        \qweyl(\left\{ q^3, p^3 \right\}) - \dfrac{1}{12}\qweyl(\left\{ \left\{ q^3, p^2 \right\}, \left\{ q^2, p^3 \right\} \right\}) = -3(\i\hbar)^2\id \ne 0.
    $\end{center} 
    This is a contradiction.
\end{proof}
The impossibility of finding a strict morphism taking Poisson brackets to commutators suggests that the conditions for the quantisation have to be relaxed, which leads to different quantisation schemes, the most famous among which are geometric quantisation and deforamtion quantisation. In this dissertation we only study the latter. In deformation quantisation, the condition $Q{\left( \left\{ f,g \right\} \right)} = \i\hbar \left[ Q(f), Q(g) \right]$ is replaced by an asympototic equality:
    \begin{equation}
        \left[ Q(f), Q(g) \right]= \i\hbar Q{\left( \left\{ f,g \right\} \right)}   + \mathcal{O}(\hbar^2). \label{equ:I_asym_quan}
    \end{equation}
Furthermore, the constant $\hbar$ is treated as a formal indeterminate, so that \eqref{equ:I_asym_quan} holds in the formal sense. We remove the Hilber space from the picture and works completely on the classical phase space. The ``quantum'' aspect of the theory is captured in a non-commutative associative product, which is the object we study in the next section.

\section{Star Products}\label{sec:star}

In this section we take the phase space $M$ to be an arbitrary Poisson manifold. Let $A:=\smooth{M}$ be the $\R$-algebra of smooth functions on $M$. Let $Q$ be the quantisation map. To study the quantisation on the phase space, we assume that $Q$ is an isomorphism and consider the pull-back of the product:
\begin{center}$
    f \star g := Q^{-1}(Q(f)Q(g)), \qquad f,g \in A.
$\end{center}
Writing this product in a formal power series: 
\begin{center}$
    f \star g = \displaystyle\sum_{m=0}^\infty \hbar^m B_m(f,g) \in A\formpow{\hbar},
$\end{center}
we infer from \eqref{equ:I_asym_quan} that $B_0(f,g) = B_0(g,f) = fg$ and $B_1(f,g) - B_1(g,f) = \i\left\{ f,g \right\}$. This is an example of a star product. We give the formal definition below.

\begin{definition}
    A \textbf{star product} on $A$ is an $\R\formpow{\hbar}$-bilinear map $\star\colon A\formpow{\hbar} \times A\formpow{\hbar} \to A\formpow{\hbar}$ such that 
    \begin{itemize}[nosep]
        \item Associativity: $(f \star g) \star h = f \star (g \star h)$ for $f,g \in A\formpow{\hbar}$;
        \item Unit: $1 \star f = f \star 1 = f$, for $f \in A\formpow{\hbar}$, where $1$ is the constant function on $M$;
        \item $f \star g = fg + \displaystyle\sum_{m=1}^\infty \hbar^m B_m(f,g)$ for $f,g \in A$, where $B_m$ are bi-differential operators of globally bounded order, that is, differential operators with respect to each argument in any local coordinates of $M$.
    \end{itemize}
    The star products $\star$ and $\star'$ are said to equivalent, if there exists a $\R$-linear isomorphism $\varphi\colon A\formpow{\hbar} \to A\formpow{\hbar}$ such that $\varphi(f) = f + \displaystyle\sum_{m=1}^\infty \hbar^m\varphi_m(f)$ for $f \in A$ and $\varphi(f) \star' \varphi(g) = \varphi(f \star g)$ for $f,g \in A\formpow{\hbar}$.
\end{definition}

\begin{remark}
    If $\mu$ denotes the multiplication of functions on $A$, then we can write $\star = \mu + \displaystyle\sum_{m=1}^\infty \hbar^m B_m$. The notion of star products will be discuss in the broader concept of deformations of associative multiplication in Section \ref{sec:defo_asso}.
\end{remark}

A star product gives rise to a Poisson bracket on $M$:
\begin{lemma}[]{lem:I_star2Poi}
    Suppose that $\star = \mu +  \displaystyle\sum_{m=1}^\infty \hbar^m B_m$ is a star product on $A$. Then $\left\{ f,g \right\} := B_1(f,g) - B_1(g,f)$ defines a Poisson bracket on $A$. Moreover, the bracket only depends on the equivalence class of $\star$.
\end{lemma}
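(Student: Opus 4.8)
The plan is to recognise $\{\cdot,\cdot\}$ as the leading-order coefficient of the $\star$-commutator and then to harvest all three Poisson axioms from the general facts that, in any associative algebra, the commutator is a Lie bracket and a biderivation. Write $[u,v]_\star := u\star v - v\star u$ for $u,v\in A\formpow{\hbar}$. Since $B_0=\mu$ is commutative, for $f,g\in A$ the $\hbar^0$-term of $[f,g]_\star$ vanishes, and
\[
[f,g]_\star = \hbar\bigl(B_1(f,g)-B_1(g,f)\bigr) + \mathcal{O}(\hbar^2) = \hbar\{f,g\} + \mathcal{O}(\hbar^2).
\]
Skew-symmetry of $\{\cdot,\cdot\}$ is then immediate from that of $[\cdot,\cdot]_\star$.

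For the Leibniz rule I would invoke the identity $[u\star v,w]_\star = u\star[v,w]_\star + [u,w]_\star\star v$, valid in any associative algebra, with $u=f$, $v=g$, $w=h$ in $A$. Expanding both sides in $\hbar$ and reading off the coefficient of $\hbar$: on the left one gets $\bigl(B_1(fg,h)-B_1(h,fg)\bigr) + \bigl(B_1(f,g)h - hB_1(f,g)\bigr)$, whose second summand vanishes because $\mu$ is commutative, leaving $\{fg,h\}$; on the right one gets $f\{g,h\} + \{f,h\}g$. Hence $\{fg,h\} = f\{g,h\} + \{f,h\}g$.

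For the Jacobi identity I would use that associativity of $(A\formpow{\hbar},\star)$ forces $[\cdot,\cdot]_\star$ to satisfy Jacobi, i.e. the cyclic sum of $[[f,g]_\star,h]_\star$ vanishes. Substituting $[f,g]_\star = \hbar\{f,g\} + \mathcal{O}(\hbar^2)$ and using that $[\hbar c + \mathcal{O}(\hbar^2),h]_\star = \hbar^2\{c,h\} + \mathcal{O}(\hbar^3)$ for $c\in A$, the cyclic sum becomes $\hbar^2\bigl(\{\{f,g\},h\} + \{\{g,h\},f\} + \{\{h,f\},g\}\bigr) + \mathcal{O}(\hbar^3)$; comparing the coefficient of $\hbar^2$ with $0$ yields the (cyclic) Jacobi identity for $\{\cdot,\cdot\}$.

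For the last assertion, suppose $\varphi(f)=f+\hbar\varphi_1(f)+\cdots$ intertwines $\star$ and $\star'=\mu+\sum_{m\ge1}\hbar^m B'_m$. Comparing the coefficient of $\hbar$ in $\varphi(f\star g)=\varphi(f)\star'\varphi(g)$ gives $B'_1(f,g) = B_1(f,g) + \varphi_1(f)g + f\varphi_1(g) - \varphi_1(fg)$, and the correction term is symmetric in $f,g$ (once more by commutativity of $\mu$), so antisymmetrising leaves $B'_1(f,g)-B'_1(g,f) = B_1(f,g)-B_1(g,f)$; thus the bracket depends only on the equivalence class. I do not anticipate a genuine obstacle here: the only thing requiring care is the $\hbar$-order bookkeeping, and in particular the observation that it is precisely the associativity of $\star$ — routed through the abstract commutator identities rather than through the order-$\hbar^2$ associativity constraint written out by hand — that delivers the Jacobi identity.
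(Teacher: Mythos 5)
Your proof is correct and follows essentially the same route as the paper: the bracket is read off as the order-$\hbar$ coefficient of the $\star$-commutator, and all three Poisson axioms together with equivalence-invariance are harvested from associativity of $\star$ and commutativity of $\mu$ at the appropriate order in $\hbar$. Your systematic use of the universal commutator identities in an associative algebra in fact makes the Leibniz step more explicit than the paper's one-line appeal to ``the Leibniz rule of $B_1$''.
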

\begin{proof}
    The skew-symmetry and bilinearity of $\left\{ -,- \right\}$ are immediate from definition. For the Jacobi identity, we can expand the bracket $\hbar^2\left\{ f,\left\{ g,h \right\} \right\}$ in $\R[\hbar]/{\ord{\hbar^3}}$:
    \begin{center}$
        \hbar^2\left\{ f,\left\{ g,h \right\} \right\} = f \star (g \star h) - f \star (h \star g) - (g \star h) \star f + (h \star g) \star f.
    $\end{center}
    The expansion of two other brackets are similar. By associativity of the star product, we note that the sum vanishes:
    \begin{center}$
        \left\{ f,\left\{ g,h \right\} \right\} + \left\{ g,\left\{ h,f \right\} \right\} + \left\{ h,\left\{ f,g \right\} \right\} = 0.
    $\end{center} 
    The Leibniz rule for $\left\{ -,- \right\}$ follows from the Leibniz rule of $B_1$. Therefore $\left\{ -,- \right\}$ is a Poisson bracket. Suppose that $\star$ is equivalent to $\star'$ via the isomorphism $\varphi$. Then at order $\hbar$ we find that 
    \begin{center}$
        g\varphi_1(f) + f\varphi_1(g) + B_1'(f,g) = \varphi_1(fg) + B_1(f,g).
    $\end{center}
    Note that $B_1' - B_1$ is symmetric in the two arguments and hence does not contribute to the Poisson bracket. The star product $\star'$ induces the same Poisson bracket as $\star$. 
\end{proof}

\begin{definition}
    Let $A$ be a Poisson algebra. We say that $(A\formpow{\hbar},\star)$ is a \textbf{deformation quantisation} of $A$, if $\star = \mu + \displaystyle\sum_{m=1}^\infty \hbar^m B_m$ is a star product on $A\formpow{\hbar}$ such that $\left\{ f,g \right\} = B_1(f,g) - B_1(g,f)$ coincides with the Poisson bracket on $A$. 
\end{definition}
\begin{remark}
    Note that the definition has a factor $\i$ different from our picture in quantum mechanics. Since $\hbar$ is considered as a formal symbol, it is possible to absorb $\i$ into the $\hbar$ and to work entirely over $\R$. 
\end{remark}
We are more interested in the converse of Lemma \ref{lem:I_star2Poi}, that is, to find a deformation quantisation for a given Poisson bracket. This seems more difficult and requires some additional structure on the algebra $A$ as we want to constructing $B_2, B_3,...$ from $B_1$. The classification theorem of deformation quantisation, stated in its greatest generality, is the following bijective correspondence on a smooth manifold $M$:
\begin{equation}
    \dfrac{\{\text{Formal Poisson structures}\}}{\text{Equivalence}} \quad\longleftrightarrow\quad \dfrac{\{\text{Star products}\}}{\text{Equivalence}}. \label{equ:I_defo_quan_bij}
\end{equation}
By the end of Chapter 2, we shall elucidate this matter as a consequence of Kontsevich's formality theorem \ref{thm:II_formality}.

\subsection{Weyl--Moyal Product}

In this part we study a simple construction of deformation quantisation on $\R^{2n}$. The idea of Weyl quantisation and Moyal bracket originates from the works by Weyl (\cite{weyl}), Moyal (\cite{moyal}), Wigner (\cite{wigner}), and Groenewold (\cite{groenewold46}) which had been studied before deformation quantisation appeared. The story begins with the ordering ambiguity in canonical quantisation of monomials like $qp^2 \in \smooth{\R^2}$. The naive strategy of ``replacing $q$ by $\hat q$ and $p$ by $\hat p$'' does not work because $\hat q \hat p^2$ fails to be self-adjoint on $\mathrm L^2(\R)$. A more serious approach is to take the total symmetrisation of the expression:
\begin{center}$
    Q(qp^2) = \dfrac{1}{3}\left( \hat q \hat p^2 + \hat p \hat q \hat p + \hat p^2 \hat q \right) = \dfrac{1}{2}\left(  \hat q \hat p^2 + \hat p^2 \hat q \right).
$\end{center}
This leads to the definition of Weyl quantisation:
\begin{definition}\label{def:I_qweyl}
    For a monomial $f = q^Ip^J\in \R[q^1,...,q^n;p_1,...,p_n]$, we define the \textbf{Weyl quantisation} $\qweyl(f)$ of $f$ to be the total symmetrisation of the monomial $\hat q^I \hat p^J$, where $I,J$ are multi-indices. Then we extend $\qweyl$ linearly to $\R[q^1,...,q^n;p_1,...,p_n]$. Furthermore, Weyl quantisation can be extended to $\mathrm L^2(\R^{2n})$ functions as follows:
    \begin{center}$
        \qweyl(f) := \dfrac{1}{(2\pi)^n}\displaystyle\int_{\R^{n}}\displaystyle\int_{\R^{n}}\mathcal{F}f(\bm a,\bm b)\e^{\i(\bm a \cdot \hat{\bm q} + \bm b \cdot \hat{\bm p})}\, \d^n\bm a\, \d^n\bm b,
    $\end{center}
    where $\mathcal{F}f(\bm a,\bm b)$ is the Fourier transform of $f(\bm q, \bm p)$, and the exponential of operators is formally understood by the \emph{Baker--Campbell--Hausdorff formula} (\emph{cf.\ }\eqref{equ:II_BCH}). If we identify a suitable codomain of $\qweyl$, it becomes invertible with inverse $\qweyl^{-1}$ known as the \emph{Wigner transform}.
\end{definition}

\begin{definition}
    The \textbf{Weyl--Moyal product} of $f,g \in \smooth{\R^{2n}}$ is defined as $f \star g := \qweyl^{-1}(\qweyl(f)\qweyl(g))$. 
\end{definition}
    In this definition, $\hbar$ is treated as a real parameter. It can be proven (\emph{cf.\ }\cite{hall}) that $f \star g \to fg$ uniformly as $\hbar \to 0$, so that $f \star g$ is an \emph{actual} deformation of $fg$, not just a formal one. However, it is sufficient to consider the formal aspect of the the Weyl--Moyal product. The closed form of the Weyl--Moyal product is given by
\begin{equation}
    \begin{aligned}[b]
        (f \star g)(x) &= \left.\exp( \dfrac{\i\hbar}{2}\displaystyle\sum_{i,j=1}^{2n}\pi^{ij}\dfrac{\pa}{\pa x^i}\dfrac{\pa}{\pa y^j})f(x)g(y)\right|_{x=y}. \\
        &= \left. (fg)(x) + \displaystyle\sum_{m=1}^\infty \dfrac{(\i\hbar)^m}{2^m m!} \displaystyle\prod_{k=1}^{m}\left( \displaystyle\sum_{i_k,j_k=1}^{2n}\pi^{i_k,j_k}\dfrac{\pa}{\pa x^{i_k}}\dfrac{\pa}{\pa y^{j_k}} \right)f(x)g(y)\right|_{x=y}.
    \end{aligned} \label{equ:I_Moyal}
\end{equation}
where $\pi$ is the Poisson bivector field associated to the standard symplectic structure of $\R^{2n}$. Now we consider $\hbar$ as a formal symbol, so that $f \star g \in \smooth{\R^{2n}}\formpow{\hbar}$. Next we replace $\R^{2n}$ by $\R^d$ and allow the constant matrix $(\pi^{ij})_{i,j}$ to have arbitrary form. 

\begin{proposition}[]{}
    By extending the Weyl--Moyal product $\R\formpow{\hbar}$-bilinearly, we obtain a star product on $\smooth{\R^{d}}\formpow{\hbar}$, which is called the \textbf{Moyal star product}.
\end{proposition}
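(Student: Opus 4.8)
The plan is to repackage the closed formula \eqref{equ:I_Moyal} in operator form and verify the three defining properties of a star product directly. Write $\mu\colon \smooth{\R^d}^{\otimes 2}\to\smooth{\R^d}$ for pointwise multiplication and let $P$ be the constant-coefficient bidifferential operator on $\smooth{\R^d}^{\otimes 2}$ given by $P(f\otimes g)=\tfrac{\i}{2}\sum_{i,j=1}^{d}\pi^{ij}(\pa_i f)\otimes(\pa_j g)$, so that $f\star g=\mu\circ\exp(\hbar P)(f\otimes g)$ and the operators in its expansion are $B_m=\tfrac{1}{m!}\,\mu\circ P^m$. Two of the three axioms are then immediate: each $B_m$ is a bidifferential operator with constant coefficients, hence of globally bounded order; the coefficient of $\hbar^N$ in $f\star g$ for $f,g\in\smooth{\R^d}\formpow{\hbar}$ is a finite sum of terms $B_m(f_a,g_b)$, so the $\R\formpow{\hbar}$-bilinear extension is well defined, and by $\R\formpow{\hbar}$-multilinearity it suffices to check associativity and the unit axiom for $f,g,h\in\smooth{\R^d}$. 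For the unit, constancy of $\pi^{ij}$ gives $P(1\otimes f)=P(f\otimes 1)=0$, so $\exp(\hbar P)$ fixes $1\otimes f$ and $f\otimes 1$, whence $1\star f=f\star 1=f$.

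The substance is associativity. I would pass to $\smooth{\R^d}^{\otimes 3}$ and introduce, for each ordered pair $(a,b)$ of tensor slots, the operator $P_{ab}(u):=\tfrac{\i}{2}\sum_{i,j}\pi^{ij}\pa_i^{(a)}\pa_j^{(b)}u$, where $\pa^{(a)}$ differentiates the $a$-th factor. Since all coefficients are constant, $P_{12},P_{13},P_{23}$ commute pairwise. The key step is the pair of Leibniz intertwining identities
\[
 \exp(\hbar P)\circ(\mu\otimes\id)=(\mu\otimes\id)\circ\exp\bigl(\hbar(P_{13}+P_{23})\bigr),\qquad
 \exp(\hbar P)\circ(\id\otimes\mu)=(\id\otimes\mu)\circ\exp\bigl(\hbar(P_{12}+P_{13})\bigr),
\]
which one proves by matching coefficients of $\hbar$: the Leibniz rule $\pa_i(ab)=(\pa_i a)b+a(\pa_i b)$ is exactly what makes the linear term agree, and commutativity of the $P_{ab}$ promotes the linear-term identity to all orders. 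Unwinding the definitions gives $(f\star g)\star h=\mu\circ\exp(\hbar P)\circ(\mu\otimes\id)\circ\exp(\hbar P_{12})(f\otimes g\otimes h)$; feeding in the first identity, together with $\mu\circ(\mu\otimes\id)=\mu_3$ (the triple product) and $[P_{12},P_{13}+P_{23}]=0$, yields $(f\star g)\star h=\mu_3\circ\exp\bigl(\hbar(P_{12}+P_{13}+P_{23})\bigr)(f\otimes g\otimes h)$. The mirror-image computation with the second identity gives the same expression for $f\star(g\star h)$, so the two coincide.

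The main obstacle is this associativity computation, and it is instructive to see where it would break: constancy of $(\pi^{ij})$ is used both to commute the $P_{ab}$ and to keep the Leibniz intertwining free of correction terms involving derivatives of $\pi$. For a non-constant Poisson bivector the naive exponential is no longer associative — it fails already at order $\hbar^3$ — and repairing this is precisely the content of Kontsevich's formality theorem \ref{thm:II_formality}. Finally, I would remark that antisymmetry of $\pi$ plays no role in verifying the three axioms; it enters only in identifying $B_1(f,g)-B_1(g,f)=\i\{f,g\}$, which is what makes the Moyal star product a deformation quantisation of the associated (constant) Poisson structure, the Jacobi identity for that structure being automatic since $[\pi,\pi]$ involves derivatives of $\pi$.
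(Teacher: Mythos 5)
Your proof is correct and is essentially the paper's own argument: your Leibniz intertwining identity $\exp(\hbar P)\circ(\mu\otimes\id)=(\mu\otimes\id)\circ\exp\bigl(\hbar(P_{13}+P_{23})\bigr)$ is precisely the step where the paper replaces $\pa/\pa x^i$ acting on $(f\star g)(x)$ by $\pa/\pa x^i+\pa/\pa y^i$ acting on $f(x)g(y)$, and both computations terminate at the symmetric expression $\mu_3\circ\exp\bigl(\hbar(P_{12}+P_{13}+P_{23})\bigr)$. The only difference is notational (tensor slots versus auxiliary variables restricted to the diagonal), plus your welcome but routine verification of the unit axiom and the well-definedness of the $\R\formpow{\hbar}$-bilinear extension, which the paper dismisses as non-trivial only in the associativity part.
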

\begin{proof}
    The only non-trivial part is checking associativity. For $f,g,h \in \smooth{\R^{d}}$, 
    \begin{align*}\hspace*{-2em}
            ((f \star g) \star h)(x) 
            &= \left.\exp( \dfrac{\i\hbar}{2}\pi^{ij}\dfrac{\pa}{\pa x_i}\dfrac{\pa}{\pa z_j})(f\star g)(x)g(z)\right|_{x=z} \\
            &= \left.\exp( \dfrac{\i\hbar}{2}\pi^{ij}\left(\dfrac{\pa}{\pa x^i} + \dfrac{\pa}{\pa y^i}\right)\dfrac{\pa}{\pa z^j})\exp( \dfrac{\i\hbar}{2}\pi^{k\ell}\dfrac{\pa}{\pa x^k}\dfrac{\pa}{\pa y^\ell})f(x)g(y)g(z)\right|_{x=y=z} \\
            &= \left.\exp( \dfrac{\i\hbar}{2}\pi^{ij}\left(\dfrac{\pa}{\pa x^i}\dfrac{\pa}{\pa y^j} + \dfrac{\pa}{\pa y^i}\dfrac{\pa}{\pa z^j} + \dfrac{\pa}{\pa x^i}\dfrac{\pa}{\pa z^j}\right))f(x)g(y)g(z)\right|_{x=y=z} \\
            &= (f\star (g \star h))(x). \qedhere
    \end{align*}
\end{proof}

\begin{remark}
    A natural question that arises is the possibility to extend this construction to include Poisson bivector field $\pi$ with coefficients depending on the coordinates. This proves to be the most difficult part of Kontsevich's quantisation theorem and is a consequence of the formality theorem. There is a canonical star product associated to $\pi$, whose formula is given in \eqref{equ:III_kont_star}.
\end{remark}

\chapter{Deformation Theory}

\section{Deformations of Associative Algebras} \label{sec:defo_asso}

In this section, we consider the deformation problems in a more general setting. From now on, we denote by $k$ a fixed ground field of characteristic $0$. Let $A$ be an associative $k$-algebra with the multiplication map $\mu\colon A \otimes_k A \to A$. We would like to deform $\mu$ into another product $\star\colon A_R \otimes_k A_R \to A_R$ using a test algebra $R$.

\subsection{Deformation Functor}

\newcommand{\rtest}{\mathscr{R}}
\begin{definition}\label{def:II_test_alg}
    A \textbf{test algebra} $R$ is a commutative local Artinian $k$-algebra with the maximal ideal $\mm_R$ and the residue field $R/\mm_R \cong k$. The category of test algebras is denoted by $\rtest$, whose morphisms are local $k$-algebra homomorphisms, that is, the $k$-algebra homomorphisms $f\colon R \to S$ satisfying $f(\mm_R) \subseteq \mm_S$.
\end{definition}

\begin{remark}
    Note that $R$ is commutative local Artinian implies that $\mm_R$ is nilpotent. We will use this fact in the proof of Lemma \ref{lem:II_dgla_nil}.
\end{remark}

\begin{definition}\label{def:II_R_defo}
    Let $R$ be a test algebra. Let $A_R := A \otimes_k R$. An \textbf{$\bm{R}$-deformation} of the multiplication $\mu$ on $A$ is an associative $R$-linear map $\star\colon A_R \otimes_R A_R \to A_R$ such that $\mu = \star$ on $A_R/\mm_R \otimes_R A_R/\mm_R$. This can be represented in the commutative diagram:
    \begin{center}
        \begin{tikzcd}
            {A_R \otimes_R A_R} && {A \otimes_k A} \\
            {A_R} && A
            \arrow["{\bmod \mm_R}", from=1-1, to=1-3]
            \arrow["\star"', from=1-1, to=2-1]
            \arrow["{\bmod \mm_R}", from=2-1, to=2-3]
            \arrow["\mu", from=1-3, to=2-3]
        \end{tikzcd}
    \end{center}
    Note that an $R$-deformation is uniquely determined by its restriction on $A \otimes_k A$. 
    The $R$-deformations $\star$ and $\star'$ are said to be equivalent, if there exists an $R$-module automorphism $\varphi\colon A_R  \to A_R$ such that $\varphi(u \star v) = \varphi(u) \star' \varphi(v)$ and $\varphi = \id$ on $A_R/\mm_R$.
\end{definition}

\newcommand{\defo}{\mathsf{Defo}}
\begin{definition}\label{def:II_defo_func}
    For a test algebra $R$, let $\defo_A(R)$ be the set of equivalence classes of $R$-deformations of $A$. Then we obtain a functor $\defo_A\colon \mathscr R \to \mathsf{Set}$, called the \textbf{deformation functor}.
\end{definition}

\begin{example}
    An \emph{infinitesimal deformation} of $A$ is a $R$-deformation with $R = k[\hbar]/{\ord{\hbar^2}}$.
\end{example}

\begin{example}\label{eg_II_formal_defo}
    A \emph{formal deformation} is a $R$-deformation with $R = k\formpow{\hbar}$. Although $k\formpow{\hbar}$ is not Artinian, it can be expressed as a completion in the $\hbar$-adic topology: $
        k\formpow{\hbar} = \varprojlim_{m} k[\hbar]/{\ord{\hbar^m}}$.
    In fact, we have an natural isomorphism of functors:
    \begin{center}$
        \defo_A(k\formpow{\hbar}) \simeq \varprojlim_m \defo_A(k[\hbar]/{\ord{\hbar^m}}).
    $\end{center}
    So we can study the formal deformations by studying the deformation functor.
\end{example}

\begin{example}\label{eg_II_star}
    Let $A := \smooth{M}$. A \emph{star product} $\star$ on $A$ is an associative $\R\formpow{\hbar}$-deformation such that $u \star v = uv + \displaystyle\sum_{n=1}^\infty \hbar^n B_n(u,v)$ for $u,v \in A$, where $B_n$ are bi-differential operators in the local coordinates of $M$. 
\end{example}

\subsection{Hochschild Cohomology}

The information of deformations are encoded in the Hochschild complex defined below. This part consists of standard material which can be found in any homological algebra textbook. We include the details here for completeness.

\newcommand{\dhoch}{\delta_{\mathsf{H}}}
\newcommand{\dhochm}{\d_{\mathsf{H}}}
\begin{definition}\label{def:II_Hoch_cplx}
    For each $p \in \Z_{>0}$, let $C^p(A,A):= \Hom_k(A^{\otimes p},A)$. We define the \textbf{Hochschild differential} $\dhoch\colon C^p(A,A) \to C^{p+1}(A,A)$ as 
    \begin{center}$
        (\dhoch f)(u_0,...,u_{p}) = u_0 f(u_1,...,u_p) + \displaystyle\sum_{r=1}^p (-1)^r f(u_0,...,u_{r-1}u_r,...,u_p) + (-1)^{p+1}f(u_0,...,u_{p-1})u_p.
    $\end{center}
    The chain can be augmented to $p = 0$ by defining $\dhoch \colon C^0(A,A) \cong A \to C^1(A,A)$ as 
    \begin{center}$
        (\dhoch a)(u_0) = u_0a - au_0.
    $\end{center}
\end{definition}

\begin{lemma}[Hochschild Complex]{}
    The Hochschild differential $\dhoch$ satisfies $\dhoch^2 = 0$. Therefore $(C^p(A,A))_{p \geq 0}$ is a cochain complex, called the Hochschild complex.
\end{lemma}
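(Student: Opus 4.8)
The plan is to verify $\dhoch^2 = 0$ by direct expansion; once this is done, the assertion that $(C^p(A,A))_{p\ge 0}$ is a cochain complex is immediate from the definition. Fix $f \in C^p(A,A)$ and elements $u_0,\dots,u_{p+1} \in A$, and set $g := \dhoch f \in C^{p+1}(A,A)$. First I would record
\[
    (\dhoch g)(u_0,\dots,u_{p+1}) = u_0\,g(u_1,\dots,u_{p+1}) + \sum_{r=1}^{p+1}(-1)^r g(u_0,\dots,u_{r-1}u_r,\dots,u_{p+1}) + (-1)^{p+2} g(u_0,\dots,u_p)\,u_{p+1},
\]
and then substitute the defining formula for $g = \dhoch f$ into each of these $p+3$ summands. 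This produces a double sum for $(\dhoch^2 f)(u_0,\dots,u_{p+1})$ whose terms I would index by the ordered pair of positions at which the outer and the inner application of $\dhoch$ act; it is convenient to phrase this via the coface operations $d^0,\dots,d^{p+1}$ implicit in the formula (where $d^0$ left-multiplies by $u_0$, $d^i$ merges $u_{i-1}u_i$ for $1\le i\le p$, and $d^{p+1}$ right-multiplies by $u_{p+1}$), so that $\dhoch = \sum_i (-1)^i d^i$ and $\dhoch^2 = \sum_{i,j}(-1)^{i+j} d^j d^i$.

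The heart of the argument is the cancellation pattern, which I would split into three regimes. When the two operations have disjoint support (the merged pairs and boundary slots involve non-adjacent indices), the composites $d^j d^i$ and $d^{i'} d^{j'}$ coincide after the obvious reindexing, and the signs $(-1)^{i+j}$ pair them off with opposite signs so that they cancel — this is the usual cosimplicial reshuffling and uses nothing about $A$. When the two merges are adjacent, so that one slot carries the triple product $u_{r-1}u_r u_{r+1}$, the two surviving terms realise the two bracketings $(u_{r-1}u_r)u_{r+1}$ and $u_{r-1}(u_r u_{r+1})$; they are equal by \emph{associativity of $\mu$} and enter with opposite sign, hence cancel. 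Finally the mixed terms, where an outer multiplication $u_0\cdot$ or $\cdot\,u_{p+1}$ meets an inner merge at the neighbouring slot, cancel in pairs, again by associativity. I would also dispatch the small base case at $p=0$ separately: for $a \in C^0(A,A)\cong A$ one computes directly $(\dhoch^2 a)(u_0,u_1) = u_0(u_1 a - a u_1) - (u_0 u_1 a - a u_0 u_1) + (u_0 a - a u_0)u_1 = 0$, once more by associativity.

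The main obstacle is organisational rather than conceptual: the expansion has on the order of $(p+3)^2$ terms, and the proof succeeds only if every sign and index range is tracked correctly so that each term is matched with its unique cancelling partner. I would control this by fixing the position-labelling of terms once and for all and tabulating them in the three regimes above, rather than expanding blindly; this is exactly the verification of the cosimplicial identities $d^j d^i = d^i d^{j-1}$ for $i<j$, for which associativity of $\mu$ is the only non-formal input. (Conceptually, $\dhoch$ is the adjoint action of $\mu \in C^2(A,A)$ for the Gerstenhaber bracket, and $\dhoch^2 = \tfrac12[[\mu,\mu],-] = 0$ once associativity is rephrased as $[\mu,\mu]=0$; but since that bracket is not yet available we argue by hand.) With $\dhoch^2 = 0$ established, $(C^\bullet(A,A),\dhoch)$ is by definition a cochain complex, which completes the proof.
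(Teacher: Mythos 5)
Your proof is correct, but it takes a different route from the paper. You verify $\dhoch^2=0$ by expanding directly on the cochain side and organising the $(p+3)^2$ terms via the coface maps $d^0,\dots,d^{p+1}$ and the cosimplicial identities $d^jd^i=d^id^{j-1}$ for $i<j$, with associativity of $\mu$ entering exactly in the adjacent-merge and boundary cases. The paper instead works on the chain side: it introduces the bar complex $(\mathsf{B}_nA,b_n)$ with $\mathsf{B}_nA=A\otimes A^{\otimes n}\otimes A$, proves $b_{n-1}\circ b_n=0$ by the observation that each term of the composite arises from exactly two contractions with opposite signs, and then transports the result to $C^\bullet(A,A)$ through the isomorphism $C^n(A,A)\cong \Hom_{A^e}(\mathsf{B}_nA,A)$ and a commutative square identifying $\dhoch$ with the dual of $b$. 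The underlying combinatorial cancellation is the same in both arguments; the difference is where it is performed. Your version is more self-contained and elementary (no auxiliary complex, no dualisation step, and it handles the $p=0$ augmentation explicitly), at the cost of a heavier bookkeeping burden on signs and index ranges. The paper's version localises the cancellation in the bar complex, where the three regimes you distinguish collapse into the single statement about pairs of contractions, and as a by-product exhibits $\mathsf{B}_\bullet A$ as a resolution, which feeds the later remark identifying $\hoch^n(A,A)$ with $\operatorname{Ext}^n_{A^e}(A,A)$. Your parenthetical observation that $\dhoch$ is (up to sign) the adjoint action of $\mu$ under the Gerstenhaber bracket, so that $\dhoch^2=0$ follows from $\Ger{\mu,\mu}=0$, is exactly the conceptual reformulation the paper reaches two subsections later in Lemma \ref{lem:II_hoch_ger}; you are right that it cannot be used here without circularity.
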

\begin{proof}
    Let $A^e := A \otimes_k A^{\mathrm{op}}$. We define the \emph{bar complex} $(\mathsf{B}_nA,b_n)_{n \geq 0}$ as follows. Let $\mathsf{B}_nA := A \otimes_k A^{\otimes n} \otimes_k A$, with the left $A^e$-module structure
    \begin{center}$
        (u \otimes u')(u_1 \otimes \cdots u_n) := uu_1 \otimes \cdots \otimes u_nu'.
    $\end{center}
    We define $b_n\colon \mathsf{B}_nA \to \mathsf{B}_{n-1}A$ as the $A^e$-module homomorphism given by
    \begin{center}$
        b_n(u_0 \otimes  \cdots \otimes u_{n+1}) := \displaystyle\sum_{k=0}^n (-1)^k \cdots \otimes u_{k}u_{k+1} \otimes \cdots.
    $\end{center}
    We claim that $b_{n-1} \circ b_{n} = 0$. Indeed, $b_{n-1} \circ b_{n}(u_0 \otimes  \cdots \otimes u_{n+1})$ is a linear combination of the elements of form $\cdots \otimes u_iu_{i+1} \otimes \cdots \otimes u_ju_{j+1}\otimes \cdots$ and $\cdots \otimes u_{i-1}u_iu_{i+1} \otimes \cdots$.
    For each of these elements, there are exactly two ways of contractions from $u_0 \otimes  \cdots \otimes u_{n+1}$, with opposite signs, which cancels each other. Hence $(\mathsf{B}_nA,b_n)_{n \geq 0}$ is indeed a chain complex. 

    Next we have an $A^e$-module isomorphism $\varphi\colon C^n(A,A) \xrightarrow{\sim} \Hom_{A^e}(\mathsf{B}_nA,A)$ given by $f \mapsto (u_0 \otimes  \cdots \otimes u_{n+1} \mapsto u_0f(u_1,...,u_n)u_{n+1})$. Moreover, the following diagram commutes:
    \begin{center}
        \begin{tikzcd}
            {C^{n+1}(A,A)} & {\Hom_{A^e}(\mathsf{B}_{n+1}A,A)} & {\mathsf{B}_{n+1}A} \\
            {C^n(A,A)} & {\Hom_{A^e}(\mathsf{B}_nA,A)} & {\mathsf{B}_{n+1}A}
            \arrow["\dhoch", from=2-1, to=1-1]
            \arrow["\varphi", from=1-1, to=1-2]
            \arrow["\varphi", from=2-1, to=2-2]
            \arrow["{\text{dual}}", from=1-2, to=1-3]
            \arrow["{\text{dual}}", from=2-2, to=2-3]
            \arrow["b", from=1-3, to=2-3]
            \arrow["{b^\vee}", from=2-2, to=1-2]
        \end{tikzcd}
    \end{center}
    This implies that $\dhoch^2 = 0$.
\end{proof}

\newcommand{\hoch}{\operatorname{HH}}
The cohomology of the Hochschild complex $(C^\bullet(A,A),\dhoch)$ is called \textbf{Hochschild cohomology}, and is denoted by $\hoch^\bullet(A,A)$. We shall see in Section \ref{sec:HKR} that, when $A$ is the algebra of smooth functions on a smooth manifold $M$, the Hochschild cohomology computes the polyvector fields on $M$.

\begin{remark}
    The $n$-th Hochschild cohomology $\hoch^n(A,A)$ can also be identified with the Ext group $\operatorname{Ext}^n_{A^e}(A,A)$ in the category of $A^e$-modules.
\end{remark}

    

\subsection{Gerstenhaber Bracket}
We define a graded Lie bracket on the shifted\footnote{For a graded vector space $V := \bigoplus V^i$, the shifted space $V[k]$ has the grading $V[k]^i := V^{i+k}$.} Hochschild complex $C^\bullet(A,A)[1]$. 

\newcommand{\Ger}[1]{\left[ #1 \right]_{\mathsf{G}}}
\begin{definition}
    For $f \in C^{p+1}(A,A)$ and $g \in C^{q+1}(A,A)$, we define the \textbf{Gerstenhaber product} $f \bullet g \in C^{p+q+1}(A,A)$ as
    \begin{align*}
        (f \bullet g)(u_0,...,u_{p+q}) 
        &:= \displaystyle\sum_{i=0}^p (-1)^{iq} (f \bullet_i g)(u_0,...,u_{p+q}) \\
        &:= \displaystyle\sum_{i=0}^p (-1)^{iq}f(u_0,...,u_{i-1},g(u_{i},...,u_{i+q}),u_{i+q+1},...,u_{p+q}).
    \end{align*}
    Then we define the \textbf{Gerstenhaber bracket} of $f$ and $g$ as 
    \begin{center}$
        \Ger{f,g} := f \bullet g - (-1)^{pq} g \bullet f
    $\end{center}
    and extend it $k$-bilinearly to $C^\bullet(A,A)$. 
\end{definition}


\begin{lemma}[Graded Jacobi identity for Gerstenhaber bracket]{lem:II_Ger_Jacobi}
    For $f \in C^{p+1}(A,A)$, $g \in C^{q+1}(A,A)$, and $h \in C^\bullet(A,A)$, the Gerstenhaber bracket satisfies \vspace*{-1em}
    \begin{center}$
        \Ger{f,\Ger{g,h}} = \Ger{\Ger{f,g},h} + (-1)^{pq}\Ger{g,\Ger{f,h}}.
    $\end{center}
\end{lemma}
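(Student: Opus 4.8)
The plan is to reduce the graded Jacobi identity to a statement about the Gerstenhaber product $\bullet$ alone, by a purely formal "pre-Lie" argument. First I would establish the key structural fact that $(C^\bullet(A,A)[1], \bullet)$ is a \emph{graded right pre-Lie algebra}, i.e. that the associator of $\bullet$ is graded-symmetric in its last two arguments:
\[
    (f \bullet g) \bullet h - f \bullet (g \bullet h) = (-1)^{qr}\bigl((f \bullet h) \bullet g - f \bullet (h \bullet g)\bigr)
\]
for $f \in C^{p+1}$, $g \in C^{q+1}$, $h \in C^{r+1}$. Granting this identity, the graded Jacobi identity for $\Ger{-,-}$ follows by a standard formal manipulation: one expands all three nested brackets using the definition $\Ger{x,y} = x \bullet y - (-1)^{|x||y|} y \bullet x$ (with the shifted degrees), collects the twelve resulting terms, and checks that the six "associated" terms cancel in pairs precisely because of the pre-Lie relation, while the remaining terms match up by inspection of signs. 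This last step is routine bookkeeping and I would present it compactly.

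The substantive step is therefore proving the pre-Lie identity. I would do this by expanding both sides directly on arguments $(u_0, \dots, u_{p+q+r})$. The left-hand side $(f \bullet g) \bullet h - f \bullet (g \bullet h)$ consists of terms where $g$ and $h$ are both inserted into $f$: either $h$ is inserted into one of the arguments of $f$ that is \emph{disjoint} from the slot occupied by $g$ (call these "non-overlapping" terms), or $h$ is inserted into the block $g(\dots)$ that was itself inserted into $f$ (the "overlapping" terms). In the difference $(f\bullet g)\bullet h - f\bullet(g\bullet h)$, the overlapping terms cancel exactly, leaving only the non-overlapping terms — those where $f$ receives $g$ in one slot and $h$ in another, independent slot. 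The same analysis applied to the right-hand side yields the \emph{same} collection of non-overlapping insertions of $g$ and $h$ into $f$, just enumerated in the opposite order ($h$ before $g$ rather than $g$ before $h$). The only thing to check is that the signs agree, which is where the factor $(-1)^{qr}$ enters: swapping the order in which two disjoint blocks of degrees $q$ and $r$ are inserted costs exactly $(-1)^{qr}$ after one tracks the $(-1)^{iq}$-type signs in the definition of $f \bullet_i g$.

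The main obstacle I anticipate is precisely the sign accounting in the pre-Lie identity: the definition of $f \bullet g$ carries signs $(-1)^{iq}$ depending on insertion position, and when a further insertion of $h$ shifts the positions of later arguments, these exponents change in a way that must be reconciled against the $(-1)^{qr}$ on the right-hand side. I would handle this by fixing explicit index ranges for the two insertion slots (say $g$ into slot $i$ and $h$ into slot $j$ of $f$, with the cases $j < i$ and $j > i$ treated separately, accounting for the length-$q$ shift that inserting $g$ induces on slots to its right) and verifying the signs bookkeeping in each case. Once the pre-Lie identity is in hand, the remainder of the argument is a mechanical consequence and carries no real difficulty.
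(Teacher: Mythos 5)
Your proposal follows essentially the same route as the paper: both reduce the graded Jacobi identity to the graded right pre-Lie property of $\bullet$ (graded symmetry of the associator in its last two arguments) and then verify that identity by expanding the partial insertions $(f\bullet_i g)\bullet_j h$ and cancelling the overlapping terms against $f\bullet_i(g\bullet_{j-i}h)$, with the sign bookkeeping you describe. The approach is correct and matches the paper's proof.
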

\begin{proof}
    By linearity, we may assume that $h \in C^{r+1}(A,A)$. Consider the \emph{associator} of the Gerstenhaber product:
    \begin{center}$
        A(f,g,h) := (f\bullet g)\bullet h - f \bullet (g \bullet h).
    $\end{center}
    A direct computation shows that
    \begin{align*}
        \Ger{\Ger{f,g},h} +{} &(-1)^{|f|\cdot|g|}\Ger{g,\Ger{f,h}} - \Ger{f,\Ger{g,h}} \\
        = -&(-1)^{pq}A(g,f,h) + A(f,g,h) - (-1)^{qr}A(f,h,g) \\
        {}+{} &(-1)^{pq+pr}A(g,h,f) + (-1)^{pr+qr}A(h,f,g) - (-1)^{pq+pr+qr}A(h,g,f).
    \end{align*}
    The right-hand side of the above equation would vanish if the associator satisfies the graded symmetry:
    \begin{center}$
        A(f,g,h) = (-1)^{qr}A(f,h,g).
    $\end{center}
    By definition we have:
    \begin{equation}
        \begin{aligned}
            A(f,g,h) &- (-1)^{qr}A(f,h,g)
            = \displaystyle\sum_{i=0}^p\displaystyle\sum_{j=0}^{p+q}(-1)^{i(q+r)}(f \bullet_i g)\bullet_j h 
            - \displaystyle\sum_{i=0}^p\displaystyle\sum_{j=0}^{q}(-1)^{iq}f \bullet_i (g \bullet_j h) \\
            & \quad- (-1)^{qr} \left( \displaystyle\sum_{i=0}^p\displaystyle\sum_{j=0}^{p+r}(-1)^{i(q+r)}(f \bullet_i h)\bullet_j g
            - \displaystyle\sum_{i=0}^p\displaystyle\sum_{j=0}^{r}(-1)^{ir}f \bullet_i (h \bullet_j g)
            \right)
        \end{aligned} \label{equ:associator}
    \end{equation} 
    The composition of partial products satisfies:
    \begin{center}$
        (f \bullet_i g)\bullet_j h = \begin{cases}
            (f \bullet_j h) \bullet_i g, & i > j, \\
            f \bullet_i (g \bullet_{j-i} h), & i \leq j \leq i+q, \\
            (f \bullet_{j-q} h) \bullet_i g, & i + q < j. 
        \end{cases}
    $\end{center}
    Substituting this into \eqref{equ:associator} we see that the right-hand side is cancalled. Hence $A(f,g,h) = (-1)^{qr}A(f,h,g)$.
\end{proof}

\begin{lemma}[Hochschild Differential via Gerstenhaber Bracket]{lem:II_hoch_ger}
    For $f \in C^{p}(A,A)$, we have
    \begin{center}$
        \dhoch f = (-1)^{p+1}\Ger{\mu,f},
    $\end{center}
    where $\dhoch$ is the Hochschild differential and $\mu \in C^2(A,A)$ is the multiplication on $A$.
\end{lemma}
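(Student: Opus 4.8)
The plan is to prove the identity by a direct computation: evaluate both sides on an arbitrary tuple $(u_0,\dots,u_p)$ and match terms. This is sign bookkeeping rather than a conceptual argument. First I would fix the grading conventions of the Gerstenhaber product. Since $\mu \in C^2(A,A) = C^{1+1}(A,A)$, in the notation of the definition of $\bullet$ the element $\mu$ occupies ``degree $1$'', whereas $f \in C^p(A,A) = C^{(p-1)+1}(A,A)$ occupies ``degree $p-1$''. The Gerstenhaber bracket then unwinds to
\[
    \Ger{\mu,f} = \mu \bullet f - (-1)^{p-1}\, f \bullet \mu,
\]
so it suffices to expand the two Gerstenhaber products on the right.

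Next I would expand $\mu \bullet f$. Because $\mu$ is the outer operation and is binary, the defining sum over $i$ collapses to $i \in \{0,1\}$: the $i=0$ term is $\mu(f(u_0,\dots,u_{p-1}),u_p) = f(u_0,\dots,u_{p-1})\,u_p$ with sign $+1$, and the $i=1$ term is $(-1)^{p-1}\mu(u_0, f(u_1,\dots,u_p)) = (-1)^{p-1} u_0\, f(u_1,\dots,u_p)$. These are exactly the two ``outer'' terms of $\dhoch f$. Then I would expand $f \bullet \mu$, where now $f$ is the outer operation and $\mu$ is substituted into each of its $p$ slots in turn: the defining sum runs over $i\in\{0,\dots,p-1\}$, and after the reindexing $r = i+1$ it becomes $\sum_{r=1}^p (-1)^{r-1} f(u_0,\dots,u_{r-1}u_r,\dots,u_p)$, i.e.\ the ``inner'' contraction terms of $\dhoch f$ up to an overall sign.

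Finally I would assemble the pieces. Multiplying $\Ger{\mu,f}$ by $(-1)^{p+1}$ and using $(-1)^{p+1}(-1)^{p-1} = (-1)^{2p} = 1$, the coefficient of the $f\bullet\mu$ contribution collapses to $-1$ and the coefficient of the $i=1$ term of $\mu\bullet f$ collapses to $+1$. Collecting everything gives
\[
    (-1)^{p+1}\Ger{\mu,f}(u_0,\dots,u_p) = u_0 f(u_1,\dots,u_p) + \sum_{r=1}^p (-1)^r f(u_0,\dots,u_{r-1}u_r,\dots,u_p) + (-1)^{p+1} f(u_0,\dots,u_{p-1}) u_p,
\]
which is precisely $(\dhoch f)(u_0,\dots,u_p)$ by Definition~\ref{def:II_Hoch_cplx}. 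The only point requiring care is keeping the signs $(-1)^{iq}$ coming from the Gerstenhaber product aligned with the $(-1)^r$ of the Hochschild differential; since $\mu$ has even arity these are mild and everything matches. One can also check separately, or observe that the same computation applies with the degenerate conventions $\mu \bullet a = au_0 - u_0a$ and $a \bullet \mu = 0$ for $a \in C^0(A,A)\cong A$, that the augmented case reduces to $\dhoch a(u_0) = u_0 a - a u_0$, so no extra argument is needed at $p=0$.
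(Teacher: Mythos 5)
Your proof is correct and follows essentially the same route as the paper's: a direct expansion of $\mu\bullet f$ and $f\bullet\mu$ from the definition of the Gerstenhaber product, followed by sign bookkeeping against the formula for $\dhoch$. If anything, your version is slightly more careful than the paper's displayed computation (which has a small off-by-one typo in its summation limit) and you explicitly handle the augmented case $p=0$, which the paper leaves implicit.
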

\begin{proof}
    This is straightforward by definition. 
    \begin{align*}
        &\Ger{\mu,f}(u_0,...,u_{p}) = (\mu \bullet f)(u_0,...,u_{p}) - (-1)^{p-1}(f \bullet \mu)(u_0,...,u_{p}) \\
        &\quad = f(u_0,...,u_{p-1})u_{p} + (-1)^{p-1}u_0f(u_1,...,u_{p}) + \displaystyle\sum_{r=1}^{p-1} (-1)^{r+p}f(u_0,...,u_{r-1}u_r,...,u_{p}) \\
        &\quad = (-1)^{p+1}(\dhoch f)(u_0,...,u_{p+1}). \qedhere
    \end{align*}
\end{proof}
\begin{remark}
    For reasons which will be clear shortly, we define a \textbf{modified differential} $\d$ on the Hochschild complex by $\dhochm f = (-1)^{p+1}\dhoch f$ for $f \in C^{p}(A,A)$. Then the new differential satisfies $\dhochm f = \Ger{\mu,f}$ for any $f \in C^\bullet(A,A)$. We will be using $\dhochm$ instead of $\dhoch$ from now on.
\end{remark}

\begin{corollary}[]{}
    The multiplication $\mu$ on $A$ has vanishing Gerstenhaber bracket: $\Ger{\mu,\mu} = 0$.
\end{corollary}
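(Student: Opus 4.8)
The plan is to deduce $\Ger{\mu,\mu}=0$ directly from the associativity of $\mu$, using the definition of the Gerstenhaber bracket rather than Lemma~\ref{lem:II_hoch_ger}, since the latter would give $\Ger{\mu,\mu} = \dhochm\mu$, and one still has to identify $\dhochm\mu$ with the (negative of the) associator. So the cleanest route is: first expand $\Ger{\mu,\mu}$ from the definition. Since $\mu \in C^2(A,A)$, we have $p=q=1$, hence $\Ger{\mu,\mu} = \mu\bullet\mu - (-1)^{1\cdot 1}\mu\bullet\mu = 2\,\mu\bullet\mu$, and $\mu\bullet\mu \in C^3(A,A)$ is given by
\[
    (\mu\bullet\mu)(u_0,u_1,u_2) = \sum_{i=0}^{1}(-1)^{i}\mu(u_0,\dots,\mu(u_i,u_{i+1}),\dots,u_2) = \mu(\mu(u_0,u_1),u_2) - \mu(u_0,\mu(u_1,u_2)).
\]
This is exactly the associator of $\mu$, which vanishes identically because $A$ is an associative algebra. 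Hence $\mu\bullet\mu = 0$ and therefore $\Ger{\mu,\mu} = 0$.

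First I would write out the sign bookkeeping carefully: with $f=g=\mu$ both in $C^{p+1}=C^2$, we have $p=q=1$, so the Gerstenhaber product $f\bullet g \in C^{p+q+1} = C^3$, and the bracket formula $\Ger{f,g} = f\bullet g - (-1)^{pq} g\bullet f$ specialises to $\Ger{\mu,\mu} = \mu\bullet\mu - (-1)^{1}\mu\bullet\mu = 2(\mu\bullet\mu)$. Then I would compute $\mu\bullet\mu$ via $(f\bullet g)(u_0,\dots,u_{p+q}) = \sum_{i=0}^{p}(-1)^{iq}(f\bullet_i g)(u_0,\dots,u_{p+q})$ with $p=1$, $q=1$: the $i=0$ term is $\mu(\mu(u_0,u_1),u_2)$ and the $i=1$ term is $-\mu(u_0,\mu(u_1,u_2))$. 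Finally I would invoke associativity of $\mu$ to conclude the two terms cancel.

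I do not expect any serious obstacle here; the statement is essentially a restatement of associativity in the language of the Gerstenhaber bracket. The only point requiring mild care is the sign/degree convention — making sure that one uses the unshifted degrees $p+1, q+1$ for $\mu$ (so $p = q = 1$, not $p = q = 2$) when applying the formulas, since $C^\bullet(A,A)[1]$ is the shifted complex on which the bracket has degree $0$. Everything else is a one-line computation.
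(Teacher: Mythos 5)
Your proof is correct and follows essentially the same route as the paper: the paper's proof likewise expands $\Ger{\mu,\mu}(u,v,w)$ directly from the definition, obtains twice the associator of $\mu$, and concludes by associativity. Your additional care with the shifted versus unshifted degrees ($p=q=1$ for $\mu\in C^2(A,A)$) is exactly the right bookkeeping, and the sign discrepancy between your $2\bigl((u_0u_1)u_2 - u_0(u_1u_2)\bigr)$ and the paper's $2u(vw)-2(uv)w$ is immaterial since both vanish.
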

\begin{proof}
    This follows from associativity of $\mu$:
    \begin{equation*}
        {\Ger{\mu,\mu}}(u,v,w) = 2u(vw) - 2(uv)w = 0. \qedhere
    \end{equation*}
\end{proof}
\begin{remark}
    The same proof shows that the associativity of any $\eta \in C^2(A,A)$ is equivalent to $\Ger{\eta,\eta} = 0$.
\end{remark}

\begin{lemma}[Graded Leibniz Rule]{}
    For $f \in C^{p+1}(A,A)$ and $g \in C^{q+1}(A,A)$, we have $\dhochm \Ger{f,g} = \Ger{\dhochm f, g} + (-1)^p\Ger{f, \dhochm g}$.
\end{lemma}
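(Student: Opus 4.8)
The statement to prove is the graded Leibniz rule
$$\dhochm \Ger{f,g} = \Ger{\dhochm f, g} + (-1)^p\Ger{f, \dhochm g}$$
for $f \in C^{p+1}(A,A)$, $g \in C^{q+1}(A,A)$. The natural strategy is to exploit the two facts already established: that $\dhochm f = \Ger{\mu, f}$ for all $f$ (Lemma~\ref{lem:II_hoch_ger} together with the Remark introducing the modified differential), and that the Gerstenhaber bracket satisfies the graded Jacobi identity (Lemma~\ref{lem:II_Ger_Jacobi}). So the whole thing should reduce to a purely formal manipulation inside the graded Lie algebra $(C^\bullet(A,A)[1], \Ger{-,-})$: apply Jacobi with $\mu$ in the first slot and rearrange.

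**Key steps.** First I would rewrite the left-hand side as $\dhochm\Ger{f,g} = \Ger{\mu, \Ger{f,g}}$. Second, I would invoke the graded Jacobi identity (Lemma~\ref{lem:II_Ger_Jacobi}) in the form with $\mu$ (degree $1$ in the shifted complex, i.e. an element of $C^2(A,A)$) as the outermost left entry: $\Ger{\mu, \Ger{f,g}} = \Ger{\Ger{\mu, f}, g} + (-1)^{1\cdot p}\Ger{f, \Ger{\mu, g}}$, where one has to be careful that the sign exponent is the product of the (shifted) degree of $\mu$, which is $1$, with the (shifted) degree of $f$, which is $p$. Third, I would recognise $\Ger{\mu, f} = \dhochm f$ and $\Ger{\mu, g} = \dhochm g$, giving exactly $\Ger{\dhochm f, g} + (-1)^p\Ger{f, \dhochm g}$, which is the claim.

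**Main obstacle.** The computation itself is entirely routine once the right form of Jacobi is in hand; the only real subtlety is bookkeeping of the shift. The paper grades the Gerstenhaber bracket on $C^\bullet(A,A)[1]$, so an element of $C^{p+1}(A,A)$ has shifted degree $p$, and $\mu \in C^2(A,A)$ has shifted degree $1$. One must check that the instance of Lemma~\ref{lem:II_Ger_Jacobi} being used matches its stated hypotheses (there $f \in C^{p+1}$, $g \in C^{q+1}$, $h$ arbitrary, with sign $(-1)^{pq}$ attached to the swapped term), specialised to outer element $\mu \in C^{1+1}(A,A)$ so that the exponent $pq$ becomes $1\cdot p = p$. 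Provided that matching is done correctly, the sign $(-1)^p$ in the conclusion falls out immediately; I would double-check it by testing on low-degree examples (e.g. $f, g \in C^1$, so $p = q = 0$, reducing to $\dhochm\Ger{f,g} = \Ger{\dhochm f, g} + \Ger{f, \dhochm g}$) to make sure no stray sign has been dropped.
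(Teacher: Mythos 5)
Your proposal is correct and is exactly the paper's argument: the paper's proof also consists of rewriting $\dhochm$ as $\Ger{\mu,-}$ via Lemma \ref{lem:II_hoch_ger} and then reading the claim off as an instance of the graded Jacobi identity of Lemma \ref{lem:II_Ger_Jacobi}. Your sign bookkeeping (shifted degree of $\mu$ equal to $1$, giving the exponent $1\cdot p = p$) matches the stated form of the Jacobi identity, so nothing is missing.
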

\begin{proof}
    By Lemma \ref{lem:II_hoch_ger}, this is the the graded Jacobi identity, which is Lemma \ref{lem:II_Ger_Jacobi}.
\end{proof}

\begin{remark}
    The compatible structure of a differential and a graded Lie bracket on the Hochschild complex gives rise to a structure of differential graded Lie algebra, which is introduced in the next section.
\end{remark}

\section{Differential Graded Lie Algebras}\label{sec:DGLA}

\begin{definition}
    A \textbf{graded Lie algebra} $\frg = \bigoplus_{i \in \Z} \frg^i$ is a $\Z$-graded vector space over a field $k$ with a bilinear map $[-,-]\colon \frg \times \frg \to \frg$, satisfying that
    \begin{itemize}
        \item $[x,y] \in \frg^{|x|+|y|}$;
        \item Graded skew-symmetry: $[x,y] = -(-1)^{|x|\cdot|y|}[y,x]$;
        \item Graded Jacobi identity: $(-1)^{|x|\cdot|y|}[[x,y],z] + (-1)^{|y|\cdot|z|}[[y,z],x] + (-1)^{|z|\cdot|x|}[[z,x],y] = 0$.
    \end{itemize}
    where $x,y,z \in \frg$ are homogeneous elements with grading $|x|,|y|,|z|$ respectively.

    A \textbf{differential graded Lie algebra} (DGLA) is a graded Lie algebra $\frg$ with a differential $\d\colon \frg^i \to \frg^{i+1}$ which is $k$-linear of degree $1$ and satisfies that 
    \begin{itemize}
        \item Graded Leibniz rule: $\d[x,y] = [\d x,y] + (-1)^{|x|}[x,\d y]$;
        \item Sqaure-zero: $\d^2 = 0$.
    \end{itemize}
\end{definition}
\begin{remark}
    Let $\frg$ be a graded Lie algebra. Then the zeroth grading $\frg^0$ is a Lie algebra in the usual sense.
\end{remark}
\begin{example}
    Following the results established in the previous section, the shifted Hochschild complex $C^\bullet(A,A)[1]$ with the Gerstenhaber bracket $\Ger{-,-}$ and the modified Hochschild differential $\d $ is a DGLA.
\end{example}

\begin{example}
    Let $\frg$ be a DGLA over $k$, and $S$ be a graded commutative $k$-algebra, that is, $ab = (-1)^{|a|\cdot|b|}ba$ for homogeneous $a,b \in S$. We can define a DGLA structure on $\frg \otimes_k S$ as follows:
    \begin{itemize}
        \item $\frg \otimes_k S$ is graded by $(\frg \otimes_k S)^i = \displaystyle\bigoplus_{p+q=i}(\frg^p \otimes_k S^q)$;
        \item The differential is given by $\d(x \otimes a) = \d x \otimes a$ for $x \in \frg$ and $a \in S$;
        \item The graded Lie bracket is given by $[x \otimes a, y \otimes b] = (-1)^{|a|\cdot|y|}[x,y] \otimes ab$ for homogeneous $x,y \in \frg$ and $a,b \in S$.
    \end{itemize}
\end{example}

A DGLA $(\frg,\d)$ is naturally a cochain complex. The corresponding cohomology $\mathrm H^\bullet(\frg) := \bigoplus_{i \in \Z}\mathrm H^i(\frg)$ is a graded vector space with an induced graded Lie bracket. $\mathrm H^\bullet(\frg)$ with the zero differential is again a DGLA.

\begin{definition}
    A \textbf{morphism of DGLAs} $f\colon (\frg,[,]_{\frg},\d_{\frg}) \to (\frh,[,]_{\frh},\d_{\frh})$ is a $k$-linear map $f\colon \frg \to \frh$ homogeneous of degree $0$ such that $f\circ \d_{\frg} = \d_{\frh}\circ f$ and $f([x,y]_{\frg}) = [f(x),f(y)]_{\frh}$.
    \begin{center}
        \begin{tikzcd}
            {\frg^i} & {\frg^{i+1}} \\
            {\frh^i} & {\frh^{i+1}}
            \arrow["{\d_{\frg}}", from=1-1, to=1-2]
            \arrow["{\d_{\frh}}", from=2-1, to=2-2]
            \arrow["f"', from=1-1, to=2-1]
            \arrow["f", from=1-2, to=2-2]
        \end{tikzcd}
    \end{center}
    If $f$ induces the isomorphism of the cohomologies $\mathrm H^\bullet(f)\colon \mathrm H^\bullet(\frg) \to \mathrm H^\bullet(\frh)$, then $f$ is called a \textbf{quasi-isomorphism of DGLAs}. 
\end{definition}

\subsection{Maurer--Cartan Equation}\label{sec:DGLA_MC}

In general deformation theory, the philosophy due to Deligne (\cite{deligne87letter}) is that deformation problems in characteristic $0$ are controlled by some differential graded Lie algebra. For deformation of algebras, the DGLA is the shifted Hochschild complex. We show that the deformation functor we defined in the previous section can be identified with a quotient of the set of zeroes of the Maurer--Cartan equation.

\newcommand{\MC}{\mathsf{MC}}
\newcommand{\MCbar}{\overline{\mathsf{MC}}}
\newcommand{\ad}{\operatorname{ad}}
\begin{definition}\label{def:II_MC_equ}
    Let $\frg$ be a DGLA. For $x \in \frg^1$, the \textbf{Maurer--Cartan equation} is given by
    \begin{equation}
        \d x + \dfrac{1}{2}[x,x] = 0. \label{equ:II_MC}
    \end{equation}
    The set of solutions is denoted by $\MC(\frg)$. It is clear from definition that the solutions of the Maurer--Cartan equation is preserved under a morphism of DGLAs. 
\end{definition}

$\MC(\frg)$ is also preserved under the gauge actions of $\frg^0$, which will be defined and studied in the following. For this part we follow the approach outlined in \cite{manetti05}.

\newcommand{\adr}{\operatorname{\bar{ad}}}
We would like to adjoint an element $\delta$ of degree 1 to $\frg$ such that $\d x = [\delta,x]$. To do this, we construct a new DGLA $\frg_\delta$ given by $\frg^1_\delta = \frg^1 \oplus k\delta$ and $\frg^i_\delta = \frg^i$ for $i \ne 1$. The graded Lie bracket and the differential are defined for homogeneous elements by
\begin{center}$
    [x+v\delta,y+w\delta]_\delta = [x,y] + v\d y + (-1)^{|x|}w\d x, \qquad \d_\delta(x+v\delta) = \d x = [x,v\delta]_\delta.
$\end{center}
The Maurer--Cartan equation can be expressed as
\begin{equation}
    \d x + \dfrac{1}{2}[x,x] = 0 \quad \iff \quad 
    [x+\delta, x+ \delta]_\delta = 0. \label{equ:II_MC_delta}
\end{equation}
We define the right adjoint action of $\frg^0$ on $\frg^1_\delta$, given by $\adr(x_0)(y+v\delta) = [y+v\delta,x_0]_\delta$. 
Suppose that $\frg^0$ is ad-nilpotent. The following lemma shows that the Lie algebra action can be exponentiated to a group action: 

\begin{lemma}[Exponentiation of Nilpotent Lie Algebras]{}
    There exists a functor $\exp$, which attaches a group $G$ to each nilpotent Lie algebra $\frh$, such that there is a surjection $\exp\colon \frh \to G$ satisfying $\exp(x)\exp(y) = \exp(x \star y)$, where $x \star y\in \frh$ is given by the Baker--Campbell--Hausdorff formula:
    \begin{equation}
        x \star y := x+y+\frac{1}{2}[x, y]+\frac{1}{12}[x,[x, y]]-\frac{1}{12}[y,[x, y]]+\cdots. \label{equ:II_BCH}
    \end{equation}
    The infinite sum truncates at some finite term because of the nilpotency of $\frh$.        
\end{lemma}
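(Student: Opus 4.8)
The plan is to construct $G$ and $\exp$ explicitly inside the universal enveloping algebra $U(\frh)$, exploiting nilpotency to make all the relevant power series terminate. First I would fix a nilpotency index $N$, so that any $(N+1)$-fold bracket vanishes; since the lower central series of a nilpotent Lie algebra eventually dies, the adjoint operators $\ad(x)$ are themselves nilpotent as endomorphisms of $\frh$. Inside $U(\frh)$, consider the two-sided ideal $I$ generated by all products of more than $N$ elements of $\frh$; the quotient $\bar U := U(\frh)/I^{\,\ge N+1}$ is a finite-dimensional associative algebra in which the image of every element of $\frh$ is nilpotent. Then for $x \in \frh$ the series $\exp(x) := \sum_{k\ge 0} x^k/k!$ is a \emph{finite} sum in $\bar U$, and $\log(1+u) := \sum_{k\ge 1}(-1)^{k+1}u^k/k$ is likewise finite for $u$ in the augmentation ideal. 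Define $G := \exp(\frh) \subseteq \bar U^\times$; I would then check $G$ is a subgroup by verifying that $\exp(x)\exp(y)$ again lies in $\exp(\frh)$.

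The key algebraic input is the Baker--Campbell--Hausdorff formula: in the free setting one has $\log(e^x e^y) = x \star y$ where $x \star y$ is the universal Lie series \eqref{equ:II_BCH}, a statement I would quote (e.g.\ from the Dynkin form, or by the standard argument that $\log(e^x e^y)$ is primitive in the free associative algebra hence lies in the free Lie algebra). Pulling this back along the canonical map from the free Lie algebra on two generators to $\frh$, the series $x\star y$ becomes a finite sum in $\frh$ because every term of bracket-length $> N$ vanishes, and the identity $\exp(x)\exp(y) = \exp(x\star y)$ holds in $\bar U$. This simultaneously shows $G$ is closed under multiplication, that $\exp(0) = 1$, and that $\exp(x)^{-1} = \exp(-x)$ (taking $y = -x$, all higher BCH terms vanish), so $G$ is a group and $\exp\colon \frh \to G$ is surjective by construction.

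Functoriality is then routine: a Lie algebra homomorphism $\varphi\colon \frh_1 \to \frh_2$ extends to $U(\frh_1) \to U(\frh_2)$, descends to the truncated quotients, and commutes with the finite exponential series, hence restricts to a group homomorphism $\exp(\varphi)\colon G_1 \to G_2$; composition and identities are preserved on the nose. The main obstacle is the BCH formula itself — specifically, justifying that $\log(e^x e^y)$ is a Lie element (primitivity in the free associative algebra, or the Friedrichs criterion) and that its terms are built only from iterated brackets, so that nilpotency of $\frh$ forces termination. Everything else — finiteness of the exponential and logarithm series in $\bar U$, the group axioms, functoriality — follows formally once that ingredient is in place. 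I would most likely cite BCH as a black box rather than reprove it, and spend the written proof on the truncation argument that turns the formal series into honest finite sums in $\frh$.
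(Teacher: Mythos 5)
Your proposal is correct in substance, but it is worth noting that the paper does not actually prove this lemma at all: it simply points to the literature (Getzler's paper on Lie theory for nilpotent $\linf$-algebras), so you have supplied an argument where the dissertation supplies a citation. The route you take is the standard one: realise $\exp$ inside the truncated universal enveloping algebra $\bar U = U(\frh)/I^{N+1}$ (with $I$ the augmentation ideal and $N$ the nilpotency class), where the exponential and logarithm series are honest finite sums, quote the Baker--Campbell--Hausdorff theorem in the free setting as the one nontrivial input, and observe that every BCH term of bracket length exceeding $N$ dies in $\frh$. This is sound; the only real black box is the primitivity/Lie-element argument for $\log(\e^x\e^y)$, which you correctly identify and reasonably defer. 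Two small caveats: the claim that $\bar U$ is finite-dimensional holds only when $\frh$ is finite-dimensional, which is not the case in the paper's intended application to $\frg^0 \otimes_k \mm_R$ for $\frg$ a DGLA of polydifferential operators --- but nothing in your argument uses finite-dimensionality, only that the image of $\frh$ in $\bar U$ consists of nilpotent elements, so the proof survives unchanged. Also, an even lighter alternative (closer in spirit to what references such as Getzler's do) is to skip the enveloping algebra entirely and define $G$ to be the set $\frh$ equipped with the multiplication $x \star y$, verifying the group axioms directly from the associativity of the BCH series; your construction buys a concrete matrix-style realisation of $G$ and makes functoriality transparent, at the cost of the PBW/truncation bookkeeping.
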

\begin{proof} 
    A discussion of this fact can be found in e.g.\ \cite{getzler09}.
\end{proof}
\begin{remark}
    Moreover, for any representation $\rho\colon \frh \to \mathfrak{gl}(V)$ such that $\rho(\frh)$ is a nilpotent subalgebra of $\mathfrak{gl}(V)$, the corresponding exponential representation $\exp(\rho)\colon \exp(\frh) \to \operatorname{GL}(V)$ is given by $(\exp \rho)\e^a = \e^{\rho(a)}$, where $\e^a$ denotes the usual exponential of an endomorphism.
\end{remark}
Therefore $\adr$ induces the action of $\exp(\frg^0)$ on $\frg^1_\delta$. We embed $\frg^1$ into $\frg^1_\delta$ as a hyperplane $\left\{ x+\delta\colon x \in \frg^1 \right\}$, which is preserved under this group action. 
Explicitly, we obtain a right action of $\exp(\frg^0)$ on $\frg^1$, given by
\begin{center}$
    \exp(x_0) \longmapsto \left( y \longmapsto \e^{\adr(x_0)}(y+\delta) - \delta = y + \d x_0 + [y,x_0] + \dfrac{1}{2}\left( [\d x_0,x_0] + [[y,x_0],x_0] \right) + \cdots \right)
$\end{center}
The group $\exp(\frg^0)$ is called the \textbf{gauge group}. On the other hand, $\adr$ induces the right action of $\frg^0$ on $\frg^1$ given by 
\begin{center}$
    x_0 \longmapsto \left( y \longmapsto \d x_0 + [y,x_0]\right).
$\end{center} 
The key observation is that:
\begin{lemma}[]{}
    The gauge action of $\exp(\frg^0)$ on $\frg^1$ preserves the set $\MC(\frg)$ of the solutions of the Maurer--Cartan equation.
\end{lemma}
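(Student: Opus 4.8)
The plan is to realise the gauge action as the restriction of an honest automorphism of the graded Lie algebra $\frg_\delta$, and then to invoke the reformulation \eqref{equ:II_MC_delta}: an element $x \in \frg^1$ solves the Maurer--Cartan equation if and only if $[x+\delta, x+\delta]_\delta = 0$ in $\frg_\delta$. Fix $x_0 \in \frg^0$ and write $D := \adr(x_0)$ for the operator $z \mapsto [z,x_0]_\delta$ on $\frg_\delta$. Since every element of the gauge group is, by the exponentiation lemma, of the form $\exp(x_0)$, and acts on $\frg^1$ by $x \mapsto \e^{D}(x+\delta) - \delta$, it suffices to show that $\e^{D}$ carries solutions of $[c,c]_\delta = 0$ to solutions.

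First I would check that $D$ is a derivation of $[-,-]_\delta$ of degree $0$, i.e. $D([a,b]_\delta) = [Da,b]_\delta + [a,Db]_\delta$ for homogeneous $a,b$; this is a direct consequence of the graded Jacobi identity together with graded skew-symmetry, using $|x_0| = 0$. Because $\frg^0$ is ad-nilpotent, $D$ is a nilpotent endomorphism, so $\e^{D} = \sum_{n \geq 0} \frac{1}{n!} D^{n}$ is a well-defined degree-$0$ linear operator on $\frg_\delta$, and the standard computation — expanding $D^{n}[a,b]_\delta$ by iterated Leibniz into a binomial sum — shows that $\e^{D}$ is an \emph{automorphism} of $(\frg_\delta, [-,-]_\delta)$. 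In particular $[\e^{D}(c), \e^{D}(c)]_\delta = \e^{D}([c,c]_\delta)$ for every $c \in \frg^1_\delta$.

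It remains to connect this with the gauge action. Since $D\delta = [\delta,x_0]_\delta = \d x_0 \in \frg^1$ and $D\frg^1 \subseteq \frg^1$, the affine hyperplane $\{x+\delta : x \in \frg^1\}$ is $\e^{D}$-invariant (as already observed when the action was defined), so $\e^{D}(x+\delta) = x' + \delta$ with $x' = \exp(x_0)\cdot x$. If now $x \in \MC(\frg)$, then $[x+\delta,x+\delta]_\delta = 0$ by \eqref{equ:II_MC_delta}, hence $[x'+\delta, x'+\delta]_\delta = \e^{D}([x+\delta,x+\delta]_\delta) = 0$, and \eqref{equ:II_MC_delta} again yields $x' \in \MC(\frg)$. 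The step requiring the most care is purely the graded-sign bookkeeping showing that $D$ is a genuine derivation (whence $\e^{D}$ respects the bracket); everything else is formal. One could equally avoid the automorphism language: setting $f(t) := [\e^{tD}(x+\delta), \e^{tD}(x+\delta)]_\delta$, the derivation property gives the linear ODE $f'(t) = D f(t)$ with $f(0) = 0$, whose unique solution is $f \equiv 0$, so in particular $f(1) = 0$.
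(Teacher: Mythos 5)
Your proof is correct and follows essentially the same route as the paper's: both reduce to the reformulation \eqref{equ:II_MC_delta} and use that $\e^{\adr(x_0)}$ acts as a bracket automorphism of $\frg_\delta$ preserving the affine hyperplane $\left\{ x+\delta \colon x \in \frg^1\right\}$, so that $[\e^{\adr(x_0)}(x+\delta),\e^{\adr(x_0)}(x+\delta)]_\delta = \e^{\adr(x_0)}[x+\delta,x+\delta]_\delta = 0$. You merely make explicit the derivation-to-automorphism verification that the paper's computation leaves implicit.
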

\begin{proof}
    We have $[x+\delta,x+\delta]_\delta = 0$ if and only if $x \in \MC(\frg)$ by \eqref{equ:II_MC_delta}. For convenience we denote by $x \cdot \exp(x_0)$ the right action of $\exp(x_0)$ on $x \in \frg^1$. Let $x \in \MC(\frg)$. We have
    \begin{align*}
        &\d(x\cdot \exp(x_0)) + \dfrac{1}{2}[x\cdot \exp(x_0),x\cdot \exp(x_0)] \\
        &\quad= [\delta, \e^{\adr(x_0)}(x+\delta)]_\delta + \dfrac{1}{2}\left([\e^{\adr(x_0)}(x+\delta),\e^{\adr(x_0)}(x+\delta)]_\delta - 2[\delta, \e^{\adr(x_0)}(x+\delta)]_\delta\right) \\
        &\quad= \e^{\adr(x_0)}[x+\delta,x+\delta]_\delta = 0.
    \end{align*}
    This shows that $x \cdot \exp(x_0) \in \MC(\frg)$.
\end{proof}




\begin{lemma}[]{lem:II_dgla_nil}
    Suppose that $R$ is a test $k$-algebra (\emph{cf.\ }Definition \ref*{def:II_test_alg}) and $\frg$ is any DGLA. Then the DGLA $\frg \otimes_k \mm_R$ is ad-nilpotent in its zeroth grading.
\end{lemma}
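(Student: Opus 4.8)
The plan is to reduce the whole statement to the single fact, recalled in the remark after Definition~\ref*{def:II_test_alg}, that the maximal ideal of a commutative local Artinian ring is nilpotent: fix $N \geq 1$ with $\mm_R^N = 0$. (If one wants this directly rather than by citation: the descending chain $\mm_R \supseteq \mm_R^2 \supseteq \cdots$ stabilises by the Artinian hypothesis, and Nakayama's lemma applied to the stable term forces it to vanish.)

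First I would unwind what the zeroth grading of $\frg \otimes_k \mm_R$ is. Regarding $\mm_R$ as an ungraded, hence degree-$0$, commutative $k$-algebra, the construction of the preceding example gives $(\frg \otimes_k \mm_R)^0 = \frg^0 \otimes_k \mm_R$, with Lie bracket $[x \otimes a,\, y \otimes b] = [x,y] \otimes ab$ on pure tensors (the sign disappears since everything sits in degree $0$).

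Next I would track iterated brackets. By a straightforward induction on $n$, any $n$-fold iterated Lie monomial built from pure tensors $x_1 \otimes a_1,\dots,x_n \otimes a_n$ has the form $w \otimes (a_1 \cdots a_n)$ for a single $w \in \frg^0$, hence lies in $\frg^0 \otimes_k \mm_R^n$. Since the bracket is $k$-bilinear and every element of $\frg^0 \otimes_k \mm_R$ is a finite sum of pure tensors, multilinearity of iterated brackets extends this to arbitrary elements: the $n$-th term of the lower central series of $\frg^0 \otimes_k \mm_R$ lies in $\frg^0 \otimes_k \mm_R^n$. Taking $n = N$ and using $\mm_R^N = 0$ shows that all $N$-fold brackets vanish, so $\frg^0 \otimes_k \mm_R$ is nilpotent as a Lie algebra, hence in particular ad-nilpotent in the sense required by the exponentiation lemma.

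There is essentially no obstacle. The only substantive input is the nilpotency of $\mm_R$, already flagged as standard; the remainder is bookkeeping, and the sole point worth a sentence is the passage from pure tensors to general elements, which is immediate from the multilinearity of iterated Lie monomials.
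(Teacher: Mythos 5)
Your proof is correct and follows essentially the same route as the paper's: identify $(\frg \otimes_k \mm_R)^0 = \frg^0 \otimes_k \mm_R$ with bracket $[x\otimes a, y\otimes b] = [x,y]\otimes ab$, and kill iterated brackets using the nilpotency of $\mm_R$. If anything, your version is slightly more careful at the final step — the paper computes $\ad(x\otimes a)^n(y\otimes b) = (\ad x)^n(y)\otimes a^n b$ only on pure tensors and then appeals to ``taking linear combinations,'' whereas your observation that every $n$-fold Lie monomial lands in $\frg^0 \otimes_k \mm_R^n$ handles general elements cleanly and in fact yields nilpotency of the whole Lie algebra, not just ad-nilpotency.
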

\begin{proof}
    The grading of $\mm_R$ is centred at zero, so $(\frg \otimes_k \mm_R)^0 = \frg^0 \otimes_k \mm_R$. The graded Lie bracket is given by $[x \otimes a, y \otimes b] = [x,y] \otimes ab$, where $x,y \in \frg$ are homogeneous and $a,b \in \mm_R$. Since $R$ is local Artinian, $\mm_R$ is nilpotent. For sufficiently large $n$, we have
    \begin{center}$
        \ad(x \otimes a)^n (y \otimes b) = (\ad x)^n(y) \otimes a^nb = 0.
    $\end{center}
    By taking linear combinations we deduce that $(\frg \otimes_k \mm_R)^0$ is ad-nilpotent.
\end{proof}

Following \ref{lem:II_dgla_nil}, for a test algebra $R$, we have a well-defined action of $\exp(\frg^0 \otimes_k \mm_R)$ on $\MC(\frg \otimes_k \mm_R)$. The set of orbits is denoted by
\begin{equation}
    \MCbar_{\frg}(R) := \MC(\frg \otimes_k \mm_R)/\exp(\frg^0 \otimes_k \mm_R). \label{equ:II_MC_functor}
\end{equation}
This is referred as the Maurer--Cartan moduli set. We can also check that the guage actions are preserved under a morphism of DGLAs. In this way we obtain a functor $\MCbar_{\frg}\colon \mathscr R \to \mathsf{Set}$, called the \textbf{Maurer--Cartan functor} associated to $\frg$. This is related to the deformation functor in Definition \ref{def:II_defo_func} by the following proposition.

\begin{proposition}[Deformation via Maurer--Cartan Functor]{prop:II_defo_MC}
    Let $A$ be an associative $k$-algebra, and $C^\bullet(A,A)[1]$ be the corresponding shifted Hochschild complex. Then we have an isomorphism of functors:
    \begin{center}
        $\defo_A \simeq \MCbar_{C^\bullet(A,A)[1]}.$
    \end{center}
\end{proposition}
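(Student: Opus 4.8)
The plan is to unwind the definitions on both sides and exhibit mutually inverse natural transformations. Fix a test algebra $R$ with maximal ideal $\mm_R$, and recall $A_R = A \otimes_k R$. Given an $R$-deformation $\star$ of $\mu$, its restriction to $A \otimes_k A$ takes values in $A_R = A \oplus (A \otimes_k \mm_R)$, and the condition $\star \equiv \mu \bmod \mm_R$ means we may write $\star = \mu + \gamma$ where $\gamma \in \Hom_k(A \otimes_k A, A \otimes_k \mm_R) = (C^2(A,A) \otimes_k \mm_R)$; equivalently $\gamma \in (C^\bullet(A,A)[1] \otimes_k \mm_R)^1$. The first key step is: $\star$ is associative if and only if $\Ger{\mu + \gamma, \mu + \gamma} = 0$ (using the remark after the corollary that associativity of $\eta$ is equivalent to $\Ger{\eta,\eta}=0$, applied in the DGLA $C^\bullet(A,A)[1] \otimes_k R$), which upon expanding $\Ger{\mu,\mu}=0$ and $\dhochm = \Ger{\mu,-}$ becomes exactly the Maurer--Cartan equation $\dhochm \gamma + \tfrac12 \Ger{\gamma,\gamma} = 0$ in $\frg \otimes_k \mm_R$ where $\frg = C^\bullet(A,A)[1]$. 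This gives a bijection $\{R\text{-deformations}\} \leftrightarrow \MC(\frg \otimes_k \mm_R)$ at the level of sets, before quotienting.

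The second key step is to match the equivalence relations. An equivalence of $R$-deformations is an $R$-module automorphism $\varphi$ of $A_R$ with $\varphi \equiv \id \bmod \mm_R$, so $\varphi = \id + \psi$ with $\psi \in \End_k(A) \otimes_k \mm_R = (C^1(A,A) \otimes_k \mm_R) = (\frg^0 \otimes_k \mm_R)$. Since $\mm_R$ is nilpotent (Lemma~\ref{lem:II_dgla_nil}), every such $\varphi$ is invertible and can be written as $\varphi = \exp(x_0)$ for a unique $x_0 \in \frg^0 \otimes_k \mm_R$ via the exponential of the nilpotent Lie algebra; conversely every $\exp(x_0)$ arises this way. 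The plan is then to verify that the conjugation action $\star \mapsto \varphi^{-1}(\varphi(-)\star\varphi(-))$ on deformations corresponds, under the identification above, precisely to the gauge action of $\exp(\frg^0 \otimes_k \mm_R)$ on $\MC(\frg \otimes_k \mm_R)$ as written out in Section~\ref{sec:DGLA_MC}. Concretely one checks the infinitesimal statement — that the derivation $x_0$ acts on $\gamma = \star - \mu$ by $\gamma \mapsto \dhochm x_0 + \Ger{\gamma, x_0}$, matching $y \mapsto \d x_0 + [y,x_0]$ — and then exponentiates, using that both the conjugation action and $\e^{\adr(x_0)}$ are group actions agreeing to first order on a pro-nilpotent situation, hence agree. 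Thus $\MCbar_\frg(R) = \MC/\exp(\frg^0 \otimes_k \mm_R)$ is in bijection with $\defo_A(R)$.

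The final step is naturality: a local homomorphism $R \to S$ induces $\mm_R \to \mm_S$, hence compatible maps on both $\defo_A$ and $\MCbar_\frg$ by functoriality of $- \otimes_k \mm_R$ and of $\exp$; since our bijection was built entirely from the canonical decomposition $A_R = A \oplus (A \otimes_k \mm_R)$ and the Gerstenhaber structure, it commutes with these base-change maps, giving an isomorphism of functors. I expect the main obstacle to be the bookkeeping in the second step: carefully tracking signs and the shift in $C^\bullet(A,A)[1]$ so that the Gerstenhaber-bracket reformulation of associativity and the exponentiated conjugation action land exactly on the Maurer--Cartan equation and the gauge action as normalised in the text, rather than up to a sign or a factor. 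The associativity$\iff$MC equivalence is essentially algebra once the conventions are pinned down; the genuinely careful part is the equivalence-relation matching.
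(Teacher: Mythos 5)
Your proposal is correct and follows essentially the same route as the paper: identify $\star-\mu$ with a degree-one element of $C^\bullet(A,A)[1]\otimes_k\mm_R$, use $\Ger{\mu,\mu}=0$ and $\dhochm=\Ger{\mu,-}$ to turn associativity into the Maurer--Cartan equation, and then match equivalences of deformations with the gauge action of $\exp(\frg^0\otimes_k\mm_R)$ (the paper likewise exploits that $\mu$ itself plays the role of $\delta$, so no adjunction is needed). Your treatment of the gauge-matching step is, if anything, slightly more explicit than the paper's, which simply asserts that $\star'=\e^{\Ger{-,f}}\star$ is the same as $\e^f$ being an equivalence.
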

\begin{proof}
    Let $R$ be a test algebra. Let $\mu$ be the multiplication on $A$, We consider $\star \in \Hom_k(A \otimes_k A, A \otimes_k R)$, and let $B(u,v) := u \star v - uv$. So $B = \star-\mu$ is of degree $1$ in the DGLA $\frg := C^\bullet(A,A)[1] \otimes_k \mm_R$. We can compute the Maurer--Cartan equation for $B$:
    \begin{align*}
        \dhochm B + \dfrac{1}{2}\Ger{B,B}
        &= \d {\star} - \d \mu  + \dfrac{1}{2}\Ger{\mu,\mu} + \dfrac{1}{2}\Ger{\star,\star} - \Ger{\mu,\star} 
        = \dfrac{1}{2}\Ger{\star,\star}.
    \end{align*}
    The vanishing of $\Ger{\star,\star}$ is exactly the associativity condition for $\star$. We deduce that $\star$ is a $R$-deformation if and only if $B \in \MC(\frg)$.

    Next, we need to check that the equivalence of $R$-deformations in Definition \ref{def:II_R_defo} 
    is generated by the gauge group $\exp(\frg^0)$. Before that we need to know how this action works. Since we have $\mu \in \frg^1$ such that $\dhochm x = \Ger{\mu,x}$ for $x \in \frg$, the adjunction of $\delta$ is unnecessary. $\exp(\frg^0)$ acts on $\frg^1$ by 
    \begin{center}$
        \exp(x_0) \longmapsto \left( y \longmapsto \e^{\Ger{-,x_0}}(y + \mu) - \mu \right).
    $\end{center}
    Suppose that $\star$ and $\star'$ are two $R$-deformations. Note that $B = \star - \mu$ and $B' = \star' - \mu$ are related by a gauge action $\exp(f)$ for some $f \in \frg^0$ if and only if $\star' = \e^{\Ger{-,f}}\star$. This is equivalent to saying that $f$ is an $R$-equivalence of $\star$ and $\star'$ in the sense of Definition \ref{def:II_R_defo}. 
    
    In summary, we obtain a bijective correspondence:
    \begin{center}$
        \defo_A(R) \simeq \MCbar_{C^\bullet(A,A)[1]}(R).
    $\end{center}
    It is straightforward to check that the correspondence is functorial in $R$.
\end{proof}

\begin{remark}
    It is more common in the literature (\emph{cf.\ }\cite{kon97}, \cite{manetti05}) to 
    call $\MCbar_\frg$ the deformation functor. Nevertheless we have shown that this is equivalent to our Definition \ref{def:II_defo_func}.
\end{remark}
\begin{remark}
    Following Example \ref{eg_II_formal_defo}, it is straightforward to extend the previous discussion to the formal deformation $R = k\formpow{\hbar}$ as a completion:
    \begin{center}$
        \frg \mathop{\hat\otimes} \mm_R := \varprojlim_n \frg \otimes \hbar \cdot k[\hbar]/{\ord{\hbar^n}} = \hbar \frg\formpow{\hbar}.
    $\end{center}
    For $x \in (\frg \mathop{\hat\otimes} \mm_R)^1 = \hbar \frg^1\formpow{\hbar}$, the associated Maurer--Cartan equation is called the \textbf{formal Maurer--Cartan equation}.
\end{remark}

Before closing the section, we mention a classical result in deformation theory. We will present and prove a generalised version of it in Section \ref{sec:L_infty_MC}.

\begin{theorem}[Quasi-Isomorphism Theorem for DGLAs]{thm:II_DGLA_quasi}
    Let $f\colon \frg \to \frg'$ be a morphism of DGLAs. If $f$ is a quasi-isomorphism, then $f$ induces an natural isomorphism of the Maurer--Cartan functors:
    \begin{center}
        $\MCbar_{\frg} \simeq \MCbar_{\frg'}.$
    \end{center}
\end{theorem}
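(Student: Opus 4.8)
The plan is to reduce the statement to a filtered/nilpotent setting, where one can work order-by-order, and then exploit the fact that a quasi-isomorphism of DGLAs restricts, after tensoring with the nilpotent maximal ideal $\mm_R$ of a test algebra, to a map which is bijective on Maurer--Cartan solutions modulo gauge. Concretely, fix a test algebra $R$ and consider the induced morphism $f\otimes\id\colon \frg\otimes_k\mm_R\to\frg'\otimes_k\mm_R$. By Lemma \ref{lem:II_dgla_nil} both sides are ad-nilpotent in degree $0$, so the gauge groups $\exp(\frg^0\otimes_k\mm_R)$ and $\exp(\frg'^0\otimes_k\mm_R)$ are well-defined, and $f\otimes\id$ is visibly compatible with the differential, the bracket, hence with the Maurer--Cartan equation and with the gauge action; so it descends to a map $\MCbar_\frg(R)\to\MCbar_{\frg'}(R)$. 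The task is to show this map is a bijection, functorially in $R$.

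The key step is to run an induction on the nilpotency length of $\mm_R$, using the small extensions $0\to I\to R\to R/I\to 0$ with $\mm_R\cdot I=0$, i.e.\ $I\cong k^{\oplus n}$ as a module annihilated by $\mm_R$. For such an extension the fibre of $\MCbar_\frg(R)\to\MCbar_\frg(R/I)$ over a given class is governed by the cohomology of $\frg$: the obstruction to lifting a Maurer--Cartan element lives in $\mathrm H^2(\frg)\otimes I$ and, when it vanishes, the set of lifts modulo gauge is a torsor (or quotient) involving $\mathrm H^1(\frg)\otimes I$ and $\mathrm H^0(\frg)\otimes I$. Since $f$ is a quasi-isomorphism, $\mathrm H^\bullet(f)$ is an isomorphism in every degree, so $f$ matches obstruction classes and the $\mathrm H^1,\mathrm H^0$ data on the nose; by the five-lemma-type bookkeeping (existence from surjectivity on $\mathrm H^2$-obstructions being compatible, uniqueness from bijectivity on $\mathrm H^1$, and the gauge redundancy matched by bijectivity on $\mathrm H^0$) one concludes that $\MCbar_\frg(R)\to\MCbar_{\frg'}(R)$ is bijective whenever it is bijective over $R/I$. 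The base case $R=k$ is trivial since $\mm_k=0$ and both moduli sets are a point.

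I would organise the argument so that the inductive step is isolated as a lemma: \emph{a morphism of DGLAs inducing isomorphisms on $\mathrm H^0,\mathrm H^1,\mathrm H^2$ induces a bijection on Maurer--Cartan moduli for every test algebra}. Proving that lemma requires the explicit deformation-theoretic analysis of a small extension — choosing a set-theoretic lift of a Maurer--Cartan element, writing the failure of the lifted equation as a cocycle in $\frg^2\otimes I$, checking it is a cocycle, identifying its class as the obstruction, and then, in the unobstructed case, parametrising the lifts and quotienting by the gauge action of $\exp(\frg^0\otimes(\mm_{R}\ltimes I))$ — together with the parallel statements for $\frg'$ intertwined by $f$. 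The main obstacle is precisely this bookkeeping: making the obstruction/lift/gauge correspondence precise and checking at each stage that $f$ carries the relevant cocycles and coboundaries to their counterparts, so that an isomorphism on cohomology really does force a bijection on the quotient sets. This is where one must be careful about signs (the modified differential $\dhochm$ conventions) and about the fact that the gauge group, not just a vector space, acts — I would handle the latter by noting that over a small extension the new part of the gauge group acts through its abelianisation $\frg^0\otimes I$, reducing everything to linear algebra in cohomology. Once the lemma is in hand, the theorem follows by the induction described above, and functoriality in $R$ is immediate from the naturality of all the constructions involved.
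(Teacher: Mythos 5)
The paper does not actually prove this theorem: its ``proof'' is a citation to Theorem 3.1 of Manetti's notes. Your outline is, in essence, a reconstruction of that standard obstruction-theoretic argument (induction over small extensions $0 \to I \to R \to R/I \to 0$ with $\mm_R \cdot I = 0$, obstructions in $\mathrm H^2 \otimes I$, lifts governed by $\mathrm H^1 \otimes I$, gauge redundancy by $\mathrm H^0 \otimes I$), so in approach you are aligned with the source the paper defers to. The surjectivity half of your inductive step closes cleanly: the obstruction cocycle is well defined on gauge classes, $\mathrm H^2(f)$ injective kills it, and the discrepancy $f(x)-y \in \mathrm Z^1(\frg')\otimes I$ is absorbed using surjectivity of $\mathrm H^1(f)$ together with the fact that the bracket of anything in $\frg'\otimes I$ with anything in $\frg'\otimes \mm_R$ vanishes.

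The one place where your bookkeeping is genuinely harder than you suggest is the injectivity half, and your proposed fix --- ``the new part of the gauge group acts through its abelianisation $\frg^0\otimes I$'' --- does not cover it. The fibre of $\MCbar_\frg(R)\to\MCbar_\frg(R/I)$ over a class $[\bar x]$ is not just $\mathrm Z^1(\frg)\otimes I$ modulo $\mathrm B^1(\frg)\otimes I$: gauge elements $a\in\frg^0\otimes\mm_R$ whose reduction merely \emph{stabilises} $\bar x$ (rather than being trivial) also move lifts of $\bar x$ to other lifts, contributing extra identifications beyond the coboundaries from $\frg^0\otimes I$. This is precisely the failure of $\MCbar_\frg$ to be a homogeneous functor, and it is why Goldman--Millson and Manetti either prove an ``inverse function theorem'' for deformation functors in the weak sense or first factor the quasi-isomorphism through a surjective one via a path-object construction. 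To run your injectivity step you must show that a stabilising gauge element of $f(\bar x)$ can be matched by one of $\bar x$ up to the kernel of the relevant linearised action, which is where $\mathrm H^0(f)$ being an isomorphism actually gets used; as written, the ``five-lemma-type bookkeeping'' elides this. The strategy is sound and standard, but this stabiliser analysis is the real content of the cited theorem and needs to be carried out explicitly.
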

\begin{proof}
    See Theorem 3.1 of \cite{manetti05}.
\end{proof}

\section{Polydifferential Operators and Polyvector Fields}\label{sec:poly_DGLA}

We have seen how the differential graded Lie algebra controls the deformation problem of a general associative algebra. In this section we return to the settings in differential geometry. Let $M$ be a smooth manifold. Denote by $A := \smooth{M}$ the smooth functions on $M$.

\begin{definition}
    An \textbf{$\bm n$-differential operator} $f \in \Hom_\R(A^{\otimes n},A)$ is a $n$-linear map such that $f$ is a differential operator in each argument. The set of all $n$-differential operators is denoted by $\dpoly[n-1]$. Note that $\dpoly[n]\subseteq C^{n+1}(A,A)$ and $\delta(\dpoly[n]) \subseteq \dpoly[n+1]$, where $\delta$ is the Hochschild differential. Therefore $\dpoly[\bullet]$ is a subcomplex of the shifted Hochschild complex $C^\bullet(A,A)[1]$, and is called the \textbf{differential Hochschild complex}. Just like $C^\bullet(A,A)[1]$, $\dpoly[\bullet]$ with the modified Hochschild differential $\dhochm$ and the Gerstenhaber bracket $\Ger{-,-}$ is a DGLA.
\end{definition}

With $R = \R\formpow{\hbar}$ and $\mm_R = \hbar\R\formpow{\hbar}$, an analogy of Proposition \ref{prop:II_defo_MC} is the following:

\begin{proposition}[Star Products via Maurer--Cartan Equation]{prop:II_star_MC}
    The set of equivalence classes of star products on $M$ is in bijective correspondence with the set of solutions of the \emph{formal} Maurer--Cartan equation for $\dpoly$ modulo gauge actions:
    \begin{center}$
        \MCbar_{\dpoly}(\R\formpow{\hbar}) = \MC(\hbar\dpoly\formpow{\hbar})/\exp(\hbar\dpoly[0]\formpow{\hbar}).
    $\end{center}
\end{proposition}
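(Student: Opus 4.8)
The plan is to transcribe, nearly verbatim, the proof of Proposition~\ref{prop:II_defo_MC} to the differential sub-DGLA $\dpoly \subseteq C^\bullet(A,A)[1]$, taking the test algebra $R = \R\formpow{\hbar}$ in the completed sense, i.e.\ working with $\dpoly \mathop{\hat\otimes} \mm_R = \hbar\dpoly\formpow{\hbar}$ exactly as in the remark following Proposition~\ref{prop:II_defo_MC}. (The nilpotency input supplied by Lemma~\ref{lem:II_dgla_nil} is replaced here by $\hbar$-adic completeness, which is what makes $\exp(\hbar\dpoly[0]\formpow{\hbar})$ a well-defined group acting on $\MC(\hbar\dpoly\formpow{\hbar})$.) First I would set up the correspondence at the level of products: given a star product $\star = \mu + \sum_{m\geq1}\hbar^m B_m$ with each $B_m$ bidifferential (Example~\ref{eg_II_star}), put $B := \star - \mu$. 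A $2$-differential operator lies in $\dpoly[1] \subseteq C^2(A,A)$, which sits in degree~$1$ of the shifted complex, so $B$ lies in $\hbar\dpoly[1]\formpow{\hbar}$, the degree-$1$ part of $\hbar\dpoly\formpow{\hbar}$; and the one-line computation from Proposition~\ref{prop:II_defo_MC} still gives $\dhochm B + \tfrac12\Ger{B,B} = \tfrac12\Ger{\star,\star}$, whose vanishing is exactly associativity of $\star$. Hence $\star \mapsto B$ is a bijection between associative bidifferential deformations of $\mu$ and $\MC(\hbar\dpoly\formpow{\hbar})$.

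Next I would dispose of the unit axiom, showing it imposes no condition after passing to orbits. By the standard normalisation argument (an associative $\star$ with $\star \equiv \mu \bmod \hbar$ makes $f \mapsto 1\star f$ an invertible $\R\formpow{\hbar}$-linear operator congruent to $\id$ mod $\hbar$, and one corrects order by order; see \cite{gutt01notes}), every associative bidifferential deformation of $\mu$ is equivalent, via $\exp$ of a suitable $x_0 \in \hbar\dpoly[0]\formpow{\hbar}$, to one that is unital with unit the constant function $1$. Consequently the set of equivalence classes of star products coincides with the set of equivalence classes of associative bidifferential deformations of $\mu$.

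It then remains to match equivalence of star products with the gauge action of $\exp(\hbar\dpoly[0]\formpow{\hbar})$, and here I would again follow Proposition~\ref{prop:II_defo_MC}: since $\mu \in \dpoly[1]$ satisfies $\dhochm(-) = \Ger{\mu,-}$ on all of $\dpoly$, no auxiliary $\delta$ is needed and the gauge group acts by $\exp(x_0)\colon B \mapsto \e^{\Ger{-,x_0}}(B+\mu) - \mu$, equivalently $\star \mapsto \e^{\Ger{-,x_0}}\star$. The identity established there gives $\e^{\Ger{-,x_0}}\star = \varphi^{-1}\circ\star\circ(\varphi\otimes\varphi)$ with $\varphi := \e^{x_0}$, and $\varphi$ is an $\R\formpow{\hbar}$-linear automorphism congruent to $\id$ mod $\hbar$ whose coefficients are differential operators (a composition of differential operators is a differential operator, and the exponential series converges in the $\hbar$-adic topology). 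So every gauge transformation is an equivalence of star products.

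The hard part will be the converse of that last point: one must know that an \emph{arbitrary} equivalence $\varphi = \id + \sum_{m\geq1}\hbar^m\varphi_m$ between two bidifferential star products can be realised with the $\varphi_m$ differential operators, for only then does $x_0 := \log\varphi$ land in $\hbar\dpoly[0]\formpow{\hbar}$ and exhibit $\varphi$ as a gauge transformation. This regularity statement is built into the definition of equivalence in many treatments and is otherwise a theorem of Gerstenhaber--Schack type; together with the unit normalisation of the second step it is the only point that goes beyond a mechanical copy of the proof of Proposition~\ref{prop:II_defo_MC} --- everything else is that proof with $C^\bullet(A,A)[1]$ replaced by $\dpoly$ and $\mm_R$ by its $\hbar$-adic completion $\hbar\R\formpow{\hbar}$.
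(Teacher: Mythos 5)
Your proposal is correct and matches the paper's intent exactly: the paper offers no proof of this proposition at all, simply presenting it as ``an analogy of Proposition~\ref{prop:II_defo_MC}'' with $R = \R\formpow{\hbar}$ in the completed sense, which is precisely the transcription you carry out. The two subtleties you isolate --- normalising the unit via a gauge transformation, and the fact that an arbitrary equivalence $\varphi = \id + \sum_m \hbar^m\varphi_m$ must be realised with the $\varphi_m$ \emph{differential} operators before $\log\varphi$ lands in $\hbar\dpoly[0]\formpow{\hbar}$ --- are genuine gaps that the paper silently elides (its Definition of equivalence in Section~\ref{sec:star} only asks for $\R$-linearity of the $\varphi_m$), so flagging them is a strict improvement rather than a deviation.
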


\subsection{Polyvector Fields}

To study the deformations of the Poisson structure on $M$, we need to introduce another differential graded Lie algebra.

\begin{definition}
    Let $\tanb M$ be the tangent bundle of $M$. An \textbf{$\bm n$-vector field} is a section of the $n$-th exterior product $\bigwedge^n \tanb M$. The set of $n$-vector fields is denoted by $\tpoly[n-1] := \Gamma(M,\bigwedge^{n} \tanb M)$. The set of polyvector fields is the graded vector space given by 
    \begin{center}$
        \tpoly = \displaystyle\bigoplus_{i=-1}^\infty \tpoly[i] = \Gamma(M,\textstyle\bigwedge^\bullet \tanb M)[1],
    $\end{center}
    where $\bigwedge^0 \tanb M = M$ by convention. Also notice that we have shifted the degree by $1$, as in the case of $\dpoly$.
\end{definition}

Recall that $\tpoly[0] = \Gamma(M,\tanb M)$ has a Lie algebra structure given by the Lie derivatives:
\begin{center}$
    [X,Y] := \lag_X Y.
$\end{center}
It has a unique extension to the graded Lie algebra structure on $\tpoly$:
\newcommand{\SN}[1]{\left[ #1 \right]_{\mathsf{SN}}}
\begin{definition}
    For $X := X_1 \wedge \cdots \wedge X_{p} \in \tpoly[p-1]$ and $Y := Y_1 \wedge \cdots \wedge Y_{q} \in \tpoly[q-1]$, we define the \textbf{Schouten--Nijenhuis bracket} $\SN{X, Y} \in \tpoly[p+q-2]$ as 
    \begin{center}$
        \SN{X, Y} := \displaystyle\sum_{r=1}^{p}\displaystyle\sum_{s=1}^{q} (-1)^{r+s}[X_r, Y_s]\wedge X_1\wedge\cdots\wedge\hat X_r\wedge\cdots\wedge X_{p}\wedge Y_1\wedge\cdots\wedge\hat Y_s\wedge\cdots\wedge Y_{q}.
    $\end{center}
    where $\hat X_i$ denotes the omission of $X_i$ in the expression.
    If $f \in \smooth{M} = \tpoly[-1]$, we define:
    \begin{center}$
        \SN{X, f} := \displaystyle\sum_{r=1}^p (-1)^{p-r} {X_r}(f)\cdot X_1 \wedge \cdots \wedge \hat{X}_r \wedge \cdots \wedge X_p.
    $\end{center}  
    Finally we extend the bracket bilinearly to $\tpoly[\bullet]$. 
\end{definition}

\begin{lemma}[Properties of Schouten--Nijenhuis Bracket]{}
    The Schouten--Nijenhuis bracket is the unique graded Lie bracket on $\tpoly$ such that 
    \begin{itemize}
        \item $\SN{X,f} = \lag_X f = X(f)$ for $X \in \Gamma(M,\tanb M)$ and $f \in \smooth{M}$;
        \item $\SN{X,Y} = \lag_X Y$ for $X,Y \in \Gamma(M,\tanb M)$;
        \item $\SN{X, Y \wedge Z} = \SN{X, Y} \wedge Z + (-1)^{(p-1)q}Y \wedge \SN{X , Z}$  
        \hfill {($*$)}

        for $X,Y,Z \in \tpoly$ homogeneous of degree $p,q,r$ respectively.
    \end{itemize}
\end{lemma}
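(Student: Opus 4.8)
The plan is to show that the explicit formula of the preceding definition is a graded Lie bracket satisfying the three listed properties, and then that these three properties determine the bracket uniquely. The key simplification is that $\SN{-,-}$ is \emph{local}: the bracket $[X_r,Y_s]$ is a first-order bidifferential operator and the wedge product is pointwise, so $\SN{-,-}$ is bidifferential. Hence all identities may be checked on decomposable polyvector fields $X=X_1\wedge\cdots\wedge X_p$, and, after patching with a partition of unity, in a single coordinate chart. Observe also that a decomposable is a wedge of homogeneous elements of exterior degree $0$ and $1$ (functions and vector fields), since $f\,X_1\wedge\cdots\wedge X_p=f\wedge X_1\wedge\cdots\wedge X_p$ for $f\in\smooth{M}$; this is what lets us reduce everything to brackets of functions and vector fields.

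\emph{The bracket exists with the stated properties.} Properties one and two are read directly off the definition (take $p=q=1$, resp.\ $q=1$, in the two displayed formulas). Graded skew-symmetry, $\SN{X,Y}=-(-1)^{(p-1)(q-1)}\SN{Y,X}$, follows from $[X_r,Y_s]=-[Y_s,X_r]$ together with the sign $(-1)^{(p-1)(q-1)}$ picked up when the $Y$-block of wedge factors is moved past the $X$-block (plus a short check in the mixed cases involving functions). Property $(*)$ is obtained by expanding $\SN{X,Y\wedge Z}$ from the definition and sorting terms according to whether the ``inner'' vector field lies in the $Y$- or the $Z$-block; the sign $(-1)^{(p-1)q}$ is exactly the Koszul sign of commuting the degree-$(p-1)$ operator $\SN{X,-}$ past $Y$. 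For the graded Jacobi identity I would avoid brute-force expansion and argue structurally: having established $(*)$, the operator $\ad_X:=\SN{X,-}$ is a graded derivation of the graded-commutative algebra $\bigl(\Gamma(M,\bigwedge^\bullet\tanb M),\wedge\bigr)$ of degree $p-1$, and the graded Jacobi identity for $\SN{-,-}$ is then equivalent to $\ad_{\SN{X,Y}}=[\ad_X,\ad_Y]$ (graded commutator of derivations). Both sides are graded derivations of the same degree, so they agree once they agree on the algebra generators $\smooth{M}$ and $\Gamma(M,\tanb M)$, where the identity reduces to the ordinary Jacobi identity for the Lie bracket of vector fields and to the fact that $X\mapsto\lag_X$ is an action of that Lie algebra on $\smooth{M}$.

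\emph{The bracket is unique.} Let $B(-,-)$ be any graded Lie bracket on $\tpoly$ satisfying properties one, two, and $(*)$. Combining graded skew-symmetry with $(*)$ shows that $B$ obeys a derivation rule in its first argument as well, so $\ad_X=B(X,-)$ is a (local) graded derivation. Applying the two-sided derivation property repeatedly to decomposables $X=X_1\wedge\cdots\wedge X_p$ and $Y=Y_1\wedge\cdots\wedge Y_q$ (allowing some factors to be functions) rewrites $B(X,Y)$ as an explicit signed sum of terms $X_1\wedge\cdots\wedge B(X_r,Y_s)\wedge\cdots\wedge Y_q$, each inner bracket now between a function or a vector field. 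These atomic brackets are forced: $B(f,g)=0$ for $f,g\in\smooth{M}$ because the value lies in $\tpoly[-2]=0$; $B(X_r,g)=\pm X_r(g)$ by property one and skew-symmetry; and $B(X_r,Y_s)=[X_r,Y_s]$ by property two. Collecting the accumulated signs reproduces precisely the formula of the definition, so $B$ agrees with $\SN{-,-}$ on decomposables, and hence, by $\R$-bilinearity together with a partition of unity, everywhere.

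The step I expect to be the main obstacle is the sign bookkeeping, which appears in two guises: matching the Koszul signs $(-1)^{(p-1)q}$ in $(*)$ and $(-1)^{(p-1)(q-1)}$ in skew-symmetry against the combinatorial signs $(-1)^{r+s}$ in the definition; and, in the derivation argument for Jacobi, keeping the grading conventions straight — the exterior degree used for the algebra $(\Gamma(M,\bigwedge^\bullet\tanb M),\wedge)$ is unshifted, so $\ad_X$ has degree $p-1$ rather than $p$, and one must also verify that the graded commutator of two graded derivations is again a graded derivation of the summed degree before invoking the ``agreement on generators'' principle.
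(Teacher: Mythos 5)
Your proposal is correct in substance, but it takes a genuinely different route from the paper, which simply declares that all properties ``can be verified directly by definition, though some brute force computations are inevitable,'' defers the details to the references, and states that uniqueness follows ``by inductions on the degree of $X$ and $Y$ consecutively.'' Your replacement of the brute-force Jacobi verification by the structural argument --- establish $(*)$ first, observe that $\ad_X := \SN{X,-}$ is then a graded derivation of degree $p-1$ of the wedge algebra, recast graded Jacobi as $\ad_{\SN{X,Y}} = [\ad_X,\ad_Y]$, and compare two derivations on generators --- is the more conceptual path and is exactly what makes the sign bookkeeping tractable; what it costs you is the obligation to verify the preliminary facts (that $(*)$ and skew-symmetry really do hold, and that a graded commutator of graded derivations is a graded derivation), which is where the residual computation hides. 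Your uniqueness argument is, up to packaging, the same induction the paper alludes to: the derivation rule in both slots peels decomposables down to the atomic brackets $B(f,g)$, $B(X_r,g)$, $B(X_r,Y_s)$, which are forced by degree reasons and by properties one and two.

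One step deserves more care than you give it. The identity $\ad_{\SN{X,Y}} = [\ad_X,\ad_Y]$ is an equality of derivations in the \emph{third} slot only, so agreement on generators reduces the Jacobi identity to the case where $Z$ is a function or a vector field --- but $X$ and $Y$ are still arbitrary. You must then iterate: using skew-symmetry together with $(*)$, the discrepancy $(X,Y) \mapsto \ad_{\SN{X,Y}}(Z) - [\ad_X,\ad_Y](Z)$ (for fixed generator $Z$) is itself controlled by a derivation rule in $X$ and in $Y$, which lets you reduce those slots to generators as well; only then does the identity collapse to the ordinary Jacobi identity for vector fields and the Lie-derivative action on $\smooth{M}$. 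This is precisely the ``induction on the degree of $X$ and $Y$ consecutively'' that the paper invokes for uniqueness, and it is needed for existence too. With that reduction spelled out, your argument is complete.
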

\begin{proof}
    We can verify all properties (graded skew-symmetry, graded Jacobi identity, equation ($*$)) of the Schouten--Nijenhuis bracket directly by definition, though some brute force computations are inevitable. We omit the details, which can be found in, e.g.\ \cite{cattaneo04notes}. The uniqueness is proven by inductions on the degree of $X$ and $Y$ consecutively.
\end{proof}

\begin{remark}
    Now $\tpoly$ with the Schouten--Nijenhuis bracket $\SN{-,-}$ and the \emph{zero differential} is a DGLA. 
\end{remark}

In local coordinates, a polyvector field $X \in \tpoly[p+1]$ has the expansion in terms of the frame vector fields:
\begin{center}$
    X = \displaystyle\sum_{i_1,...,i_p = 1}^{\dim M}X^{i_1,...,i_p}\pa_{i_1} \wedge \cdots \wedge \pa_{i_k}.
$\end{center}
Analogous to the Gerstenhaber product, for $X \in \in \tpoly[p-1]$ and $Y \in \in \tpoly[q-1]$ we can define the product $X \diamond Y \in \tpoly[p+q-2]$ by:
\begin{equation} \hspace*{-1.5em}
    X \diamond Y := \displaystyle\sum_{r = 1}^p\displaystyle\sum_{i_1,...,i_p,j_1,...,j_q = 1}^{\dim M} (-1)^{r+1}X^{i_1,...,i_p}\pa_{r}Y^{j_1,...,j_q}\pa_{i_1} \wedge \cdots \wedge \hat{\pa_{i_r}} \wedge \cdots \wedge \pa_{i_p} \wedge \pa_{j_1} \wedge \cdots \wedge \pa_{j_q} \label{equ:II_polyvec_prod}
\end{equation}
\begin{lemma}[]{}
    The Schouten--Nijenhuis bracket can be expressed in terms of the product defined in \eqref{equ:II_polyvec_prod} as:
    \begin{center}$
        \SN{X,Y} = (-1)^{p}X \diamond Y - (-1)^{(p-1)q}Y \diamond X
    $\end{center}
    where $X \in \tpoly[p]$ and $Y \in \tpoly[q]$.
\end{lemma}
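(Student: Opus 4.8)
The plan is to reduce to a local coordinate computation. Both $\SN{-,-}$ and the combination $(-1)^p X\diamond Y - (-1)^{(p-1)q}Y\diamond X$ are local operators and $\R$-bilinear in $X$ and $Y$, so it suffices to verify the identity in a fixed coordinate chart on $M$, and there — since in a chart every polyvector field is a finite $\R$-linear combination of monomials — on a pair
\[
  X = f\,\pa_{i_1}\wedge\cdots\wedge\pa_{i_{p+1}},\qquad Y = g\,\pa_{j_1}\wedge\cdots\wedge\pa_{j_{q+1}},
\]
with $f,g\in\smooth{M}$ and the indices strictly increasing (recall $X\in\tpoly[p]$ means $X$ is a $(p+1)$-vector field).

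First I would unwind the right-hand side from \eqref{equ:II_polyvec_prod}: on such monomials only the single coefficient survives, so
\[
  X\diamond Y = \sum_{r=1}^{p+1}(-1)^{r+1}f\,(\pa_{i_r}g)\;\pa_{i_1}\wedge\cdots\wedge\hat{\pa_{i_r}}\wedge\cdots\wedge\pa_{i_{p+1}}\wedge\pa_{j_1}\wedge\cdots\wedge\pa_{j_{q+1}},
\]
and symmetrically for $Y\diamond X$ (with $f,g$ and the $i$'s, $j$'s exchanged). For the left-hand side the key point is that $X$ and $Y$ are \emph{decomposable}, $X=(f\pa_{i_1})\wedge\pa_{i_2}\wedge\cdots\wedge\pa_{i_{p+1}}$ and likewise $Y$, so the defining formula for the Schouten--Nijenhuis bracket applies verbatim. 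In the double sum $\sum_{r,s}(-1)^{r+s}[X_r,Y_s]\wedge(\cdots)$ every term with $r\ge 2$ and $s\ge 2$ vanishes because $[\pa_{i_r},\pa_{j_s}]=0$; the surviving terms use only
\[
  [f\pa_{i_1},\pa_{j_s}]=-(\pa_{j_s}f)\pa_{i_1},\quad [\pa_{i_r},g\pa_{j_1}]=(\pa_{i_r}g)\pa_{j_1},\quad [f\pa_{i_1},g\pa_{j_1}]=f(\pa_{i_1}g)\pa_{j_1}-g(\pa_{j_1}f)\pa_{i_1},
\]
and collecting them separates $\SN{X,Y}$ into a ``$\pa g$-part'' $\sum_{r}(-1)^{r+1}f(\pa_{i_r}g)\,\pa_{j_1}\wedge(\pa_{i_1}\wedge\cdots\wedge\hat{\pa_{i_r}}\wedge\cdots\wedge\pa_{i_{p+1}})\wedge\pa_{j_2}\wedge\cdots\wedge\pa_{j_{q+1}}$ and a ``$\pa f$-part'' $-\sum_{s}(-1)^{s+1}g(\pa_{j_s}f)\,\pa_{i_1}\wedge\cdots\wedge\pa_{i_{p+1}}\wedge\pa_{j_1}\wedge\cdots\wedge\hat{\pa_{j_s}}\wedge\cdots\wedge\pa_{j_{q+1}}$.

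The final step is to reorganise each part into the canonical order used by $\diamond$. In the $\pa g$-part, moving $\pa_{j_1}$ past the $p$ surviving factors $\pa_{i_\bullet}$ contributes $(-1)^p$ and yields exactly $(-1)^p\,X\diamond Y$; in the $\pa f$-part, commuting the block of $p+1$ factors $\pa_{i_\bullet}$ past the $q$ surviving factors $\pa_{j_\bullet}$ contributes $(-1)^{(p+1)q}$ and yields $-(-1)^{(p+1)q}Y\diamond X$. Adding the two parts and using the parity identity $(-1)^{(p+1)q}=(-1)^{(p-1)q}$ gives the claimed formula.

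The only genuinely delicate point is the Koszul sign bookkeeping: reconciling the signs $(-1)^{r+s}$ of the Schouten--Nijenhuis formula with the permutation signs produced when the wedge factors are reshuffled into the order of \eqref{equ:II_polyvec_prod}, and checking the cases $r=s=1$ are consistent with the generic ones. An alternative that bypasses the computation: the right-hand side is manifestly graded skew-symmetric, restricts to $\lag_X Y$ when $X$ is a vector field and to $X(f)$ when the second slot is a function, and satisfies the Leibniz rule~($*$); since $\SN{-,-}$ is characterised uniquely by these properties — the graded Jacobi identity for the right-hand side following exactly as in Lemma~\ref{lem:II_Ger_Jacobi}, because $\diamond$ satisfies the same graded pre-Lie--type identity as the Gerstenhaber product — the two brackets must coincide.
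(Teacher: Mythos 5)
Your proof is correct, but it is worth noting that the paper does not actually prove this lemma at all: its ``proof'' is a citation to Lemma IV.2.1 of \cite{arnal02}. So any honest argument here is necessarily a different route, and yours is a reasonable one. I checked your sign bookkeeping on the decomposable monomials $X = f\,\pa_{i_1}\wedge\cdots\wedge\pa_{i_{p+1}}$, $Y = g\,\pa_{j_1}\wedge\cdots\wedge\pa_{j_{q+1}}$: the cross terms with $r,s\ge 2$ do vanish, the $r=s=1$ term splits exactly into the $r=1$ instance of your ``$\pa g$-part'' and the $s=1$ instance of your ``$\pa f$-part'', and the reshuffling signs $(-1)^p$ (moving $\pa_{j_1}$ past $p$ surviving $\pa_{i}$'s) and $(-1)^{(p+1)q}=(-1)^{(p-1)q}$ (moving the $i$-block past $q$ surviving $\pa_{j}$'s) are right; the low cases $(p,q)=(0,0)$ and $(1,0)$ confirm the formula. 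The one point you gloss over is that \eqref{equ:II_polyvec_prod} is written as an unrestricted sum over the component indices $i_1,\dots,i_p$, so on an antisymmetric monomial the ``single surviving coefficient'' claim implicitly fixes a normalisation convention (sum over ordered multi-indices, or a compensating $1/p!$); since the paper itself leaves this convention unspecified, your reading is the one that makes the identity hold, but it deserves a sentence. Your alternative argument via the uniqueness clause of the paper's characterisation lemma is also viable --- I verified that the right-hand side is graded skew-symmetric in the shifted grading --- but it is not really a shortcut: to invoke uniqueness you must show the right-hand side is a graded \emph{Lie} bracket, and establishing the graded pre-Lie identity for $\diamond$ (the analogue of the associator symmetry in Lemma \ref{lem:II_Ger_Jacobi}) is a computation of comparable length to the direct one, so I would present the coordinate computation as the primary proof and the uniqueness route only as a remark.
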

\begin{proof}
    See Lemma IV.2.1 of \cite{arnal02}.
\end{proof}

\begin{remark}
    Let $V := \Gamma(M,\tanb M)$. Then $\tpoly[p-1] = \bigwedge^{p}V$. Using the d\'ecalage isomorphism given in \eqref{equ:II_dec_iso}, it is isomorphic to $\operatorname{Sym}^{p}(V[1])[-p]$ via $\dec_{p}$. In this way the product \eqref{equ:II_polyvec_prod} has the following compact form:
    \begin{equation}
        X \diamond Y = \dec_{p+q-1}\left( \displaystyle\sum_{r=1}^{\dim M}\dfrac{\pa \dec^{-1}_{p}(X)}{\pa \psi_i}\dfrac{\pa \dec^{-1}_{p}(Y)}{\pa x_i} \right), \label{equ:II_polyvec_prod_dec}
    \end{equation}
    where $X \in \tpoly[p-1]$, $Y \in \tpoly[q-1]$, and $\psi_i := \dec^{-1}_1(\pa_i)$ is an odd variable of degree $1$.
\end{remark}

\subsection{Poisson Deformations}

Now we would like to show how the DGLA $\tpoly$ controls the deformations of the Poisson bracket on $M$. First we shall make the definition precise. For a general Poisson algebra, the Poisson deformations can be formulated analogously to Definition \ref{def:II_R_defo}. As before, We denote by $k$ a ground field of characteristic $0$.

\begin{definition}
    Let $R$ be a test algebra. Let $A_R := A \otimes_k R$. A \textbf{Poisson $\bm{R}$-deformation} of the Poisson bracket $\pi_0$ on $A$ is a Poisson bracket $\pi\colon A_R \otimes_k A_R \to A_R$ such that $\pi = \pi_0$ on $A_R/\mm_R \otimes_k A_R/\mm_R$. The Poisson $R$-deformations $\pi$ and $\pi'$ are said to be equivalent, if there exists an $R$-module automorphism $\varphi\colon A_R  \to A_R$ such that $\varphi(\pi(u,v)) = \pi'(\varphi(u),\varphi(v))$ and $\varphi = \id$ on $A_R/\mm_R$.
\end{definition}

For $A = \smooth{M}$ and $R = \R\formpow{\hbar}$, a Poisson bracket on $A_R$ is an element of $\tpoly[1]\formpow{\hbar}$. We say that it defines a \textbf{formal Poisson structure} on $M$ in contrast to an actual Poisson structure on $M$, which is an element of $\tpoly[1]$.

\begin{lemma}[]{}
    The bivector field $\pi$ on $M$ has vanishing Schouten--Nijenhuis bracket $\SN{\pi,\pi} = 0$ if and only if $\pi$ is a Poisson bracket.
\end{lemma}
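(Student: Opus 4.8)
The plan is to reduce everything to a single pointwise identity relating $\SN{\pi,\pi}$ to the Jacobiator of the bracket $\{f,g\} := \pi(\d f \wedge \d g)$. Recall first that, for \emph{any} bivector field $\pi \in \tpoly[1]$, the bracket $\{-,-\}$ so defined is automatically $\R$-bilinear, is skew-symmetric (since $\pi$ is a section of $\bigwedge^2 \tanb M$ and $\d f \wedge \d g = -\,\d g \wedge \d f$), and obeys the Leibniz rule $\{fg,h\} = f\{g,h\} + \{f,h\}g$ (since $\d$ is a derivation and $\pi$ is $C^\infty$-linear in each argument). Hence $\pi$ defines a Poisson bracket if and only if the Jacobiator
\[
    \mathrm{Jac}(f,g,h) := \{f,\{g,h\}\} + \{g,\{h,f\}\} + \{h,\{f,g\}\}
\]
vanishes identically. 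I will establish that
\[
    \SN{\pi,\pi}(\d f,\d g,\d h) = c\cdot \mathrm{Jac}(f,g,h) \qquad \text{for all } f,g,h \in \smooth{M},
\]
where $c$ is a nonzero universal constant (expected to be $2$ with the conventions fixed above) and the left-hand side denotes the function obtained by contracting the $3$-vector field $\SN{\pi,\pi} \in \tpoly[2]$ against the $3$-form $\d f \wedge \d g \wedge \d h$. Granting this identity, the lemma follows: if $\SN{\pi,\pi} = 0$ then $\mathrm{Jac} \equiv 0$ and $\pi$ is Poisson; conversely, if $\pi$ is Poisson then $\mathrm{Jac} \equiv 0$, and letting $f,g,h$ range over the coordinate functions of an arbitrary chart shows that every component of $\SN{\pi,\pi}$ vanishes there, so $\SN{\pi,\pi} = 0$ on all of $M$.

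To prove the displayed identity I would compute in a coordinate chart. Writing $\pi = \sum_{i<j}\pi^{ij}\,\pa_i \wedge \pa_j$ with $\pi^{ij} = -\pi^{ji}$, one has $\{f,g\} = \sum_{i,j}\pi^{ij}\,\pa_i f\,\pa_j g$. Expanding $\mathrm{Jac}(f,g,h)$, every term containing a second derivative of $f$, $g$, or $h$ cancels against another such term by the skew-symmetry of $\pi^{ij}$ (the classical cancellation), leaving
\[
    \mathrm{Jac}(f,g,h) = \sum_{i,j,k,\ell}\Bigl(\pi^{\ell i}\pa_\ell\pi^{jk} + \pi^{\ell j}\pa_\ell\pi^{ki} + \pi^{\ell k}\pa_\ell\pi^{ij}\Bigr)\pa_i f\,\pa_j g\,\pa_k h .
\]
On the other side I would compute $\SN{\pi,\pi}$ by expanding bilinearly and using the graded Leibniz property ($*$) together with $\SN{\pa_\ell,\phi} = \pa_\ell \phi$ for $\phi \in \smooth{M}$ and $\SN{\pa_i,\pa_j} = 0$; this produces exactly the same coefficient tensor (up to the constant $c$) wedged into $\pa_i \wedge \pa_j \wedge \pa_k$, and contracting against $\d f \wedge \d g \wedge \d h$ yields the identity. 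A coordinate-free alternative: since $\SN{\pi,f}$ is, up to sign, the Hamiltonian vector field of $f$ and $\SN{X,g} = X(g)$ for a vector field $X$, three applications of the graded Jacobi identity for $\SN{-,-}$ on $\tpoly$ give $\SN{\SN{\SN{\SN{\pi,\pi},f},g},h} = c\cdot \mathrm{Jac}(f,g,h)$, and $\SN{\pi,\pi}$ vanishes iff this iterated contraction does for all $f,g,h$.

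The main obstacle is the bookkeeping: in the local computation, tracking the signs $(-1)^{(p-1)q}$ in property ($*$) and the combinatorial factors that appear when translating between the restricted sum $\sum_{i<j}$ and the unrestricted sum, and thereby pinning down $c$; in the coordinate-free variant, correctly applying the graded Jacobi identity with the \emph{shifted} grading of $\tpoly$ (in which $\pi$ has degree $1$ and functions have degree $-1$), where a stray sign would be easy to introduce. Everything else is routine.
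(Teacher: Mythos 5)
Your proposal is correct and follows essentially the same route as the paper: both reduce to a local-coordinate computation identifying the components of $\SN{\pi,\pi}$ with the coefficient tensor of the Jacobiator of $\{f,g\} = \pi(\d f \wedge \d g)$. Your write-up is in fact more careful than the paper's (which simply juxtaposes the two coordinate expressions), since you make explicit that skew-symmetry and Leibniz are automatic for any bivector and that vanishing of the contraction against all $\d f \wedge \d g \wedge \d h$ forces the trivector itself to vanish.
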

\begin{proof}
    In local coordinates, $\pi = \displaystyle\sum_{i,j=1}^{\dim M}\pi^{ij}\pa_i \wedge \pa_j$. The equation $\SN{\pi,\pi} = 0$ becomes:
    \begin{center}$
        \displaystyle\sum_{i,j,k,\ell=1}^{\dim M}\pi^{ij}\pa_j\pi^{k\ell}\pa_i \wedge \pa_k \wedge \pa_\ell = 0.
    $\end{center}
    On the other hand, $\pi$ defines a Poisson structure if and only if it satisfies the Jacobi identity:
        \begin{center}
            $\pi(\pi(f,g),h) + \pi(\pi(g,h),f) + \pi(\pi(h,f),g) = 0, \qquad f,g,h \in \smooth{M}.$
        \end{center}
    In local coordinates, this becomes
    \begin{center}$
        \displaystyle\sum_{i,j,k,\ell=1}^{\dim M}\pi^{ij}\pa_j\pi^{k\ell}\left( \pa_i f \pa_k g \pa_\ell h + \pa_i g \pa_k h \pa_\ell f + \pa_i h \pa_k f \pa_\ell g \right).
    $\end{center}
    So we have the equivalence as claimed.
\end{proof}

\begin{proposition}[Poisson Structures via Maurer--Cartan Equation]{prop:II_pois_MC}
    The set of equivalence classes of formal Poisson structures on $M$ is in bijective correspondence with the set of solutions of the \emph{formal} Maurer--Cartan equation for $\tpoly$ modulo gauge actions:
    \begin{center}
        $\MCbar_{\tpoly}(\R\formpow{\hbar}) = \MC(\hbar\tpoly\formpow{\hbar})/\exp(\hbar\tpoly[0]\formpow{\hbar}).$
    \end{center}
\end{proposition}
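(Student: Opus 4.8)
The plan is to imitate the proofs of Propositions~\ref{prop:II_defo_MC} and~\ref{prop:II_star_MC}, exploiting the two special features of the DGLA $\tpoly$: its differential is zero, and (by the lemma immediately preceding this proposition) the quadratic term of its Maurer--Cartan equation detects exactly the Poisson property. Fix $R = \R\formpow{\hbar}$ with $\mm_R = \hbar\R\formpow{\hbar}$, and recall that a formal Poisson structure on $M$ is an element $\pi = \hbar\pi_1 + \hbar^2\pi_2 + \cdots$ of $(\hbar\tpoly\formpow{\hbar})^1 = \hbar\tpoly[1]\formpow{\hbar}$ — a formal bivector field — which is a Poisson bracket on $\smooth{M}\formpow{\hbar}$. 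First I would observe that, because $\d = 0$ on $\tpoly$, the formal Maurer--Cartan equation for $\pi$ is simply $\tfrac12\SN{\pi,\pi} = 0$, and by the preceding lemma, applied $\hbar$-adically order by order, this holds if and only if $\pi$ defines a Poisson bracket. Hence $\MC(\hbar\tpoly\formpow{\hbar})$ is literally the set of formal Poisson structures on $M$, with no gauge quotient yet involved.

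Next I would analyse the gauge action. Since $\d = 0$, the adjoined degree-$1$ element $\delta$ of Section~\ref{sec:DGLA_MC} is central, so the gauge action of $\exp(\frg^0)$ on $\frg^1$ collapses to $\pi \mapsto \e^{\adr(\xi)}\pi = \e^{-\lag_\xi}\pi$ for $\xi \in \hbar\tpoly[0]\formpow{\hbar} = \hbar\Gamma(M,\tanb M)\formpow{\hbar}$ a formal vector field, using $\adr(\xi)\pi = \SN{\pi,\xi} = -\lag_\xi\pi$. I would then set up the dictionary between $\exp(\hbar\Gamma(M,\tanb M)\formpow{\hbar})$ and the group of ``formal diffeomorphisms equal to the identity mod $\hbar$'': exponentiating the derivation $\xi$ produces a continuous $\R\formpow{\hbar}$-algebra automorphism $\varphi := \e^{\xi}$ of $\smooth{M}\formpow{\hbar}$ with $\varphi \equiv \id \bmod \hbar$, and the gauge-transformed bivector $\e^{-\lag_\xi}\pi$ is exactly the pushforward of $\pi$ along $\varphi$, so that by naturality of pullback $\varphi\bigl(\pi(u,v)\bigr) = \pi'\bigl(\varphi(u),\varphi(v)\bigr)$ holds precisely when $\pi'$ is the gauge transform of $\pi$. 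Thus gauge orbits are exactly equivalence classes of formal Poisson structures realised by algebra automorphisms reducing to the identity, and $\MCbar_{\tpoly}(\R\formpow{\hbar})$ is the quotient we want.

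The hard part will be the precise matching of the two equivalence relations: gauge orbits versus the equivalence of the definition, which allows an arbitrary $R$-module automorphism $\varphi \equiv \id \bmod \mm_R$ with $\pi' = \varphi\circ\pi\circ(\varphi^{-1}\otimes\varphi^{-1})$. One inclusion is routine, since $\e^{\xi}$ is in particular an $R$-module automorphism. For the converse one must show that such a $\varphi$ relating two bi-differential Poisson bivectors can always be taken of the form $\e^{\xi}$ with $\xi$ a formal vector field; concretely, one extracts $\xi := \log\varphi$ (well-defined as $\varphi - \id$ is divisible by $\hbar$) and checks, using that both $\pi$ and $\pi'$ are derivations in each argument, that $\log\varphi$ is forced to be a derivation of $\smooth{M}\formpow{\hbar}$, hence a formal vector field. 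Once this is established, functoriality of the resulting bijection in $R$ follows as in Proposition~\ref{prop:II_defo_MC}, because $\MC(-)$ and the gauge group are built functorially out of $\tpoly \otimes (-)$, giving the claimed identification $\{\text{formal Poisson structures}\}/\!\!\sim\; = \MCbar_{\tpoly}(\R\formpow{\hbar})$.
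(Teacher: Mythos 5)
Your first half — identifying $\MC(\hbar\tpoly\formpow{\hbar})$ with the set of formal Poisson structures via $\d=0$ and the preceding lemma applied order by order in $\hbar$ — is exactly the paper's argument, and your description of the gauge action as $\pi\mapsto\e^{-\lag_\xi}\pi$ is also how the paper intends it. You have moreover correctly located the delicate point, which the paper dispatches with only ``by a similar proof as Proposition~\ref{prop:II_defo_MC}'': in that earlier proposition the degree-zero part of the Hochschild DGLA consists of \emph{all} linear endomorphisms of $A$, so the gauge group is literally the group of $R$-module automorphisms congruent to the identity and the two equivalence relations match tautologically; here $\tpoly[0]=\Gamma(M,\tanb M)$ consists only of derivations, so the gauge group is a priori strictly smaller than the group of module automorphisms allowed by the definition of Poisson $R$-equivalence, and the matching genuinely needs an argument.

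The argument you propose for the converse inclusion, however, does not work. The intertwining condition $\varphi(\pi(u,v))=\pi'(\varphi u,\varphi v)$ does \emph{not} force $\log\varphi$ to be a derivation: for $\pi=\pi'=0$ every $R$-module automorphism $\varphi\equiv\id\bmod\hbar$ intertwines, and a generic such $\varphi$ (say $\varphi=\id+\hbar L$ with $L$ a non-derivation) has non-derivation logarithm. More generally, writing $\varphi=\id+\hbar\varphi_1+\cdots$, the order-$\hbar^{n+1}$ part of the intertwining condition reads $\pi'_{n+1}-\pi_{n+1}=-(\delta_{\mathsf{CE}}\varphi_1)(\text{lower data})+\cdots$, where $\delta_{\mathsf{CE}}$ is a Chevalley--Eilenberg-type differential on \emph{arbitrary} $1$-cochains of $(\smooth{M},\pi_1)$; what you need is that whenever the (biderivation-valued) difference is such a coboundary of an arbitrary linear cochain, it is also the coboundary of a \emph{derivation}, i.e.\ of a vector field. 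That is a comparison between two different cohomologies and is not automatic; it must either be proven by an order-by-order correction argument, or sidestepped by defining equivalence of formal Poisson structures directly as gauge equivalence under formal diffeomorphisms (which is what \cite{kon97} and most of the literature do, and is arguably the definition the paper should have adopted). As written, your step ``$\log\varphi$ is forced to be a derivation'' is the one genuinely false assertion in the proposal; the rest is sound and matches the paper's (admittedly sketchy) route.
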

\begin{proof}
    The formal Maurer--Cartan equation for $\tpoly$ is given by 
    \begin{center}$
        \d\pi + \dfrac{1}{2}\SN{\pi,\pi}= \dfrac{1}{2}\SN{\pi,\pi} = 0, \qquad \pi \in \hbar\tpoly[1]\formpow{\hbar}.
    $\end{center}
    The previous lemma shows that the set of solutions corresponds to the set of formal Poisson structures. By a similar proof as Proposition \ref{prop:II_defo_MC}, it can be shown that the equivalence relation on formal Poisson structures is exactly generated by the gauge actions $\exp(\hbar\tpoly[0]\formpow{\hbar})$.
\end{proof}

\begin{remark}
    Combining Theorem \ref{thm:II_DGLA_quasi}, Proposition \ref{prop:II_star_MC}, and Proposition \ref{prop:II_pois_MC}, it is tempting to look for a quasi-isomorphism of DGLAs between $\tpoly$ and $\dpoly$, in which case we would obtain the claimed bijective correspondence \eqref{equ:I_defo_quan_bij}. Unfortunately such quasi-isomorphism does not exist. In the next part we will construct a quasi-isomorphism of graded vector spaces, which does not preserve the graded Lie brackets. This is the first step towards the construction of a weaker notion of quasi-isomorphism, which will be introduced in the Section \ref{sec:L_infty}.
\end{remark}

\subsection{Hochschild--Kostant--Rosenberg Theorem}\label{sec:HKR}
For a polyvector field $X = X_1\wedge \cdots \wedge X_n \in \tpoly[n-1]$, $X$ acts on $ f_1,...,f_n \in \smooth{M}$ as 
\begin{center}$
    (X_1\wedge \cdots \wedge X_n)(f_1,...,f_n) = \dfrac{1}{n!}\det(X_i(f_j))_{i,j}.
$\end{center}
This identifies $X$ as a polydifferential operator which is skew-symmetric and 1-differential in each argument. Therefore we obtain a morphism of graded vector spaces,
\begin{equation}
    U_1^{(0)}\colon \tpoly \to \dpoly \label{equ:II_HKR}
\end{equation}
which is equal to the identity map at degree $-1$. We called it the \textbf{HKR map}.

Vey proved the following version of the Hochschild--Kostant--Rosenberg Theorem
\footnote{
    Originally, the HKR theorem was purposed by Hochschild, Kostant, and Rosenberg (\cite{HKR}) for smooth affine varieties.
}
(\cite{vey75}), which says that $U_1^{(0)}$ is a quasi-isomorphism of graded vector spaces.

\begin{theorem}[Smooth Hochschild--Kostant--Rosenberg Theorem]{thm:II_HKR}
    The HKR map $U_1^{(0)}\colon \tpoly \to \dpoly$ induces isomorphisms in the cohomology spaces
    \begin{center}$
        \mathrm H^n(U_1^{(0)})\colon \tpoly[n] = \mathrm H^n(\tpoly) \xrightarrow{\sim} \mathrm H^n(\dpoly).
    $\end{center}
\end{theorem}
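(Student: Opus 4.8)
The plan is to prove the Hochschild--Kostant--Rosenberg theorem in the smooth setting by reducing the global statement to a local computation on $\R^d$ and then transporting the local result back via a partition of unity and a Mayer--Vietoris / sheaf-theoretic argument. First I would set up the necessary compatibility: observe that the HKR map $U_1^{(0)}$ is a morphism of \emph{complexes} (not of DGLAs), i.e.\ $\dhochm \circ U_1^{(0)} = 0$ since $\tpoly$ carries the zero differential and the image of a skew-symmetric $1$-differential operator in each argument is Hochschild-closed; this is a direct check using the explicit formula $(X_1 \wedge \cdots \wedge X_n)(f_1, \dots, f_n) = \tfrac{1}{n!}\det(X_i(f_j))$ and the Leibniz rule. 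Hence $U_1^{(0)}$ descends to a map on cohomology and it remains to show it is an isomorphism in each degree.

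Next I would treat the affine model $M = \R^d$. Here $A = \smooth{\R^d}$ and one wants $\mathrm{H}^n(\dpolyrd) \cong \tpolyrd[n] = \Gamma(\R^d, \bigwedge^{n+1}\tanb\R^d)$, which is a free $\smooth{\R^d}$-module on the $\pa_{i_0}\wedge\cdots\wedge\pa_{i_n}$. The standard approach is to build an explicit homotopy: for constant-coefficient polydifferential operators the Hochschild complex of the polynomial/symmetric algebra is computed by the Koszul resolution, and one produces a contracting homotopy $h$ with $\dhochm h + h\,\dhochm = \id - U_1^{(0)}\circ (\text{projection})$. The key point is that a general differential operator can be written, via its symbol, in terms of the $\pa_i$ with smooth coefficients, and the homotopy operator that works for the polynomial case extends $\smooth{\R^d}$-linearly because differentiation and the homotopy only ever act on finitely many tensor slots. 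Concretely I would: (i) filter $\dpolyrd$ by the total order of the differential operators, (ii) identify the associated graded with the Hochschild complex of $\smooth{\R^d}$ with values in a symmetric-algebra bimodule, (iii) invoke the classical Koszul computation to get the cohomology of the associated graded, and (iv) check the filtration spectral sequence degenerates so that the answer for $\dpolyrd$ itself is $\tpolyrd$, with the iso realised by $U_1^{(0)}$.

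Then I would globalise. The assignment $U \mapsto \dpoly[\bullet]|_U$ for $U \subseteq M$ open is a complex of sheaves (of $\smooth{M}$-modules) whose sections over small charts have cohomology $\tpoly[\bullet]$ by the affine case; likewise $\tpoly[\bullet]$ is already a complex of sheaves with zero differential. Because these are fine sheaves (modules over the structure sheaf $\smooth{M}$, which admits partitions of unity), their cohomology is computed locally: $\mathrm{H}^n(U_1^{(0)})$ is an isomorphism iff it is an isomorphism stalk-wise / on a cover, and on a chart it is the affine statement just established. Equivalently, one runs a Mayer--Vietoris / Čech argument over a good cover, or simply cites that a quasi-isomorphism of fine (soft) complexes of sheaves on a paracompact manifold induces an isomorphism on global sections. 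I would also record that the HKR map is $\smooth{M}$-linear, so patching the local primitives with a partition of unity does not disturb $\dhochm$-closedness, which is what makes the globalisation clean.

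The main obstacle is the affine case, specifically producing the explicit contracting homotopy and verifying that the filtration-by-order spectral sequence degenerates at $E_2$ — i.e.\ that there are no hidden differentials beyond the symbol level. In the polynomial setting this is the clean Koszul duality statement $\mathrm{HH}^\bullet(\operatorname{Sym}(V)) \cong \bigwedge^\bullet V \otimes \operatorname{Sym}(V)$, but in the $\mathrm{C}^\infty$ setting one must be careful that infinite Taylor expansions / flat functions do not obstruct the homotopy; the resolution is that differential operators have \emph{finite} order in each argument, so on $\dpoly$ (as opposed to the full Hochschild complex $C^\bullet(A,A)$) the homotopy is well-defined term by term and the argument goes through. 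Everything else — the closedness of $U_1^{(0)}$, $\smooth{M}$-linearity, and the sheaf-theoretic patching — is routine once that local computation is in hand.
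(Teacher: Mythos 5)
Your proposal is correct in outline and reaches the same two structural waypoints as the paper — the chain-map check $\dhochm\circ U_1^{(0)}=0$ via the derivation property, and the reduction of the global statement to $\R^d$ by locality and partitions of unity — but the core local computation is done by a genuinely different method. The paper follows Vey and Gutt: an explicit double induction in which a Hochschild cocycle $\alpha\in\dpoly[p]$ is modified by coboundaries so as to successively lower the order of differentiation in its first argument, terminating at a $1$-differential operator which is then shown to be cohomologous to its total skew-symmetrisation; injectivity is read off from the fact that $U_1^{(0)}$ embeds $\tpoly$ as the subcomplex of skew-symmetric $1$-differential operators (more precisely, because total antisymmetrisation annihilates coboundaries). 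You instead filter $\dpolyrd$ by total order, identify the associated graded with a symbol-level complex computed by the Koszul resolution of $\operatorname{Sym}(V)$, and argue degeneration of the resulting spectral sequence; if carried out, this yields a contracting homotopy and hence injectivity and surjectivity simultaneously, which is tidier than the paper's two separate arguments. Your route is the more conceptual one (it is essentially the original algebraic HKR argument transported to the $\mathrm C^\infty$ setting), at the cost of having to verify that the order filtration is preserved by $\dhochm$ (it is, by the Leibniz rule applied to the terms $f(\dots,u_{r-1}u_r,\dots)$) and that the spectral sequence degenerates — which follows for degree reasons once the $E_1$ page is seen to be concentrated where symmetric degree equals cohomological degree. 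You are also right to flag that restricting to $\dpoly$ rather than the full continuous Hochschild complex is what makes the homotopy well defined; the paper's inductive argument relies on the same finiteness of order. The sheaf-theoretic globalisation you describe is equivalent to the partition-of-unity reduction the paper cites from Gutt.
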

\begin{proof}
    First we check that $U_1^{(0)}$ is a chain map. Since $\tpoly$ has zero differential, we must have $\dhochm\circ U_1^{(0)} = 0$. Indeed, for $X = X_0 \wedge \cdots \wedge X_{p} \in \tpoly[p]$ and $f_0,...,f_n \in \smooth{M}$, we have 
    \begin{align*}\hspace{-4em}
        (-1)^p\dhochm(U_1^{(0)}(X))(f_0,...,f_{p+1})
        &= f_0U_1^{(0)}(X)(f_1,...,f_{n+1}) \\
        + \displaystyle\sum_{r=1}^{n+1}(-1)^r&U_1^{(0)}(X)(f_0,...,f_{r-1}f_r,...,f_{n+1})+ (-1)^pU_1^{(0)}(X)(f_0,...,f_{n})f_{n+1} \\
        &= f_0U_1^{(0)}(X)(f_1,...,f_{n+1}) + \displaystyle\sum_{r=1}^{n+1}(-1)^r f_{r}U_1^{(0)}(X)(f_0,...,\hat f_r,...,f_{n+1})\\
        + \displaystyle\sum_{r=1}^{n+1}(-1)^r &f_{r-1}U_1^{(0)}(X)(f_0,...,\hat f_{r-1},...,f_{n+1}) + (-1)^pf_{n+1}U_1^{(0)}(X)(f_0,...,f_{n}) \\
        &= 0, 
    \end{align*}
    where we used the fact that $U_1^{(0)}(X)$ acts on each argument as a derivation. Let $\mathrm H^n(U_1^{(0)})\colon \tpoly[n] \to \mathrm H^n(\dpoly)$ be the induced map in the $n$-th cohomology. Note that $U_1^{(0)}$ embed $\tpoly$ into $\dpoly$ as a sub-complex of skew-symmetric 1-differential operators. So $\mathrm H^n(U_1^{(0)})$ is injective. To show that it is surjective, we shall show that every $\alpha \in \dpoly[p]$ is of the form:
    \begin{center}$
        \alpha = \dhochm \beta + \gamma
    $\end{center}
    for some $\beta \in \dpoly[p-1]$ and $\gamma \in U_1^{(0)}(\tpoly[p])$. By a standard argument by the partition of unity, it suffices to prove this in a local chart (\emph{cf.\ }Proposition 2.13--15 of \cite{gutt99}). So we may assume that $M = \R^d$. We sketch the proof in this case following \cite{gutt99}.

    We use induction on the degree $p$ of $\alpha \in \dpoly[p]$. For $p=0$, cocycles are vector fields so the result is true trivially. Suppose the result holds for all $r<p$. For $\alpha \in \dpoly[p]$, consider the term of highest order $q$ in $f_0$:
    \begin{center}$
        \alpha(f_0,...,f_n) = \displaystyle\sum_{|I|=q}\pa_{I} f_0 \cdot \eta_{I}(f_1,...,f_n) + \cdots = \displaystyle\sum_{|I|=q}(\pa_{I} \otimes \eta_{I})(f_0\otimes \cdots\otimes f_n) + \cdots,
    $\end{center}
    where $I = (i_1,...,i_q)$ is a multi-index, and $\eta \in \dpoly[p-1]$. Taking the differential, we note that $\alpha$ is a $p$-cocycle implies that $\eta_{I}$ are $(p-1)$-cocycles. In such case, by induction hypothesis, $\eta_{I} = \dhochm \xi_{I} + \zeta_{I}$ for some $\xi_{I} \in \dpoly[p-1]$ and $\zeta_{I} \in U_1^{(0)}(\tpoly[p])$. We claim that $\alpha$ is cohomologous to $\alpha' \in \dpoly[p]$ with the differential term highest order $q'$ in $f_0$ such that $q'<q$. This follows from a careful analysis of the expansions of the terms $\zeta_{I}$ and $\psi$ arising from
    \begin{center}$
        \alpha + \dhochm\left( \displaystyle\sum_{|I| = q} \pa_{I} \otimes \xi_{I} \right) = \displaystyle\sum_{|I| = q} \pa_{I} \otimes \zeta_{I} + \psi.
    $\end{center}
    Refer to \cite{gutt99} for details. By induction, we reduce to the case $q=1$, where
    \begin{center}$
        \alpha = \displaystyle\sum_{i=0}^p \pa_i \otimes \eta_i.
    $\end{center}
    With $\eta_i = \dhochm \xi_i + \zeta_i$ as before, we have 
    \begin{center}$
        \alpha = -\dhochm \left( \displaystyle\sum_{i=0}^p \pa_i \otimes \xi_i \right) + \displaystyle\sum_{i=0}^p \pa_i \otimes \zeta_i,
    $\end{center} 
    where $\displaystyle\sum_{i=0}^p \pa_i \otimes \zeta_i$ is 1-differential in all arguments. Finally, we can show that every 1-differential operator is cohomologous to its total skew-symmetrisation. Therefore $\alpha = \dhochm\beta + \gamma$ for some $\beta \in \dpoly[p-1]$ and $\gamma \in U_1^{(0)}(\tpoly[p])$ as claimed.
\end{proof}
    
    
The {HKR map} $U_1^{(0)}$ is not a morphism of DGLAs, because the Schouten--Nijenhuis bracket is not mapped to the Gerstenhaber bracket. However, the failure of $U_1^{(0)}$ being a graded Lie algebra homomorphism can be resolved by finding a morphism $U$ whose first-order approximation $U_1 = U_1^{(0)}$. This is the motivation of introducing $\linf$-algebras and $\linf$-morphisms in the next section.

\section[\texorpdfstring{${\mathrm L_\infty}$}{L{\textunderscore}infinity}-Algebras]{$\bm{\mathrm L_\infty}$-Algebras} \label{sec:L_infty}

In this section we discuss the $\linf$-algebras. We first present an abstract definition using graded coalgebras. This allows us to define $\linf$-morphisms in a clean way. Then we show its equivalence with another definition using the Taylor coefficients, which presents $\linf$-algebras as a generalisation of DGLAs in a natural way.

\setcounter{subsection}{-1}
\subsection{Sign Conventions for Tensor Algebras}

Before going into the main topic, we quickly review the constructions and sign conventions which will be useful for the rest of the dissertation.

\begin{convention}
    We adopt the \textbf{Koszul sign convention}. Let $V$ and $W$ be $\Z$-graded vector spaces. Let $f\colon V \to V$ and $g\colon W \to W$ be linear maps of degree $|f|$ and $|g|$ respectively. Then $f \otimes g \in \operatorname{End}(V \otimes W)$ is defined as
    \begin{center}$
        (f \otimes g)(x \otimes y) := (-1)^{|g|\cdot|x|}f(x) \otimes g(y),
    $\end{center}
    where $x \in V$ and $y \in W$ are homogeneous. Colloquially, we pick up a sign $(-1)^{ab}$ whenever we commute two graded objects of degree $a$ and $b$.
\end{convention}

\begin{definition}
    Let $V = \bigoplus_{i \in \Z} V^i$ be a graded $k$-vector space. Let $V^{\otimes n}$ be the $n$-th tensor product of $V$ with itself. For homogeneous elements $x,y$, we define the \textbf{twisting map} $\tau$ by 
    \begin{center}$
        \tau(x \otimes y) = (-1)^{|x|\cdot |y|}(y \otimes x).
    $\end{center}
    The symmetric product and the exterior product are identified as quotient spaces of $V^{\otimes n}$:
    \begin{center}$
        \operatorname{Sym}^n V := V^{\otimes n}/{\ord{x \otimes y - \tau(x \otimes y)}}, \qquad \textstyle\bigwedge^n V := V^{\otimes n}/{\ord{x \otimes y + \tau(x \otimes y)}}.
    $\end{center}
    The image of $x_1 \otimes \cdots \otimes x_n$ under the quotients are denoted by $x_1\cdots x_n$ and $x_1 \wedge \cdots \wedge x_n$ respectively.
    These constructions give rise to the following algebras:
    \begin{multicols}{2}
        \begin{itemize}[nosep]
            \item {Tensor algebra}: 
            \item {Reduced tensor algebra}: 
            \item {Symmetric algebra}: 
            \item {Reduced symmetric algebra}: 
            \item {Exterior algebra}: 
            \item {Reduced exterior algebra}: 
        \end{itemize}
        $\bigotimes (V) := \bigoplus_{n=0}^\infty V^{\otimes n}$; \\
        $\bigotimes^+ (V) := \bigoplus_{n=1}^\infty V^{\otimes n}$; \\
        $\operatorname{Sym} (V) := \bigoplus_{n=0}^\infty V^{\otimes n}$; \\
        $\operatorname{Sym}^+ (V) := \bigoplus_{n=1}^\infty \operatorname{Sym}^n V$; \\
        $\bigwedge (V) := \bigoplus_{n=0}^\infty \bigwedge^n V$; \\
        $\bigwedge^+ (V) := \bigoplus_{n=1}^\infty \bigwedge^n V$.
    \end{multicols}
\end{definition}
\begin{remark}
    These algebras are graded by the induced grading from $V$:
    \begin{center}$
        |x_1 \otimes \cdots \otimes x_n| = \displaystyle\sum_{i=1}^n |x_i|.
    $\end{center}
\end{remark}
\begin{convention}
    Let $x := x_1 \otimes \cdots \otimes x_n \in V^{\otimes n}$. We identify $x$ with its projection in $\operatorname{Sym}^n V$ (\emph{resp.\ }in $\bigwedge^n V$). Let $\sigma \in S_n$ be a permutation. The \textbf{Koszul sign} $\varepsilon_x(\sigma)$ (\emph{resp.\ }$\chi_x(\sigma)$) is defined as the change of sign when permuting the product $x$ in the symmetric (\emph{resp.\ }exterior) algebra:
    \begin{center}$
        x_1 \cdots x_n = \varepsilon_x(\sigma)x_{\sigma(1)} \cdots x_{\sigma(n)}, \qquad x_1 \wedge \cdots \wedge x_n = \chi_x(\sigma)x_{\sigma(1)} \wedge \cdots \wedge x_{\sigma(n)}.
    $\end{center}
\end{convention}
Unlike the ungraded case, the definition of twisting map in a graded space allows the exchange of symmetric and exterior product by a shift in the degree. This is encoded in the following lemma.

\begin{lemma}[D\'ecalage Isomorphism]{}
    There exists a canoncial isomorphism of degree $0$ between the symmetric algebra and the shifted exterior algebra, namely the \textbf{d\'ecalage isomorphism}, given by
    \begin{equation}
        \begin{tikzcd}[column sep = small, row sep = 0em]
            {\dec_n\colon } & {\operatorname{Sym}^n(V[1])} & {\textstyle\bigwedge^n(V)[n]} \\
            & {x_1 \cdots x_n} & {(-1)^{\sum_{i=1}^n (n-i)(|x_i|-1)}x_1 \wedge \cdots \wedge x_n}
            \arrow[from=1-2, to=1-3]
	\arrow[maps to, from=2-2, to=2-3]
        \end{tikzcd} \label{equ:II_dec_iso}
    \end{equation}
\end{lemma}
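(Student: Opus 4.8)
The plan is to check, in order, that the displayed formula (i) descends from the free tensor power to the symmetric quotient $\operatorname{Sym}^n(V[1])$, (ii) is homogeneous of degree $0$, and (iii) admits a two-sided inverse; naturality in $V$ is then immediate, since the formula involves only the degrees $|x_i|$ and the linear structure.

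First I would view the assignment $x_1\cdots x_n\mapsto(-1)^{\sum_{i=1}^n(n-i)(|x_i|-1)}\,x_1\wedge\cdots\wedge x_n$ as a map out of $(V[1])^{\otimes n}$ and check that it kills the Koszul symmetry relations. Since the relation submodule defining $\operatorname{Sym}^n(V[1])$ is spanned by differences of a monomial and the same monomial with two adjacent factors interchanged (carrying the appropriate Koszul sign), it is enough to compare both sides under such a swap. Write $\epsilon_i:=|x_i|-1$ for the degree of $x_i$ in $V[1]$. A short computation with the exponent $\sum_i(n-i)\epsilon_i$ shows that interchanging the factors in positions $j$ and $j+1$ multiplies the scalar prefactor by $(-1)^{\epsilon_j+\epsilon_{j+1}}$. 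On the source side the same interchange multiplies the symmetric monomial by the Koszul sign $(-1)^{\epsilon_j\epsilon_{j+1}}$, while on the target side, using $|x_i|=\epsilon_i+1$, it multiplies the exterior monomial by $-(-1)^{|x_j||x_{j+1}|}=(-1)^{\epsilon_j\epsilon_{j+1}+\epsilon_j+\epsilon_{j+1}}$. Since $(-1)^{\epsilon_j+\epsilon_{j+1}}\cdot(-1)^{\epsilon_j\epsilon_{j+1}+\epsilon_j+\epsilon_{j+1}}=(-1)^{\epsilon_j\epsilon_{j+1}}$, the relation is respected and the map descends to the claimed $\dec_n$. This reconciliation of signs is the only genuinely delicate point; everything else is routine.

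For (ii), note that $x_1\cdots x_n\in\operatorname{Sym}^n(V[1])$ has degree $\sum_i\epsilon_i=\sum_i(|x_i|-1)$, whereas $x_1\wedge\cdots\wedge x_n\in\bigwedge^n(V)[n]$ has degree $\big(\sum_i|x_i|\big)-n=\sum_i(|x_i|-1)$; the scalar prefactor does not change degrees, so $\dec_n$ is homogeneous of degree $0$.

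For (iii), I would write down the candidate inverse $\bigwedge^n(V)[n]\to\operatorname{Sym}^n(V[1])$, $x_1\wedge\cdots\wedge x_n\mapsto(-1)^{\sum_{i=1}^n(n-i)(|x_i|-1)}\,x_1\cdots x_n$; the identical adjacent-transposition check (now run against the exterior relation) shows it is well defined, and it is manifestly a two-sided inverse of $\dec_n$ because the prefactor squares to $1$. Alternatively, fixing a homogeneous basis of $V$ one sees that $\operatorname{Sym}^n(V[1])$ and $\bigwedge^n(V)$ carry ``the same'' basis — monomials that are symmetric in the odd part of $V$ and antisymmetric in the even part — so that a well-defined, manifestly surjective degree-$0$ map between them is automatically an isomorphism. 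Either way the lemma follows.
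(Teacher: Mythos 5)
Your proposal is correct: the adjacent-transposition sign check is carried out accurately (the prefactor changes by $(-1)^{\epsilon_j+\epsilon_{j+1}}$, the exterior monomial by $(-1)^{\epsilon_j\epsilon_{j+1}+\epsilon_j+\epsilon_{j+1}}$, and their product recovers the Koszul sign $(-1)^{\epsilon_j\epsilon_{j+1}}$ on the symmetric side), and the degree and invertibility arguments are sound. The paper states this lemma without any proof, so there is nothing to compare against; your argument is the standard verification and fills that gap completely.
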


\subsection[\texorpdfstring{${\mathrm L_\infty}$}{L{\textunderscore}infinity}-Algebras via Coalgebras]{$\bm{\mathrm L_\infty}$-Algebras via Coalgebras}

First of all, we need to introduce the coalgebras over a field $k$, which is the category-theoretic dual of the associative algebras. The notions of co-associativity, co-commutativity, co-freeness, and so on can be defined as for their counterparts for algebras, but with all arrows reversed in the corresponding commutative diagrams.


\begin{definition}
    A \textbf{co-associative coalgebra} $C$ over a field $k$ is a $k$-vector space equipped with a $k$-linear co-multiplication $\Delta\colon C \to C \otimes_k C$ satisfying the co-associativity:
    \begin{center}$
        (\id_C \mathrel{\otimes} \Delta)\circ\Delta = (\Delta \otimes \id_C)\circ\Delta.
    $\end{center}
    A co-unit of $C$ is a $k$-linear 
    map $e\colon C \to k$ satisfying
    \begin{center}$
        (\id_C \mathrel{\otimes} e)\circ\Delta = (e \otimes \id_C)\circ\Delta = \id_C.
    $\end{center}
    The commutative diagrams are shown below.
    \begin{center}
        \begin{tikzcd}
            C && {C \otimes_k C} && C && {C \otimes_k C} \\
            {C \otimes_k C} && {C \otimes_k C \otimes_k C} && {C \otimes_k C} && C
            \arrow["\Delta", from=1-1, to=1-3]
            \arrow["\Delta"', from=1-1, to=2-1]
            \arrow["{\Delta \otimes \id_C}", from=2-1, to=2-3]
            \arrow["{\id_C \otimes \Delta}", from=1-3, to=2-3]
            \arrow["\Delta"', from=1-5, to=2-5]
            \arrow["\Delta", from=1-5, to=1-7]
            \arrow["{\id_C \otimes e}", from=1-7, to=2-7]
            \arrow["{e \otimes \id_C}", from=2-5, to=2-7]
        \end{tikzcd}
    \end{center}

    A \textbf{graded coalgebra} $C = \bigoplus_{i \in \Z}C^i$ is both a graded vector space and a coalgebra, such that the co-multiplication is compatible with the grading:
    \begin{center}$
        \Delta(C^i) \subseteq \displaystyle\bigoplus_{p+q=i}(C^p \otimes_k C^q).
    $\end{center}
    A morphism $F\colon (C,\Delta) \to (C',\Delta')$ of graded coalgebras is a linear map $F \colon C \to C'$ of degree $0$ such that $\Delta' \circ F = (F \otimes F) \circ \Delta$. In \cite{kon97}, such morphism is called a \emph{pre-$\linf$-morphism}.
\end{definition}

\begin{definition}
    We introduce some more properties. A graded coalgebra $C$ is said to be
    \begin{itemize}
        \item \textbf{graded co-commutative}, if $\Delta = \tau\circ\Delta$, where $\tau$ is the {twisting map};
        \item \textbf{co-nilpotent}, if for each $x \in C$ there exists $n \in \N$ such that $\Delta^n(x) = 0$, where $\Delta^n$ is defined recursively by $\Delta^{n+1} = (\Delta \otimes \id_{C^{\otimes n}}) \circ \Delta^n$.
    \end{itemize}
\end{definition}

\newcommand{\rsym}[1]{\mathrm{Sym}^+({#1})}
\newcommand{\otimesb}{\mathop{\bar{\otimes}}}
\newcommand{\shff}{\operatorname{Shff}}
\begin{example}\label{eg:II_rsym}
    Let $V = \bigoplus_{i \in \Z} V^i$ be a graded $k$-vector space. Recall that the \textbf{reduced symmetric algebra} is given by
    \begin{center}$
        \rsym{V} := \displaystyle\bigoplus_{n=1}^\infty \mathrm {Sym}^n(V) = \displaystyle\bigoplus_{n=1}^\infty V^{\otimes n}{\left/\ord{x \otimes y - (-1)^{|x|\cdot|y|} y \otimes x}\right.},
    $\end{center}
    We can put a graded coalgebra structure on $\rsym{V}$. The co-multiplication is defined as 
    \begin{center}$
        \Delta(x_1  \cdots  x_n) := \displaystyle\sum_{i=1}^{n-1}\displaystyle\sum_{\sigma \in S(i,n-i)} \varepsilon_x(\sigma) (x_{\sigma(1)}  \cdots  x_{\sigma(i)}) \otimes (x_{\sigma(i+1)}  \cdots  x_{\sigma(n)}),
    $\end{center}
    where the \textbf{$\bm{(i,n-i)}$-shuffles} $S(i,n-i)$ is the set of $\sigma \in S_n$ such that $\sigma(1)<\cdots<\sigma(i)$ and $\sigma(i+1)<\cdots<\sigma(n)$, and the {Koszul sign} $\varepsilon_x(\sigma)$ is defined in the previous part.

    The reduced symmetric space $\rsym{V}$ is the \textbf{co-free}
    \footnote{
        The authors of \cite{doubek07} point out that, contrary to the common belief (e.g.\ in \cite{kon97}), $\rsym{V}$ is \emph{not} co-free in the category of co-commutative coalgebras without co-unit. The subtlety is discussed in Section II.3.7 of \cite{markl02}. Also see \cite{grassi99} for a complete proof of the claims.
    }
    co-nilpotent co-commutative coalgebra without co-unit co-generated by $V$. By co-freeness we mean that $\operatorname{Sym}^+$ is the right adjoint functor of the forgetful functor to the category of differential graded vector spaces. In this case, it has the universal property that any homomorphism of co-nilpotent co-commutative coalgebras $A \to V$ uniquely lifts to a homomorphism $A \to \rsym{V}$.
    \begin{center}
        \begin{tikzcd}
            & {\rsym{V}} \\
            A & V
            \arrow[from=1-2, to=2-2]
            \arrow["{\exists\, !}", dashed, from=2-1, to=1-2]
            \arrow[from=2-1, to=2-2]
        \end{tikzcd}
    \end{center}
\end{example}

\begin{definition}
    A \textbf{co-derivation} $\delta$ of degree $d$ on a graded coalgebra $C$ is a graded $k$-linear map $\delta\colon C^i \to C^{i+d}$ satisfying the co-Leibniz rule:
    \begin{center}$
        \Delta \circ \delta = (\delta \otimes \id_C + \id_C \mathrel\otimes \delta)\circ \Delta
    $\end{center}
\end{definition}

\begin{definition}
    An \textbf{$\bm{\mathrm L_\infty}$-algebra} $(\frg,Q)$ is a graded $k$-vector space $\frg$ with a co-derivation $Q$ of degree $1$ on the reduced symmetric space $\rsym{\frg[1]}$ such that $Q^2= 0$. A \textbf{$\bm{\linf}$-morphism} between $\mathrm L_\infty$-algebras, $F\colon (\frg,Q) \to (\frg',Q')$, is a morphism of graded coalgebras $F\colon \rsym{\frg[1]} \to \rsym{\frg'[1]}$ such that $F \circ Q = Q' \circ F$.
\end{definition}

\subsection[\texorpdfstring{${\mathrm L_\infty}$}{L{\textunderscore}infinity}-Algebras via Taylor Coefficients]{$\bm{\mathrm L_\infty}$-Algebras via Taylor Coefficients}

\begin{definition}
    Let $F\colon (\frg,Q) \to (\frg',Q')$ be a morphism of $\linf$-algebras. The \textbf{Taylor coefficients} of $F$ (\emph{resp.\ }of $Q$) are the sequence of maps $F_n\colon \operatorname{Sym}^n(\frg[1]) \to \frg'[1]$ (\emph{resp.\ }$Q_n \colon \operatorname{Sym}^n(\frg[1]) \to \frg[1]$). By the universal property of $\rsym{\frg[1]}$, the morphism $F$ (\emph{resp.\ }the codifferential $Q$) is uniquely determined by its Taylor coefficients. 
\end{definition}

Under the d\'ecalage isomorphisms \eqref{equ:II_dec_iso}, a Taylor coefficient $Q_n$ corresponds to a multi-bracket $\ell_n = Q_n \circ \dec_n^{-1} \colon \bigwedge^n(\frg)[n] \to \frg$ of degree $2-n$.

\begin{lemma}[Reconstruction from Taylor Coefficients]{}
    Let $F\colon (\frg,Q) \to (\frg',Q')$ be a morphism of $\linf$-algebras. Then $Q$ and $F$ are given in terms of the Taylor coefficients by
    \begin{align*}
        Q(x_1\cdots x_n) &= \displaystyle\sum_{I,J}^{\left\{ 1,...,n \right\}}
        \varepsilon_x(I,J) Q_{|I|}(x_I) \cdot x_J; \\
        F(x_1\cdots x_n) &= \displaystyle\sum_{k=1}^\infty\dfrac{1}{k!}\displaystyle\sum_{I_1,...,I_k}^{\left\{ 1,...,n \right\}}\varepsilon_x(I_1,...,I_k) F_{|I_1|}(x_{I_1}) \cdots F_{|I_k|}(x_{I_k}).
    \end{align*}
\end{lemma}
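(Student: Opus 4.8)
The statement is precisely the two cofreeness properties of the reduced symmetric coalgebra $\rsym{\frg[1]}$ recorded in Example~\ref{eg:II_rsym}, made explicit: a co-derivation $Q$ of $\rsym{\frg[1]}$ is freely and uniquely determined by its \emph{corestriction} $\mathrm{pr}\circ Q$, where $\mathrm{pr}\colon\rsym{\frg[1]}\to\frg[1]$ is the canonical projection onto the cogenerators, and likewise a homomorphism $F$ of co-nilpotent coalgebras with target $\rsym{\frg'[1]}$ is determined by $\mathrm{pr}'\circ F$. So the plan is to prove, in two symmetric halves, that the right-hand sides are (i) well defined with the prescribed corestriction and (ii) the \emph{unique} objects of the right type with that corestriction. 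Throughout I would use the single structural fact about the \emph{reduced} symmetric coalgebra that $\ker\Delta=\operatorname{Sym}^1(\frg[1])=\frg[1]$, i.e.\ that the only primitives are the cogenerators.

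For $Q$: write $\tilde Q$ for the operator defined by the first displayed formula. First I would record the cheap points: each summand $Q_{|I|}(x_I)\cdot x_J$ lies in $\operatorname{Sym}^{1+|J|}(\frg[1])$ and has degree $1$ because $Q_{|I|}$ does, so $\tilde Q$ is a degree-$1$ endomorphism; and the only summand landing in $\operatorname{Sym}^1(\frg[1])$ is the one with $J=\varnothing$, so $\mathrm{pr}\circ\tilde Q=\bigoplus_n Q_n$ as required. The substantive step is the co-Leibniz rule $\Delta\circ\tilde Q=(\tilde Q\otimes\id+\id\otimes\tilde Q)\circ\Delta$: one applies $\Delta$ to $\tilde Q(x_1\cdots x_n)$, expands via the shuffle formula for $\Delta$ from Example~\ref{eg:II_rsym}, and observes that the resulting terms are indexed by a block $I$ carrying $Q$ together with a shuffle of the remaining letters and of the single letter $Q_{|I|}(x_I)$ across the two tensor factors; sorting these terms according to which factor receives $Q_{|I|}(x_I)$ reproduces the two summands on the right, the Koszul signs matching because $Q_{|I|}(x_I)$ is moved as one object of its total degree. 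Uniqueness then follows from an induction on word length: if $D$ is a co-derivation with $\mathrm{pr}\circ D=0$ then $D=0$, since for $x\in\operatorname{Sym}^1$ one has $\Delta(Dx)=(D\otimes\id+\id\otimes D)(0)=0$, hence $Dx\in\ker\Delta=\operatorname{Sym}^1$ and $Dx=\mathrm{pr}(Dx)=0$, while for $x\in\operatorname{Sym}^n$ the co-Leibniz rule writes $\Delta(Dx)$ in terms of the values of $D$ on $\operatorname{Sym}^{<n}$, which vanish, so again $Dx\in\ker\Delta$ and $Dx=0$. Taking $D=Q-\tilde Q$ gives $Q=\tilde Q$.

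For $F$: the argument is parallel with ``co-derivation'' replaced by ``coalgebra morphism'' and the co-Leibniz rule by comultiplicativity $\Delta'\circ F=(F\otimes F)\circ\Delta$. Let $\tilde F$ be the second displayed formula; for fixed $n$ the sum is finite, because the $I_1,\dots,I_k$ are non-empty and partition $\{1,\dots,n\}$, forcing $k\le n$, so $\tilde F$ is a well-defined degree-$0$ map, and its $\operatorname{Sym}^1(\frg'[1])$-component comes only from the $k=1$ term, giving $\mathrm{pr}'\circ\tilde F=\bigoplus_n F_n$. Comultiplicativity of $\tilde F$ is the combinatorial identity that splitting an ordered partition of $\{1,\dots,n\}$ into $k$ non-empty blocks and then distributing the blocks over the two tensor factors agrees with first splitting the word $x_1\cdots x_n$ via $\Delta$ and then applying $\tilde F$ to each half; here the weights $1/k!$ are exactly what makes the two sides match once one symmetrises over orderings of the blocks, so that $\tilde F$ is the ``coexponential'' of the family $(F_n)$. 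Uniqueness is again the word-length induction: if $F,F'$ are coalgebra morphisms with $\mathrm{pr}'\circ F=\mathrm{pr}'\circ F'$, then on $\operatorname{Sym}^1$ both land in $\ker\Delta'$ and agree there, while on $\operatorname{Sym}^n$ the identity $\Delta'\circ F=(F\otimes F)\circ\Delta$ reduces the claim to $\operatorname{Sym}^{<n}$; hence $F=\tilde F$.

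The main obstacle is purely combinatorial: carrying out the two sign-and-weight computations — the co-Leibniz rule for $\tilde Q$, and especially the comultiplicativity of $\tilde F$ with the correct $1/k!$ factors and Koszul signs. Everything else is a short induction powered by the identification $\ker\Delta=\operatorname{Sym}^1(\frg[1])$ in the reduced symmetric coalgebra; a fully detailed treatment of the bookkeeping can be found in \cite{doubek07}.
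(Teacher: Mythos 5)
Your argument is correct, and it supplies an actual proof where the paper gives none: the paper simply cites Theorem III.2.1 of \cite{arnal02} for this lemma. What you write is the standard (and essentially the cited reference's) argument, namely that $\rsym{\frg[1]}$ is co-generated by $\frg[1]$ among co-nilpotent co-commutative coalgebras, so that a co-derivation (resp.\ coalgebra morphism) is uniquely determined by its corestriction $\mathrm{pr}\circ Q$ (resp.\ $\mathrm{pr}'\circ F$), combined with the identification $\ker\Delta=\operatorname{Sym}^1(\frg[1])$ and induction on word length. Two small points worth making explicit if you write this up: the identity $\ker\Delta=\operatorname{Sym}^1(\frg[1])$ for the \emph{reduced} symmetric coalgebra uses characteristic $0$ (one recovers $x_1\cdots x_n$, up to the factor $n$, from the $(1,n-1)$-component of $\Delta$), and the uniqueness step for $F$ needs the induction hypothesis inside $(F\otimes F)\circ\Delta$, since both tensor factors involve words of length $<n$. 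The deferred sign-and-weight bookkeeping is genuinely the only labour-intensive part, and pointing to \cite{doubek07} for it is reasonable.
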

\begin{proof}
    See Theorem III.2.1 of \cite{arnal02}.
\end{proof}
\begin{remark}
    In the lemma $I_1,...,I_k$ are multi-indices such that $(I_1,...,I_k)$ is a partition of $\left\{ 1,...,n \right\}$. The sum is over all such possible partitions. For each partition $(I_1,...,I_k)$, it defines an ordering of $\left\{ 1,...,n \right\}$ and hence a permutation $\sigma \in S_n$ in the following way: all elements of $I_i$ precedes those of $I_j$ if $i<j$; in each $I_i$ the order is the usual ordering of natural numbers. $\varepsilon_x(I_1,...,I_k)$ is the Koszul sign associated to this permutation. The convention for $(I,J)$ is identical.
\end{remark}

The condition $Q^2 = 0$ can be expanded into an infinite sequence of constraints imposing on the Taylor coefficients $Q_n$. For convenience, we denote by $Q^m_n\colon \operatorname{Sym}^n(\frg[1]) \to \operatorname{Sym}^m(\frg[1])$ the map induced from $Q$ by restriction on the domain and projection on the codomain. In particular $Q^1_n = Q_n$ is the $n$-th Taylor coefficient. Then $Q^2 = 0$ is equivalent to 
\begin{equation}
    \displaystyle\sum_{i=1}^n Q^1_i\circ Q^i_n = 0, \qquad \forall\, n \in \Z_{>0}. \label{equ:Q_sqzero}
\end{equation}
\begin{itemize}
    \item For $n=1$, \eqref{equ:Q_sqzero} gives $Q_1 \circ Q_1 = 0$, which implies that $Q_1$ is a differential of degree $1$ on $\frg$. Let $\d x := (-1)^{|x|}Q_1(x)$ for homogeneous $x \in \frg$.
    \item For $n=2$, \eqref{equ:Q_sqzero} gives $Q_1(Q_2(x \cdot y)) + Q_2(Q_1(x) \cdot y + (-1)^{|x|-1}x \cdot Q_1(y)) = 0$. If $[x,y] := \ell_2(x \wedge y)$, then this becomes the graded Leibniz rule of $[-,-]$.
    \item For $n = 3$, \eqref{equ:Q_sqzero} gives 
    \begin{align*}\hspace*{-3em}
        \begin{aligned}
            &(-1)^{|x|\cdot|z|}[[x, y], z]+(-1)^{|y|\cdot|z|}[[z, x], y]+(-1)^{|x|\cdot|y|}[[y, z], x] \\
            &\ =(-1)^{|x|\cdot|z|}\left(\d (\ell_{3}(x \wedge y \wedge z))+ \ell_{3}(\d x \wedge y \wedge z)+(-1)^{|x|} \ell_{3}(x \wedge \d y \wedge z)+(-1)^{|x|+|y|} \ell_{3}(x \wedge y \wedge \d z)\right).
        \end{aligned}
    \end{align*}
    for homogeneous $x,y,z \in \frg$. This reduces to the graded Jacobi identity for $[-,-]$ if $\ell_3 = 0$.
\end{itemize}
Therefore, the $\linf$-algebras can be considered as the generalisation of DGLAs where the graded Jacobi identity for the graded Lie bracket $\ell_2$ holds only up to homotopy of a higher bracket $\ell_3$. In particular, we have the following result:
\begin{proposition}[From $\bm{\linf}$-Algebras to DGLAs]{}
    An $\linf$-algebra $(\frg,Q)$ defines a DGLA structure on $\frg$ if and only if $Q_n = 0$ for $n \geq 3$.
\end{proposition}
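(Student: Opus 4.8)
The plan is to read the claim off the expansion \eqref{equ:Q_sqzero} of $Q^2=0$, whose arity-$1$, $2$ and $3$ components were already unwound in the discussion preceding the statement. The one extra ingredient is a degree count: a coderivation of $\rsym{\frg[1]}$ whose only nonzero Taylor coefficients are $Q_1$ and $Q_2$ maps $\operatorname{Sym}^n(\frg[1])$ into $\operatorname{Sym}^{n-1}(\frg[1])\oplus\operatorname{Sym}^n(\frg[1])$, since $Q_1$ preserves symmetric degree and $Q_2$ lowers it by one; hence the induced maps $Q^m_n$ vanish unless $m\in\{\,n-1,\,n\,\}$.

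\textbf{``If''.} Assume $Q_n=0$ for all $n\geq 3$, so $Q$ is determined by $Q_1$ and $Q_2$. Put $\d x:=(-1)^{|x|}Q_1(x)$ and $[x,y]:=\ell_2(x\wedge y)$ with $\ell_2=Q_2\circ\dec_2^{-1}$; the bracket is graded skew-symmetric because $\ell_2$ factors through $\bigwedge^2\frg$ by construction. The arity-$1$ constraint in \eqref{equ:Q_sqzero} is $\d^2=0$, the arity-$2$ constraint is the graded Leibniz rule for $[-,-]$, and the arity-$3$ constraint, with $\ell_3=0$, collapses to the graded Jacobi identity --- exactly the computation displayed above the statement. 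For $n\geq 4$ the only terms of $\sum_{i=1}^n Q^1_i\circ Q^i_n$ that can survive are $Q^1_{n-1}\circ Q^{n-1}_n$ and $Q^1_n\circ Q^n_n$, and both vanish since $Q_{n-1}=Q_n=0$; thus $Q^2=0$ yields nothing beyond the three axioms, and $(\frg,\d,[-,-])$ is a DGLA.

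\textbf{``Only if''.} Here ``$(\frg,Q)$ defines a DGLA structure on $\frg$'' means that $(\frg,Q)$ is the $\linf$-algebra attached to a DGLA $(\frg,\d,[-,-])$, that is, $Q$ is the coderivation of $\rsym{\frg[1]}$ with $Q_1(x)=(-1)^{|x|}\d x$, $Q_2=\ell_2\circ\dec_2$ where $\ell_2(x\wedge y)=[x,y]$, and $Q_n=0$ for $n\geq 3$; with this understanding the vanishing $Q_n=0$ for $n\geq 3$ holds by construction, and the only thing to verify is that the assignment is legitimate, i.e.\ $Q^2=0$. But $Q^2=0$ unwinds via \eqref{equ:Q_sqzero} into precisely $\d^2=0$, the graded Leibniz rule, and the graded Jacobi identity at arities $1,2,3$ --- all DGLA axioms --- together with the empty constraints at arity $\geq 4$ noted above. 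I do not anticipate a real obstacle: the only delicate points are the d\'ecalage signs relating $Q_n$ to $\ell_n$ and the elementary degree count that kills the higher constraints, which is exactly what makes $Q^2=0$ equivalent to the three DGLA axioms.
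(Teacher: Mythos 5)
Your proof is correct and follows essentially the same route as the paper, which treats this proposition as an immediate consequence of the preceding arity-by-arity expansion of $Q^2=0$ via \eqref{equ:Q_sqzero}. Your added degree count (that only $Q^{n-1}_n$ and $Q^n_n$ survive, so the constraints at arity $\geq 4$ are vacuous) and the explicit reading of the ``only if'' direction are the right way to make the paper's implicit argument complete.
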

\begin{remark}
    In general, it can be shown (\emph{cf.\ }\cite{doubek07}, \cite{jurvco19}) that the constraint $Q \circ Q = 0$ is equivalent to the \emph{homotopy Jacobi identities} satisfied by the multi-brackets:
    \begin{center}$
        \displaystyle\sum_{i=1}^n (-1)^i\displaystyle\sum_{\sigma \in S(i, n-i)} \chi_x(\sigma)\ell_{n+1-i}{\left(\ell_{i}{\left(x_{\sigma(1)}\wedge \cdots\wedge x_{\sigma(i)}\right)}\wedge x_{\sigma(i+1)}\wedge \cdots\wedge x_{\sigma(n)}\right)}=0.
    $\end{center}
    This provides an equivalent alternative definition for $\linf$-algebras, which justifies some other names for $\linf$-algebras (Lie-$\infty$ algebras or strongly homotopy Lie algebras).
\end{remark}

For the $\linf$-morphism $F\colon (\frg,Q) \to (\frg', Q')$, an similar analysis for the condition $F \circ Q = Q' \circ F$ shows that:
\begin{itemize}
    \item For $n = 1$, $F_1 \circ Q_1 = Q'_1 \circ F_1$, which implies that $F_1$ is a morphism between the differerntial graded vector spaces $\frg$ and $\frg'$;
    \item For $n = 2$, $F_1 \circ Q_2 - Q'_2 \circ F_1 = (\text{terms involving } F_2)$. This suggests the following result:
\end{itemize}
\begin{proposition}[From $\bm{\linf}$-Morphisms to DGLA Morphisms] {}
    An $\linf$-morphism $F\colon (\frg,Q) \to (\frg', Q')$ between the DGLAs $\frg$ and $\frg'$ defines a morphism of DGLAs if and only if $F_n = 0$ for $n \geq 2$.
\end{proposition}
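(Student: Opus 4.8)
The plan is to prove both implications by expanding the defining coalgebra identity $F\circ Q = Q'\circ F$ in terms of Taylor coefficients, exploiting crucially that, since $\frg$ and $\frg'$ are DGLAs, all higher Taylor coefficients $Q_n$ and $Q'_n$ with $n\geq 3$ vanish. Throughout I read ``$F$ defines a morphism of DGLAs'' as: $F$ is the morphism of graded coalgebras $\rsym{\varphi[1]}$ induced by an ordinary morphism of DGLAs $\varphi\colon\frg\to\frg'$.

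The forward implication is then immediate from the construction: the induced coalgebra morphism $\rsym{\varphi[1]}$ maps $\operatorname{Sym}^n(\frg[1])$ into $\operatorname{Sym}^n(\frg'[1])$, so projecting onto $\operatorname{Sym}^1(\frg'[1])\cong\frg'[1]$ annihilates the image of $\operatorname{Sym}^n(\frg[1])$ for every $n\geq 2$. Hence $F_1 = \varphi[1]$ and $F_n = 0$ for all $n\geq 2$, essentially by the definition of the Taylor coefficients.

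For the converse — the substantive direction — suppose $F_n = 0$ for all $n\geq 2$. Then in the reconstruction formula for $F$ only the partitions of $\{1,\dots,n\}$ into singletons contribute, and the $n!$ resulting terms collapse under the Koszul signs $\varepsilon_x$ to the single term $F_1(x_1)\cdots F_1(x_n)$; that is, $F = \rsym{F_1}$ is the symmetric-algebra extension of $F_1\colon\frg[1]\to\frg'[1]$. It remains to extract from $F\circ Q = Q'\circ F$ that $F_1$ is a morphism of DGLAs. Restricting both sides to $\operatorname{Sym}^1(\frg[1])$, where $Q$ and $Q'$ act by $Q_1$ and $Q'_1$, gives $F_1\circ Q_1 = Q'_1\circ F_1$; in view of $\d x = (-1)^{|x|}Q_1(x)$ this says that $F_1$ is a chain map. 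Next, evaluating both sides on $x\cdot y\in\operatorname{Sym}^2(\frg[1])$ and projecting onto $\operatorname{Sym}^1(\frg'[1])$: on the left the only surviving contribution is $F_1(Q_2(x\cdot y))$ — the $Q_1$-terms land in $\operatorname{Sym}^2$ and are taken care of by the $\operatorname{Sym}^2$-component, which holds automatically because $F = \rsym{F_1}$ while $Q_1$ is a coderivation — and on the right one gets $Q'_2(F_1(x)\cdot F_1(y))$ and nothing else, precisely because $Q'_n = 0$ for $n\geq 3$. Hence $F_1\circ Q_2 = Q'_2\circ\operatorname{Sym}^2(F_1)$, and passing through the décalage isomorphism \eqref{equ:II_dec_iso}, under which $Q_2$ corresponds to the bracket $\ell_2 = [-,-]$, this is exactly $F_1([x,y]_\frg) = [F_1(x),F_1(y)]_{\frg'}$ up to the standard décalage signs. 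Thus $F_1$ preserves both the differential and the bracket, so it is a morphism of DGLAs and $F$ is the $\linf$-morphism it induces.

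The main obstacle I expect is the sign bookkeeping: tracking Koszul signs when projecting $F\circ Q = Q'\circ F$ onto the individual symmetric powers, checking that the ``cross terms'' in the $\operatorname{Sym}^2$-component cancel consistently because $F=\rsym{F_1}$ whereas $Q_1$ and $Q'_1$ act as coderivations, and confirming that $\dec_2$ converts the intertwining relation $F_1\circ Q_2 = Q'_2\circ\operatorname{Sym}^2(F_1)$ into bracket-preservation with the correct (degree $0$) shift. Everything else is a routine manipulation of the reconstruction formula, parallel to the analysis of $Q\circ Q = 0$ already carried out for the previous proposition.
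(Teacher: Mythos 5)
Your proposal is correct and follows the same route the paper takes (the paper only sketches it via the two bullet points preceding the proposition): expand $F\circ Q = Q'\circ F$ degree by degree in the symmetric powers, read off the chain-map condition from the $\operatorname{Sym}^1$-component and bracket-preservation from the $\operatorname{Sym}^2\to\operatorname{Sym}^1$ component, with the forward direction immediate from the uniqueness of Taylor coefficients. Your write-up actually supplies more detail than the paper, which leaves the verification at the level of ``this suggests the following result''; the only cosmetic slip is that the vanishing of $Q'_n$ for $n\geq 3$ is not what rules out extra terms in the $\operatorname{Sym}^2\to\operatorname{Sym}^1$ projection (those coefficients never act on $\operatorname{Sym}^2$ anyway) --- the operative fact there is $F_2=0$.
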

\begin{remark}
    A morphism of $\linf$-algebras $F\colon (\frg,Q) \to (\frg',Q')$ is called an \textbf{$\bm{\linf}$-quasi-isomorphism}, if the first Taylor coefficient $F_1\colon \frg \to \frg'$ is a quasi-isomorphism, that is, $F_1$ induces isomorphisms of the cohomology spaces $\mathrm H^n(F_1)\colon \mathrm H^n(\frg) \to \mathrm H^n(\frg')$. In this case we say that $\frg$ is \textbf{homotopy equivalent} to $\frg'$. It is not immediate that homotopy equivalence defines an equivalence relation in the category of $\linf$-algebras. It is a non-trivial result that an ${\linf}$-quasi-isomorphism \emph{admits an $\linf$-quasi-inverse}, which will be shown in the next part.
\end{remark}

\subsection{Minimal Models}

\begin{definition}
    Let $(\frg,Q)$ and $(\frg',Q')$ be two $\linf$-algebras. We define the direct sum $\frg \oplus \frg'$ to be an $\linf$-algebra with the co-differential given in terms of the Taylor coefficients by
    \begin{center}$
        (Q \oplus Q')_n((x_1 + x'_1)\cdots (x_n + x'_n)) = Q_n(x_1 \cdots x_n) + Q'_n(x'_1 \cdots x'_n)
    $\end{center}
    where $x_1,...,x_n \in \frg$ and $x'_1,...,x'_n \in \frg'$.
\end{definition}

\begin{definition}
    Let $(\frg,Q)$ be an $\linf$-algebra. It is called
    \begin{itemize}
        \item \textbf{minimal}, if the first Taylor coefficient $Q_1 = 0$;
        \item \textbf{linear contractible}, if the Taylor coefficients $Q_n = 0$ for $n \geq 2$ and $\mathrm H^\bullet(\frg,Q_1) = 0$.
    \end{itemize}
\end{definition}

\begin{proposition}[Decomposition Theorem]{}
    Every $\linf$-algebra is isomorphic to a direct sum of a minimal $\linf$-algebra and a linear contractible $\linf$-algebra.
\end{proposition}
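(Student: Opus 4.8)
The plan is to first bring the differential $Q_1$ into split normal form by a linear change of variables, and then to transport the whole $\linf$-structure onto that split form by an $\linf$-isomorphism built one arity at a time. Write $d := Q_1$, a square-zero degree-$1$ operator on $\frg$. Since $k$ is a field, the complex $(\frg,d)$ decomposes: choose a graded subspace $\mathcal{H}\subseteq\ker d$ mapping isomorphically onto $\mathrm{H}^\bullet(\frg,d)$, and a graded complement $\mathcal{C}$ of $\ker d$ in $\frg$, so that $d$ restricts to an isomorphism $\mathcal{C}\xrightarrow{\sim}\img d$; setting $\mathcal{L}:=\mathcal{C}\oplus\img d$ gives $\frg=\mathcal{H}\oplus\mathcal{L}$ with $d(\mathcal{H})=0$, $d(\mathcal{L})\subseteq\mathcal{L}$, and $(\mathcal{L},d|_{\mathcal{L}})$ acyclic. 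Let $p\colon\frg\to\frg$ be the projection onto $\mathcal{H}$ along $\mathcal{L}$, and pick a homotopy $h\colon\frg\to\frg$ of degree $-1$ with $dh+hd=\id-p$ and the side conditions $ph=hp=0$, $h^2=0$; all of this is elementary.

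Next I would fix the target and build the isomorphism inductively. Declare $\frg_{\mathrm{contr}}:=(\mathcal{L},d|_{\mathcal{L}})$ with all Taylor coefficients of arity $\geq 2$ equal to zero; this is linear contractible because $(\mathcal{L},d|_{\mathcal{L}})$ is acyclic. Declare $\frg_{\mathrm{min}}:=(\mathcal{H},Q^{\mathrm{min}})$ with $Q^{\mathrm{min}}_1=0$, so that $\frg_{\mathrm{min}}$ is minimal whatever the higher coefficients $Q^{\mathrm{min}}_n$ ($n\geq 2$) turn out to be. I would then construct the $\linf$-morphism $F\colon\frg_{\mathrm{min}}\oplus\frg_{\mathrm{contr}}\to\frg$ together with the $Q^{\mathrm{min}}_n$ simultaneously by induction on the arity $n$, starting from $F_1=\id$ (legitimate, since $Q_1$ already has the direct-sum shape $0\oplus d|_{\mathcal{L}}$ in these coordinates). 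Suppose $F_{<n}$ and $Q^{\mathrm{min}}_{<n}$ are fixed so that the morphism identity $F\circ(Q^{\mathrm{min}}\oplus Q^{\mathrm{contr}})=Q\circ F$ holds in every component assembled from Taylor coefficients of index $<n$. Collecting the arity-$n$ contributions, the remaining equation takes the form
\[
    \partial F_n \;=\; \mathcal{O}_n + Q^{\mathrm{min}}_n ,
\]
where $\partial$ is the differential induced by $d$ and by the arity-$1$ part of the target structure on $\Hom\bigl(\operatorname{Sym}^n\bigl((\frg_{\mathrm{min}}\oplus\frg_{\mathrm{contr}})[1]\bigr),\frg[1]\bigr)$, and $\mathcal{O}_n$ is built entirely from the lower data already chosen.

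The key point is that $\mathcal{O}_n$ is $d$-closed — a routine consequence of $Q^2=0$, $(Q^{\mathrm{min}}\oplus Q^{\mathrm{contr}})^2=0$, and the inductive hypothesis — so the contraction resolves the displayed equation: set $Q^{\mathrm{min}}_n$ equal to the restriction of $p\circ\mathcal{O}_n$ to pure $\mathcal{H}$-inputs, and $F_n:=-\,h\circ\mathcal{O}_n$; then $dh+hd=\id-p$ applied to the closed tensor $\mathcal{O}_n$ produces exactly the required identity, while the side conditions $ph=hp=h^2=0$ ensure the newly introduced data leave $\frg_{\mathrm{contr}}$ untouched. Since $F_1=\id$ is invertible, $F$ is an isomorphism of the underlying graded coalgebras (its inverse is reconstructed arity by arity from the $F_n$), and because $F$ intertwines the codifferentials so does $F^{-1}$; hence $F$ is an $\linf$-isomorphism and $\frg\cong\frg_{\mathrm{min}}\oplus\frg_{\mathrm{contr}}$, which is exactly the asserted decomposition.

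The main obstacle is the bookkeeping in the inductive step. One must verify that the $d$-closed obstruction $\mathcal{O}_n$ really does separate into a part supported on pure $\mathcal{H}$-inputs (which becomes $Q^{\mathrm{min}}_n$) and a part annihilated by $p$ (which is absorbed into $F_n$ via $h$), with no higher bracket touching $\mathcal{L}$ ever created. Making this precise requires carrying along, as an extra inductive invariant, that the transported structure has vanishing $\mathcal{H}$-component on every input containing a factor from $\mathcal{L}$, and checking this invariant survives each stage; together with keeping track of the Koszul signs inside $\mathcal{O}_n$, this is where essentially all of the genuine work resides.
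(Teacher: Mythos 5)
Your proposal is correct in outline and rests on the same two pillars as the paper's proof: first split the complex $(\frg,Q_1)$ into a summand carrying the cohomology and an acyclic summand, then kill the unwanted Taylor coefficients by an induction on arity. Where you differ is in the mechanics of the inductive step and in the direction of the map. The paper builds an infinite chain of elementary $\linf$-isomorphisms $(\frg,Q)\to(\frg,Q^{(1)})\to(\frg,Q^{(2)})\to\cdots$, each altering only one Taylor coefficient, defines $\bar Q_k$ directly as a projection of $Q_k$, and then solves for $F_k$ by a recursion on the ``type'' of an input (the number of factors lying in $Z_c$), invoking a sub-lemma of Arnal--Manchon--Masmoudi to control closed elements of positive type. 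You instead run the map the other way, from $\frg_{\mathrm{min}}\oplus\frg_{\mathrm{contr}}$ into $\frg$, and package the step as a homological-perturbation argument: a contraction $(p,h)$ with side conditions, a $\partial$-closed obstruction $\mathcal{O}_n$, and the assignments $Q^{\mathrm{min}}_n:=p\circ\mathcal{O}_n$, $F_n:=-h\circ\mathcal{O}_n$. Your packaging is the more standard homotopy-transfer formalism and makes the invertibility of $F$ immediate from $F_1=\id$; the paper's type recursion is more hands-on but makes properties $(*.3)$--$(*.5)$ (no higher bracket ever touching the contractible summand) visibly part of the construction. The one place your argument is thinner than the paper's is exactly the point you flag: you must show that the closed obstruction $\mathcal{O}_n$ is cohomologous to a cocycle supported on pure $\mathcal{H}$-inputs with values in $\mathcal{H}$, so that nothing of $Q^{\mathrm{min}}_n$ leaks onto mixed or pure-$\mathcal{L}$ inputs. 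This is not automatic from $p$ and $h$ alone, but it does follow from a K\"unneth-type computation: since $\mathcal{L}$ is acyclic, the cohomology of $\Hom\bigl(\operatorname{Sym}^n((\mathcal{H}\oplus\mathcal{L})[1]),\frg[1]\bigr)$ with respect to $\partial$ is $\Hom\bigl(\operatorname{Sym}^n(\mathcal{H}[1]),\mathcal{H}[1]\bigr)$. Supplying that observation (or the paper's equivalent type-by-type analysis) would make your argument complete; as written, both your proof and the paper's defer the same sign and bookkeeping details to the literature.
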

\begin{proof}
    (\emph{Adapted from \cite{arnal02} and \cite{jurvco19}.}) The first step of the proof is a general fact in linear algebra that any cochain complex of vector spaces is a direct sum of a complex with zero differential and a complex with zero cohomology. For this we consider a cochain complex $(C^\bullet,\d)$ of vector spaces. Note that the two short exact sequences
    \begin{center}
        \begin{tikzcd}
            0 & {\ker \d^{n}} & {C^n} & {\img  \d^n} & 0
            \arrow[from=1-1, to=1-2]
            \arrow[from=1-2, to=1-3]
            \arrow[from=1-3, to=1-4]
            \arrow[from=1-4, to=1-5]
        \end{tikzcd} \\
        \begin{tikzcd}
            0 & {\img \d^{n-1}} & {\ker \d^{n}} & {\mathrm H^n(C^\bullet)} & 0
            \arrow[from=1-1, to=1-2]
            \arrow[from=1-2, to=1-3]
            \arrow[from=1-3, to=1-4]
            \arrow[from=1-4, to=1-5]
        \end{tikzcd} 
    \end{center}
    split. Therefore we have a decomposition $C^n = Z^n \oplus Z^n_c = B^n \oplus H^n \oplus Z_c^n$, where $Z^n = \ker \d^n$, $B^n =  \img \d^{n-1} \cong Z_c^{n-1}$, and $H^n \cong \mathrm H^n(C^\bullet)$. 
    We define a linear map $h^n \colon C^n \to C^{n-1}$ by the composition:
    \begin{center}
        \begin{tikzcd}
            {C^n} & {B^n} & {Z^{n-1}_c} & {C^{n-1}}
            \arrow["\sim", from=1-2, to=1-3]
            \arrow[hook, from=1-3, to=1-4]
            \arrow[two heads, from=1-1, to=1-2]
        \end{tikzcd}.
    \end{center}
    $h^n$ is called the \emph{splitting map}. It follows that $B^n = \img(\d^{n-1}\circ h^n)$ and $Z_c^n = \img(h^{n+1}\circ \d^n)$. Therefore we have a decomposition of the identity map on $C^n$:
        \begin{center}
            $\id = p^n + \d^{n-1}\circ h^n + h^{n+1}\circ \d^n,$
        \end{center}
    where $p^n\colon C^n \to H^n$ is the projection map. This shows that $h$ is a chain homotopy between $\id$ and $p$. Therefore the cohomology of $H^\bullet \cong \mathrm H^\bullet(C^\bullet)$ induced by the projection $p$ is trivial. On the other hand, the projection $1-p^n\colon C^n \to B^n \oplus Z_c^n$ is chain-homotopic to the zero map. Hence the induced differential on $B^\bullet \oplus Z_c^\bullet$ is zero.

    Now let $(\frg,Q)$ be an $\linf$-algebra. According to the above construction we can decompose $(\frg,Q_1)$ into $\frg_m \oplus \frg_c$,  such that $\frg_m$ has zero differential: $M_1 = 0$; and $\frg_c = B \oplus Z_c$ has zero cohomology: $\mathrm H^\bullet(\frg_c,L_1) = 0$. Correspondingly, the reduced symmetric algebra is decomposed as
    \begin{center}
        $\rsym{\frg[1]} \cong \rsym{\frg_m[1]} \oplus \rsym{\frg_c[1]} \oplus \rsym{\frg_m[1]} \otimes \rsym{\frg_c[1]}$.
    \end{center}
    We would like to find an $\linf$-isomorphism $F\colon (\frg,Q) \to (\frg,\bar Q)$ such that $(\frg,\bar Q) = (\frg_m , M) \oplus (\frg_m, L)$. In other words, the codifferential $\bar Q$ satisfies
    \begin{center}$
        \bar Q\big|_{\rsym{\frg_m[1]}} = M, \quad \bar Q\big|_{\rsym{\frg_m[1]}} = L, \quad \bar Q\big|_{\rsym{\frg_m[1]} \otimes \rsym{\frg_c[1]}} = M \otimes \id + \id \otimes L.
    $\end{center}
    The process is equivalent to finding an infinite chain of $\linf$-isomorphisms: 
    \begin{center}
        \begin{tikzcd}
            {(\frg, Q)} && {(\frg, Q^{(1)})} & {(\frg, Q^{(2)})} & {(\frg, Q^{(3)})} & \cdots & {(\frg, \bar Q)}
            \arrow["{F^{(1)} = \id}", from=1-1, to=1-3]
            \arrow["{F^{(2)}}", from=1-3, to=1-4]
            \arrow["{F^{(3)}}", from=1-4, to=1-5]
            \arrow[from=1-5, to=1-6]
            \arrow[from=1-6, to=1-7]
        \end{tikzcd}
    \end{center}
    with the following properties ($*$): for each $k \in \Z_{>0}$,
    \begin{enumerate}[label = ($*$.\arabic*)]
        \item $F^{(k)}$ is an $\linf$-isomorphism such that $F^{(k)}_j = 0$ for all $j \ne k$;
        \item $Q^{(k-1)}_j = Q^{(k)}_j$ for all $j \ne k$;
        \item $Q^{(k)}_j(\rsym{\frg_m[1]}) \subseteq \frg_m$ for $1 \leq j \leq k$;
        \item $Q^{(k)}_j\big|_{\rsym{\frg_c[1]}} = 0$ for $2 \leq j \leq k$;
        \item $Q^{(k)}_j\big|_{\rsym{\frg_m[1]} \otimes \rsym{\frg_c[1]}} = 0$ for $1 \leq j \leq k$.
    \end{enumerate}
    Assuming these, we will obtain an $\linf$-isomorphism $F\colon (\frg,Q) \to (\frg, \bar Q)$ such that the Taylor coefficients of $F$ and $\bar Q$ are given by $F_k = F^{(k)}_k$ and $\bar Q_k = Q^{(k)}_k$ for all $k \in \Z_{>0}$ respectively. Moreover, $(\frg,\bar Q)$ is the claimed direct sum. The strategy is using induction on $k$ to construct $\bar Q_k$ and $F_k$ and to prove the properties ($*$) for $k$. The base case $k=1$ is proven in the first part of this proof. Now we assume that $Q^{(k-1)}$ is given and construct $Q^{(k)}_k$ and $F^{(k)}_k$. To ease the notations, we write $Q_1 = Q^{(k-1)}_1 = Q^{(k)}_1$, $Q_k = Q^{(k-1)}_k$, $\bar Q_k = Q^{(k)}_k$, and $F_k = F^{(k)}_k$. 

    The condition $F^{(k)}\circ Q^{(k-1)} = Q^{(k)} \circ F^{(k)}$ of $F^{(k)}$ being an $\linf$-morphism is expanded as follows (\emph{neglecting any signs}):
    \begin{equation}
        F_kQ_1(x_1\cdots x_k) + Q_k(x_1\cdots x_k) = Q_1F_k(x_1 \cdots x_k) + \bar Q_k(x_1 \cdots x_k). \label{equ:II_decomp_Fk}
    \end{equation}
    \newcommand{\Type}{\mathsf{Type}}
    For $x \in \operatorname{Sym}^k(\frg[1])$, $k \geq 2$, we say that $x$ is of \emph{type $r$} for $0 \leq r \leq k$, if $x = x_1\cdots x_k$, where $x_1,...,x_r \in Z_c$ and $x_{r+1},...,x_k \in Z$. We write $\Type(x) = r$. We define $\bar Q_k(x)$ to be the projection of $Q_k(x)$ onto $\frg_m \oplus Z_c$ if $\Type(x) = 0$, and $\bar Q_k(x) = 0$ if $\Type(x) > 0$. Then ($*$.4) and ($*$.5) holds immediately. $F_k(x)$ will be defined recursively for the type of $x$. 
    
    First, we consider $\Type(x) = 0$, i.e. $x = x_1\cdots x_k$ for $x_1,...,x_k \in Z$. An easy induction shows that $Q_k(x) \in Z$. Hence $\bar Q_k(x) \in \frg_m$ by definition, which proves ($*$.3). On the other hand, the equation \eqref{equ:II_decomp_Fk} becomes
    \begin{center}$
        Q_1F_k(x_1 \cdots x_k) =  \bar Q_k(x_1 \cdots x_k) - Q_k(x_1\cdots x_k).
    $\end{center}
    This defines $F_k(x)$ up to an element in $Z$. Furthermore, suppose that $x_1 = Q_1y_1 \in B$. The expansion of the equation $Q^{(k-1)} \circ Q^{(k-1)} = 0$ takes the form (\emph{neglecting any signs}):
        \begin{equation}
            Q_k(Q_1y_1\cdot x_2\cdots x_k) + \displaystyle\sum_{1 \leq i,j < k}Q_i(\cdots Q_j(\cdots)\cdots) + Q_1Q_k(y_1\cdot x_2 \cdots x_k) = 0. \label{equ:II_decomp_Q2}
        \end{equation}
    Since $Q_i,Q_j$ satisfies the property ($*$.4) and ($*$.5), all intermediate terms vanish. We are left with
    \begin{center}$
        Q_k(Q_1y_1\cdot x_2\cdots x_k) = -Q_1Q_k(y_1\cdot x_2 \cdots x_k) \in B,
    $\end{center}
    which implies that $\bar Q_k(Q_1y_1\cdot x_2\cdots x_k) = 0$. Let $\tilde Q_1\colon \operatorname{Sym}^k(\frg[1]) \to \operatorname{Sym}^k(\frg[1])$ be the coalgebra morphism induced by $Q_1$. Then $\bar Q_k \tilde Q_1(x) = 0$.

    Next, we define $F_k(x)$ recursively by the following construction. Let $r \geq 1$. Suppose that $F_k(y)$ is determined for all $y$ with $\Type(y) \leq r-2$ and for all $z$ with $\Type(z) = r-1$ \emph{up to an element in $Z$}. Let $x \in \operatorname{Sym}^k(\frg[1])$ with $\Type(x) = r$. We would like to determine $F_k(x)$ up to an element in $Z$, and specify $F_k(y)$ for all $y$ with $\Type(y) = r-1$.

    The following sub-lemma (\emph{cf.\ }Lemma V.3 of \cite{arnal02}) will be useful in the subsequent proof:
    \begin{quote}
        \emph{Let $x \in \operatorname{Sym}^k(\frg[1])$ with $\Type(x) = r > 0$. If $Q_1 x = 0$, then there exists $y$ of type $r+1$ such that $x = \tilde Q_1 y$.}
    \end{quote}
    Applying \eqref{equ:II_decomp_Fk} to $\tilde Q_1(x)$, we obtain that
        \begin{center}
            $Q_1F_k\tilde Q_1(x) = Q_k\tilde Q_1(x),$
        \end{center}
    where we used the facts that $Q_1\tilde Q_1 = 0$ and $\tilde Q_k \tilde Q_1 = 0$. Similar to \eqref{equ:II_decomp_Q2}, $Q^{(k-1)} \circ Q^{(k-1)} = 0$ implies that 
    \begin{equation}
        Q_1Q_k(x) + Q_k\tilde Q_1(x) = 0. \label{equ:II_decomp_Q2_2}
    \end{equation}
    Combining the two equations, we have $F_k\tilde Q_1(x) + Q_k(x) \in Z$. Note that $\Type(\tilde Q_1 x) = r-1$, so that $F_k\tilde Q_1(x)$ is determined up to an element in $Z$. Now we specify $F_k\tilde Q_1(x)$ such that 
    \begin{equation*}
        F_k\tilde Q_1(x) + Q_k(x) =: b(x) \in B.
    \end{equation*}
    Applying \eqref{equ:II_decomp_Fk} to $x$, we obtain that $b(x) = Q_1F_k(x)$. On the other hand, note that $b(x)$ must satisfy the constraint that $b(x) = Q_k(x)$ provided $Q_1(x) = 0$. In this case, the sub-lemma above implies that $x = \tilde Q_1y$ for some $y$ with $\Type(y) = r+1$. Hence \eqref{equ:II_decomp_Q2_2} implies that $b(x) = -Q_1Q_k(y)$. By fixing $b(x)$ satisfying the above constriants, $F_k(x)$ is determined up to an element in $Z$.
\end{proof}

\begin{corollary}[Minimal Model Theorem]{}
    Each $\linf$-algebra $\frg$ is homotopy equivalent to a minimal $\linf$-algebra $\frg_m$, which is known as the \textbf{minimal model} of $\frg$.
\end{corollary}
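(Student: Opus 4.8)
The Minimal Model Theorem is an immediate corollary of the Decomposition Theorem, so the plan is simply to unwind that decomposition and read off a quasi-isomorphism. First I would apply the Decomposition Theorem to $(\frg,Q)$, obtaining an $\linf$-isomorphism $\Phi\colon (\frg,Q)\xrightarrow{\sim}(\frg_m,M)\oplus(\frg_c,L)$, where $(\frg_m,M)$ is minimal (so $M_1=0$) and $(\frg_c,L)$ is linear contractible (so $L_n=0$ for $n\geq 2$ and $\mathrm H^\bullet(\frg_c,L_1)=0$). It then remains to produce an $\linf$-quasi-isomorphism from this direct sum to $\frg_m$ and to compose it with $\Phi$.

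The natural candidate is the projection $\pi\colon (\frg_m,M)\oplus(\frg_c,L)\to(\frg_m,M)$. I would first argue that $\pi$ is an $\linf$-morphism: the linear projection $\frg_m\oplus\frg_c\to\frg_m$ lifts, by the universal property of the reduced symmetric coalgebra (Example~\ref{eg:II_rsym}), to a coalgebra morphism whose only nonzero Taylor coefficient is $\pi_1$, so by the reconstruction formula it sends a product $z_1\cdots z_n$ to $\pi_1(z_1)\cdots\pi_1(z_n)$; in particular $\pi$ annihilates every summand of $\rsym{(\frg_m\oplus\frg_c)[1]}$ involving a factor from $\frg_c$. Using the description of the direct-sum codifferential $\bar Q$ — namely $\bar Q|_{\rsym{\frg_m[1]}}=M$, $\bar Q|_{\rsym{\frg_c[1]}}=L$, and $\bar Q|_{\rsym{\frg_m[1]}\otimes\rsym{\frg_c[1]}}=M\otimes\id+\id\otimes L$ — one checks $\pi\circ\bar Q=M\circ\pi$ summand by summand: on $\rsym{\frg_m[1]}$ both sides equal $M$, while on $\rsym{\frg_c[1]}$ and on the mixed summand both sides vanish, since $L$ (and $M\otimes\id+\id\otimes L$) preserve the $\frg_c$-content and $\pi$ kills anything with a $\frg_c$-factor. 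Next I would observe that $\pi$ is a quasi-isomorphism: its first Taylor coefficient $\pi_1$ is the linear projection, and since $M_1=0$ it induces on cohomology the projection $\mathrm H^\bullet(\frg_m,M_1)\oplus\mathrm H^\bullet(\frg_c,L_1)\to\mathrm H^\bullet(\frg_m,M_1)$, whose second summand vanishes by linear contractibility of $\frg_c$; hence $\mathrm H^\bullet(\pi_1)$ is an isomorphism.

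Finally I would compose: $\pi\circ\Phi\colon(\frg,Q)\to(\frg_m,M)$ is a composite of $\linf$-morphisms, hence an $\linf$-morphism, and its first Taylor coefficient $\pi_1\circ\Phi_1$ is the composite of an isomorphism with a quasi-isomorphism, hence a quasi-isomorphism. Therefore $\pi\circ\Phi$ is an $\linf$-quasi-isomorphism exhibiting $\frg$ as homotopy equivalent to the minimal $\linf$-algebra $\frg_m$, which is the asserted minimal model.

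The argument is essentially bookkeeping with Taylor coefficients; the only point that deserves care — rather than being a genuine obstacle — is the verification that the projection out of the direct sum commutes with the codifferentials, which rests on the fact that $L$ preserves $\rsym{\frg_c[1]}$. It is worth remarking that to conclude "$\frg$ is homotopy equivalent to $\frg_m$" in the sense of the definition it suffices to produce a single $\linf$-quasi-isomorphism $\frg\to\frg_m$, which is exactly what $\pi\circ\Phi$ provides; the symmetry of the homotopy-equivalence relation is a separate matter (flagged as nontrivial in the preceding remark) that is not needed here.
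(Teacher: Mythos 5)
Your proposal is correct and takes essentially the same route as the paper: apply the Decomposition Theorem and observe that the projection onto the minimal summand is an $\linf$-quasi-isomorphism because the linear contractible part has trivial cohomology. You simply supply more detail (the coalgebra-level verification that the projection intertwines the codifferentials) than the paper's two-line argument, which asserts this without proof.
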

\begin{proof}
    Let $\frg = \frg_m \oplus \frg_c$ be the decomposition of $\frg$ into minimal $\frg_m$ and linear contractible $\frg_c$. Note that the linear contractible part $\frg_c$ has trivial cohomology. The inclusion $\frg_m \hookrightarrow \frg$ and the projection $\frg \twoheadrightarrow \frg_m$ are $\linf$-quasi-isomorphisms. Hence $\frg$ and $\frg_m$ are homotopy equivalent.
\end{proof}

\begin{lemma}[]{}
    Suppose that $F \colon (\frg,Q) \to (\frg',Q')$ is an $\linf$-quasi-isomorphism of minimal $\linf$-algebras $\frg$ and $\frg'$. Then $F$ is an isomorphism.
\end{lemma}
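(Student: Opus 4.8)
The plan is to derive the claim from two observations: minimality pins down the linear term $F_1$ to be a graded vector space isomorphism, and any morphism of reduced symmetric coalgebras whose linear term is invertible is itself invertible, with the inverse automatically compatible with the codifferentials.

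First I would unwind minimality. Since $\frg$ is minimal we have $Q_1 = 0$, so the differential $\d x = (-1)^{|x|}Q_1(x)$ vanishes and $\mathrm{H}^\bullet(\frg) = \frg$; likewise $\mathrm{H}^\bullet(\frg') = \frg'$. By definition $F$ being an $\linf$-quasi-isomorphism means $\mathrm{H}^\bullet(F_1)$ is an isomorphism, which under these identifications says exactly that $F_1 \colon \frg \to \frg'$ is a bijective degree-$0$ linear map; consequently $\operatorname{Sym}^n(F_1)$ is an isomorphism for every $n \geq 1$.

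Next I would construct an inverse coalgebra morphism $G \colon \rsym{\frg'[1]} \to \rsym{\frg[1]}$ by defining its Taylor coefficients $G_n$ recursively. The structural input is the composition formula for coalgebra morphisms: writing $H := G \circ F$, the coefficient $H_n$ is a signed sum over partitions of $\{1,\dots,n\}$ of terms $\tfrac{1}{k!} G_k\big(F_{|I_1|}(x_{I_1})\cdots F_{|I_k|}(x_{I_k})\big)$, in which the $k=1$ summand is $G_1 \circ F_n$, the $k=n$ summand is $G_n \circ \operatorname{Sym}^n(F_1)$, and every intermediate summand ($2 \leq k \leq n-1$) involves only $G_j$ and $F_j$ with $j \leq n-1$. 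Setting $G_1 := F_1^{-1}$ and demanding $H = \id$, the order-$n$ relation becomes $0 = G_n \circ \operatorname{Sym}^n(F_1) + (\text{terms built from } G_1, \dots, G_{n-1})$, which can be solved for $G_n$ since $\operatorname{Sym}^n(F_1)$ is invertible. This yields a left inverse $G$ with $G \circ F = \id$. Applying the same construction to $G$ (whose linear term $F_1^{-1}$ is invertible) gives $G'$ with $G' \circ G = \id$, so $G' = G'\circ(G\circ F) = F$ and hence $F \circ G = \id$ as well. Thus $F$ is an isomorphism of graded coalgebras.

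Finally I would observe that the inverse $G$ is automatically an $\linf$-morphism: from $F \circ Q = Q' \circ F$ together with $F \circ G = G \circ F = \id$ we get $G \circ Q' = G \circ Q' \circ F \circ G = G \circ F \circ Q \circ G = Q \circ G$. Hence $F$ is an isomorphism of $\linf$-algebras. The one delicate point is the bookkeeping behind the composition formula — specifically verifying that $F_n$ and $G_n$ appear only in the extreme summands $k = 1$ and $k = n$, which is precisely what makes the recursion well-posed; the signs and factorials play no role in invertibility. Equivalently, one may bypass the explicit recursion by filtering $\rsym{\frg[1]}$ by $\bigoplus_{m \leq p} \operatorname{Sym}^m(\frg[1])$, noting that every coalgebra morphism respects this conilpotent filtration and induces $\bigoplus_n \operatorname{Sym}^n(F_1)$ on the associated graded, which is an isomorphism.
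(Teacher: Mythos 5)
Your proof is correct and follows essentially the same route as the paper: minimality forces $\mathrm H^\bullet(\frg)=\frg$ and $\mathrm H^\bullet(\frg')=\frg'$, so $F_1$ is an isomorphism, and one then lifts $G_1=F_1^{-1}$ to an inverse $\linf$-morphism. Your recursive construction of the higher Taylor coefficients $G_n$ (and the conjugation argument showing $G$ intertwines the codifferentials) is exactly the content behind the paper's one-line appeal to the universal property of $\rsym{\frg[1]}$, spelled out in full.
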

\begin{proof}
    The conditions that $\frg$ and $\frg'$ are minimal imply that $\mathrm H^\bullet(\frg) = \frg$ and $\mathrm H^\bullet(\frg') = \frg'$. Since $F$ is a quasi-isomorphism, $F_1\colon \frg \to \frg'$ is an isomorphism. Let $G_1$ be the inverse of $F_1$. By the universal property of $\rsym{\frg[1]}$, $G_1$ uniquely lifts to an $\linf$-morphism $G\colon (\frg',Q') \to (\frg, Q)$, which is the inverse of $F$. 
\end{proof}
\begin{remark}
    This demonstrates that the minimal model of an $\linf$-algebra is unique up to isomorphism. 
\end{remark}

\begin{proposition}[Homotopy Equivalence of $\bm{\linf}$-Algebras]{}
    Every $\linf$-quasi-isomorphism $F\colon (\frg,Q) \to (\frg',Q')$ has a quasi-inverse $G\colon (\frg',Q') \to (\frg,Q)$, that is, $\mathrm H^\bullet(G_1)$ is an actual inverse of $\mathrm H^\bullet(F_1)$. Therefore the homotopy equivalence of $\linf$-algebras is an equivalence relation.
\end{proposition}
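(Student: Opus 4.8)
The plan is to reduce everything, via the Decomposition Theorem and the preceding Lemma, to the already-settled case of minimal $\linf$-algebras. The first ingredient is the functoriality of $\mathrm H^\bullet(-)$ on $\linf$-morphisms: a composite of $\linf$-morphisms is again an $\linf$-morphism (coalgebra morphisms compose, and $F'\circ F\circ Q = F'\circ Q'\circ F = Q''\circ F'\circ F$), and from the reconstruction formula one reads off $(F'\circ F)_1 = F'_1\circ F_1$ together with $(\id)_1 = \id$, so that $\mathrm H^\bullet(F'\circ F) = \mathrm H^\bullet(F')\circ \mathrm H^\bullet(F)$ and $\mathrm H^\bullet(\id) = \id$. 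In particular a composite of $\linf$-quasi-isomorphisms is an $\linf$-quasi-isomorphism, which already yields transitivity of homotopy equivalence; reflexivity is witnessed by the identity morphism, so the only remaining point is symmetry, i.e.\ the existence of a quasi-inverse for a given $\linf$-quasi-isomorphism $F\colon(\frg,Q)\to(\frg',Q')$.

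To build the quasi-inverse, apply the Decomposition Theorem to write $\frg \cong \frg_m\oplus\frg_c$ and $\frg'\cong\frg'_m\oplus\frg'_c$ with $\frg_m,\frg'_m$ minimal and $\frg_c,\frg'_c$ linear contractible. Exactly as in the proof of the Minimal Model Theorem, the structural maps $i\colon\frg_m\hookrightarrow\frg$, $i'\colon\frg'_m\hookrightarrow\frg'$, $p\colon\frg\twoheadrightarrow\frg_m$, $p'\colon\frg'\twoheadrightarrow\frg'_m$ are honest $\linf$-morphisms (not merely chain maps) that are quasi-isomorphisms, since the contractible summands carry no cohomology; moreover $p\circ i=\id_{\frg_m}$ and $p'\circ i'=\id_{\frg'_m}$, so $\mathrm H^\bullet(p)$ is inverse to $\mathrm H^\bullet(i)$ and $\mathrm H^\bullet(p')$ is inverse to $\mathrm H^\bullet(i')$. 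Then $\Phi:=p'\circ F\circ i\colon\frg_m\to\frg'_m$ is an $\linf$-quasi-isomorphism between minimal $\linf$-algebras, hence an $\linf$-isomorphism by the preceding Lemma; writing $\Psi:=\Phi^{-1}$ (again an $\linf$-morphism) and setting $G:=i\circ\Psi\circ p'\colon\frg'\to\frg$, functoriality of $\mathrm H^\bullet$ gives
\begin{align*}
    \mathrm H^\bullet(G) &= \mathrm H^\bullet(i)\circ\mathrm H^\bullet(\Phi)^{-1}\circ\mathrm H^\bullet(p') \\
    &= \mathrm H^\bullet(i)\circ\mathrm H^\bullet(i)^{-1}\circ\mathrm H^\bullet(F)^{-1}\circ\mathrm H^\bullet(p')^{-1}\circ\mathrm H^\bullet(p') = \mathrm H^\bullet(F)^{-1}.
\end{align*}
Thus $\mathrm H^\bullet(G_1)=\mathrm H^\bullet(F_1)^{-1}$ is an actual inverse of $\mathrm H^\bullet(F_1)$; in particular $G$ is itself an $\linf$-quasi-isomorphism, which establishes symmetry, and hence homotopy equivalence is an equivalence relation.

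There is no essentially new obstacle here: the substantive work is already contained in the Decomposition Theorem and in the Lemma identifying $\linf$-quasi-isomorphisms of minimal $\linf$-algebras with $\linf$-isomorphisms. The only points demanding care are the functoriality of $\mathrm H^\bullet(-)$ on $\linf$-morphisms (needed so that the cohomology-level computation for $G$ goes through) and the verification that the inclusion and projection attached to the minimal model are genuine $\linf$-morphisms rather than mere chain maps — both of which are implicit in the constructions already carried out.
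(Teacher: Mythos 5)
Your proposal is correct and follows essentially the same route as the paper's own proof: decompose both algebras into minimal plus linear contractible parts, observe that the induced map $p'\circ F\circ i$ between the minimal models is an $\linf$-isomorphism by the preceding lemma, and transport its inverse back via the inclusion and projection. The only difference is that you spell out the functoriality of $\mathrm H^\bullet$ and the $\linf$-morphism status of $i$ and $p'$, which the paper leaves implicit.
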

\begin{proof}
    We decompose $(\frg,Q)$ and $(\frg',Q')$ into the direct sum of the minimal part and the contractible part:
        \begin{center}
            $(\frg,Q) = (\frg_m,Q_m) \oplus (\frg_c,Q_c), \qquad (\frg',Q') = (\frg'_m,Q'_m) \oplus (\frg'_c,Q'_c).$
        \end{center}
    We have a commutative diagram shown as follows, where $\tilde F = \pi'\circ F\circ \iota\colon (\frg_m,Q_m) \to (\frg'_m,Q'_m)$ is an $\linf$-quasi-isomorphism induced by $F$.
    \begin{equation}
        \begin{tikzcd}
            {(\frg,Q)} && {(\frg',Q')} \\
            {(\frg_m,Q_m)} && {(\frg'_m,Q'_m)}
            \arrow["F", from=1-1, to=1-3]
            \arrow["\pi"', shift right=1, from=1-1, to=2-1]
            \arrow["{\tilde F}", from=2-1, to=2-3]
            \arrow["{\pi'}"', shift right=1, from=1-3, to=2-3]
            \arrow["\iota"', shift right=1, from=2-1, to=1-1]
            \arrow["{\iota'}"', shift right=1, from=2-3, to=1-3]
        \end{tikzcd} \label{equ:II_minimal_cd}
    \end{equation}
    Since $\frg_m$ and $\frg'_m$ are minimal, $\tilde F$ is an isomorphism. We denote by $\tilde G$ the inverse of $\tilde F$. Then $G = \iota \circ \tilde G \circ \pi'$ is a quasi-inverse of $F$.
\end{proof}

\subsection{Maurer--Cartan Equation, Reprise}\label{sec:L_infty_MC}

In this part, we extend the discussions in Section \ref{sec:DGLA_MC} to incorporate $\linf$-morphisms into the deformation theory via differential graded Lie algebras. 

\begin{definition}
    Let $(\frg,Q)$ be an $\linf$-algebra. For $x \in \frg^1$, the \textbf{homotopy Maurer--Cartan equation} is given by
    \begin{equation}
        Q(\e^{x}-1) = 0 \iff\displaystyle\sum_{n=1}^\infty \dfrac{1}{n!}Q_n(x\cdots x) = 0. 
        \label{equ:II_HMC}
    \end{equation}
    The set of solutions is again denoted by $\MC(\frg)$. Note that when $\frg$ is a DGLA, \eqref{equ:II_HMC} reduces to the Maurer--Cartan equation \eqref{equ:II_MC}.
\end{definition}
It is more natural to consider $\e^x-1 \in \rsym{\frg[1]}$ instead of $x \in \frg^1$ as the invariant object in an $\linf$-algebra $(\frg,Q)$. If $F\colon (g,Q) \to (g',Q')$ is an $\linf$-morphism, then $F(Q(\e^x-1)) = Q'(F(\e^x-1))$. It follows that if $x \in \MC(\frg)$, then under the $\linf$-morphism $F$,
\begin{center}$
    x \longmapsto \displaystyle\sum_{n=1}^\infty F_n(x \cdots x) \in \MC(\frg').
$\end{center}
\begin{remark}
    Note that $\e^x-1$ is not a well-defined element in $\rsym{\frg[1]}$ without specifying the notion of convergence. We resolve this issue by only considering the $\linf$-algebra $\frg \otimes_k \mm_R$ where $R$ is either a test $k$-algebra (so that we have the nilpotency condition) or a formal completion (so that $\e^x-1$ is a well-defined formal power series).
\end{remark}

As an analogue of the gauge action on a DGLA, we can define the homotopy action on an $\linf$-algebra $\frg$ by constructing a simplicial structure on $\MC(\frg)$. In the case where $\frg$ is a DGLA, the fundamental group of $\MC(\frg)$ corresponds to $\MC(\frg)$ modulo gauge action. The notion of Maurer--Cartan functor \eqref{equ:II_MC_functor} can thus be extended to $\linf$-algebras. This is beyond the scope of our exposition, as we restrict our attention to the deformation problem controlled by DGLAs. For a detailed treatment, see \cite{schlessinger12} and \cite{getzler09}.

\begin{lemma}[]{}
    Let $\frg$ and $\frg'$ be two DGLAs. Then the direct sum of DGLAs induces the Cartesian product of Maurer--Cartan functors:
    \begin{center}
        $\MCbar_{\frg \oplus \frg'} \simeq \MCbar_{\frg} \times \MCbar_{\frg'}$.
    \end{center}
\end{lemma}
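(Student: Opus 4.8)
The plan is to reduce the statement to the single observation that, inside the direct sum DGLA $\frg \oplus \frg'$, the two summands sit as commuting differential ideals: the differential and the bracket of $\frg \oplus \frg'$ act componentwise, with no cross terms (the cross-bracket $[\frg,\frg']$ vanishes). Once this is noted, every ingredient of the Maurer--Cartan functor --- the Maurer--Cartan equation, the gauge group, and the gauge action --- splits as a product over the two factors, and the isomorphism of functors follows formally.

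First I would record that for any test algebra $R$ there is a canonical isomorphism of DGLAs $(\frg \oplus \frg') \otimes_k \mm_R \cong (\frg \otimes_k \mm_R) \oplus (\frg' \otimes_k \mm_R)$, since tensoring with $\mm_R$ commutes with finite direct sums and is compatible with the differentials and brackets. By Lemma~\ref{lem:II_dgla_nil} all three of these DGLAs are ad-nilpotent in degree $0$, so all the exponentials below are defined. Thus it suffices to prove, for a DGLA of the form $\frh \oplus \frh'$ with $\frh,\frh'$ ad-nilpotent in degree $0$, that $\MC(\frh \oplus \frh')/\exp((\frh \oplus \frh')^0) = \bigl(\MC(\frh)/\exp(\frh^0)\bigr) \times \bigl(\MC(\frh')/\exp((\frh')^0)\bigr)$, and then to specialise $\frh = \frg \otimes_k \mm_R$, $\frh' = \frg' \otimes_k \mm_R$.

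Second, for the Maurer--Cartan sets: an element of $(\frh \oplus \frh')^1$ is a pair $(x,x')$ with $x \in \frh^1$ and $x' \in (\frh')^1$, and since $\d$ and $[-,-]$ are componentwise one has $\d(x,x') + \tfrac{1}{2}[(x,x'),(x,x')] = \bigl(\d x + \tfrac{1}{2}[x,x],\ \d x' + \tfrac{1}{2}[x',x']\bigr)$, which vanishes iff both components do; hence $\MC(\frh \oplus \frh') = \MC(\frh) \times \MC(\frh')$. For the gauge group: because the cross-bracket vanishes, every iterated bracket in the Baker--Campbell--Hausdorff series \eqref{equ:II_BCH} is computed componentwise, so $\exp((\frh \oplus \frh')^0) \cong \exp(\frh^0) \times \exp((\frh')^0)$ as groups. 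Likewise the explicit gauge action described in Section~\ref{sec:DGLA_MC} is a convergent series built only from the operations $\d$ and $[-,-]$, hence it too acts componentwise: $(y,y') \cdot \exp(x_0,x_0') = \bigl(y \cdot \exp(x_0),\ y' \cdot \exp(x_0')\bigr)$.

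Finally I would assemble these facts: the set of orbits of a product of groups acting factorwise on a product of sets is the product of the orbit sets, which gives $\MCbar_{\frh \oplus \frh'} = \MCbar_{\frh} \times \MCbar_{\frh'}$ on objects; naturality in $R$ is automatic, since each construction ($\otimes_k \mm_{(-)}$, $\MC$, $\exp$, and the quotient) is functorial in the test algebra, and one checks that a local homomorphism $R \to S$ induces the product of the corresponding maps. I do not anticipate a genuine obstacle; the only point deserving a line of justification is that exponentiation does not couple the two summands --- both the splitting $\exp((\frh\oplus\frh')^0) = \exp(\frh^0)\times\exp((\frh')^0)$ and the factorwise form of the gauge action rely on the vanishing of the mixed brackets, which is exactly the definition of the direct sum of DGLAs.
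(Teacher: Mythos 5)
Your proposal is correct and follows essentially the same route as the paper's own (much terser) proof: the paper likewise reduces the claim to the facts that $\MC((\frg \oplus \frg') \otimes \mm_R) = \MC(\frg \otimes \mm_R) \times \MC(\frg' \otimes \mm_R)$ and that the gauge action respects the decomposition. Your write-up simply supplies the details (vanishing cross-brackets, componentwise BCH, naturality in $R$) that the paper leaves implicit.
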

\begin{proof}
    This follows from that $\MC((\frg \oplus \frg') \otimes \mm_R) = \MC(\frg  \otimes \mm_R) \times \MC(\frg' \otimes \mm_R)$ and that the gauge action is preserved in the decomposition.
\end{proof}

\begin{lemma}[]{}
    Suppose that $\frg$ is a linear contractible $\linf$-algebra. Then $\MCbar_{\frg}(\mm_R)$ is a singleton for any test algebra $R$.
\end{lemma}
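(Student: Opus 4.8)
The plan is to recognise a linear contractible $\linf$-algebra as nothing more than an acyclic cochain complex regarded as a differential graded Lie algebra with vanishing bracket, and then to identify its Maurer--Cartan moduli set with a first cohomology group which is forced to vanish.

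Since $Q_n = 0$ for every $n \geq 2$, the Proposition ``From $\linf$-Algebras to DGLAs'' shows that $(\frg, Q)$ underlies a DGLA; moreover the bracket $\ell_2$ induced by $Q_2$ is identically zero, so this DGLA is abelian, with differential $\d$ determined by $Q_1$. The hypothesis $\mathrm H^\bullet(\frg, Q_1) = 0$ says precisely that $(\frg, \d)$ is acyclic. Fixing a test algebra $R$, the DGLA $\frg \otimes_k \mm_R$ again has zero bracket and differential $\d \otimes \id_{\mm_R}$, so the homotopy Maurer--Cartan equation \eqref{equ:II_HMC} collapses to the single condition $\d x = 0$; that is,
\begin{center}$
    \MC(\frg \otimes_k \mm_R) = Z^1(\frg \otimes_k \mm_R),
$\end{center}
the space of degree-$1$ cocycles. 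In particular it is nonempty, as it contains $0$.

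Next I would unwind the gauge action. Because the bracket vanishes, every higher term in the explicit gauge action formula of Section \ref{sec:DGLA_MC} disappears, and $\exp(x_0)$ acts on $y \in \frg^1 \otimes_k \mm_R$ simply as $y \mapsto y + \d x_0$, i.e.\ by translation by the coboundary $\d x_0$. Hence two cocycles lie in the same orbit exactly when they differ by an element of $B^1(\frg \otimes_k \mm_R)$, so that
\begin{center}$
    \MCbar_{\frg}(R) = Z^1(\frg \otimes_k \mm_R)\big/B^1(\frg \otimes_k \mm_R) = \mathrm H^1(\frg \otimes_k \mm_R).
$\end{center}
It remains to check this group vanishes. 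Since $k$ is a field, $\mm_R$ is a free (hence flat) $k$-module, so $-\otimes_k \mm_R$ is exact and $\mathrm H^\bullet(\frg \otimes_k \mm_R) \cong \mathrm H^\bullet(\frg) \otimes_k \mm_R = 0$; equivalently, the contracting homotopy $h$ of $(\frg, \d)$ furnished by the proof of the Decomposition Theorem (it satisfies $\id = \d h + h\d$, the cohomology projection being zero) tensors up to a contracting homotopy $h \otimes \id_{\mm_R}$ of $\frg \otimes_k \mm_R$. Therefore $\MCbar_{\frg}(\mm_R)$ is the one-point set $\{[0]\}$.

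I do not expect a genuine obstacle here: the only places needing a little care are verifying that the degenerate gauge action is exactly translation by coboundaries, and that acyclicity survives the functor $-\otimes_k \mm_R$ (for which flatness of $\mm_R$ over the field $k$, or tensoring the contracting homotopy, suffices).
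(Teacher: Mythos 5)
Your proof is correct and follows essentially the same route as the paper: reduce the Maurer--Cartan equation to $\d x = 0$, observe that the gauge action degenerates to translation by coboundaries, and use acyclicity of $\frg$ to conclude every cocycle in $\frg^1 \otimes_k \mm_R$ is a coboundary. If anything, your justification of the last step (flatness of $\mm_R$ over $k$, or tensoring the contracting homotopy) is slightly more careful than the paper's, which argues by writing $x = \sum_i x_i \otimes a_i$ and lifting each $x_i$ separately.
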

\begin{proof}
    A linear contractible $\linf$-algebra is simply a differential graded vector space. The Maurer--Cartan equation is $\d x = 0$ for $x \in \frg^1 \otimes \mm_R$. Write $x = \displaystyle\sum_i x_i \otimes a_i$ for $x_i \in \frg^1$ and $a_i \in \mm_R$. Then $\d x_i = 0$ for each $i$. Since $\frg$ has trivial cohomology, $x_i = \d y_i$ for some $y_i \in \frg^0$. Then $x = \d{\left(\sum_i y_i \otimes a_i\right)} = \exp(\sum_i y_i \otimes a_i)\cdot 0$. Hence $x$ is gauge equivalent to $0$. We deduce that $\MCbar_{\frg}(\mm_R) = \{0\}$.
\end{proof}

\begin{theorem}[$\bm{\linf}$-Quasi-Isomorphism Theorem]{thm:II_Linf_quasi}
    Let $F\colon \frg \to \frg'$ be an $\linf$-morphism of DGLAs. If $F$ is an $\linf$-quasi-isomorphism, then the map 
    \begin{center}$
        x \longmapsto \displaystyle\sum_{n=1}^\infty \dfrac{1}{n!}F_n(x \cdots x)
    $\end{center}
    induces an natural isomorphism of the Maurer--Cartan functors:
    \begin{center}
        $\MCbar_{\frg} \simeq \MCbar_{\frg'}.$
    \end{center}
\end{theorem}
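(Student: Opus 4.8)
The plan is to reduce the statement to the case of minimal $\linf$-algebras, where an $\linf$-quasi-isomorphism is automatically an honest isomorphism, and then to transport this back along the Decomposition Theorem using the two lemmas immediately above: that $\MCbar$ of a linear contractible $\linf$-algebra is a singleton, and that a direct sum of $\linf$-algebras induces the Cartesian product of Maurer--Cartan functors. So the only genuinely new content is bookkeeping: we never construct a quasi-inverse by hand, we only move the already-known pieces into place.

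First I would apply the Decomposition Theorem to write $\frg \cong \frg_m \oplus \frg_c$ and $\frg' \cong \frg'_m \oplus \frg'_c$ as $\linf$-algebras, with $\frg_m,\frg'_m$ minimal and $\frg_c,\frg'_c$ linear contractible, and reinsert these decompositions into the commutative diagram \eqref{equ:II_minimal_cd}. The $\linf$-quasi-isomorphism $F$ then induces $\tilde F := \pi'\circ F\circ\iota\colon \frg_m \to \frg'_m$, an $\linf$-quasi-isomorphism of \emph{minimal} $\linf$-algebras; by the lemma on morphisms of minimal $\linf$-algebras, $\tilde F$ is an $\linf$-isomorphism. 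An $\linf$-isomorphism has a strict inverse, so it induces a bijection $\MC(\frg_m\otimes_k\mm_R)\to\MC(\frg'_m\otimes_k\mm_R)$ intertwining the gauge (homotopy) actions, hence a natural isomorphism $\MCbar_{\frg_m}\simeq\MCbar_{\frg'_m}$.

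Next, for any test algebra $R$ the Cartesian-product lemma — whose proof extends verbatim to $\linf$-algebras, since the direct-sum $\linf$-structure defined above has no mixed multi-brackets, so $\MC\big((\frg_m\oplus\frg_c)\otimes\mm_R\big)=\MC(\frg_m\otimes\mm_R)\times\MC(\frg_c\otimes\mm_R)$ — gives $\MCbar_{\frg}\simeq\MCbar_{\frg_m}\times\MCbar_{\frg_c}$, and the singleton lemma collapses the second factor, producing a natural isomorphism $\MCbar_{\frg}\simeq\MCbar_{\frg_m}$; likewise $\MCbar_{\frg'}\simeq\MCbar_{\frg'_m}$. Composing the chain
\[ \MCbar_{\frg}\ \simeq\ \MCbar_{\frg_m}\ \xrightarrow{\ \tilde F\ }\ \MCbar_{\frg'_m}\ \simeq\ \MCbar_{\frg'} \]
yields the desired natural isomorphism. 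It then remains to check that this composite is exactly the transformation $x\mapsto\sum_{n\ge 1}\tfrac{1}{n!}F_n(x\cdots x)$. This is a diagram chase: every arrow in \eqref{equ:II_minimal_cd} is an $\linf$-morphism, each therefore acts on solutions of the homotopy Maurer--Cartan equation by the corresponding substitution map (as noted in Section \ref{sec:L_infty_MC}), and these substitution maps compose functorially, so the outer square commutes at the level of $\MCbar$; the isomorphisms $\MCbar_{\frg}\simeq\MCbar_{\frg_m}$ and $\MCbar_{\frg'}\simeq\MCbar_{\frg'_m}$ are induced by the $\linf$-morphisms $\pi,\iota$ and $\pi',\iota'$, and $\pi'\circ F = \tilde F\circ\pi$ on the minimal summands.

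The main obstacle is not this formal reduction but the infrastructure it rests on: one must be confident that $\MCbar$ is a well-defined functor on all $\linf$-algebras (not only DGLAs), that $\linf$-morphisms act on it, and that $\linf$-isomorphisms induce natural isomorphisms of it — which is precisely the simplicial/homotopy-action machinery (à la Getzler–Hinich) that the text declined to develop. A complete write-up would either sketch just enough of that gauge-action formalism for $\linf$-algebras, or bypass it altogether via Kontsevich's more concrete route — inductively constructing a quasi-inverse $G$ order by order and showing $G\circ F$ acts as the identity on $\MCbar$ by an obstruction/vanishing argument — which avoids minimal models but trades the bookkeeping above for a longer direct computation. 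One should also record that the isomorphism $\frg\cong\frg_m\oplus\frg_c$ furnished by the Decomposition Theorem is genuinely an $\linf$-isomorphism, so that the product lemma legitimately applies.
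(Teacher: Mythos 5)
Your proposal follows exactly the paper's own route: decompose $\frg$ and $\frg'$ into minimal and linear contractible summands via the Decomposition Theorem, collapse the contractible factors using the Cartesian-product and singleton lemmata to get $\MCbar_{\frg}\simeq\MCbar_{\frg_m}$, and observe that the induced map $\tilde F$ between minimal models is a strict $\linf$-isomorphism and hence induces a natural isomorphism of Maurer--Cartan functors. Your additional remarks (checking the composite agrees with $x\mapsto\sum_n\tfrac{1}{n!}F_n(x\cdots x)$, and flagging the undeveloped gauge-action machinery for general $\linf$-algebras) are sound observations about gaps the paper itself leaves implicit, but the argument is the same.
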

\begin{proof}
    Let $\frg = \frg_m \oplus \frg_c$ be a decomposition of $\frg$ such that $\frg_m$ is minimal and $\frg_c$ is contractible. 
    By the previous two lemmata, we have 
    \begin{center}
            $\MCbar_{\frg} = \MCbar_{\frg_m \oplus \frg_c} \simeq \MCbar_{\frg_m} \times \MCbar_{\frg_c} \simeq \MCbar_{\frg_m}.$
    \end{center} 
    Moreover, any $\linf$-morphism induces a natural transformation in the corresponding Maurer--Cartan functors. The diagram \eqref{equ:II_minimal_cd} induces the commutative diagram of Maurer--Cartan functors:
    \begin{center}
            \begin{tikzcd}
                {\MCbar_{\frg}} && {\MCbar_{\frg'}} \\
                {\MCbar_{\frg_m}} && {\MCbar_{\frg'_m}}
                \arrow["\MCbar_{F}", from=1-1, to=1-3]
                \arrow[shift right=1, from=1-1, to=2-1]
                \arrow["{\MCbar_{\bar F}}", from=2-1, to=2-3]
                \arrow[shift right=1, from=1-3, to=2-3]
                \arrow[shift right=1, from=2-1, to=1-1]
                \arrow[shift right=1, from=2-3, to=1-3]
            \end{tikzcd}
    \end{center}
    Since $F\colon \frg \to \frg'$ is an $\linf$-quasi-isomorphism, the induced map $\bar F\colon \frg_m \to \frg'_m$ is also an $\linf$-quasi-isomorphism. Since $\frg_m$ and $\frg'_m$ are minimal, the $\bar F$ is a strict isomorphism which admits a strict inverse. Therefore $\MCbar_{\bar F}$ is a natural isomorphism of functors. We conclude that $\MCbar_\frg \simeq \MCbar_{\frg'}$ as claimed.
\end{proof}

\newcommand{\linfalg}{\mathsf{L_{\infty}Alg}}
\newcommand{\DGLA}{\mathsf{dgLie}}
\begin{remark}
    Let us digress on the result from a categorical perspective. Let $\linfalg$ be the category of $\linf$-algebras and $\DGLA$ be that of differential graded Lie algebras. The reason why we need to upgrade from Theorem \ref{thm:II_DGLA_quasi} to Theorem \ref{thm:II_Linf_quasi} is that there exist $\linf$-morphisms between DGLAs which are not morphisms of DGLAs, and quasi-isomorphisms of DGLAs may not have inverse in $\DGLA$. The theorem tells us that, from the viewpoint of the deformation functor, we should instead work on the homotopy category $\mathsf{Ho}(\DGLA)$, which is obtained by localisation on the quasi-isomorphisms in $\DGLA$. 
\end{remark}
 
\subsection{Formality Theorem} 

\begin{definition}
    A differential graded Lie algebra $\frg$ is said to be \textbf{formal}, if it is homotopy equivalent to its cohomology $\mathrm H^\bullet(\frg)$, viewed as a DGLA with induced bracket and zero differential.
\end{definition}

The main theorem due to Kontsevich \cite{kon97} in this dissertation is the following:

\begin{theorem}[Formality Theorem]{thm:II_formality}
    There exists an $\linf$-quasi-isomorphism $U\colon \tpoly \to \dpoly$ such that the first Taylor coefficient $U_1$ coincides with the HKR map $U_1^{(0)}$.
\end{theorem}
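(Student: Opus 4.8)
The plan is to follow Kontsevich's explicit construction from \cite{kon97}, proving the statement first in the local model $\R^d$ and then globalising. Polyvector fields and polydifferential operators are local objects and both $\tpoly$ and $\dpoly$ are assembled from fine sheaves, so a partition-of-unity argument reduces the problem to constructing an $\linf$-quasi-isomorphism $U\colon \tpolyrd[\bullet] \to \dpolyrd[\bullet]$ whose first Taylor coefficient is the HKR map. The passage back to an arbitrary smooth manifold is then carried out by formal geometry: one resolves $\tpoly$ and $\dpoly$ by Fedosov-type complexes whose fibres are the corresponding formal objects on $\R^d$, transports the local $U$ fibrewise, and twists by a torsion-free connection, following \cite{cattaneo02global} and \cite{kontsevich01alg}.

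For the local construction, the combinatorial input is the notion of an \emph{admissible graph}: a finite directed graph $\Gamma$ on the closed upper half-plane, with $n$ vertices of the first type in the open half-plane and $m$ ordered vertices of the second type on the real line, with no loops and no double edges, and with the $i$-th first-type vertex having a prescribed out-degree $k_i\geq 1$. To each $\Gamma$ one attaches a polydifferential operator $B_\Gamma$ taking $n$ polyvector fields (the $i$-th of degree $k_i-1$) and $m$ functions to a function, by placing a tensor component of the $i$-th polyvector field at the $i$-th aerial vertex, a function at each terrestrial vertex, and contracting along the edges with partial derivatives, summing over all labellings of the edges by coordinate directions. The $n$-th Taylor coefficient is then $U_n:=\sum_\Gamma w_\Gamma\, B_\Gamma$, where the weight $w_\Gamma\in\R$ is the integral over the compactified configuration space $\overline{C}_{n,m}$ --- a Fulton--MacPherson type compactification of the space of $n$ points in the open half-plane and $m$ ordered points on the real line, modulo the action $z\mapsto az+b$ with $a>0$ --- of the product over the edges of $\Gamma$ of the pullbacks of Kontsevich's angle $1$-form. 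A degree count shows $w_\Gamma$ can be nonzero only when the number of edges equals $\dim\overline{C}_{n,m}=2n+m-2$, which is precisely the degree-matching condition built into the $\linf$ grading. By the universal property of the reduced symmetric coalgebra the $U_n$ assemble to a pre-$\linf$-morphism $U$, and a direct inspection of the admissible graphs with $n=1$ shows they contribute exactly the total skew-symmetrisation of a polyvector field acting as a polydifferential operator, i.e.\ $U_1=U_1^{(0)}$. The attendant sign bookkeeping should be handled following \cite{arnal02}.

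The heart of the proof is the relation $U\circ Q=Q'\circ U$, which unwinds into an infinite family of quadratic identities on the $U_n$ relating the Schouten--Nijenhuis bracket on the source to the modified Hochschild differential $\dhochm$ and the Gerstenhaber bracket $\Ger{-,-}$ on the target. Each identity comes from Stokes' theorem: the form defining $w_\Gamma$ is closed of top degree on the codimension-one boundary, so for fixed $(n,m)$ the total integral of the relevant closed form over $\partial\overline{C}_{n,m}$ vanishes, and one obtains the identity by summing the contributions of the boundary strata over all admissible $\Gamma$. These strata are of two kinds: (S1) a subset of the aerial points collapses to a single point of the open half-plane, and (S2) a subset of points (some aerial, together with a consecutive block of terrestrial points) collapses to a single point of the real axis. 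The (S1)-strata in which exactly two aerial points collide reproduce the source-side terms, namely a lower $U_{n-1}$ applied to a Schouten--Nijenhuis bracket; the (S2)-strata in which the collapsing cluster factors off a full sub-configuration reproduce the target-side terms, namely Gerstenhaber products of lower Taylor coefficients of $U$ (and hence, via $\dhochm=\Ger{\mu,-}$, the Hochschild differential). The crux --- and the step I expect to be the main obstacle --- is to show every remaining stratum contributes zero: the \emph{vanishing lemma} for (S1)-strata in which three or more aerial points collide (proved by an $S^1$-symmetry and dimension argument, or by Kontsevich's logarithmic identity), and the vanishing of the (S2)-strata whose collapsing block has the ``wrong'' size. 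Granting these, the Stokes identity collapses to exactly the defining relation of an $\linf$-morphism of DGLAs.

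Finally, since $U_1=U_1^{(0)}$ and Theorem~\ref{thm:II_HKR} asserts that the HKR map induces isomorphisms on cohomology, $U$ is automatically an $\linf$-quasi-isomorphism; together with the globalisation step this proves the theorem for arbitrary $M$. The construction of the $B_\Gamma$, the verification that $U_1=U_1^{(0)}$, and the formal-geometry globalisation are comparatively routine, though lengthy, so the genuine difficulty lies in the configuration-space analysis and the vanishing lemmas; the explicit star product is then read off by feeding a formal Poisson bivector into $U$ and invoking Theorem~\ref{thm:II_Linf_quasi}.
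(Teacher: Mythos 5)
Your proposal follows essentially the same route as the paper: Kontsevich's admissible graphs and weight integrals over compactified configuration spaces in $\R^d$, Stokes' theorem on the codimension-one boundary strata with the type $\mathsf{S1}$/$\mathsf{S2}$ case analysis and the vanishing lemma for three or more collapsing aerial points, the graph-theoretic verification that $U_1 = U_1^{(0)}$, and globalisation via formal geometry following \cite{cattaneo02global}. One caveat: the partition-of-unity reduction you invoke at the outset does not actually work for the formality morphism itself (the paper uses it only inside the proof of the HKR quasi-isomorphism of complexes), because the higher Taylor coefficients $U_n$ are natural only under affine coordinate changes --- which is precisely why the formal-geometry and Fedosov-connection globalisation you then describe is needed, and why the paper's Section 3.3 only recovers the existence and classification of star products globally rather than the full global formality statement.
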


\begin{remark}
    By HKR theorem, $\tpoly$ is quasi-isomorphic to the cohomology $\mathrm H^\bullet(\dpoly)$ of $\dpoly$. Therefore the formality theorem implies that the DGLA $\dpoly$ is formal. This justifies the name of the theorem.
\end{remark}
 
The formality theorem, in combination of Proposition \ref{prop:II_star_MC}, Proposition \ref{prop:II_pois_MC}, and the $\linf$-quasi-isomorphism theorem \ref{thm:II_Linf_quasi}, completely solved the classification problem of deformation quantisation on a Poisson manifold. We obtain the bijective correspondence announced in Section \ref{sec:star}:
\begin{equation}
    \dfrac{\{\text{Formal Poisson structures}\}}{\text{Equivalence}} \quad\longleftrightarrow\quad \dfrac{\{\text{Star products}\}}{\text{Equivalence}}. \tag{\ref{equ:I_defo_quan_bij}}
\end{equation}
\newcommand{\QD}{Q_{\mathscr{D}}}
\newcommand{\QT}{Q_{\mathscr{T}}}
The rest of this dissertation will be devoted to the proof of formality theorem. We would like to study the constraints imposed on the Taylor coefficients $U_n$ given that $U$ is an $\linf$-morphism. Let $\QT$ and $\QD$ be the co-differentials associated to the $\linf$-algebras $\tpoly$ and $\dpoly$. Since they are DGLAs, then $Q^i_j = 0$ for $i > j - 1$, and we have
\begin{equation}
    (\QD)_1 \circ U_n + (\QD)_2 \circ U^2_n= U_n \circ (\QT)^n_n + U_{n-1} \circ (\QT)^{n-1}_n, \qquad \forall\, n \in \Z_{>0}, \label{equ:II_formality_constr}
\end{equation}
where we used the same notation as \eqref{equ:Q_sqzero}. To expand this expression further, we have
\begin{equation}\hspace*{-1em}
    \begin{aligned}
        &(\QD)_1 (U_n(\xi_1 \cdots \xi_n)) 
        + \dfrac{1}{2}\displaystyle\sum_{\sigma \in S(i,n-i)}\varepsilon_\xi(\sigma)
        (\QD)_2{\left( U_i(\xi_{\sigma(1)} \cdots \xi_{\sigma(i)}) \cdot U_{n-i}(\xi_{\sigma(i+1)} \cdots \xi_{\sigma(n)}) \right)} \\
        &= \displaystyle\sum_{i=1}^n \varepsilon_\xi^{\{i\}}U_n{\left( (\QT)_1(\xi_i)\cdot \xi_1 \cdots \hat \xi_i \cdots \xi_n \right)} + \dfrac{1}{2}\displaystyle\sum_{i=1}^k \displaystyle\sum_{j \ne i} \varepsilon_\xi^{\{i,j\}}U_{n-1}{\left( (\QT)_2(\xi_i \cdot \xi_j)\cdot \xi_1 \cdots \hat \xi_i \cdots \hat \xi_j \cdots \xi_n \right)},
    \end{aligned}
    \label{equ:II_QD_QT_constr}
\end{equation}
where $\varepsilon_\xi(\sigma)$ is the Koszul sign defined in Example \ref{eg:II_rsym}, $\varepsilon_\xi^{\left\{ i \right\}}$ and $\varepsilon_\xi^{\left\{ i,j \right\}}$ are the Koszul signs associated to $\sigma = (1\, i)$ and $\sigma = (1\, \operatorname{min}\{i,j\})(2\, \operatorname{max}\{i,j\})$ respectively. From the definition of $\dpoly$ and $\tpoly$ and the d\'ecalage isomorphisms \eqref{equ:II_dec_iso}, we have 
\begin{align*}
    &(\QT)_1(\xi) = 0, &
    &(\QT)_2(\xi \cdot \eta) = (-1)^{|\xi|\cdot(|\eta|-1)}\SN{\xi,\eta}, \\
    &(\QD)_1(f) = (-1)^{|f|}\dhochm f = (-1)^{|f|}\Ger{\mu,f}, &
    &(\QD)_2(f \cdot g) = (-1)^{|f|\cdot(|g|-1)}\Ger{f,g}.
\end{align*}
We extend the $\linf$-morphism $U$ to include the usual multiplication $U_0 \in \R \to \dpoly$ such that $U_0(1) = \mu \in \dpoly[1]$. Let the both sides of \eqref{equ:II_QD_QT_constr} act on the smooth functions $f_0,...,f_m \in \smooth{M}$. Furthermore, we make the convention that
\begin{center}$
    U_n(\xi_1\cdots \xi_n)(f_1,...,f_m) = 0
$\end{center}
if $m \ne \displaystyle\sum_{i=1}^n |x_i| + 1 - n$. With some careful computation, it can be shown (\emph{cf.\ }Theorem VI.1 of \cite{arnal02}) that \eqref{equ:II_QD_QT_constr} simplifies to
\begin{equation}
    \begin{aligned}
        &\displaystyle\sum_{i=1}^n \displaystyle\sum_{j\ne i} \varepsilon^{\left\{ i,j \right\}}_\xi U_{n-1}(\xi_i \diamond  \xi_j\cdots\hat \xi_i \cdots \hat \xi_j \cdots \xi_n)(f_0,...,f_m)
        = \displaystyle\sum_{i=0}^n\displaystyle\sum_{j=-1}^m \displaystyle\sum_{k=0}^{m-k} (-1)^{j(k+m)} \\
        & \quad \displaystyle\sum_{\sigma \in S(i,n-i)} \varepsilon_\xi(\sigma)U_i(\xi_{\sigma(1)}\cdots \xi_{\sigma(i)})\left( f_0,...,f_{k-1},U_{n-i}(\xi_{\sigma(i+1)}\cdots \xi_{\sigma(n)})(f_k,...,f_{j+k}),f_{j+k+1},...,f_m \right)
    \end{aligned} \label{equ:II_graph_constr}
\end{equation}
where $\xi_i \diamond \xi_j$ is the product on $\tpoly$ (\emph{cf.\ }\eqref{equ:II_polyvec_prod} or \eqref{equ:II_polyvec_prod_dec}). \eqref{equ:II_graph_constr} will be referred as the \textbf{formality equation}, which is our starting point of the next chapter, where we construct $U_n$ as a sum of weighted admissible graphs.

\chapter{Kontsevich Quantisation}

In this chapter we present the construction of the $\linf$-morphism from $\tpolyrd$ to $\dpolyrd$ and that of a star product in $\R^d$ following Kontsevich \cite{kon97}. Then we sketch the globalisation of the star product in a general smooth manifold $M$ following \cite{cattaneo02global}.

\section[Construction in \texorpdfstring{$\mathbb{R}^{{d}}$}{R{\textasciicircum}d}]{Construction in $\mathbb{R}^{\bm{d}}$} \label{sec:Kont_con}

\newcommand{\Confp}{\operatorname{Conf}^+}

This section is devoted to construct an explicit $\linf$-quasi-isomorphism $U\colon \tpolyrd \to \dpolyrd$. 

\subsection{Admissible Graphs}\label{sec:ad_graph}

First we define the admissible graphs. A directed graph $\Gamma$ is a pair $(V_{\Gamma}, E_{\Gamma})$, where $V_{\Gamma}$ is the set of the vertices of $\Gamma$, and $E_{\Gamma} \subseteq V_{\Gamma} \times V_{\Gamma}$ is the set of edges of $\Gamma$. For each edge $e = (v_1,v_2) \in E_{\Gamma}$, $s(e) := v_1$ is called the source of $e$, and $t(e) := v_2$ is called the target of $e$.

\begin{definition}
    An \textbf{admissible} graph $\Gamma \in G_{n,m}$ is a connected directed graph, satisfying the following conditions:
    \begin{itemize}[nosep]
        \item $n,m \geq 0$ and $2n+m -2 \geq 0$.
        \item $V_\Gamma = V^1_\Gamma \sqcup V^2_\Gamma$, where $V^1_\Gamma = \left\{ 1,...,n \right\}$ is the set of vertices of the first type, and $V^2_\Gamma = \left\{ \bar 1,...,\bar m \right\}$ is the set of vertices of the second type. 
        \item $\Gamma$ has no edge starting from a vertex of the second type.
        \item $\Gamma$ has neither loops nor double arrows.
        \item For $v \in V^1_\Gamma$, the set of edges starting from $v$ is denoted by
        \begin{center}$
            \operatorname{Star}(v) := \left\{ e \in E_\Gamma\colon e = (v,w) \text{ for some }  w\in V_\Gamma  \right\}
        $\end{center}
        $\operatorname{Star}(v)$ is a finite set. We put an order on it so that $\operatorname{Star}(v) = \left\{ e_v^1,...,e_v^{\#(\operatorname{Star}(v))} \right\}$.
    \end{itemize}
\end{definition}

We associate each $\Gamma \in G_{n,m}$ and polyvector vectors $\xi_1,...,\xi_n$ with a polydifferential operator $B_\Gamma(\xi_1\otimes\cdots\otimes\xi_n) \in \dpolyrd[m]$ by the following recipe.
\begin{itemize}[nosep]
    \item $B_\Gamma = 0$ unless $\xi_i \in \tpolyrd[k_i]$ for each $i\in V^1_\Gamma$, where $k_i = \#(\operatorname{Star}(i))-1$.
    \item Let $I\colon E_\Gamma \to \left\{ 1,...,d \right\}$ be a map.
    \item For each $i \in V^1_\Gamma$ and $\bar j \in V^2_\Gamma$, put respectively the functions:
    \begin{equation*}
        \left(\displaystyle\prod_{e \in \Gamma(-,i)}\pa_{I(e)}\right) \xi_i^{I(e_i^1),...,I(e_i^{k_i})}, \qquad \left(\displaystyle\prod_{e \in \Gamma(-,\bar j)}\pa_{I(e)}\right) f_{j}
    \end{equation*}
    where $\Gamma(-,v) := \left\{ e \in E_{\Gamma}\colon t(e) = v \right\}$.

    \item Multiply the functions on the vertices and sum over all possible $I\colon E_\Gamma \to \left\{ 1,...,d \right\}$. This gives
    \begin{center}$
        B_\Gamma(\xi_1\otimes\cdots\otimes\xi_n)(f_1,...,f_m) \in \R.
    $\end{center}
    So we obtain the map $B_\Gamma\colon \tpolyrd^{\otimes n} \to \dpolyrd[m-1]$.   
\end{itemize}

\subsection{Configuration Spaces}

\begin{definition}
    We construct the \textbf{configuration space} for the $n+m$ vertices of $\Gamma \in G_{n,m}$. 

\begin{itemize}
    \item If $m > 0$, let
    \begin{equation}
        \begin{aligned}
            \Confp_{n,m} := \{ (p_1,...,p_n;q_1,...,q_m) \in \C^{n+m}\colon &p_1,...,p_n \in \mathbb H,\ p_i \ne p_j \text{ for } i \ne j; \\
        &q_1,...,q_m \in \R,\ q_1 < \cdots < q_m \}
        \end{aligned}
    \end{equation}
    $\Confp_{n,m}$ is a $(2n+m)$-dimensional smooth manifold. Let $G$ be the 2-dimensional Lie group acting on $\Confp_{n,m}$ by $z \mapsto az + b$ for $a>0$ and $b \in \R$. This is a free action, and we obtain a quotient manifold
    \begin{equation}
        C^+_{n,m} := \Confp_{n,m}/G
    \end{equation}
    which is a connected $(2n+m-2)$-dimensional smooth manifold. It has the natural orientation induced by the volume form in $\R^{2n+m}$:
    \begin{center}$
        \Omega = \d x_1 \wedge \d y_1 \wedge \cdots \wedge \d x_n \wedge \d y_n \wedge \d q_1 \wedge \cdots \wedge \d q_m, \qquad p_i = x_i + \i y_i,
    $\end{center}
    which is inherited from $\Confp_{n,m}$ because the actions of $G$ preserve orientation. 

    \item If $m = 0$, let
    \begin{equation}
        \operatorname{Conf}_{n} := \left\{ (p_1,...,p_n) \in \C^{n}\colon p_1,...,p_n \in \mathbb C,\ p_i \ne p_j \text{ for } i \ne j\right\}
    \end{equation}
    Let $G'$ be the 3-dimensional Lie group acting on $\operatorname{Conf}_{n}$ by $z \mapsto az + b$ for $a>0$ and $b \in \C$. The quotient manifold
        \begin{equation}
            C_{n} := \operatorname{Conf}_{n}/G'
        \end{equation}
    is a $(2n-3)$-dimensional smooth manifold for $n \geq 2$.
\end{itemize}
\end{definition}

Then we define the weight integral $W_\Gamma$ associated to $\Gamma \in G_n$. 
\begin{definition}
    Let $\mathbb H := \left\{ z \in \C\colon \operatorname{Im} z > 0 \right\}$ be the upper half plane endowed with the hyperbolic metric $g = \dfrac{\d x^2 + \d y^2}{y^2}$. It is called the \textbf{Poincar\'e half plane}. Let $\overline{\mathbb H} := \mathbb H \cup \R$.
    \begin{itemize}[nosep]
    \item For $p \in \mathbb H$, let $L(p,\infty)$ be the vertical half line from $p$ to infinity;
    \item For $p \ne q \in \mathbb H$, let $L(p,q)$ be the geodesic from $p$ to $q$, which is an arc of a circle centred on the real line;
    \item Let $\varphi(p,q)$ be the angle from $L(p,\infty)$ to $L(p,q)$. By simple planimetry (\emph{cf.\ }\cite{keller03}) we find that
    \begin{equation}
        \varphi(p,q) = \arg{\left(\dfrac{q-p}{q - \bar p}\right)}=  \dfrac{1}{2\i}\log\left( \dfrac{q-p}{q-\bar p}\cdot\dfrac{\bar q - p}{\bar q - \bar p} \right). \label{equ:III_angle_geom}
    \end{equation}
    We extend $\varphi(p,q)$ to $p \ne q \in \bar{\mathbb H}$.
    \end{itemize}
\end{definition}


\begin{definition}
    For $e \in E_{\Gamma}$, we define the \textbf{angle map} $\varphi_e\colon C_{n,m} \to S^1$ as 
    \begin{center}
        $\varphi_e(p_1,...,p_n;q_1,...,q_m) = \varphi(p_{s(e)},p_{t(e)})$.
    \end{center}
    We define a differential form
    \begin{equation}
        \omega_\Gamma := \bigwedge_{e \in E_\Gamma} \d\varphi_e \label{equ:III_weight_form}
    \end{equation}
    where the ordering of the 1-forms in the product is the one induced on the set of all edges by the
    ordering on the source vertices $v$ and the ordering on the set $\operatorname{Star}(v)$.
\end{definition}

\begin{proposition}[Compactification of Configuration Spaces]{}
    There exists a compact smooth manifold with corners $\bar{C}^+_{n,m}$ (\emph{resp.\ }$\bar C_n$) whose interior is the open configuration space $C^+_{n,m}$ (\emph{resp.\ }$C_n$) such that the angle map $\varphi$ and the differential form $\omega_\Gamma$ extends smoothly to the corresponding compactification.
\end{proposition}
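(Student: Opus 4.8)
The plan is to construct $\bar C_n$ and $\bar C^+_{n,m}$ as Fulton--MacPherson type compactifications, i.e.\ as the closures of suitable embeddings of the open configuration spaces into compact manifolds. For $C_n$, one uses, for every ordered pair $i \neq j$, the direction map $(p_1,\dots,p_n) \mapsto (p_i - p_j)/|p_i - p_j| \in S^1$, and for every ordered triple $i,j,k$ of distinct indices the ratio map $(p_1,\dots,p_n) \mapsto |p_i - p_j|/|p_i - p_k| \in (0,+\infty)$; all of these are $G'$-invariant and hence descend to $C_n$. Assembling them into a single map into $(S^1)^{n(n-1)} \times [0,+\infty]^{n(n-1)(n-2)}$, where the second factor uses the evident compactification $[0,+\infty]$ of $(0,+\infty)$, yields an embedding of $C_n$, and $\bar C_n$ is declared to be the closure of its image. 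For $C^+_{n,m}$ one does the same with the real line treated as part of the boundary: first-type points carry directions valued in arcs, second-type points are linearly ordered on $\R$, and one additionally records the rates at which first-type points approach $\R$; the resulting embedding lands in a product of circles, closed intervals, and closed half-discs, and $\bar C^+_{n,m}$ is again the closure.

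The core of the argument is to prove that these closures are compact smooth manifolds with corners whose interiors are the original spaces. This is a local statement along each boundary stratum. For $\bar C_n$, the strata are indexed by subsets $S \subseteq \{1,\dots,n\}$ with $2 \leq |S| \leq n$ --- the cluster of points labelled by $S$ degenerating to a single point --- and the stratum is canonically $C_{|S|} \times C_{n-|S|+1}$. For $\bar C^+_{n,m}$, the two families that matter later are: the \emph{type S1} strata, where a cluster $S \subseteq V^1_\Gamma$ with $|S| \geq 2$ collapses to an interior point of $\mathbb H$, giving $C_{|S|} \times C^+_{n-|S|+1,\,m}$; and the \emph{type S2} strata, where a cluster consisting of some first-type points together with a consecutive block of second-type points collapses to a point of $\R$, giving $C^+_{a,b} \times C^+_{n-a+1,\,m-b+1}$ with $a,b$ the sizes of the two parts of the cluster. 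Near such a stratum one introduces rescaled ``inner'' coordinates for the degenerating cluster together with ``outer'' coordinates for the remaining configuration, verifies that the coordinate changes are smooth, and reads off that a neighbourhood is diffeomorphic to a neighbourhood in the stratum times $[0,\varepsilon)^{c}$, where $c$ is the codimension; iterating for nested clusters produces the full manifold-with-corners chart.

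It remains to check that the angle map $\varphi_e$, and hence $\omega_\Gamma = \bigwedge_{e \in E_\Gamma} \d\varphi_e$, extends smoothly to these compactifications. Away from collisions of $p_{s(e)}$ and $p_{t(e)}$ the closed-form expression $\varphi(p,q) = \arg\big((q-p)/(q-\bar p)\big)$ from \eqref{equ:III_angle_geom} is a smooth function on $\overline{\mathbb H} \times \overline{\mathbb H}$ minus the diagonal, so the only question is behaviour at the boundary strata. In the rescaled coordinates introduced above, $\varphi(p_{s(e)}, p_{t(e)})$ becomes a smooth function of the inner cluster coordinates --- which are precisely coordinates on the boundary stratum --- together with the radial degeneration parameters, and one checks by a short case analysis over the S1/S2 families (according to whether $0$, $1$, or $2$ of the endpoints of $e$ lie in the degenerating cluster) that it extends smoothly across the corner; wedging over all edges then gives the smooth extension of $\omega_\Gamma$. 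I expect the main obstacle to be the bookkeeping in the manifold-with-corners charts at the deepest corners, where several nested clusters degenerate simultaneously: one must order the successive rescalings so that the iterated blow-up is manifestly smooth and the corner strata are glued consistently. For the full construction and these verifications I follow Kontsevich \cite{kon97}, with the detailed treatments in \cite{cattaneo04notes} and \cite{arnal02}.
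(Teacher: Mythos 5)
Your proposal follows essentially the same route as the paper: both construct the compactification Fulton--MacPherson style, as the closure of an embedding of the configuration space into a compact product (the paper uses angles $\arg(a-b)\in S^1$ over pairs and points $[a-b:b-c:c-a]\in\RP^2$ over triples drawn from $\{p_i,\bar p_i, q_j\}$, which encode the same direction-and-relative-scale data as your $S^1$-valued directions and $[0,+\infty]$-valued ratios). The paper's own proof stops after asserting that the closure is a manifold with corners, so your additional discussion of the local rescaled charts along the $\mathsf{S1}$/$\mathsf{S2}$ strata and the smooth extension of $\varphi_e$ goes beyond what the paper records, and is consistent with it.
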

\begin{remark}
    A manifold with corners $X$ is a second-countable Hausdorff topological space such that every point has a neighbourhood homeomorphic to $\R^{n-k} \times \R^k_{\geq 0}$ for some $k$, which is called the depth of that point. The boundary $\pa X$ is the set of those points with depth $k > 0$. The subset of points with depth $k$ is an $(n-k)$-submanifold $S^k(X)$ of $X$, the submanifolds of which are called the \textbf{strata} of codimension $k$. This notion is a natural generalisation of a manifold with boundary. The definition can be extended straightforward to smooth manifolds with corners, where the transition maps between charts are required to preserve the corners. 
\end{remark}
\begin{proof}
    We construct a map
    \begin{center}$
        \psi\colon \Confp_{n,m} \to {\mathbb T}^{(2n+m)(2n+m-1)} \times (\RP^2)^{(2n+m)(2n+m-1)(2n+m-2)}
    $\end{center}
    where ${\mathbb T}^{(2n+m)(2n+m-1)} = \underbrace{S^1 \times \cdots \times S^1}_{(2n+m)(2n+m-1)}$ is a torus. For each pair of distinct points $a,b \in \left\{ p_1,\bar p_1,...,p_n,\bar p_n,q_1,...,q_m \right\}$, we associate an angle $\operatorname{arg}(a-b) \in S^1$; for each triple of distinct points $a,b,c \in \left\{ p_1,\bar p_1,...,p_n,\bar p_n,q_1,...,q_m \right\}$, we associative a point $[a-b:b-c:c-a] \in \RP^2$. These data uniquely determine the map $\psi$. Since these data are preserved by the action of $G$, $\psi$ descends to the quotient:
    \begin{center}$
        \tilde\psi\colon C^+_{n,m} \to X := {\mathbb T}^{(2n+m)(2n+m-1)} \times (\RP^2)^{(2n+m)(2n+m-1)(2n+m-2)},
    $\end{center}
    which is injective. Therefore $C^+_{n,m}$ is embedded into $X$. Let $\bar C^+_{n,m}$ be the closure of $\tilde\psi(C^+_{n,m})$ in this space. This is a manifold with corners.
\end{proof}

\begin{definition}
    The \textbf{weight integral} for $\Gamma \in G_{n,m}$ is defined to be
    \begin{equation}
        W_{\Gamma} := \begin{cases}
            \displaystyle\prod_{k=1}^n \dfrac{1}{\#(\operatorname{Star}(k))!} \dfrac{1}{(2\pi)^{\#(E_\Gamma)}} \displaystyle\int_{\bar C^+_{n,m}} \omega_\Gamma, 
            & \text{if }  \#(E_\Gamma) = 2n+m-2, \\
            0, & \text{otherwise}.
        \end{cases} \label{equ:III_weight_int}
    \end{equation}
\end{definition}

\begin{remark}
    The requirement $\#(E_\Gamma) = 2n+m-2$ ensures that the degree of the form $\omega_\Gamma$ matches the dimension of the configuration space. It also ensures that $U_n$ has the correct degree (see the beginning of Section \ref{sec:proof_formality}).
\end{remark}

Now we define the $n$-th Taylor coefficient $U_n$ by
\begin{equation}
    U_n = \displaystyle\sum_{m=0}^\infty \displaystyle\sum_{\Gamma \in G_{n,m}} W_\Gamma B_\Gamma. \label{equ:III_U_coeff}
\end{equation}
Now we are just one step away from the formality theorem in $\R^d$. The proof will be completed in Section \ref{sec:proof_formality}.

\subsection{Explicit Formula of Kontsevich Star Product}

From Theorem \ref*{thm:II_Linf_quasi} and Equation \eqref{equ:III_U_coeff}, a Poisson bivector field $\pi$ induces the Kontsevich star product in the following way:

\begin{theorem}[Kontsevich Star Product]{thm:III_kont_star}
    Let $\pi \in \tpolyrd[1]$ be a Poisson bivector field on $\R^d$. Then 
    \begin{center}$
        \star = \mu + \displaystyle\sum_{n=1}^\infty \dfrac{\hbar^n}{n!} U_n(\pi \cdots \pi) \in \dpolyrd[1]\formpow{\hbar}
    $\end{center}
    defines a star product on $\R^d$ such that $f \star g - g \star f = \hbar\pi(f,g)$ modulo $\hbar^2$.
\end{theorem}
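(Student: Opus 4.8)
The plan is to identify $\star-\mu$ with the image of the formal Poisson bivector $\hbar\pi$ under the $\linf$-quasi-isomorphism $U$ furnished by the formality theorem \ref{thm:II_formality}. Since $\pi$ is a Poisson bivector, $\SN{\pi,\pi}=0$; because $\tpolyrd$ carries the zero differential, the formal Maurer--Cartan equation for $\tpolyrd$ evaluated at $\hbar\pi\in(\hbar\tpolyrd\formpow{\hbar})^1$ reduces to $\tfrac12\SN{\hbar\pi,\hbar\pi}=\tfrac{\hbar^2}{2}\SN{\pi,\pi}=0$, so $\hbar\pi\in\MC(\hbar\tpolyrd\formpow{\hbar})$. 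Extending $U$ to the $\hbar$-adic completion and invoking that an $\linf$-morphism carries Maurer--Cartan solutions to Maurer--Cartan solutions (Section \ref{sec:L_infty_MC}, as used in Theorem \ref{thm:II_Linf_quasi}), the element
\[
\gamma\ :=\ \sum_{n=1}^\infty\frac{1}{n!}\,U_n\bigl(\hbar\pi\,\cdots\,\hbar\pi\bigr)\ =\ \sum_{n=1}^\infty\frac{\hbar^n}{n!}\,U_n(\pi\cdots\pi)
\]
lies in $\MC(\hbar\dpolyrd\formpow{\hbar})$, that is, $\dhochm\gamma+\tfrac12\Ger{\gamma,\gamma}=0$. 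By Lemma \ref{lem:II_hoch_ger} ($\dhochm\gamma=\Ger{\mu,\gamma}$) together with $\Ger{\mu,\mu}=0$, this is equivalent to $\Ger{\mu+\gamma,\mu+\gamma}=0$, which is precisely associativity of $\star:=\mu+\gamma$ (Section \ref{sec:defo_asso}).

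It remains to verify the other two axioms of a star product. For the bidifferential property, write $U_n(\pi\cdots\pi)=\sum_{m\ge0}\sum_{\Gamma\in G_{n,m}}W_\Gamma\,B_\Gamma(\pi^{\otimes n})$ by \eqref{equ:III_U_coeff}. Since $\pi\in\tpolyrd[1]$, the operator $B_\Gamma$ vanishes unless every first-type vertex $i$ satisfies $k_i=1$, i.e.\ $\#(\operatorname{Star}(i))=2$; as every edge starts from a first-type vertex, summing over $i\in V^1_\Gamma$ forces $\#(E_\Gamma)=2n$, whence the constraint $\#(E_\Gamma)=2n+m-2$ in \eqref{equ:III_weight_int} forces $m=2$. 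Thus only graphs of $G_{n,2}$ contribute, each $B_\Gamma(\pi^{\otimes n})$ is by construction a bidifferential operator, and $\gamma\in\dpolyrd[1]\formpow{\hbar}$ as claimed. For the unit axiom, fix $n\ge1$ and an admissible $\Gamma\in G_{n,2}$ and consider $B_\Gamma(\pi^{\otimes n})(f,1)$: the constant $1$ sits at the second-type vertex $\bar 2$, so any derivative landing on $\bar 2$ kills the corresponding summand; a non-zero contribution would therefore force $\bar 2$ to have no incoming edge, making it an isolated vertex. But $\Gamma$ is connected with $n+2\ge 3$ vertices, so it has no isolated vertex --- a contradiction. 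Hence $B_\Gamma(\pi^{\otimes n})(f,1)=0$ for all $\Gamma$, and symmetrically $B_\Gamma(\pi^{\otimes n})(1,f)=0$; summing over $\Gamma$ and $n\ge1$ gives $f\star 1=f$ and $1\star f=f$.

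For the first-order term, the formality theorem gives $U_1=U_1^{(0)}$, the HKR map, so $B_1:=U_1(\pi)$ acts on a decomposable bivector $X\wedge Y$ by $(X\wedge Y)(f,g)=\tfrac12\bigl(X(f)Y(g)-X(g)Y(f)\bigr)$; extending linearly, a direct computation yields $B_1(f,g)-B_1(g,f)=\pi(\d f\wedge\d g)=\{f,g\}$, so modulo $\hbar^2$ one gets $f\star g-g\star f=\hbar\bigl(B_1(f,g)-B_1(g,f)\bigr)=\hbar\,\pi(f,g)$, and by Lemma \ref{lem:I_star2Poi} the pair $(\smooth{\R^d}\formpow{\hbar},\star)$ is in fact a deformation quantisation of $(\smooth{\R^d},\pi)$. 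The substantive input --- that the weighted graph sum $U_n=\sum_m\sum_{\Gamma\in G_{n,m}}W_\Gamma B_\Gamma$ actually assembles into an $\linf$-morphism --- is the formality theorem, which is assumed here; granting it, there is no further obstacle, and the only points requiring care are the edge-counting argument showing that only $G_{n,2}$ survives, the connectedness argument for the unit axiom, and tracking the normalisation constants (the $1/n!$ in $\gamma$ against the factors $1/\#(\operatorname{Star}(k))!$ concealed in $W_\Gamma$, and the $\tfrac12$ in the HKR map entering the leading relation).
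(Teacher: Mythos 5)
Your proposal is correct and follows exactly the route the paper intends: the paper offers no explicit proof, deriving the theorem in one line from Theorem \ref{thm:II_Linf_quasi} and \eqref{equ:III_U_coeff}, i.e.\ from the fact that the $\linf$-morphism $U$ sends the Maurer--Cartan element $\hbar\pi$ of $\hbar\tpolyrd\formpow{\hbar}$ to a Maurer--Cartan element of $\hbar\dpolyrd\formpow{\hbar}$, which by Proposition \ref{prop:II_star_MC} is a star product. Your additional verifications --- the edge count forcing $m=2$ (which the paper states without proof just after the theorem), the connectedness argument for the unit axiom, and the HKR computation of the first-order term --- correctly fill in the details the paper leaves implicit.
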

From the previous discussion, we can develop a set of rules for the computation of the star product $f \star g$. Some constructions can be simplified. For example, we only use the admissible graphs in $G_{n,2}$ where $\#(\operatorname{Star}(k)) = 2$ for $k=1,...,n$. The Lie group action of $G$ on configuration space $C^+_{n,2}$ fix the two vertices of the second type to $0,1 \in \R$ respectively, which gives an isomorphism:
\begin{center}$
    C^+_{n,2} \simeq H_n := \left\{ (p_1,...,p_n) \in \C^n\colon p_1,...,p_n \in \mathbb H,\ p_i \ne p_j \text{ for } i \ne j\right\}.
$\end{center}
The explicit formula for the Kontsevich star product is given by
\begin{equation}
    f \star g = fg + \displaystyle\sum_{n=1}^\infty \dfrac{\hbar^n}{2^n(2\pi)^{2n}n!}\displaystyle\sum_{\Gamma \in G_{n,2}} B_{\Gamma,\pi}(f,g) \displaystyle\int_{\bar H_n} \omega_{\Gamma}, \label{equ:III_kont_star}
\end{equation}
where $B_{\Gamma,\pi}(f,g) := B_\Gamma(\pi\cdots\pi)(f,g)$ according to the definition in Section \ref{sec:ad_graph} and $\omega_\Gamma$ is the differential form \eqref{equ:III_weight_form}.


\begin{example}\label{eg:III_Moyal}
    As an easy example, we compute the star product when $\pi = \pi^{ij}\pa_i \wedge \pa_j$ has constant coefficients. Since any derivative on $\pi^{ij}$ vanishes, we only need to sum over the admissible graphs with no edges targeting at the vertices of first type $V^1_\Gamma$. The admissible graphs with non-vanishing weight $W_\Gamma$ has $2n$ edges. Therefore the only contributing admissible graph in $G_{n,2}$ is of the following form, which is denoted by $\Gamma_n$.
    \begin{center}
        \vspace*{-0.2em}
        \begin{tikzpicture}[scale=.65]
            \draw [->] (-4, 0) -- (4,0);
            \draw [line width=1pt, dotted] (0,3) -- (1,3);
            \node at (4,0.3) {Re};

            \draw (-1,0) node[circle,fill,inner sep=0,minimum size=4pt]{} node[yshift=-10pt]{$f$};
            \draw (1,0) node[circle,fill,inner sep=0,minimum size=4pt]{} node[yshift=-10pt]{$g$};
            \draw (-2,3) node[circle,fill,inner sep=0,minimum size=4pt]{} node[yshift=10pt]{$1$};
            \draw (-1,3) node[circle,fill,inner sep=0,minimum size=4pt]{} node[yshift=10pt]{$2$};
            \draw (2,3) node[circle,fill,inner sep=0,minimum size=4pt]{} node[yshift=10pt]{$n$};

            \draw[line width=1pt, postaction={decorate, decoration={ 
                markings, mark=at position 0.5 with {\arrow[line width=1.2pt]{>}}}}]
                (-2,3) -- (-1,0);
            \draw[line width=1pt, postaction={decorate, decoration={ 
                markings, mark=at position 0.5 with {\arrow[line width=1.2pt]{>}}}}]
                (-2,3) -- (1,0);
            \draw[line width=1pt, postaction={decorate, decoration={ 
            markings, mark=at position 0.5 with {\arrow[line width=1.2pt]{>}}}}]
                (-1,3) -- (-1,0);
            \draw[line width=1pt, postaction={decorate, decoration={ 
            markings, mark=at position 0.5 with {\arrow[line width=1.2pt]{>}}}}]
                (-1,3) -- (1,0);
            \draw[line width=1pt, postaction={decorate, decoration={ 
            markings, mark=at position 0.5 with {\arrow[line width=1.2pt]{>}}}}]
                (2,3) -- (-1,0);
            \draw[line width=1pt, postaction={decorate, decoration={ 
            markings, mark=at position 0.5 with {\arrow[line width=1.2pt]{>}}}}]
                (2,3) -- (1,0);
        \end{tikzpicture} \vspace*{-1em}
    \end{center}
    The differential operator associated to $\Gamma_n$ is given by
    \begin{center}$
        B_{\Gamma_n,\pi}(f,g)(x) = \left.\displaystyle\prod_{k=1}^{n}\left( \displaystyle\sum_{i_k,j_k=1}^{2n}\pi^{i_k,j_k}\dfrac{\pa}{\pa x^{i_k}}\dfrac{\pa}{\pa y^{j_k}} \right)f(x)g(y)\right|_{x=y}.
    $\end{center}
    The differential form $\omega_{\Gamma_n}$ is given by
    \begin{center}$
        \omega_{\Gamma_{n}} = \d\varphi_{(1,\bar 1)} \wedge \d\varphi_{(1,\bar 2)} \wedge \cdots \wedge \d\varphi_{(n,\bar 1)} \wedge \d\varphi_{(n,\bar 2)}
    $\end{center}
    where, following \eqref{equ:III_angle_geom}, the differential 1-forms are given by $
    \d\varphi_{(i,\bar j)}(p_1,...,p_n,q_1,q_2) = 2\d\arg(q_j-p_i)$, where $q_1 = 0$ and $q_2 = 1$ are fixed. The compactification $\bar H_n$ is parametrised as
    \begin{center}$
        \begin{aligned}
            \bar H_n &\simeq \overline{\left\{ (\arg(0-p_1),\arg(1-p_1),...,\arg(0-p_n),\arg(1-p_n))\colon p_1,...,p_n \in \mathbb H \text{ distinct} \right\}} \\
        &\simeq [-\pi,0]^{2n}.
        \end{aligned}
    $\end{center}
    Hence the star product is given by
    \begin{align*}
        (f \star g)(x) &= (fg)(x) + \displaystyle\sum_{n=1}^\infty \dfrac{\hbar^n}{2^n(2\pi)^{2n}n!}B_{{\Gamma_n},\pi}(f,g)(x)\cdot \displaystyle\int_{\bar H_n} \omega_{\Gamma} \\
        &= (fg)(x) + \left.\displaystyle\sum_{n=1}^\infty\dfrac{\hbar^n}{2^n n!}\displaystyle\prod_{k=1}^{n}\left( \displaystyle\sum_{i_k,j_k=1}^{2n}\pi^{i_k,j_k}\dfrac{\pa}{\pa x^{i_k}}\dfrac{\pa}{\pa y^{j_k}} \right)f(x)g(y)\right|_{x=y}.
    \end{align*}
    We have thus shown that in this case the Kontsevich star product coincides with the Weyl--Moyal product \eqref{equ:I_Moyal}. 
\end{example}
 
\section{Proof of Formality Theorem}\label{sec:proof_formality} 

To show that the coefficients $U_n$ given in \eqref{equ:III_U_coeff} defines an $\linf$-quasi-morphism $U\colon \tpolyrd \to \dpolyrd$, it remains to show that
\begin{enumerate}[label = (\alph*)]
    \item $U_n$ respects the grading on $\tpolyrd$ and $\dpolyrd$;
    \item $U_1 = U_1^{(0)}$, which is the HKR map \eqref{equ:II_HKR} and is already proven to be a quasi-isomorphism;
    \item $U_n$ satisfies the constraint equation \eqref{equ:II_formality_constr}.
\end{enumerate}

For (a), we show that for $\Gamma \in G_{n,m}$, $W_{\Gamma}B_\Gamma$ is a morphism of graded vector spaces of degree $1-n$ from $\bigwedge^n(\tpolyrd) \to \dpolyrd[m-1]$. Indeed, $W_\Gamma$ is non-zero only if 
\begin{center}$
    2n + m - 2 = \#(E_\Gamma) = \displaystyle\sum_{k=1}^n \#(\operatorname{Star}(k)) = \displaystyle\sum_{k=1}^n\left(\left| \xi_i \right| + 1\right) = \left| \xi_1 \otimes \cdots \otimes \xi_n \right| + n.
$\end{center}
So $\deg U_n = (m-1)-(2n+m-2-n) = 1-n$ as claimed.

\subsection{Checking Quasi-Isomorphism}

For (b), the proof is simply a checking for the constructions. 

\begin{lemma}[$\bm U$ is a Quasi-Isomorphism]{7}
    The first Taylor coefficient $U_1\colon \tpolyrd \to \dpolyrd$ defined by \eqref{equ:III_U_coeff} satisfies $U_1 = U_1^{(0)}$, where $U_1^{(0)}$ is the HKR map defined in \eqref{equ:II_HKR}.
\end{lemma}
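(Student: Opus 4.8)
The plan is to unwind the definition \eqref{equ:III_U_coeff} for $n=1$, determine which admissible graphs can contribute, and check that their weighted sum reassembles the determinant formula defining the HKR map $U_1^{(0)}$.

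First I would enumerate the graphs. In any $\Gamma\in G_{1,m}$ there is a single vertex of the first type; since $\Gamma$ has no edge starting from a vertex of the second type and no loops or double arrows, every edge runs from vertex $1$ to a vertex of the second type, at most one to each. The non-vanishing requirement $\#(E_\Gamma)=2n+m-2=m$ in \eqref{equ:III_weight_int} then forces exactly one edge from $1$ to each $\bar j$, so $\#(\operatorname{Star}(1))=m$ and $B_\Gamma$ vanishes unless $\xi_1\in\tpolyrd[m-1]$ — in accordance with the degree count of part~(a). Such a graph is described by the permutation $\tau\in S_m$ for which the ordered edge $e_1^a\in\operatorname{Star}(1)$ has target $\bar{\tau(a)}$; call it $\Gamma_\tau$. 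The case $m=0$ is the bare vertex, with $W_\Gamma=1$ and $B_\Gamma=\id$, reproducing $U_1=U_1^{(0)}=\id$ on $\tpolyrd[-1]=\smooth{\R^d}$.

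Next I would compute the two ingredients. By the recipe of Section~\ref{sec:ad_graph}, vertex $1$ has no incoming edge, hence carries no derivative, so
\[
  B_{\Gamma_\tau}(\xi_1)(f_1,\dots,f_m)=\displaystyle\sum_{i_1,\dots,i_m=1}^{d}\xi_1^{\,i_1\cdots i_m}\,\pa_{i_1}f_{\tau(1)}\cdots\pa_{i_m}f_{\tau(m)}=\operatorname{sgn}(\tau)\,B_{\Gamma_{\id}}(\xi_1)(f_1,\dots,f_m),
\]
the last equality by antisymmetry of $\xi_1^{\,i_1\cdots i_m}$. For the weight I would use the $G$-action to gauge-fix $p_1=\i$, which identifies $C^+_{1,m}$ with the open cone $\{q_1<\dots<q_m\}\subset\R^m$; by \eqref{equ:III_angle_geom}, $q\mapsto\varphi(\i,q)$ is an orientation-preserving diffeomorphism of $\R$ onto an interval of length $2\pi$. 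Since $\omega_{\Gamma_\tau}=\bigwedge_{a=1}^m\d\varphi(\i,q_{\tau(a)})=\operatorname{sgn}(\tau)\bigwedge_{a=1}^m\d\varphi(\i,q_a)$, the substitution $u_a=\varphi(\i,q_a)$ — legitimate because the boundary of $\bar C^+_{1,m}$ has measure zero — turns the integral into $\operatorname{sgn}(\tau)\int_{0<u_1<\dots<u_m<2\pi}\d u_1\cdots\d u_m=\operatorname{sgn}(\tau)\,(2\pi)^m/m!$, so $W_{\Gamma_\tau}=\tfrac{1}{m!}\cdot\tfrac{1}{(2\pi)^m}\cdot\operatorname{sgn}(\tau)\,\tfrac{(2\pi)^m}{m!}=\operatorname{sgn}(\tau)/(m!)^2$.

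Finally I would sum over $S_m$ and identify the result. Since $\operatorname{sgn}(\tau)^2=1$,
\[
  U_1(\xi_1)=\displaystyle\sum_{\tau\in S_m}W_{\Gamma_\tau}\,B_{\Gamma_\tau}(\xi_1)=\frac{1}{(m!)^2}\displaystyle\sum_{\tau\in S_m}B_{\Gamma_{\id}}(\xi_1)=\frac{1}{m!}\,B_{\Gamma_{\id}}(\xi_1);
\]
and for a decomposable $\xi_1=X_1\wedge\dots\wedge X_m$, reorganising the sum over index configurations shows $B_{\Gamma_{\id}}(\xi_1)(f_1,\dots,f_m)=\det\big(X_a(f_b)\big)_{a,b}$, so $U_1(\xi_1)(f_1,\dots,f_m)=\tfrac{1}{m!}\det(X_a(f_b))$, which is precisely $U_1^{(0)}(\xi_1)(f_1,\dots,f_m)$ by the formula recalled at the start of Section~\ref{sec:HKR}; the general case follows by linearity. (One could also argue that $U_1$ and $U_1^{(0)}$ are both $\smooth{\R^d}$-linear, $\mathrm{GL}_d$-equivariant maps $\tpolyrd[m-1]\to\img U_1^{(0)}$, hence agree up to a scalar that is pinned to $1$ by evaluating on $\xi_1=\pa_1\wedge\dots\wedge\pa_m$ against $(x_1,\dots,x_m)$.) The only step here that is more than bookkeeping is the weight computation; within it, the point needing care is the gauge-fixing together with the monotonicity of $q\mapsto\varphi(\i,q)$ over a full $2\pi$ period, which is what makes the change of variables on $\bar C^+_{1,m}$ valid with the displayed sign.
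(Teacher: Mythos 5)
Your proposal is correct and follows the same route as the paper's proof: enumerate the admissible graphs in $G_{1,m}$, compute the associated polydifferential operator and the weight integral, and match the result against the determinant formula for $U_1^{(0)}$. The one place where your bookkeeping genuinely diverges from the paper's is in how the factor $1/m!$ arises: the paper treats the graph in $G_{1,m}$ as unique and evaluates $\int_{\bar C^+_{1,m}}\omega_{\Gamma}$ as $(2\pi)^m$, whereas you keep the $m!$ edge-orderings as distinct graphs $\Gamma_\tau$ and evaluate each integral over the ordered simplex $q_1<\cdots<q_m$ as $\operatorname{sgn}(\tau)(2\pi)^m/m!$; since $W_{\Gamma_\tau}B_{\Gamma_\tau}$ is independent of $\tau$, the two accounts land on the same $U_1(\xi)=\frac{1}{m!}B_{\Gamma_{\id}}(\xi)$. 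Your version is the more internally consistent one given the paper's own definition of $\Confp_{n,m}$ with strictly ordered real points, and your explicit check that $q\mapsto\varphi(\i,q)$ is monotone with total variation $2\pi$ is exactly the point that needs care in the weight computation.
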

\begin{proof}
    By \eqref{equ:III_U_coeff}, we have
    \begin{center}$
        U_1 = \displaystyle\sum_{m=0}^\infty \displaystyle\sum_{\Gamma \in G_{1,m}} W_{\Gamma}B_{\Gamma} = \displaystyle\sum_{m=0}^\infty W_{\Gamma_m}B_{\Gamma_m}
    $\end{center}
    where $\Gamma_m$ is the unique graph in $G_{1,m}$ as shown below:
    \begin{figure}[H]
        \centering
        \begin{tikzpicture}[scale=.65]
            \draw (-5, 0) -- (0,0);
            \draw [line width = 1pt, dotted] (0, 0) -- (2,0);
            \draw  [->] (2, 0) -- (5,0);
            \node at (5,0.3) {Re};

            \draw[line width=1pt, postaction={decorate, decoration={ 
                markings, mark=at position 0.8 with {\arrow[line width=1.2pt]{>}}}}]
            (0,3) -- (-3,0);
            \draw[line width=1pt, postaction={decorate, decoration={ 
                markings, mark=at position 0.8 with {\arrow[line width=1.2pt]{>}}}}]
            (0,3) -- (-2,0);
            \draw[line width=1pt, postaction={decorate, decoration={ 
                markings, mark=at position 0.8 with {\arrow[line width=1.2pt]{>}}}}]
            (0,3) -- (-1,0);
            \draw[line width=1pt, postaction={decorate, decoration={ 
                markings, mark=at position 0.8 with {\arrow[line width=1.2pt]{>}}}}]
            (0,3) -- (3,0);

            \draw (-3,0) node[circle,fill,inner sep=0,minimum size=4pt]{} node[yshift=-10pt]{$\bar 1$};
            \draw (-2,0) node[circle,fill,inner sep=0,minimum size=4pt]{} node[yshift=-10pt]{$\bar 2$};
            \draw (-1,0) node[circle,fill,inner sep=0,minimum size=4pt]{} node[yshift=-10pt]{$\bar 3$};
            \draw (3,0) node[circle,fill,inner sep=0,minimum size=4pt]{} node[yshift=-10pt]{$\bar m$};
            \draw (0,3) node[circle,fill,inner sep=0,minimum size=4pt]{} node[yshift=10pt]{$1$};
        \end{tikzpicture}
        \label{fig:HKR_Graph} \vspace*{-2em}
    \end{figure}
    The differential operator associated to $\Gamma_m$ is given by
    \begin{center}$
        B_{\Gamma_m}(\xi)(f_1,...,f_m) = \displaystyle\sum_{i_1,...,i_m=1}^{d} \xi^{i_1,...,i_m}\pa_{i_1}f_1\cdots \pa_{i_m}f_m.
    $\end{center}
    The computatuion of the weight integral is similar to that in Example \ref{eg:III_Moyal}:
    \begin{center}$
        W_{\Gamma_m} = \dfrac{1}{m!(2\pi)^m}\displaystyle\int_{\bar C^+_{1,m}} \d\varphi_{(1,\bar 1)} \wedge \cdots \wedge \d\varphi_{(1,\bar m)} = \dfrac{1}{m!(2\pi)^m} (2\pi)^m = \dfrac{1}{m!}.
    $\end{center}
    The map $U_1$ thus takes the form
    \begin{center}$
        U_1(\xi)(f_1,...,f_m) = \dfrac{1}{m!}\displaystyle\sum_{i_1,...,i_m=1}^{d} \xi^{i_1,...,i_m}\pa_{i_1}f_1\cdots \pa_{i_m}f_m.
    $\end{center}
    In particular, if $\xi = X_1 \wedge \cdots \wedge X_n$ for $X_1,...,X_n \in \Gamma(\R^d,\tanb \R^d)$, then 
    \begin{center}$
        \xi^{i_1,...,i_m} = \displaystyle\sum_{\sigma \in S_m}\operatorname{sgn}(\sigma)X_1^{\sigma(i_1)}\cdots X_m^{\sigma(i_m)} \implies U_1(\xi)(f_1,...,f_m) = \dfrac{1}{m!}\det(X_i(f_j))_{i,j}.
    $\end{center}
    We conclude that $U_1$ agrees with the HKR map $U_1^{(0)}$.
\end{proof}

\subsection{Formality Equation from Boundary Strata}

For (c), the construction in Section \ref{sec:Kont_con} allows us to put the formality equation \eqref{equ:II_graph_constr} into the form
\begin{equation}
    \displaystyle\sum_{\substack{ \Gamma \in G_{n,m}\\ \#(E_\Gamma) = 2n + m -3 }} c_\Gamma B_\Gamma(\xi_1 \cdots \xi_n)(f_1 ,..., f_m) = 0. \label{equ:III_graph_constr}
\end{equation}
where we only sum over $\Gamma \in G_{n,m}$ for $\#(E_\Gamma) = 2n + m -3$, since only these graphs can have non-vanishing weights $W_\Gamma$ by definition. The equation \eqref{equ:III_graph_constr} is satisfied if we can prove that $c_\Gamma = 0$ for all such graphs $\Gamma$. In the following subsections, we shall prove that

\begin{lemma}[]{lem:III_strata_c}
    Let $c_\Gamma \in \R$ be the coefficient associated to $\Gamma \in G_{n,m}$ from \eqref{equ:II_graph_constr} and \eqref{equ:III_graph_constr}. Then
    \begin{center}$
        c_\Gamma = \displaystyle\prod_{k=1}^n \dfrac{1}{\#(\operatorname{Star}(k))!} \dfrac{1}{(2\pi)^{\#(E_\Gamma)}}\displaystyle\int_{\pa \bar C^+_{n,m}} \omega_\Gamma.
    $\end{center}
\end{lemma}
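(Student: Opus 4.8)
The plan is to unwind the formality equation \eqref{equ:II_graph_constr} using the explicit expression \eqref{equ:III_U_coeff} for the Taylor coefficients, and to recognise, graph by graph, the resulting numerical coefficient $c_\Gamma$ as an integral over the codimension-one boundary of the compactified configuration space $\bar C^+_{n,m}$. First I would substitute $U_k=\sum_m\sum_{\Lambda\in G_{k,m}}W_\Lambda B_\Lambda$ into each term of \eqref{equ:II_graph_constr}. The terms $U_{n-1}(\xi_i\diamond\xi_j\,\cdots)$ produced by the Schouten--Nijenhuis side expand into $B_\Gamma$'s in which $\Gamma$ arises from an $(n-1)$-vertex admissible graph by splitting one vertex of the first type into two vertices joined by a single edge; the terms $U_i(\cdots)(f_0,\ldots,U_{n-i}(\cdots)(f_k,\ldots,f_{j+k}),\ldots,f_m)$ produced by the Gerstenhaber side (including the outermost $U_0=\mu$) expand into $B_\Gamma$'s in which $\Gamma$ is obtained by inserting an admissible graph into a block of consecutive vertices of the second type of another admissible graph. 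Collecting all contributions that yield a fixed $\Gamma\in G_{n,m}$ with $\#(E_\Gamma)=2n+m-3$ exhibits $c_\Gamma$ as a signed sum of products $W_{\Lambda_1}W_{\Lambda_2}$ of weights of the two constituent graphs (one constituent possibly a two-vertex graph, contributing the factor $\tfrac{1}{2\pi}\int_{\bar C_2}\d\varphi=1$).

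Next I would match these weight products with the codimension-one strata of $\partial\bar C^+_{n,m}$, which come in the two families isolated by Kontsevich: strata $\bar C_{|A|}\times\bar C^+_{n-|A|+1,m}$ in which a subset $A\subseteq\{1,\ldots,n\}$ of points of the first type with $|A|\geq2$ collapses to an interior point of $\mathbb H$, and strata $\bar C^+_{|A_1|,|A_2|}\times\bar C^+_{n-|A_1|,m-|A_2|+1}$ in which a subset $A_1\subseteq\{1,\ldots,n\}$ together with a block $A_2$ of consecutive points of the second type collapses onto $\R$. On each stratum the smooth extensions of the angle maps degenerate in a controlled way: $\omega_\Gamma$ restricts to $\pm\,\omega_{\Gamma_1}\wedge\omega_{\Gamma_2}$, where $\Gamma_1$ is the subgraph carried by the collapsing cluster and $\Gamma_2$ the contracted graph, and this restriction vanishes unless the edges of $\Gamma$ distribute between the two factors so that each piece carries exactly the number of edges needed to support a top form. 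Hence $\int_{\mathrm{stratum}}\omega_\Gamma=\pm\bigl(\int\omega_{\Gamma_1}\bigr)\bigl(\int\omega_{\Gamma_2}\bigr)$, and after reinstating the normalisations $\prod_k\tfrac{1}{\#(\operatorname{Star}(k))!}$ and $(2\pi)^{-\#(E_\Gamma)}$ this is precisely one of the weight products making up $c_\Gamma$; conversely, the splitting/insertion combinatorics of the first step is in bijection with the boundary strata just described.

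The remaining input is the vanishing of the strata that do \emph{not} appear in \eqref{equ:II_graph_constr}: the interior strata with $|A|\geq3$, and the degenerate boundary strata (those not matched with any term of \eqref{equ:II_graph_constr}). These contribute zero to $\int_{\partial\bar C^+_{n,m}}\omega_\Gamma$ --- the $|A|\geq3$ case by Kontsevich's key lemma, a dimension/degeneracy argument for the angle form on $\bar C_{|A|}$, and the degenerate boundary strata by direct inspection of the relevant low-dimensional configuration spaces. Adding up the surviving strata then gives $c_\Gamma=\prod_{k=1}^n\tfrac{1}{\#(\operatorname{Star}(k))!}\,\tfrac{1}{(2\pi)^{\#(E_\Gamma)}}\int_{\partial\bar C^+_{n,m}}\omega_\Gamma$, as claimed; and once this is in hand the formality equation \eqref{equ:III_graph_constr} follows immediately, since $\omega_\Gamma$ is closed of degree $\dim\bar C^+_{n,m}-1$, so that Stokes' theorem on the manifold with corners $\bar C^+_{n,m}$ yields $\int_{\partial\bar C^+_{n,m}}\omega_\Gamma=\int_{\bar C^+_{n,m}}\d\omega_\Gamma=0$. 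The main obstacle is not any single step but the coherent tracking of signs: the Koszul signs coming from the d\'ecalage isomorphisms in \eqref{equ:II_QD_QT_constr}, the induced orientations of the boundary strata, and the chosen orderings of the sets $\operatorname{Star}(v)$ must all be reconciled so that the combinatorial coefficients in \eqref{equ:II_graph_constr} agree with the orientation signs in $\int_{\partial}=\sum_{\mathrm{faces}}\pm\int_{\mathrm{face}}$ --- this is the delicate bookkeeping carried out in detail in \cite{arnal02}.
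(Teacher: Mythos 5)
Your proposal is correct and follows essentially the same route as the paper: expand the formality equation graph by graph so that $c_\Gamma$ appears as a signed sum of weight products $W_{\Lambda_1}W_{\Lambda_2}$, identify these products with the codimension-one strata of $\pa\bar C^+_{n,m}$ (type $\mathsf{S1}$ with two collapsing points giving the $\xi_i\diamond\xi_j$ terms, type $\mathsf{S2}$ giving the composition terms, with the $i\geq 3$ interior strata killed by Kontsevich's vanishing lemma and the non-admissible quotient graphs killed by vanishing weights), and defer the orientation/sign bookkeeping to \cite{arnal02}. The only cosmetic difference is the direction of reading — you go from the formality equation to the boundary decomposition, the paper goes the other way — which does not change the argument.
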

Having proven this, we invoke the Stokes' theorem for manifolds with corners:
\begin{lemma}[Stokes' Theorem with Corners]{}
    Let $X$ be an $n$-dimensional oriented smooth manifold with corners. Let $X_1$ be the union of strata of $X$ of codimension $0$ and $1$. Then $\pa X_1$ is a $(n-1)$-dimensional manifold with boundary with orientation induced from $X_1$. Let $\omega$ be a compactly supported differential $(n-1)$-form on $M$. We have
    \begin{center}$
        \displaystyle\int_X \d \omega = \displaystyle\int_{\pa X_1} \omega.
    $\end{center}
\end{lemma}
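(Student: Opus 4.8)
The plan is to deduce the statement from the ordinary Stokes theorem on manifolds with boundary by a partition-of-unity reduction to a local model, the only genuinely new ingredient being that strata of codimension $\geq 2$ may be discarded from every integral that appears.

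I would begin by fixing a locally finite atlas of $X$ by charts $U_\alpha$ diffeomorphic to open subsets $V_\alpha \subseteq \R^{n-k_\alpha}\times\R^{k_\alpha}_{\geq 0}$, together with a subordinate partition of unity $\{\rho_\alpha\}$. Since $\omega$ is compactly supported, $\omega = \sum_\alpha \rho_\alpha\omega$ is a \emph{finite} sum of smooth $(n-1)$-forms, each supported in one chart, and both sides of the asserted identity are additive in $\omega$; so it suffices to treat $\omega$ supported in a single chart $U_\alpha \cong V_\alpha$. In these coordinates the stratification is explicit: the codimension-$j$ stratum of $\R^{n-k}\times\R^k_{\geq 0}$ is the locus where exactly $j$ of the last $k$ coordinates vanish. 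Hence $X_1\cap U_\alpha$ is $V_\alpha$ with the loci $\{x_i = x_j = 0\}$ (for $n-k<i<j\leq n$) removed, and $\partial X_1\cap U_\alpha = \bigsqcup_{i=n-k+1}^{n} F_i$ where $F_i := \{x_i = 0,\ x_\ell > 0 \text{ for all } \ell > n-k \text{ with } \ell\neq i\}$, each face $F_i$ carrying the orientation induced from $X_1$ by the ``outward normal first'' convention, the outward normal along $F_i$ being $-\pa/\pa x_i$.

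Next I would discard the high-codimension strata and compute on the model. Since $X\setminus X_1$, and likewise the topological frontier of each $F_i$ inside $\{x_i=0\}$, have codimension $\geq 2$ (resp.\ $\geq 1$) in the ambient space, and since $\d\omega$ and $\omega$ are bounded with compact support, these null sets do not affect $\int_X\d\omega = \int_{X_1}\d\omega$ nor $\int_{\partial X_1}\omega = \sum_i\int_{F_i}\omega$. Writing $\omega = \sum_{r=1}^n(-1)^{r-1}g_r\,\d x_1\wedge\cdots\wedge\widehat{\d x_r}\wedge\cdots\wedge\d x_n$ with the $g_r$ smooth and compactly supported, one gets $\d\omega = \bigl(\sum_{r=1}^n \pa g_r/\pa x_r\bigr)\,\d x_1\wedge\cdots\wedge\d x_n$; integrating by Fubini with the $x_r$-variable innermost, the term with $r\leq n-k$ vanishes (integration of $\pa g_r/\pa x_r$ over all of $\R$), while the term with $r>n-k$ gives $\int_0^\infty \pa g_r/\pa x_r\,\d x_r = -g_r|_{x_r=0}$, which — once the factor $(-1)^{r-1}$ and the outward normal $-\pa/\pa x_r$ are accounted for, exactly as in the half-space case of ordinary Stokes — equals $\int_{F_r}\omega$ for the induced orientation. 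Summing over $r$ yields $\int_{V_\alpha}\d\omega = \int_{\partial X_1\cap U_\alpha}\omega$; re-summing over $\alpha$, using $\sum_\alpha\rho_\alpha\equiv 1$ and noting that charts with $k_\alpha=0$ contribute $0$ on both sides, gives the lemma.

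I expect the main obstacle to be orientation bookkeeping rather than any single estimate: one must verify that the ``$(-1)^{r-1}$-plus-outward-normal-$(-\pa/\pa x_r)$'' prescription in each chart glues into one coherent induced orientation on $\partial X_1$, compatibly on chart overlaps, and one must be slightly careful that the codimension-$\geq 2$ strata are genuinely negligible even though $X_1$ is non-compact when $X$ is — which is precisely why the hypothesis is compact support of $\omega$ rather than compactness of the domain. An alternative, more topological route would be to smooth the corners of $X$ into a manifold with boundary and transport the identity along the smoothing, but the atlas computation above is the more direct one.
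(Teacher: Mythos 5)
The paper offers no proof of this lemma at all --- it simply cites Conrad's differential geometry notes --- so there is no in-paper argument to compare yours against. Your proposal is the standard partition-of-unity proof of Stokes' theorem on manifolds with corners and is essentially complete: the reduction to a single chart modelled on $\R^{n-k}\times\R^{k}_{\geq 0}$ is legitimate because compact support makes $\sum_\alpha\rho_\alpha\omega$ a finite sum; the strata of codimension $\geq 2$ are Lebesgue-null both for the $n$-dimensional integral (so $\int_X\d\omega=\int_{X_1}\d\omega$) and, as frontiers of the open faces $F_i$ inside their hyperplanes, for the $(n-1)$-dimensional one (so $\int_{\pa X_1}\omega=\sum_i\int_{F_i}\omega$); and the Fubini computation on each face, with $\int_0^\infty \pa g_r/\pa x_r\,\d x_r=-g_r|_{x_r=0}$ for the boundary directions and total integrals of derivatives vanishing in the interior directions, is exactly the half-space case repeated once per face. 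The orientation bookkeeping you flag is the only place requiring care, and it is the same verification as for ordinary Stokes; it does glue, since the outward-normal-first convention is chart-independent. One small point of hygiene: the $\pa X_1$ you construct (the union of the open codimension-one strata) is a manifold \emph{without} boundary, so the lemma's phrase ``manifold with boundary'' is loose as stated, but this does not affect the integral or your argument.
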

\begin{proof}
    See \cite{conradDG}.
\end{proof}

Then $c_\Gamma = 0$ follows from that $\omega_\Gamma$ is a closed form:
\begin{center}$
    c_\Gamma = \displaystyle\int_{\pa \bar C^+_{n,m}} \omega_\Gamma = \displaystyle\int_{\bar C^+_{n,m}} \d\omega_\Gamma = \displaystyle\int_{\bar C^+_{n,m}} \d \left( \bigwedge_{e \in E_\Gamma} \d \varphi_e \right) = 0.
$\end{center}
To prove Lemma \ref{lem:III_strata_c}, we need to analyse the integral $\displaystyle\int_{\pa \bar C^+_{n,m}} \omega_\Gamma$, which has only contributions from boundary strata of codimension $1$. Informally, these strata represent the degenerate configurations where some points collapse together. They can be classified into the following two types:
\begin{enumerate}
    \item[$\mathsf{S1}$:] The vertices $S := \{k_1,...,k_i\}$ ($i \geq 2$) of the first type collapse together to a point in the upper half plane $\mathbb H$. Such boundary strata can locally be expressed as a product $\pa_S \bar C^+_{n,m} \simeq C_i \times C^+_{n-i+1,m}$.
     
    \item[$\mathsf{S2}$:] The vertices $S := \{k_1,...,k_i\}$ ($i \geq 0$) of the first type and $S' := \{\overline{\ell+1},...,\overline{\ell+j}\}$ ($j \geq 0$) of the second type collapse together to a point on the real line $\R$, where $2i + j \geq 2$ and $i+j \leq n + m -1$. Such boundary strata can locally be expressed as a product $\pa_{S,S'} \bar C^+_{n,m} \simeq C^+_{i,j} \times C^+_{n-i,m-j+1}$.
\end{enumerate}
The products represent what Kontsevich \cite{kon97} called ``looking through a magnifying glass'', as shown in the following diagram:
\begin{figure}[H]
    \centering
    \includegraphics[width = 0.45\textwidth]{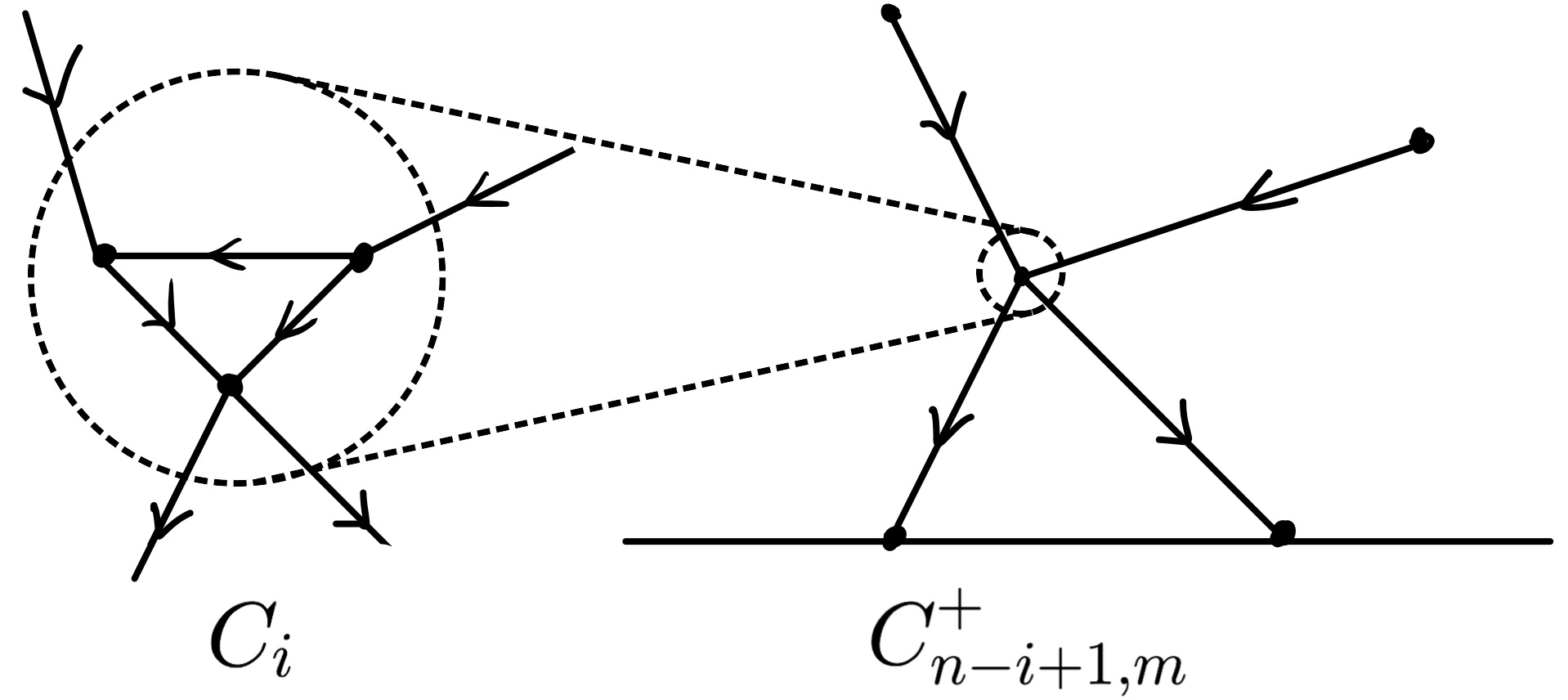}
    \caption{Type $\mathsf{S1}$ boundary stratum through a magnifying glass.}
\end{figure}
Suppose that we start with an admissible graph $\Gamma$. Let $\Gamma_0$ be the subgraph of $\Gamma$ spanned by the collapsing vertices, and $\Gamma_1$ be the quotient of $\Gamma$ by $\Gamma_0$. There are cases where $\Gamma_1$ fails to be admissible:
\begin{itemize}[nosep]
    \item $\Gamma_1$ may have multiple arrows.
    \item $\Gamma_1$ may have a bad edge, as shown in the following diagram:
\end{itemize}
\begin{figure}[H]
    \centering
    \includegraphics[width = 0.6\textwidth]{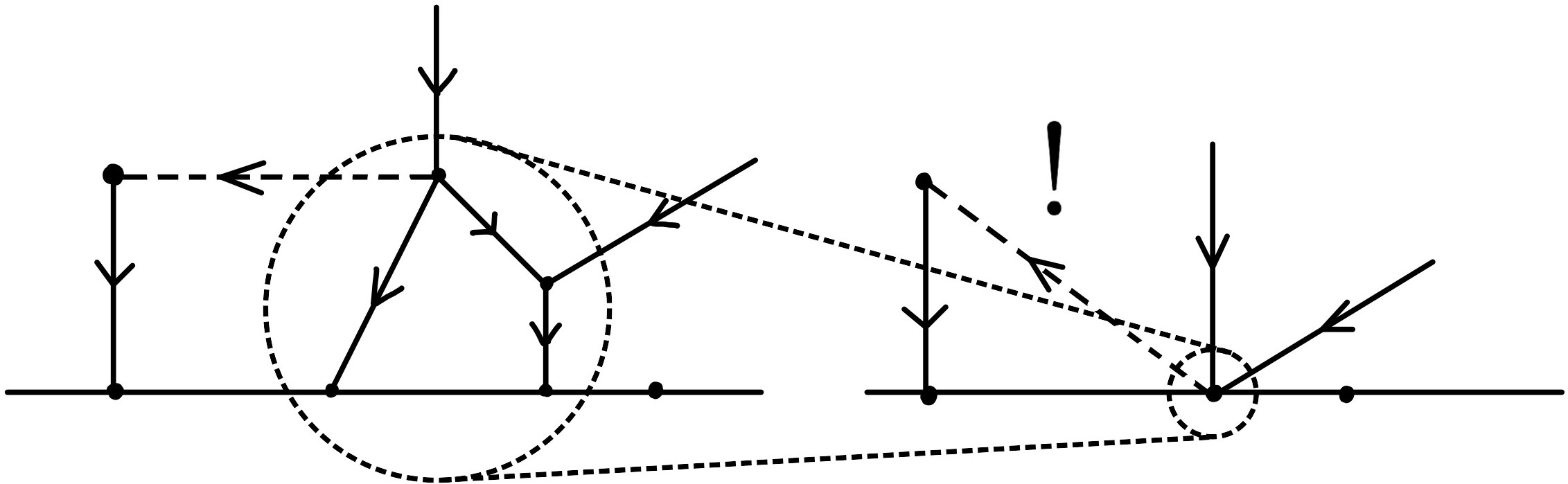}
    \caption{Type $\mathsf{S2}$ boundary stratum with a bad edge.}
\end{figure}
In both cases, we note that the corresponding weight integrals vanish. Therefore we can safely neglect these cases. The decomposition of a stratum $F \simeq F_0 \times F_1$ induces the decomposition of the weight integral:
\begin{center}$
    \displaystyle\int_{F} \omega_{\Gamma} = \pm \displaystyle\int_{F_0} \omega_{\Gamma_0} \displaystyle\int_{F_1} \omega_{F_1}.
$\end{center}
It requires great caution when dealing with the orientations and signs in the integrals. For details refer to \cite{arnal02}.

\subsubsection[]{Type $\bm{\mathsf{S1}}$: $\bm{i = 2}$}

Now we consider type $\mathsf{S1}$ strata with $i = 2$. In this case, $\Gamma_0$ contains a single edge $e$. Suppose that $s(e) = p_i$ and $t(e) = p_j$. Then $\Gamma_1$ is obtained by identifying $p_i$ with $p_j$ and contracting the edge (if exists) from $p_i$ to $p_j$. The polydifferential operator associated to this stratum is given by
\begin{center}$
    B_{\Gamma,R}(\xi_1\cdots\xi_n)(f_0,...,f_m) = B_{\Gamma_1}(\xi_i \diamond \xi_j\cdots \hat \xi_i \cdots \hat \xi_j \cdots \xi_n)(f_0,...,f_m).
$\end{center}

\subsubsection[]{Type $\bm{\mathsf{S1}}$: $\bm{i\geq 3}$}

We prove that the integral vanishes for $i \geq 3$.
\begin{lemma}[Vanishing of Kontsevich Integrals]{lem:III_van_Kont_int}
    For $n \geq 3$ and $\Gamma \in G_{n,0}$, the weight integral over the configuration space $C_n$ is zero:
    \begin{center}$
        \displaystyle\int_{\bar C_n} \omega_\Gamma = 0.
    $\end{center}
\end{lemma}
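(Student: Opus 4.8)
The plan is to deduce the vanishing from a reflection symmetry of the compactified configuration space $\bar C_n$, which is available here precisely because for $m=0$ the points $p_1,\dots,p_n$ range over the whole plane $\C$ rather than over the upper half-plane $\mathbb H$. This is exactly the feature that is absent for the spaces $\bar C^+_{n,m}$, and it is what makes a type-$\mathsf{S1}$ bulk collapse behave differently from a type-$\mathsf{S2}$ collapse onto the real line. So the first thing is to reduce to the only case with content: $\omega_\Gamma$ is a form of degree $\#(E_\Gamma)$ while $\bar C_n$ has dimension $2n-3$, so one may assume $\#(E_\Gamma)=2n-3$; in particular the number of edges is \emph{odd}.

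Next I would introduce the involution $\sigma\colon\bar C_n\to\bar C_n$ induced by complex conjugation $(p_1,\dots,p_n)\mapsto(\bar p_1,\dots,\bar p_n)$. Conjugation commutes with the $G'$-action $z\mapsto az+b$ ($a>0$, $b\in\C$) up to an element of $G'$, so $\sigma$ is a well-defined diffeomorphism of $C_n$, and it extends to the compactification since the bordification data — the arguments of the differences $p_i-p_j$ and the auxiliary points of $\RP^2$ — transform continuously under conjugation. Every angle map is an argument of a difference of coordinates, hence is negated by $\sigma$: $\varphi(\bar p,\bar q)=-\varphi(p,q)$, so $\sigma^*\d\varphi_e=-\d\varphi_e$ for each edge $e$, and therefore
\[
\sigma^*\omega_\Gamma=(-1)^{\#(E_\Gamma)}\omega_\Gamma=-\,\omega_\Gamma .
\]
Combined with the change-of-variables formula this gives $\int_{\bar C_n}\omega_\Gamma=\varepsilon\int_{\bar C_n}\sigma^*\omega_\Gamma=-\varepsilon\int_{\bar C_n}\omega_\Gamma$, where $\varepsilon=\pm1$ is the sign by which $\sigma$ acts on the orientation of $\bar C_n$; whenever $\varepsilon=+1$ this forces $\int_{\bar C_n}\omega_\Gamma=0$.

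The main obstacle is exactly this orientation sign: one must compute $\varepsilon$ against Kontsevich's orientation convention on $\bar C_n$, and where a naive reflection comes out orientation-reversing a supplementary input is needed, used in combination with $\sigma$. Two such inputs are natural: $\bar C_n$ carries a free $S^1$-action by rotations under which each $\d\varphi_e$ is invariant, so $\omega_\Gamma$ has no purely horizontal component and is pushed forward from the $(2n-4)$-dimensional quotient $\bar C_n/S^1$; and $\bar C_n$ is homotopy equivalent to $\operatorname{Conf}_n(\C)$, whose cohomology vanishes in degrees $>n-1$, so that for $n\geq3$ the closed form $\omega_\Gamma$ is in fact exact on $\bar C_n$ — a structural fact consistent with the vanishing and useful for organising the sign analysis. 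The careful bookkeeping of these signs and of the induced orientations on the boundary strata is carried out in \cite{arnal02}, following \cite{kon97}, and I would adopt their conventions. The geometric mechanism itself — a symmetry negating an odd-degree form while respecting orientation — is transparent, and with the signs in hand the conclusion $\int_{\bar C_n}\omega_\Gamma=0$, and hence the vanishing of the type-$\mathsf{S1}$ contributions with $i\geq3$, is immediate.
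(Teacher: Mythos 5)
Your reflection argument is a genuinely different route from the paper's proof (which fixes one edge to factor out an $S^1$, replaces $\d\arg z_i$ by $\d\log\lvert z_i\rvert$ using type reasons on a complex manifold, and then invokes a distributional Stokes theorem), but as it stands it has a parity gap that the proposed supplementary inputs do not close. The conjugation $\sigma$ acts on $\operatorname{Conf}_n\subset\C^n$ with determinant $(-1)^n$ and on the Lie algebra of $G'\cong\R_{>0}\times\C$ (spanned by the dilation, the real translation, and the imaginary translation) with determinant $-1$, so on the quotient $C_n$ it acts on the orientation by $\varepsilon=(-1)^{n-1}$. Hence your identity $-\int\omega_\Gamma=\varepsilon\int\omega_\Gamma$ forces the vanishing only for odd $n$; for even $n\geq 4$ it reads $\int\omega_\Gamma=\int\omega_\Gamma$ and gives nothing. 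This cannot be repaired by composing $\sigma$ with rotations or with permutations of the points: the rotation group is connected and each permutation acts on the even-dimensional factors of $\C^n$, so neither changes the orientation character of the involution.

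The two auxiliary facts you invoke also do not supply the missing half. First, $\d\varphi_e$ is $S^1$-invariant but not horizontal ($\iota_X\d\varphi_e=1$ for the rotation generator $X$), so $\omega_\Gamma$ is \emph{not} pushed forward from $\bar C_n/S^1$; what is true, and what the paper uses, is that $\omega_\Gamma=\d\varphi_{e_1}\wedge\bigwedge_{i\geq 2}\d(\varphi_{e_i}-\varphi_{e_1})$ with the second factor basic, which is how the factor $2\pi$ is extracted. Second, exactness of $\omega_\Gamma$ on the open space $C_n$ (from $\mathrm H^{2n-3}(\operatorname{Conf}_n)=0$ for $n\geq 3$) does not imply that its integral over the compactification $\bar C_n$ vanishes: a primitive need not extend to $\bar C_n$, and Stokes' theorem on the manifold with corners produces boundary terms that are precisely the hard part. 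Controlling them is exactly what the logarithm trick and the distributional-forms lemma in the paper's proof accomplish, so deferring the remaining cases to \cite{kon97} or \cite{arnal02} is really deferring to a different argument rather than finishing yours. To salvage the symmetry approach you would need an independent treatment of the even-$n$ case, or you should adopt the paper's analytic argument outright.
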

This result implies that type ${\mathsf{S1}}$ strata with $i \geq 3$ does not contribute to the coefficient $c_{\Gamma}$. For the proof we need a result from distribution theory, which is stated below:
\begin{lemma}[Distributional Forms]{}
    Let $U$ be a complex manifold with compactification $\bar U$. For a differential form $\omega$ on $U$ such that the coefficients of $\omega$ and $\d\omega$ are locally integrable over $\bar U$, we denote by $\mathscr I(\omega)$ the corresponding distributional form, that is, differential form with coefficients in the space of distributions. Then $\mathscr I$ commutes with the exterior differential: $\d(\mathscr I(\omega)) = \mathscr I(\d \omega)$. In addition, the integral $\displaystyle\int_U \omega$ is absolutely convergent and is equal to $\displaystyle\int_{\bar U} \mathscr I (\omega)$.
\end{lemma}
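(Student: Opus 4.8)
The plan is to reduce the statement to a local computation on $\bar U$ and then to establish both assertions — absolute convergence and commutation with $\d$ — by an exhaustion argument that cuts off a shrinking neighbourhood of the singular locus $Z := \bar U \setminus U$. Since $\mathscr I(\omega)$ and $\mathscr I(\d\omega)$ are currents on $\bar U$, the identity $\d(\mathscr I(\omega)) = \mathscr I(\d\omega)$ is local, and multiplying by the members of a partition of unity subordinate to a finite atlas of $\bar U$ (corner charts $\R^{a}\times\R^{b}_{\geq 0}$ included, for which one uses the manifold-with-corners form of Stokes' theorem cited in the text) I would reduce to a single chart. There $Z$ is a closed subset of positive codimension, hence Lebesgue-null, and the hypotheses say exactly that the coefficient functions of $\omega$ and of $\d\omega$ are in $L^1_{\mathrm{loc}}$. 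Because $\bar U$ is compact this gives $\int_{\bar U}|\omega| < \infty$, and because $U$ has full measure in $\bar U$ the improper integral $\int_U\omega$ converges absolutely and equals $\int_{\bar U}\mathscr I(\omega)$; that is the second assertion.

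For the first, unwinding the definition of the exterior derivative of a current turns $\d(\mathscr I(\omega)) = \mathscr I(\d\omega)$ into the requirement that $\int_{\bar U}\omega\wedge\d\phi = (-1)^{|\omega|+1}\int_{\bar U}\d\omega\wedge\phi$ for every smooth test form $\phi$ — i.e.\ that integration by parts holds for the $L^1$-form $\eta := \omega\wedge\phi$ (whose exterior derivative $\d\eta$ is again $L^1$), the content being that $Z$ produces no distributional boundary term beyond what ordinary Stokes already records. I would prove this by choosing smooth cutoffs $\chi_\varepsilon\colon\bar U\to[0,1]$ that vanish on a neighbourhood of $Z$, satisfy $\chi_\varepsilon\to1$ pointwise on $U$, and — this is the crucial property — satisfy $\|\d\chi_\varepsilon\wedge\omega\|_{L^1}\to0$. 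Then $\chi_\varepsilon\eta$ is smooth and compactly supported away from $Z$, so the smooth Stokes' theorem applies; expanding $\int_{\bar U}\d(\chi_\varepsilon\eta) = \int_{\bar U}\d\chi_\varepsilon\wedge\eta + \int_{\bar U}\chi_\varepsilon\,\d\eta$ and letting $\varepsilon\to0$, the first term vanishes by the choice of $\chi_\varepsilon$ (with $\phi$ bounded) and the second converges to $\int_{\bar U}\d\eta$ by dominated convergence with dominating function $|\d\eta|\in L^1$, while the left-hand side converges to the value supplied by the smooth Stokes' theorem; comparing limits gives the desired identity.

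The only non-formal ingredient, and the step I expect to be the main obstacle, is the construction of the cutoffs with $\|\d\chi_\varepsilon\wedge\omega\|_{L^1}\to0$: this is a capacity / removable-singularity estimate along $Z$. Where $Z$ has codimension at least $2$ one takes logarithmic cutoffs, equal to $1$ outside distance $\varepsilon$ of $Z$ and to $0$ inside distance $\varepsilon^{2}$, for which $|\d\chi_\varepsilon|\lesssim\big(|\log\varepsilon|\operatorname{dist}(\cdot,Z)\big)^{-1}$; integrating first in the directions transverse to $Z$ and using $\omega\in L^1$ bounds $\int|\d\chi_\varepsilon\wedge\omega|$ by a constant multiple of $|\log\varepsilon|^{-1}$, which tends to $0$. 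Along codimension-$1$ strata one argues instead with collar cutoffs and uses the extra information that $\omega$ (or its discontinuities) is controlled there — in Kontsevich's application the forms $\omega_\Gamma$ even admit smooth extensions to $\bar C_n$ and $\bar C^+_{n,m}$ by the compactification proposition, which makes this step harmless. Everything else — the partition of unity, the algebraic expansion, and the two passages to the limit — is routine bookkeeping with Stokes' theorem and dominated convergence, so the entire weight of the proof sits in this estimate near $Z$.
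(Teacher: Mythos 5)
The paper offers no argument for this lemma at all---its ``proof'' is a pointer to Lemma 6.6.1 of Kontsevich's paper---so there is nothing to compare your proposal against line by line, and it has to be judged on its own merits. The architecture you set up (localise with a partition of unity, test $\d(\mathscr I(\omega))=\mathscr I(\d\omega)$ against smooth forms, cut off a shrinking neighbourhood of $Z=\bar U\setminus U$, and show the boundary term $\int\d\chi_\varepsilon\wedge\omega\wedge\phi$ dies) is the right shape, and you correctly locate the entire weight of the proof in the estimate $\|\d\chi_\varepsilon\wedge\omega\|_{L^1}\to0$.

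That estimate, however, does not follow from $\omega\in L^1$, and the lemma read literally is false: take $U=\C^*$, $\bar U$ a closed disc about $0$, and $\omega=\d\arg z=\frac{x\,\d y-y\,\d x}{x^2+y^2}$. Its coefficients are $O(1/r)$, hence locally integrable, and $\d\omega=0$ on $U$; yet $\d(\mathscr I(\omega))=2\pi\delta_0\,\d x\wedge\d y\neq0=\mathscr I(\d\omega)$. Running your own logarithmic cutoff through this example gives $\int\lvert\d\chi_\varepsilon\wedge\omega\rvert=\frac{2\pi}{\lvert\log\varepsilon\rvert}\int_{\varepsilon^2}^{\varepsilon}\frac{\d r}{r}=2\pi$, which converges to the mass of the missing delta rather than to $0$: mere $L^1$ membership lets $\omega$ concentrate on $Z$ at exactly the critical rate, even when $Z$ has real codimension two---which is precisely the relevant case in Section 3.2. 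So the commutation with $\d$ holds only for singularities of a restricted type, and Kontsevich's actual argument is built around this: he first trades $\d\arg Z_\alpha$ for $\d\log\lvert Z_\alpha\rvert$ via the $(k,2N-k)$ type decomposition (the point being that $\d\log\lvert z\rvert$ is $\d$-closed as a current while $\d\arg z$ is not), resolves singularities to reach normal-crossing logarithmic poles, and verifies the commutation there by the explicit local computation. To salvage your proof you must build a hypothesis such as $\lvert\omega\rvert\lesssim\lvert\log\operatorname{dist}(\cdot,Z)\rvert^{k}$ near $Z$ into the statement---under which your cutoff estimate does close in codimension $\geq2$, since $\frac{1}{\lvert\log\varepsilon\rvert}\int_{\varepsilon^2}^{\varepsilon}\lvert\log r\rvert^{k}\,\d r\to0$---or restrict outright to the normal-crossing log forms that actually occur in the application.
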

\begin{proof}
    See Lemma 6.6.1 of \cite{kon97}.
\end{proof}
\begin{proof}[Proof of Lemma \ref{lem:III_van_Kont_int}.]
    We pick the edge $e_1$. Using the action of the Lie group $G'$, we may identify the configuration space as a subset of $\bar{\mathbb H}^n$ where $s(e_1)$ is fixed to the origin and $t(e_1)$ lies on the unit circle. This provides a decomposition $\bar C_n \simeq S^1 \times U$ where $U$ is a complex manifold. The weight integral factors accordingly:
    \begin{center}$
        \displaystyle\int_{\bar C_n} \omega_\Gamma = \displaystyle\int_{S^1 \times U} \d\varphi_{e_1} \wedge \bigwedge_{i=2}^{2n-3} \d(\varphi_{e_i}-\varphi_{e_1}) = 2\pi \displaystyle\int_{U} \bigwedge_{i=2}^{2n-3} \d\varphi_{e_i} = 2\pi \displaystyle\int_{U} \bigwedge_{i=1}^{2n-4} \d \arg z_i,
    $\end{center}
    where $z_i$ is the difference in the complex coordinates of $s(e_{i+1})$ and $t(e_{i+1})$. Now we employ a trick using logarithm. We claim that
    \begin{equation}
        \displaystyle\int_{U} \bigwedge_{i=1}^{2n-4} \d \arg z_i = \displaystyle\int_{U} \bigwedge_{i=1}^{2n-4} \d \log |z_i|. \label{equ:III_log_trick}
    \end{equation} 
    Indeed, both $\d \arg z_i$ and $\d \log |z_i|$ can be expressed as a sum of a holomorphic 1-form and an anti-holomorphic 1-form:
    \begin{equation}
        \d \arg z_i = \dfrac{1}{2\i}\left( \d \log z_i - \d \log \bar z_i \right); \qquad \d \log |z_i| = \dfrac{1}{2}\left( \d \log z_i + \d \log \bar z_i \right). \label{equ:III_log_arg}
    \end{equation}
    We know that the integration of a $(k,2N-k)$-form on a $2N$-dimensional complex manifold vanishes unless $k = N$. The non-vanishing contributions of \eqref{equ:III_log_arg} are the same in the integration, which justifies \eqref{equ:III_log_trick}. Finally, using the fact from distribution theory which is stated above, we have
        \begin{align*}
            \displaystyle\int_{U} \bigwedge_{i=1}^{2n-4} \d \log |z_i| &=  \displaystyle\int_{U} \d \left(\log |z_1|\bigwedge_{i=2}^{2n-4} \d \log |z_i|\right) = \displaystyle\int_{\bar U} \mathscr I\left(\d \left(\log |z_1|\bigwedge_{i=2}^{2n-4} \d \log |z_i|\right)\right) \\
            &= \displaystyle\int_{\bar U} \d\left(\mathscr I \left(\log |z_1|\bigwedge_{i=2}^{2n-4} \d \log |z_i|\right)\right) = 0 \qedhere
        \end{align*}
\end{proof}

\subsubsection[]{Type $\bm{\mathsf{S2}}$}

Now we consider type $\mathsf{S2}$ strata. Suppose that $\Gamma_0$ contains the vertices $k_1,...,k_i,\overline{\ell+1},...,\overline{\ell+j}$. The polydifferential operator associated to this stratum is given by
\begin{center}$
    B_{\Gamma,F}(\xi_J)(f_0,...,f_m) = B_{\Gamma_1}(\xi_{\sigma(1)}\cdots \xi_{\sigma(i)})\left( f_0,...,f_{\ell},B_{\Gamma_0}(\xi_I)(f_{\ell+1},...,f_{\ell+j}),f_{\ell+j+1},...,f_m \right)
$\end{center}
where the multi-indices $I = (k_1,...,k_i)$ and $J = \left\{ 1,...,n \right\} \setminus I$ (with increasing order).

These polydifferential operators match those appeared in the formality equation \ref{equ:II_formality_constr}. The matching of the coeffcients $c_{\Gamma,F}$ with $\displaystyle\int_F \omega_\Gamma$ is more subtle and we refer to \cite{arnal02}. This concludes the proof of Lemma \ref{lem:III_strata_c} and thus of the formality theorem \ref{thm:II_formality}.


\section{Globalisation of Star Product} 

Kontsevich proved the formality theorem on general manifolds in \cite{kon97} and \cite{kontsevich01alg}. In this section we summarise a slightly different approach of the globalisation of Kontsevich star product following Cattaneo, Felder and Tomassini in \cite{cattaneo02global} and \cite{cattaneo01}. Note that this approach proves a weaker result, as the general formality theorem does not follow immediately from the existence of star products. Another approach due to Dolgushev (\cite{dolgushev05}, \cite{dolgushev05thesis}) directly globalises the formality theorem is stronger, but we will not cover it due to length limit. 

\subsection{Formal Geometry}

The first step of globalisation is to construct two vector bundles over the Poisson manifold $M$, which fibre-wise represents the Poisson algebras and their quantisations. The idea is that two smooth functions with the same Taylor coeffcients are indistinguishable from the star product.

\begin{definition}
    Let $M$ be a $d$-dimensional smooth manifold. The $k$-th jet bundle $\mathrm J^k(M)$ over $M$ is defined fibre-wise as follows: for $x \in M$, $\mathrm J^k_x(M)$ is the set of functions $f \in \smooth{M}$ quotient by the equivalence relation that $f \sim g$ if and only if the partial derivatives of $f$ and $g$ at $p$ agree up to order $k$. The \textbf{infinite jet bundle} $\mathrm J^\infty(M)$ is obtained by taking the projective limit\footnote{Technically we have to specify the topology with respect to which we are taking the limit. Heuristically we consider $\mathrm{J}^\infty(M)$ as a formal projective limit in the category of smooth manifolds.} of the forgetful maps $\mathrm J^{k+1}(M) \to \mathrm J^k(M)$. That is, for $f,g \in \smooth{M}$, $[f] = [g] \in \mathrm J^\infty(M)$ if and only if $f$ and $g$ have the same Taylor expansion at some $x \in M$. 
\end{definition}

\newcommand{\Mcoor}{M^{\mathrm{coor}}}
\newcommand{\Maff}{M^{\mathrm{aff}}}
\newcommand{\phiaff}{\varphi^{\mathrm{aff}}}
We give a local characterisation of $\mathrm J^\infty(M)$. Let $\Mcoor := \mathrm J^\infty_0(\R^d,M)$ be the fibre at $0 \in \R^d$ of the infinite jet bundle $\mathrm J^\infty(\R^d,M)$. Note that $\operatorname{GL}(d,\R)$ acts on $\Mcoor$ by linear diffeomorphisms, which induces the quotient manifold $\Maff := \Mcoor/\operatorname{GL}(d,\R)$. In particular, $\Maff$ is a vector bundle over $M$ whose fibres are contractible, and hence admits a section $\phiaff\colon M \to \Maff$. We define the \textbf{associated bundle} $\tilde E_0$ over $\Maff$ by taking the fibre product of $\Mcoor$ with a formal power series:
\newcommand{\formpoww}[1]{{\left[ {}\!{\left[ #1 \right]}{}\! \right]}}
\begin{equation*}
    \tilde E_0 := \Mcoor \times_{\operatorname{GL}(d,\R)} \R\formpow{y^1,...,y^d}.
\end{equation*}
Then the pull-back via $\phiaff$ provides a vector bundle $E_0 := (\phiaff)^* \tilde E_0$ over $M$, with fibres $\R\formpoww{y^1,...,y^d}$. Note that $E_0$ is isomorphic to $\mathrm J^\infty(M)$, given by identifying the jets of $f$ at $p$ with the Taylor expansion of $f \circ \varphi^{-1}$ at $0$, where $\varphi_x\colon M \to \R^d$ is a chosen coordinate chart around $x \in M$ and $\varphi_x(x) = 0$. 

Now consider $M$ as a Poisson manifold. Let $E_0 = \mathrm J^\infty(M)$ be its infinite jet bundle. The canonical map $\smooth{M} \to E_0$ transports the Poisson structure of $\smooth{M}$ to $E_0$. 

\begin{proposition}[Grothendieck Connection]{prop:III_groth_con}
    There exists a canonical connection $D_0$ on $E_0$ such that 
    \begin{enumerate}[nosep]
        \item $D_0$ is a derivation: $D_0(fg) = f D_0g+ g D_0f$ for $f, g \in \Gamma(M,E_0)$.
        \item $D_0$ is flat: $D_0^2 = 0$. This implies that the $E_0$-valued differential forms 
       \begin{center}
            $\Omega^\bullet(E_0) := \Omega^\bullet(M) \otimes_{\smooth{M}} \Gamma(M,E_0)$
        \end{center}
        is a cochain complex with respect to the connection $D_0$. It provides the cohomology groups $\mathrm H^\bullet(E_0,D_0)$.
        \item The cohomology is concentrated at zeroth degree: $\mathrm H^n(E_0,D_0)$ for $n>0$.
        \item The canonical map $\smooth{M} \to E_0$ induces a Poisson isomorphism between $\smooth{M}$ and the algebra of $D_0$-horizontal sections $\mathrm H^0(E_0,D_0)$.
    \end{enumerate}
    We call $D_0$ the \textbf{Grothendieck connection} of $E_0$.
\end{proposition}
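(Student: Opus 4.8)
The plan is to construct $D_0$ by the device of \emph{formal geometry} and then extract the four properties from an explicit local model. First I would fix a coordinate chart $(U;x^1,\dots,x^d)$ of $M$ and use the trivialisation $\Gamma(U,E_0)\cong\smooth{U}\formpow{y^1,\dots,y^d}$ that sends the jet of $f$ at $x$ to its Taylor expansion $\sum_\alpha\frac{1}{\alpha!}(\pa^\alpha f)(x)\,y^\alpha$. Under this identification the tautological sections $j^\infty f$ ($f\in\smooth{U}$) satisfy $(\pa_{x^i}-\pa_{y^i})\,j^\infty f=0$, which pins down the local expression $D_0=\d_M-\sum_{i=1}^d\d x^i\otimes\pa_{y^i}$. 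I would then check that this operator is intertwined by the transition functions of $E_0$ — which are fibrewise changes of Taylor coordinates — so that the local $D_0$'s glue to a global connection; conceptually this is the flat connection induced on the associated bundle $E_0=(\phiaff)^*\tilde E_0$ by the canonical Maurer--Cartan connection on the coordinate bundle $\Mcoor$, whose connection $1$-form takes values in the Lie algebra of formal vector fields $\operatorname{Der}\R\formpow{y}$.

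With the local formula available, properties (1) and (2) are essentially free. Writing $\delta:=\sum_i\d x^i\otimes\pa_{y^i}$, both $\d_M$ and $\delta$ are degree-$1$ graded derivations of $\Omega^\bullet(E_0)$ for the fibrewise product of power series, hence so is $D_0=\d_M-\delta$, giving (1). For (2) one has $\d_M^2=0$, $\delta^2=\sum_{i,j}\d x^i\wedge\d x^j\,\pa_{y^i}\pa_{y^j}=0$ (antisymmetry of the wedge against symmetry of the partials), and $\d_M\delta+\delta\d_M=0$ since $\delta$ has $x$-constant coefficients and $\d_M\d x^i=0$; therefore $D_0^2=0$ and $(\Omega^\bullet(E_0),D_0)$ is a complex.

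The heart of the argument is (3), a \emph{formal Poincaré lemma}: over a contractible $U$ the complex $(\Omega^\bullet(E_0)|_U,D_0)$ is acyclic in positive degrees. Heuristically the formal substitution $y\mapsto y-x$ conjugates $D_0$ into the ordinary de Rham differential $\d_M\otimes 1$ on $\Omega^\bullet(U)\otimes\R\formpow{y}$, whose cohomology is concentrated in degree $0$; to make this honest I would instead produce an explicit contracting homotopy $h$ of Fedosov type — a combination of a $\delta^{-1}$-operator in the $y$-variables with the standard de Rham homotopy on $U$ — and verify $D_0h+hD_0=\id$ off $\ker D_0$. It then follows that $0\to\smooth{M}\xrightarrow{j^\infty}\Gamma(M,E_0)\xrightarrow{D_0}\Omega^1(E_0)\xrightarrow{D_0}\cdots$ is an exact sequence of sheaves; since each $\Omega^k_M\otimes_{\smooth{M}}\Gamma(E_0)$ is a fine sheaf (in particular soft and $\Gamma$-acyclic), the complex of global sections computes $\mathrm H^\bullet(M;\smooth{M})$, which vanishes in positive degree and equals $\smooth{M}$ in degree $0$. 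This yields (3) and the first half of (4).

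For the Poisson statement in (4) I would note that $\ker(D_0)$ on $\Gamma(M,E_0)$ is exactly the set of tautological sections $\{\,j^\infty f:f\in\smooth{M}\,\}$, and that $j^\infty$ is injective because a function whose full jet vanishes at every point is identically zero. Fibrewise $E_0|_x=\mathrm J^\infty_x(M)$ is the quotient of $\smooth{M}$ by the ideal of functions flat to infinite order at $x$, and this ideal is a Poisson ideal since $\{-,-\}$ is a bidifferential operator of bidegree $(1,1)$, so $\{f,g\}$ is infinitely flat at $x$ as soon as $f$ or $g$ is; hence $j^\infty$ is a homomorphism of Poisson algebras, and therefore a Poisson isomorphism onto $\mathrm H^0(E_0,D_0)$. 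The main obstacle is (3): the substitution $y\mapsto y-x$ is only a formal manipulation (it resums infinite series coefficientwise and does not literally preserve $\smooth{U}\formpow{y}$), so the genuine content is building the contracting homotopy and running the sheaf-cohomology bookkeeping; by contrast the identifications needed for (1), (2) and (4) are routine.
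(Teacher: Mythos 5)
Your construction is correct and is essentially the proof the paper relies on: the paper gives no in-house argument for this proposition, deferring to \S 2 of the Cattaneo--Felder--Tomassini reference it cites, where the Grothendieck connection is likewise obtained from the local model $D_0=\d_M-\sum_i\d x^i\otimes\pa_{y^i}$ characterised by annihilating the tautological sections $j^\infty f$, with acyclicity in positive degrees proved by a $\delta^{-1}$-type contracting homotopy (convergent in the $y$-adic topology) plus a fine-sheaf argument, and the Poisson statement reduced to the observation that the ideal of functions flat at a point is a Poisson ideal because the bracket is bidifferential. The one place where I would ask you to say a word more is the gluing step: your explicit formula for $D_0$ depends on the affine choice of formal exponential map on each chart, so the global well-definedness rests on the chart-independence of the tautological horizontal sections (equivalently on the Maurer--Cartan form on the coordinate bundle), but that is exactly how the cited source proceeds as well, so there is no gap.
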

\begin{proof}
    See \S 2 of \cite{cattaneo01}. 
\end{proof}


On the other hand, let $\tilde E$ be the associated bundle of $\R\formpow{\hbar}$-modules
    \begin{center}
        $\tilde E := \Mcoor \times_{\operatorname{GL}(d,\R)} \R\formpow{y^1,...,y^d,\hbar}$
    \end{center}
over $\Maff$. And let $E := (\phiaff)^*\tilde E = E_0\formpow{\hbar}$ be a vector bundle over $M$. We shall regard $E$ as the fibre-wise quantisation of $E_0$. More explicitly we would like to prove that 
\begin{theorem}[Deformation Quantisation on a Poisson Manifold]{thm:defo_quan_Poi}
    Let $(M,\pi)$ be a Poisson manifold and $E$ be the vector bundle over $M$ constructed above. There exists a flat connection $\bar D = D_0 + \displaystyle\sum_{n=1}^\infty \hbar^n D_n$ of $E$ such that the $\R\formpow{\hbar}$-algebra $\mathrm H^0(E,\bar D)$ is a quantisation of $\smooth{M}$.
\end{theorem}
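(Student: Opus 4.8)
The plan is to globalise Kontsevich's construction by \emph{formal geometry}: one works fibrewise on the infinite jet bundle $E_0 = \mathrm J^\infty(M)$, whose fibre at $x$ is the Poisson algebra $\R\formpoww{y^1,\dots,y^d}$, and glues the fibrewise quantisations together with the help of the flat Grothendieck connection $D_0$ of Proposition~\ref{prop:III_groth_con}.

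First I would promote the Kontsevich $\linf$-quasi-isomorphism $U\colon \tpolyrd \to \dpolyrd$ of Theorem~\ref{thm:II_formality} to a statement about the bundles $\mathscr{T}_{\mathsf{poly}}(E_0)$ and $\mathscr{D}_{\mathsf{poly}}(E_0)$ of fibrewise polyvector fields and fibrewise polydifferential operators. Forming the DGLAs of $E_0$-valued forms
\[
    \widehat{\frg}_{\mathscr{T}} := \Omega^\bullet\!\bigl(M, \mathscr{T}_{\mathsf{poly}}(E_0)\bigr), \qquad \widehat{\frg}_{\mathscr{D}} := \Omega^\bullet\!\bigl(M, \mathscr{D}_{\mathsf{poly}}(E_0)\bigr),
\]
with brackets $\SN{-,-}$ resp.\ $\Ger{-,-}$ extended by the wedge product and with differentials built from $D_0$ (and, on the target, the fibrewise Hochschild differential), the point is that $U_n = \sum_{m,\Gamma} W_\Gamma B_\Gamma$ involves only the affine coordinates of $\R^d$, hence is natural under affine coordinate changes. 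Using the coordinate bundle $\Mcoor$, the affine quotient $\Maff$, and a section $\phiaff\colon M \to \Maff$ (which exists because $\Maff$ has contractible fibres), one transfers $U$ fibrewise to an $\linf$-morphism $\widehat U\colon \widehat{\frg}_{\mathscr{T}} \to \widehat{\frg}_{\mathscr{D}}$ with first Taylor coefficient the fibrewise HKR map; here one uses that $U_n$ vanishes on $n$-tuples of vector fields for $n \geq 2$ (a degree count), so that the twist of $\widehat U$ by $D_0$ on both sides is controlled.

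Next I would transport a Maurer--Cartan element through $\widehat U$. The transported Poisson bivector $\pi_0 \in \Omega^0\!\bigl(M, \mathscr{T}^1_{\mathsf{poly}}(E_0)\bigr)$ is $D_0$-horizontal and satisfies $\SN{\pi_0,\pi_0} = 0$, so $\hbar\pi_0$ solves the formal Maurer--Cartan equation of $\widehat{\frg}_{\mathscr{T}}$ over $\R\formpow{\hbar}$. By the $\linf$-quasi-isomorphism theorem~\ref{thm:II_Linf_quasi}, $\widehat U_*(\hbar\pi_0) = \sum_{n\geq 1}\tfrac{1}{n!}\widehat U_n(\pi_0\cdots\pi_0)$ is again Maurer--Cartan in $\widehat{\frg}_{\mathscr{D}}$. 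Decomposing it by form degree: the $0$-form part is a global fibrewise bidifferential operator $B = \sum_{n\geq 1}\hbar^n B_n$, so that $\star := \mu + B$ is a fibrewise product; the $1$-form part is a connection correction, giving $\bar D := D_0 + \sum_{n\geq 1}\hbar^n D_n$ on $E = E_0\formpow{\hbar}$. Since the first Taylor coefficient is HKR, $\widehat U_1(\pi_0) = \pi_0$, whence $f \star g - g \star f = \hbar\pi_0(f,g)$ modulo $\hbar^2$ and $\bar D \equiv D_0 \pmod{\hbar}$. Unpacking the Maurer--Cartan equation degree by degree in the form degree then yields exactly $\Ger{\star,\star} = 0$ (associativity of $\star$), the statement that $\bar D$ is a derivation of $\star$, and $\bar D^2 = 0$ (flatness) — the last provided the components of $\widehat U_*(\hbar\pi_0)$ of form degree $\geq 2$ vanish, which is the crux of the argument and where Kontsevich's vanishing lemmas for configuration-space weight integrals (in the spirit of Lemma~\ref{lem:III_van_Kont_int}) together with the acyclicity in Proposition~\ref{prop:III_groth_con}(3) enter.

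Finally, flatness of $\bar D$ makes $\Omega^\bullet(E, \bar D)$ a complex; an $\hbar$-adic filtration argument, with associated graded $\bigl(\Omega^\bullet(E_0, D_0)\bigr)\formpow{\hbar}$, combined with Proposition~\ref{prop:III_groth_con}(3), shows $\mathrm H^n(E, \bar D) = 0$ for $n > 0$ and $\mathrm H^0(E, \bar D) \cong \smooth{M}\formpow{\hbar}$ as an $\R\formpow{\hbar}$-module. Because $\bar D$ is a derivation of $\star$, the product $\star$ restricts to $\mathrm H^0(E, \bar D)$, and the resulting associative $\R\formpow{\hbar}$-bilinear product on $\smooth{M}\formpow{\hbar}$ equals pointwise multiplication modulo $\hbar$ with antisymmetric first-order term the Poisson bracket; that is, $\mathrm H^0(E, \bar D)$ is a deformation quantisation of $\smooth{M}$, as claimed. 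The main obstacle I expect is precisely the formal-geometry step: $U$ is only affine-equivariant, not $\operatorname{GL}(d,\R)$-equivariant, so the globalisation genuinely requires the bundle $\Maff$ and the section $\phiaff$, and one must control both the $D_0$-twist of $\widehat U$ and the vanishing of the higher form-degree components of $\widehat U_*(\hbar\pi_0)$ so that $\bar D$ is an honest flat connection rather than a curved one; this, with the attendant bookkeeping of orientations and signs, is where the bulk of the work in \cite{cattaneo02global} resides.
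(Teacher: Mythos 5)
Your route is genuinely different from the one the paper takes. The paper follows Cattaneo--Felder--Tomassini: it extracts from the Taylor coefficients of $U$ the three objects $P(\pi)$, $A(\xi,\pi)$, $F(\xi,\eta,\pi)$, shows via the formality equation that $D = \d + A^M$ is a \emph{Fedosov} connection whose Weyl curvature is the $2$-form $F^M$ (Proposition \ref{prop:III_fed_conn}), and only then flattens it by adding an inner term $[\gamma,-]_\star$ with $\gamma \in \hbar\Omega^1(E)$ solving $F^M + D\gamma + \gamma \star \gamma = 0$ recursively in powers of $\hbar$, which is possible because $\mathrm H^2(E_0,D_0) = 0$. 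What you propose --- twisting the fibrewise formality morphism into an $\linf$-morphism of complexes of $E_0$-valued forms and pushing the Maurer--Cartan element $\hbar\pi_0$ through it --- is essentially Dolgushev's globalisation, which the paper explicitly sets aside. That route is viable and in fact proves more (it globalises the formality morphism itself, not just the star product), but it requires first constructing flat Fedosov-type differentials on \emph{both} resolutions, which is its own recursive obstruction argument and not something you get from the Grothendieck connection alone.

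The concrete gap is at the flatness step. You assert that $\bar D^2 = 0$ provided the components of $\widehat U_*(\hbar\pi_0)$ of form degree $\geq 2$ vanish, and suggest this follows from configuration-space vanishing lemmas together with acyclicity. It does not: the form-degree-$2$ component is precisely the Weyl curvature $F^M$, and Lemma \ref{lem:III_PAF_lower}(iii) only gives $F(\xi,\eta,\pi) = \mathcal O(\hbar)$, not $F = 0$; no weight-integral vanishing lemma kills it. What the Maurer--Cartan equation gives you for free is only $\bar D^2 = [F^M,-]_\star$, i.e.\ a curved Fedosov connection. The missing idea is the order-by-order correction: using the Bianchi identity $DF^M = 0$ and $\mathrm H^2(E_0,D_0) = 0$ from Proposition \ref{prop:III_groth_con}, one solves $F^M + D\gamma + \gamma\star\gamma = 0$ degree by degree in $\hbar$, starting with $D_0\gamma_1 = -F^M_1$, and replaces $D$ by $\bar D = D + [\gamma,-]_\star$. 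Your final step (the $\hbar$-adic filtration computing $\mathrm H^0(E,\bar D) \cong \smooth{M}\formpow{\hbar}$) is sound and matches the paper's Lemma \ref{lem:III_quan_H0}, but it only applies once flatness has actually been established.
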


Following the spirit of Fedosov's construction (\cite{fedosov94}), the next step is to construct a connection of $E$ and transport the Kontsevich star product to the sections $\Gamma(M,E)$ of $E$. The formality theorem in $\R^d$ provides such constructions.

\subsection{Star Product and Connection on Deformed Bundle}

Recall that Kontsevich's construction \eqref{equ:III_U_coeff} provides an $\linf$-quasi-isomorphism $U\colon \tpolyrd \to \dpolyrd$. We make the following definition:

\begin{definition}
    Let $\pi$ be a Poisson bivector field and $\xi,\eta \in \Gamma(\R^d,\tanb \R^d)$. Define:
    \begin{align*}
        P(\pi) &:= \displaystyle\sum_{k=0}^\infty \dfrac{\hbar^k}{k!}U_k(\pi\cdots\pi), \\ 
        A(\xi,\pi) &:= \displaystyle\sum_{k=0}^\infty \dfrac{\hbar^k}{k!}U_{k+1}(\xi\cdot\pi\cdots\pi), \\ 
        F(\xi,\eta,\pi) &:= \displaystyle\sum_{k=0}^\infty \dfrac{\hbar^k}{k!}U_{k+2}(\xi\cdot\eta\cdot\pi\cdots\pi).
    \end{align*}
\end{definition}
Note that $P(\pi) \in \dpolyrd[1]\formpow{\hbar}$ is exactly the star product.
We expect that $A(-,\pi)$ and $F(-,-,\pi)$ behave like a connection $1$-form and its curvature $2$-form. 

    
\begin{definition}
    The Lie algebra $\Gamma(\tanb \R^d)$ of vector fields acts on the space of \emph{local polynomial maps} $\mathscr U$ (\emph{cf.\ }\S 3 of \cite{cattaneo02global}). We consider the \textbf{Chevalley--Eilenberg complex}: 
    \footnote{
        In general, let $\frg$ be a Lie algebra and $V$ be a $\frg$-module. We can similar construct the Chevalley--Eilenberg complex $C^\bullet_{\mathsf{CE}}(\frg,V)$. The cohomology of this complex is called the \emph{Lie algebra cohomology of $\frg$ with value in $V$}. See Chapter 7 of \cite{weibel} for details.
    }

    \begin{center}
        $C_{\mathsf{CE}}^\bullet(\tanb \R^d,\mathscr U) := \operatorname{Hom}_{\R}(\bigwedge^\bullet\Gamma(\tanb \R^d),\mathscr U)$,
    \end{center} 
    where $S \in C_{\mathsf{CE}}^p(\tanb \R^d,\mathscr U)$ sends $X_1 \wedge \cdots \wedge X_p$ to $S(X_1 \wedge \cdots \wedge X_p,\pi) \in \dpolyrd$ which depends polynomially on $\pi$. Therefore $P \in C_{\mathsf{CE}}^0(\tanb \R^d,\mathscr U)\formpow{\hbar}$, $A \in C_{\mathsf{CE}}^1(\tanb \R^d,\mathscr U)\formpow{\hbar}$, and $F \in C_{\mathsf{CE}}^2(\tanb \R^d,\mathscr U)\formpow{\hbar}$.
\end{definition}
The Chevalley--Eilenberg differential $\delta\colon C_{\mathsf{CE}}^p(\tanb \R^d,\mathscr U) \to C_{\mathsf{CE}}^{p+1}(\tanb \R^d,\mathscr U)$ is defined as
    \begin{equation*}
        \begin{aligned}
            (\delta S)\left(X_{1}, \ldots, X_{k}+1, \pi\right):=&\sum_{i=1}^{k+1}(-1)^{i} \left.\frac{\mathrm{d}}{\mathrm{d} t} S{\left(X_{1}, \ldots, \hat{X}_{i}, \ldots, X_{k+1},\left(\Phi_{X}^{t}\right)_{*} \pi\right)}\right|_{t=0} \\
            &+\sum_{i<j} (-1)^{i+j} S{\left(\left[X_{i}, X_{j}\right], X_{1}, \ldots, \hat{X}_{i}, \ldots, \hat{X}_{j}, \ldots, X_{k+1}, \pi\right)}, 
            \end{aligned}
    \end{equation*}
    where $\Phi_{X}^{t}$ is the flow along the vector field $X$. From this differential and the formality equation \eqref{equ:II_graph_constr} follows the lemma:
\begin{lemma}[]{}
    \begin{enumerate}[label = (\roman*), nosep]
        \item $P \circ (P \otimes \id - \id \otimes P) = 0$;
        \item $(\delta P)(\xi, \pi) = P(\pi) \circ(A(\xi, \pi) \otimes \id +\id \otimes A(\xi, \pi))-A(\xi, \pi) \circ P(\pi)$;
        \item $(\delta A)(\xi, \eta, \pi) = P(\pi) \circ(F(\xi, \eta, \pi) \otimes \id-\id \otimes F(\xi, \eta, \pi))-A(\xi, \pi) \circ A(\eta, \pi)+A(\eta, \pi) \circ A(\xi, \pi)$;
        \item $(\delta F)(\xi, \eta, \zeta, \pi) = -A(\xi, \pi) \circ F(\eta, \zeta, \pi)-A(\eta, \pi) \circ F(\zeta, \xi, \pi)-A(\zeta, \pi) \circ F(\xi, \eta, \pi)$.
    \end{enumerate}
\end{lemma}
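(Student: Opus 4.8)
The plan is to derive all four identities in a uniform way by specializing the formality equation \eqref{equ:II_graph_constr} (equivalently the $\linf$-morphism constraint \eqref{equ:II_formality_constr}) to arguments consisting of $k$ copies of the Poisson bivector $\pi$ together with $0$, $1$, $2$, or $3$ copies of the distinguished vector fields $\xi,\eta,\zeta$, and then re-summing the series $\sum_k\hbar^k/k!$ that defines $P$, $A$, $F$.

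First I would record the shape of \eqref{equ:II_formality_constr} in this situation. Since $\tpolyrd$ carries the zero differential, $\QT$ has only the quadratic Taylor coefficient $(\QT)_2$, given by $\pm\SN{-,-}$, so $(\QT)^n_n=0$ and the constraint reads $(\QD)_1 U_n + \tfrac12(\QD)_2 U^2_n = U_{n-1}\circ(\QT)^{n-1}_n$, with $(\QD)_1(f)=(-1)^{|f|}\dhochm f = (-1)^{|f|}\Ger{\mu,f}$ and $(\QD)_2 = \pm\Ger{-,-}$ as displayed just before the lemma. The only Schouten brackets that can appear on the right, when the arguments are drawn from $\{\xi,\eta,\zeta,\pi,\dots,\pi\}$, are $\SN{\pi,\pi}=0$ (vanishing because $\pi$ is Poisson), $\SN{X,\pi}=\lag_X\pi$, and $\SN{X,Y}=[X,Y]$ for $X,Y\in\{\xi,\eta,\zeta\}$. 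The book-keeping point is that after re-summation each $\SN{X,\pi}$-term carries the factor $k\cdot\hbar^k/k! = \hbar^k/(k-1)!$ coming from differentiating one of the $k$ copies of $\pi$, and so assembles into a flow-derivative $\pm\frac{\d}{\d t}\big|_{t=0}(\,\cdot\,)\big((\Phi_X^t)_*\pi\big)$ — exactly the first sum in the Chevalley--Eilenberg differential $\delta$ — while the $[X,Y]$-terms assemble into its second sum $\sum_{i<j}(-1)^{i+j}S([X_i,X_j],\dots)$.

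Then I would run the four cases. With all $n$ arguments equal to $\pi$ the right side vanishes and the left collapses, using $P(\pi)=\mu+O(\hbar)$, to $\tfrac12\Ger{P(\pi),P(\pi)}=0$; a vanishing Gerstenhaber self-bracket being associativity, this is (i) — which is in fact just the associativity of the Kontsevich star product of Theorem \ref{thm:III_kont_star}. With one distinguished argument $\xi$ the $(\QD)_1$-term together with the shuffle sum in the $(\QD)_2$-term combines into $\Ger{P(\pi),A(\xi,\pi)} = P(\pi)\circ(A(\xi,\pi)\otimes\id+\id\otimes A(\xi,\pi)) - A(\xi,\pi)\circ P(\pi)$, while on the right only $\SN{\xi,\pi}=\lag_\xi\pi$ survives and re-sums to $\pm(\delta P)(\xi,\pi)$; this is (ii). With two distinguished arguments $\xi,\eta$ the left side produces in addition the cross term $\pm\big(A(\xi,\pi)\circ A(\eta,\pi)-A(\eta,\pi)\circ A(\xi,\pi)\big)$ alongside $\Ger{P(\pi),F(\xi,\eta,\pi)}$, and the right side re-sums $\lag_\xi\pi$, $\lag_\eta\pi$ and $[\xi,\eta]$ into $\pm(\delta A)(\xi,\eta,\pi)$; this is (iii). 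Finally, for three distinguished arguments $\xi,\eta,\zeta$ the decisive observation is a degree count: $U_{k+3}(\xi\cdot\eta\cdot\zeta\cdot\pi^{k})$ has degree $-2$ in $\dpolyrd$, but $\dpolyrd[j]=0$ for $j<-1$, so both the would-be $G(\xi,\eta,\zeta,\pi)$ and the $(\QD)_1$-term of \eqref{equ:II_formality_constr} vanish identically; the left side thus reduces to the $A\circ F$ cross terms alone, while the right side re-sums to $\pm(\delta F)(\xi,\eta,\zeta,\pi)$, yielding (iv).

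The real work — and the main obstacle — is entirely in the signs: one must match the Koszul signs produced by the d\'ecalage isomorphisms \eqref{equ:II_dec_iso} and by the shuffle sums defining $\QD,\QT$ against the $(-1)^i$ and $(-1)^{i+j}$ of $\delta$, and check that they come out with the orientations stated. I would handle this by working throughout in the shifted spaces $\tpolyrd[1]$, $\dpolyrd[1]$ on which $U$, $\QT$, $\QD$ are naturally defined, passing to the Schouten--Nijenhuis and Gerstenhaber conventions only at the end, in the style of \cite{arnal02}; concretely I would carry out (ii) with all signs explicit and then note that (iii) and (iv) follow by the identical mechanism with more indices. The full sign analysis is in \S 3 of \cite{cattaneo02global}.
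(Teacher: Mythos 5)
Your approach is essentially the paper's own: the paper gives no written proof of this lemma, asserting only that it ``follows from the Chevalley--Eilenberg differential and the formality equation \eqref{equ:II_graph_constr}'' and deferring to \cite{cattaneo02global}, and your proposal is a correct, more detailed elaboration of exactly that derivation (specialising the $\linf$-constraint to arguments $\xi,\eta,\zeta,\pi,\dots,\pi$, re-summing the $\hbar^k/k!$ series, and using the degree count $\dpolyrd[-2]=0$ for part (iv)). The sign bookkeeping you defer to \cite{arnal02} and \cite{cattaneo02global} is likewise left implicit in the paper, so nothing is missing relative to the source.
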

In the lemma, (i) is the associativity equation of the star product, and (ii),(iii),(iv) describe the changes of $P,A,F$ under a coordinate transformation induced by the vector field $\xi$. These identifies are useful in the construction of the Fedosov connection $D$ in Proposition \ref{prop:III_fed_conn}. In addition, the following lowest order properties of $P,A,F$ are required.

\begin{lemma}[]{lem:III_PAF_lower}
    \begin{enumerate}[label = (\roman*), nosep]
        \item $P(\pi)(f \otimes g)=f g+  \pi(\d f, \d g)\hbar + \mathcal O(\hbar^2)$.
        \item $A(\xi,\pi) = \xi + \mathcal O(\hbar)$, and $A(\xi,\pi) = \xi$ if $\xi$ is a linear vector field. 
        \item $F(\xi,\eta,\pi) = \mathcal O(\hbar)$.
        \item $P(\pi)(1 \otimes f)=P(\pi)(f \otimes 1)=f$.
        \item $A(\xi,\pi)1 = 0$.
    \end{enumerate}
\end{lemma}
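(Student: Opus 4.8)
The plan is to prove each item by isolating the relevant powers of $\hbar$ in the definitions of $P$, $A$, $F$ and identifying which admissible graphs contribute. Recall that $U_0(1)=\mu$ is the ordinary multiplication, $U_1=U_1^{(0)}$ is the HKR map (as shown above), and $U_n$ ($n\geq 2$) is the weighted sum $\sum_{m,\Gamma}W_\Gamma B_\Gamma$ over $\Gamma\in G_{n,m}$. Since $U_j$ feeds into the coefficient of $\hbar^{j}$ in $P$, of $\hbar^{j-1}$ in $A$, and of $\hbar^{j-2}$ in $F$, each statement about the two lowest orders reduces to understanding $U_0$, $U_1$ and $U_2$ on the prescribed arguments, together with the structural constraints (connectedness, vertex valences, dimension matching $\#(E_\Gamma)=\dim\bar C^+_{n,m}$) built into the construction of Section~\ref{sec:Kont_con}.

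The statements (i), (iv), (v) and the first half of (ii) are then essentially immediate. For (i): the $\hbar^0$ term of $P(\pi)$ is $U_0(1)=\mu$, giving $fg$, and the $\hbar^1$ term is $U_1(\pi)=U_1^{(0)}(\pi)$, which by the HKR formula for a bivector equals $\pi(\d f,\d g)$; all other terms are $\mathcal O(\hbar^2)$. For (ii), first part: the $\hbar^0$ term of $A(\xi,\pi)$ is $U_1(\xi)=U_1^{(0)}(\xi)=\xi$, because the HKR map is the identity on vector fields. For (iv) and (v): if a second-type vertex of some $\Gamma\in G_{n,m}$ with $n\geq 1$ carries the constant function $1$, then either it has an incoming edge, so $B_\Gamma=0$ (a derivative kills $1$), or it has no incoming edge, contradicting connectedness of $\Gamma$; hence only $U_0$ (for $P$) and $U_1$ (for $A$) survive, and $P(\pi)(1\otimes f)=\mu(1,f)=f=\mu(f,1)=P(\pi)(f\otimes 1)$, while $A(\xi,\pi)1=\xi(1)=0$.

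For (iii) it suffices to show $U_2(\xi\cdot\eta)=0$ for vector fields $\xi,\eta$. This output lies in $\dpolyrd[-1]$, so it is assembled from graphs in $G_{2,0}$; but each first-type vertex carries a vector field and so has exactly one outgoing edge, forcing $\#(E_\Gamma)=2$, which exceeds $\dim\bar C_2=1$, so $\int_{\bar C_2}\omega_\Gamma=0$ on dimensional grounds (concretely, the only candidate graph has the opposing edges $1\to 2$ and $2\to 1$, and its $\omega_\Gamma$ is a $2$-form on the $1$-dimensional $\bar C_2$). Hence $F(\xi,\eta,\pi)=\mathcal O(\hbar)$.

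The genuinely delicate point is the second half of (ii): $U_{k+1}(\xi\cdot\pi^{\otimes k})=0$ for every $k\geq 1$ when $\xi$ is a linear vector field. The relevant graphs lie in $G_{k+1,1}$, the distinguished $\xi$-vertex $v_0$ having out-degree $1$ and each of the $k$ $\pi$-vertices out-degree $2$. Since a linear vector field has vanishing second derivatives, $B_\Gamma(\xi\cdot\pi^{\otimes k})=0$ unless $v_0$ has at most one incoming edge. If $v_0$ has no incoming edge, the coordinate of $v_0$ enters $\omega_\Gamma$ only through the single angle form of its unique outgoing edge, so integrating over the two-dimensional fibre of the forgetful map $\bar C^+_{k+1,1}\to\bar C^+_{k,1}$ annihilates $\omega_\Gamma$ and $W_\Gamma=0$. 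The remaining case, $v_0$ with exactly one incoming edge, where $\partial\xi$ is a constant matrix that merely relabels an index through $v_0$, is the main obstacle; I would treat it by splicing $v_0$ out of $\Gamma$ and arguing that the resulting weight integral vanishes, either via a further edge/dimension count for $\bar C^+_{k,1}$ or via the reflection symmetry of Kontsevich's weights that pairs such graphs with opposite sign, as worked out in \cite{arnal02} and \cite{cattaneo02global}. Once all weights for $k\geq 1$ vanish, $A(\xi,\pi)$ collapses to its $\hbar^0$ term $\xi$, completing (ii).
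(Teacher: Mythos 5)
The paper states this lemma without proof (the surrounding text simply defers to \cite{cattaneo02global}), so there is no in-paper argument to compare against and your proposal must stand on its own. Most of it does: your identification of the $\hbar^0$ and $\hbar^1$ coefficients of $P$ and $A$ with $U_0(1)=\mu$ and the HKR map gives (i) and the first half of (ii); the observation that the two-edge graphs in $G_{2,0}$ carry a $2$-form on the $1$-dimensional $\bar C_2$ gives (iii); and the connectedness argument forcing every second-type vertex of a contributing graph with $n\geq 1$ to receive at least one derivative (which then annihilates the constant function $1$) gives (iv) and (v). These are all correct and are the arguments one would expect.

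The genuine gap is the one you flag yourself: the sub-case of (ii) in which the vertex $v_0$ carrying the linear field $\xi$ has exactly one incoming and one outgoing edge. Your reduction to this case is right (linearity kills graphs with two or more edges ending at $v_0$, and the no-incoming-edge case dies because a single $1$-form cannot saturate the $2$-dimensional fibre of $\bar C^+_{k+1,1}\to\bar C^+_{k,1}$), but this remaining case is precisely the nontrivial content of $A(\xi,\pi)=\xi$, and neither of your two suggested strategies works as stated. A further edge/dimension count after splicing out $v_0$ cannot succeed: removing $v_0$ and its two edges leaves $2k-1$ edges against $\dim\bar C^+_{k,1}=2k-1$, so there is no dimensional obstruction; and the vanishing is not produced by a sign-reversing pairing of graphs but holds weight by weight. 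What is actually needed is Kontsevich's vanishing lemma for the weight of any graph containing a first-type vertex with exactly one incoming and one outgoing edge (\S 7.3 of \cite{kon97}, invoked for exactly this purpose in \cite{cattaneo02global}), proved by integrating $\d\varphi(a,p)\wedge\d\varphi(p,b)$ over the position $p$ of $v_0$ in the fibre and showing this integral is zero. Until that computation is supplied, part (ii) is not established.
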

With these tools in hand, we can now lift the star product to the deformed bundle $E$. A key observation in \cite{cattaneo02global} is that, since the Kontsevich star product is $\operatorname{GL}(d,\R)$-equivariant, after taking an open cover of $M$ by contractible coordinate charts and fixing the representatives of $\operatorname{GL}(d,\R)$-equivalence classes on each chart, we may assume that $E$ is a \emph{trivial bundle} over $M$, with fibres isomorphic to $\R\formpow{y^1,...,y^d,\hbar}$. A section $f \in \Gamma(M,E)$ is a map $x \mapsto f_x$, where $f_x = f_x(y) \in \R\formpow{y^1,...,y^d,\hbar}$. 

\begin{definition}
    Let $\star = P(\pi)$ be the Kontsevich star product on $\R^d$. It induces an associative product on the space $\Gamma(M,E)$ of sections of $E$ by:
    \begin{center}
        $(f \star g)_x := P(\pi_x)(f_x \otimes g_x) = P((\varphi^{-1}_x)_*\pi)(f_x \otimes g_x)$,
    \end{center}
    where $\varphi_x\colon U \to \R^d$ is the coordinate chart around $x$.
\end{definition}

\begin{definition}
    We define a connection $D\colon \Gamma(M,E) \to \Omega(M) \otimes_{\smooth{M}} \Gamma(M,E)$ of $E$ by 
    \begin{center}
        $ (Df)_x := \d_x f + A^{M}_xf $,
    \end{center}
    where $\d_x$ is the de Rham differential of $x \mapsto f_x$ with value in $\R\formpow{y^1,...,y^d,\hbar}$, and $A^{M}_x$ is a connection 1-form satisfying $A^{M}_x(\xi) := A(\hat \xi_x, \pi_x)$ for $\xi \in \Gamma(M,\tanb M)$. Here $\hat \xi_x$ is defined by
    \begin{equation*}
        \hat \xi_x(y) = -\sum_{i,j,k=1}^{\dim M}\xi^i\left( \left( \dfrac{\pa \varphi_x}{\pa y} \right)^{-1} \right)^k_j \dfrac{\pa \varphi_x^j}{\pa x^i}\dfrac{\pa}{\pa y^k},
    \end{equation*}
    which is a formal vector field in $y$ and depends on the coordinate chart $\varphi_x$. Furthermore, it can be shown from Lemma \ref{lem:III_PAF_lower} that $D$ is independent of the choice of $\varphi_x$ and hence is a well-defined global connection of $E$.
\end{definition}

\subsection{Fedosov's Construction}
Similar to Fedosov's construction (\cite{fedosov94}) for a star product on the symplectic manifold, we would like to construct a flat Fedosov connection $\bar D$ of $E$ so that the $\bar D$-horizontal sections are isomorphic to the $\R\formpow{\hbar}$-algebra $\smooth{M}\formpow{\hbar}$.
\newcommand{\starbra}[1]{{\left[ #1 \right]_\star}}
\begin{proposition}[]{prop:III_fed_conn}
    Let $F^M \in \Omega^2(E)$ be the $E$-valued 2-form $x \mapsto F^M_x$, where $F^M_x(\xi,\eta) := F(\hat \xi_x, \hat\eta_x,\pi_x)$ for $\xi,\eta \in \Gamma(M,\tanb M)$. Then $D$ is a \textbf{Fedosov connection} with \textbf{Weyl curvature} $F^M$. More specifically, for $f,g \in \Gamma(M,E)$,
    \begin{enumerate}[nosep]
        \item $D(f \star g) = Df \star g + f \star Dg$;
        \item $D^2f = \starbra{F^M,f} := F^M \star f - f \star F^M$;
        \item Bianchi identity: $DF^M = 0$.
    \end{enumerate}
\end{proposition}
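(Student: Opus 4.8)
The plan is to derive all three assertions as the global shadows of the purely algebraic relations (ii), (iii), (iv) of the preceding lemma, which are precisely the coordinate-change incarnations of the Leibniz rule, the curvature formula, and the Bianchi identity. The only extra ingredient is the principle of formal geometry: after trivialising $E$ over a contractible coordinate chart $(U,\varphi_x)$, an infinitesimal motion of the base point $x$ in the direction of a vector field $\xi$ is realised on the fibre $\R\formpow{y^1,\dots,y^d,\hbar}$ by the flow of the formal vector field $\hat\xi_x$, and the locally-read Poisson structure $\pi_x=(\varphi_x^{-1})_*\pi$ varies by the corresponding formal Lie derivative. The first step is therefore to record, up to the standard sign conventions, the identity
\[ (\d_x \pi_x)(\xi) \;=\; -\,\tfrac{\d}{\d t}\Big|_{t=0}(\Phi^t_{\hat\xi_x})_*\pi_x \]
together with the analogous statement for how $\d_x$ acts on the $x$-dependence of a section; this converts the de Rham differential $\d_x$ on the $x$-argument of any expression built from $P$, $A$, $F$ and $\pi_x$ into the Chevalley--Eilenberg differential $\delta$ evaluated on the $\hat\xi_x$. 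One also notes, from Lemma \ref{lem:III_PAF_lower}, that $A(\xi,\pi)=\xi$ on linear $\xi$ and $A(\xi,\pi)1=0$, so that $D$, $A^M$ and $F^M$ are independent of the choice of $\varphi_x$ and the patchwise computations below glue to global statements.

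For (1) I would expand $D(f\star g)=\d_x\bigl(P(\pi_x)(f_x\otimes g_x)\bigr)+A^M_x\bigl(P(\pi_x)(f_x\otimes g_x)\bigr)$. The differential $\d_x$ acts on the three arguments $f_x$, $g_x$, $\pi_x$: on the first two it gives, by the ordinary Leibniz rule, $P(\pi_x)(\d_x f_x\otimes g_x)+P(\pi_x)(f_x\otimes\d_x g_x)$, while on $\pi_x$ it produces $(\delta P)(\hat\xi_x,\pi_x)$ applied to $f_x\otimes g_x$ via the dictionary above. Substituting relation (ii) for $\delta P$ and using $A^M_x(\xi)=A(\hat\xi_x,\pi_x)$, the terms $A^M_x\circ P$ cancel and the remainder regroups exactly into $P(\pi_x)\bigl((\d_x f_x+A^M_x f_x)\otimes g_x+f_x\otimes(\d_x g_x+A^M_x g_x)\bigr)=Df\star g+f\star Dg$.

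For (2) I would compute $D^2 f=\d_x(A^M_x f)+A^M_x\wedge\d_x f+A^M_x\wedge(A^M_x f)$; the middle term cancels the part of $\d_x(A^M_x f)$ in which $\d_x$ hits $f_x$, leaving $(\d_x A^M_x)f$, which through the dictionary is $(\delta A)(\hat\xi_x,\hat\eta_x,\pi_x)$ applied to $f_x$. Plugging in relation (iii), the quadratic terms $A(\hat\xi_x,\pi_x)\circ A(\hat\eta_x,\pi_x)$ match the antisymmetrised $A^M_x\wedge(A^M_x f)$ and cancel, and since $F^M_x(\xi,\eta)=F(\hat\xi_x,\hat\eta_x,\pi_x)$ one is left with $P(\pi_x)(F^M_x\otimes f_x)-P(\pi_x)(f_x\otimes F^M_x)=F^M\star f-f\star F^M=\starbra{F^M,f}$. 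For (3), expanding $DF^M=\d_x F^M_x+A^M_x\wedge F^M_x$ (the connection $D$ extended to $E$-valued forms), the dictionary turns $\d_x F^M_x$ into $(\delta F)(\hat\xi_x,\hat\eta_x,\hat\zeta_x,\pi_x)$; relation (iv) states precisely that $\delta F$ equals $-A\circ F$ in the three cyclic placements, and because $F$ is $\dpolyrd[-1]$-valued (fibrewise a function) these are exactly the terms of $A^M_x\wedge F^M_x$, so they cancel and $DF^M=0$.

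I expect the main obstacle to be making the formal-geometry dictionary precise and controlling the signs: identifying $\d_x$ on the $x$-dependence of $\pi_x$ and of sections with the Chevalley--Eilenberg differential along the $\hat\xi_x$, checking that the antisymmetrisation built into the wedge products in $D^2$ and in $A^M\wedge F^M$ matches the symmetry of relations (iii)--(iv), and verifying chart-independence from Lemma \ref{lem:III_PAF_lower}. These are exactly the bookkeeping points carried out in detail in \cite{cattaneo02global}; conceptually all the real content is already encoded in relations (ii)--(iv), hence ultimately in the formality equation \eqref{equ:II_graph_constr}.
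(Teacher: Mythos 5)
The paper states this proposition without proof, deferring to \cite{cattaneo02global}, but its remark that identities (ii)--(iv) of the preceding lemma ``are useful in the construction of the Fedosov connection'' is exactly the strategy you follow, so your outline reconstructs the intended argument correctly: (ii) yields the Leibniz rule after the $A\circ P$ terms cancel against $A^M_x$, (iii) yields the Weyl curvature after the $A\circ A$ terms cancel against $A^M\wedge A^M$, and (iv) yields the Bianchi identity. The one ingredient you flag but do not supply --- the formal-geometry dictionary converting $\d_x$ (acting on $\pi_x$, on sections, and on the $x$-dependence of $\hat\xi_x$) into the Chevalley--Eilenberg differential $\delta$ along the $\hat\xi_x$, which in particular requires the Maurer--Cartan/flatness property of the map $\xi\mapsto\hat\xi_x$ underlying the Grothendieck connection of Proposition \ref{prop:III_groth_con} --- is precisely the bookkeeping carried out in \cite{cattaneo02global}, and identifying it (together with the signs) as the only real obstacle is accurate.
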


To prove Theorem \ref{thm:defo_quan_Poi}, we need to deform $D$ into a \emph{flat connection} $\bar D$ which is still Fedosov. We need the following lemma:
\begin{lemma}[]{}
    Suppose that $D$ is a Fedosov connection on $E$ with Weyl curvature $F$. Then for $\gamma \in \Omega^1(E)$, $\bar D := D + \starbra{\gamma,-}$ is a Fedosov connection with Weyl curvature $F + D\gamma + \gamma \star \gamma$.
\end{lemma}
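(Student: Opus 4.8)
The plan is to verify the three Fedosov-connection axioms for $\bar D := D + \starbra{\gamma,-}$ directly, using the corresponding axioms for $D$ (Proposition \ref{prop:III_fed_conn}) together with the associativity of $\star$ and the graded Leibniz rule for $D$. Throughout, $\gamma \in \Omega^1(E)$ and the bracket $\starbra{-,-}$ is the $\star$-commutator, extended to $E$-valued forms with the usual Koszul signs; the key formal facts I would record first are that $\starbra{-,-}$ satisfies a graded Jacobi identity (being a commutator of the associative product $\star$), that $D$ acts on $\starbra{\alpha,\beta}$ by the graded Leibniz rule (which is axiom (i) of Proposition \ref{prop:III_fed_conn} promoted to forms), and that $\starbra{\alpha,-}\starbra{\beta,-} - (-1)^{|\alpha||\beta|}\starbra{\beta,-}\starbra{\alpha,-} = \starbra{\starbra{\alpha,\beta},-}$ as operators, again by graded Jacobi.

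First I would check that $\bar D$ is a derivation of $\star$, i.e. axiom (1): for $f,g \in \Gamma(M,E)$,
\begin{align*}
    \bar D(f \star g) &= D(f \star g) + \starbra{\gamma, f\star g} \\
    &= Df \star g + f \star Dg + \starbra{\gamma,f}\star g + f \star \starbra{\gamma,g} \\
    &= \bar Df \star g + f \star \bar Dg,
\end{align*}
where the middle step uses axiom (1) for $D$ and the derivation property of $\starbra{\gamma,-}$ with respect to $\star$ (which follows from associativity of $\star$, exactly as in the classical Fedosov computation).

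Next I would compute $\bar D^2$. Expanding,
\begin{align*}
    \bar D^2 f &= (D + \starbra{\gamma,-})(Df + \starbra{\gamma,f}) \\
    &= D^2 f + D\starbra{\gamma,f} + \starbra{\gamma,Df} + \starbra{\gamma,\starbra{\gamma,f}}.
\end{align*}
By axiom (2) for $D$, $D^2 f = \starbra{F,f}$. By the graded Leibniz rule for $D$ acting on $\starbra{\gamma,f}$ (noting $|\gamma|=1$), $D\starbra{\gamma,f} = \starbra{D\gamma,f} - \starbra{\gamma,Df}$, so the two middle terms collapse to $\starbra{D\gamma,f}$. For the last term, graded Jacobi gives $\starbra{\gamma,\starbra{\gamma,f}} = \tfrac12\starbra{\starbra{\gamma,\gamma},f} = \starbra{\gamma\star\gamma,f}$ (using $|\gamma|$ odd, so $\gamma\star\gamma = \tfrac12\starbra{\gamma,\gamma}$). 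Summing, $\bar D^2 f = \starbra{F + D\gamma + \gamma\star\gamma, f}$, which is axiom (2) for $\bar D$ with the claimed Weyl curvature $\bar F := F + D\gamma + \gamma\star\gamma$. Finally, for the Bianchi identity $\bar D\bar F = 0$, I would apply $\bar D$ to $\bar F$: the term $DF$ vanishes by axiom (3) for $D$; $D(D\gamma) = D^2\gamma = \starbra{F,\gamma}$; $D(\gamma\star\gamma) = D\gamma\star\gamma - \gamma\star D\gamma$ has graded-commutator form; and the correction terms $\starbra{\gamma,F}$, $\starbra{\gamma,D\gamma}$, $\starbra{\gamma,\gamma\star\gamma}$ must be added. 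Collecting these and using graded Jacobi once more (in particular $\starbra{\gamma,\gamma\star\gamma}$ is a triple bracket that vanishes by Jacobi since all entries equal $\gamma$) shows everything cancels. The main obstacle I anticipate is purely bookkeeping: getting every Koszul sign right in the extension of $\starbra{-,-}$ and the Leibniz/Jacobi identities to $E$-valued forms of mixed degree, since a single sign error propagates through all three verifications; the conceptual content is entirely standard once those identities are pinned down.
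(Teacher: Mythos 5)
Your verification is correct: the Leibniz, curvature, and Bianchi computations all go through with the signs as you state them (in particular $D\starbra{\gamma,f}+\starbra{\gamma,Df}=\starbra{D\gamma,f}$, $\starbra{\gamma,\starbra{\gamma,f}}=\starbra{\gamma\star\gamma,f}$ via graded Jacobi for odd $\gamma$, and the pairwise cancellations in $\bar D\bar F$). The paper states this lemma without proof, deferring to \cite{cattaneo02global}, and your direct computation is exactly the standard argument one would supply there.
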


The construction of a flat Fedosov connection $\bar D$ is now reduced to finding a solution to the differential equation $F^M + D\gamma + \gamma \star \gamma = 0$. It can be proven that $\gamma$ can be constructed order by order as a result of the vanishing of higher cohomology of $E_0$:
\begin{lemma}[]{}
    Suppose that $D_0$ is a flat connection of $E_0$ with $\mathrm H^2(E_0,D_0) = 0$. Then there exists $\gamma \in \hbar\Omega^1(E)$ such that $\bar D := D + \starbra{\gamma,-}$ has zero Weyl curvature.
\end{lemma}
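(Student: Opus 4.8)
The plan is a Fedosov-type iteration: construct $\gamma=\sum_{k\ge1}\hbar^k\gamma_k$ order by order in $\hbar$ so that the Weyl curvature of $\bar D=D+\starbra{\gamma,-}$, which by the previous lemma equals $F^M+D\gamma+\gamma\star\gamma$, is annihilated one power of $\hbar$ at a time, the solvability at each step being guaranteed by the vanishing $\mathrm H^2(E_0,D_0)=0$ from Proposition \ref{prop:III_groth_con}. The three structural facts I would use are: $D$ reduces to $D_0$ modulo $\hbar$ (since $A(\xi,\pi)=\xi+\mathcal O(\hbar)$ by Lemma \ref{lem:III_PAF_lower}(ii) and the zeroth-order action is exactly the one defining the Grothendieck connection); the product $\star$ reduces to pointwise multiplication modulo $\hbar$; and $F^M=\mathcal O(\hbar)$ by Lemma \ref{lem:III_PAF_lower}(iii).

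First I would set up the recursion. Write $\gamma_{(n-1)}:=\sum_{k=1}^{n-1}\hbar^k\gamma_k$, let $\bar D_{(n-1)}:=D+\starbra{\gamma_{(n-1)},-}$, and let $\Omega_{(n-1)}:=F^M+D\gamma_{(n-1)}+\gamma_{(n-1)}\star\gamma_{(n-1)}$ be its Weyl curvature. The inductive hypothesis is $\Omega_{(n-1)}=\mathcal O(\hbar^n)$, with base case $\Omega_{(0)}=F^M=\mathcal O(\hbar)$; write $\Omega_{(n-1)}=\hbar^nR_n+\mathcal O(\hbar^{n+1})$. Because $\gamma$ starts at order $\hbar$, passing from $\gamma_{(n-1)}$ to $\gamma_{(n)}:=\gamma_{(n-1)}+\hbar^n\gamma_n$ changes $\Omega$ at order $\hbar^n$ only through the term $D(\hbar^n\gamma_n)\equiv\hbar^nD_0\gamma_n$, the new cross terms being of order $\ge\hbar^{n+1}$. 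Hence the order-$\hbar^n$ equation to kill is $D_0\gamma_n=-R_n$, and solving it yields $\Omega_{(n)}=\mathcal O(\hbar^{n+1})$, propagating the induction; the series $\gamma=\sum_k\hbar^k\gamma_k$ then converges $\hbar$-adically (each $\gamma_k$ affecting the curvature only from order $\hbar^k$ on), lies in $\hbar\Omega^1(E)$, and gives $\bar D$ with zero Weyl curvature.

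The crucial point, and the step I expect to be the main obstacle, is that $R_n$ is $D_0$-closed, so that $D_0\gamma_n=-R_n$ is solvable by $\mathrm H^2(E_0,D_0)=0$. For this I would invoke that $\bar D_{(n-1)}$ is itself a Fedosov connection with Weyl curvature $\Omega_{(n-1)}$ — this is precisely the content of the previous lemma, since $\gamma_{(n-1)}$ is an $E$-valued $1$-form — so that the Bianchi identity $\bar D_{(n-1)}\Omega_{(n-1)}=0$ holds. Extracting the coefficient of $\hbar^n$ and using $\bar D_{(n-1)}\equiv D_0\pmod\hbar$ then gives $D_0R_n=0$. The care needed here is that it is the \emph{corrected} connection $\bar D_{(n-1)}$, not merely $D$, whose Bianchi identity must be used, and that its reduction modulo $\hbar$ really is $D_0$. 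Everything else — $\hbar$-adic convergence, the membership $\gamma\in\hbar\Omega^1(E)$, independence of the coordinate charts used in the trivialisation, and the passage to the limiting curvature $0$ — is routine bookkeeping, and globality is automatic because $D$, $F^M$ and $\star$ are already defined globally on $E$.
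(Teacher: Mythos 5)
Your proposal is correct and is exactly the argument the paper has in mind: the paper states only that ``$\gamma$ can be constructed order by order as a result of the vanishing of higher cohomology of $E_0$'', and your induction — killing the $\hbar^n$ coefficient $R_n$ of the Weyl curvature by solving $D_0\gamma_n=-R_n$, with $D_0$-closedness of $R_n$ supplied by the Bianchi identity of the corrected Fedosov connection $\bar D_{(n-1)}$ and solvability by $\mathrm H^2(E_0,D_0)=0$ — fills in precisely those details. No discrepancy with the paper's (sketched) route.
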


Finally, we would like to construct a quantisation from $\mathrm H^0(E_0,D_0)$ to $\mathrm H^0(E,\bar D)$. Let $\operatorname{End}(E_0)$ be vector bundle over $M$ of the fibre endomorphisms of $E_0$. Then $\Omega^\bullet(\operatorname{End}(E_0))$ is a differential graded algebra, with the differential given by the super-commutator:
\begin{center}
    $D_0(\Phi) := D_0 \circ \Phi - (-1)^p\Phi \circ D_0, \qquad \Phi \in \Omega^p(E_0)$.
\end{center}
Note that if $D = D_0 + \displaystyle\sum_{n=1}^\infty \hbar^n D_n$ is a connection on $E = E_0 \formpow{\hbar}$, then $D_n \in \Omega^1(\operatorname{End}(E_0))$ for $n \in \Z_{>0}$.

\begin{lemma}[]{lem:III_quan_H0}
    Suppose that $D = D_0 + \smash{\displaystyle\sum_{n=1}^\infty \hbar^n D_n}$ is a flat Fedosov connection on $E$, and $\mathrm H^1(\operatorname{End}(E_0),D_0) = 0$. Then there exists a formal power series $\rho = \id + \displaystyle\sum_{n=1}^\infty \hbar^n \rho_n \in \Omega^0(\operatorname{End}(E_0))\formpow{\hbar}$ which induces an $\R\formpow{\hbar}$-isomorphism $\mathrm H^0(E_0,D_0)\formpow{\hbar} \to \mathrm H^0(E,\bar D)$.
\end{lemma}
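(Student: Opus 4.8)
Denote the given flat Fedosov connection by $\bar D = D_0 + A$, where $A := \sum_{n\geq 1}\hbar^n D_n \in \hbar\,\Omega^1(\operatorname{End}(E_0))\formpow{\hbar}$, and regard $D_0$ also as the flat connection on $E = E_0\formpow{\hbar}$ obtained by $\R\formpow{\hbar}$-linear extension. The plan is to build $\rho$ order by order in $\hbar$ as a gauge transformation that conjugates $D_0$ into $\bar D$, and then to observe that such a $\rho$ carries $D_0$-horizontal sections bijectively onto $\bar D$-horizontal ones. Concretely, I look for $\rho = \id + \sum_{n\geq 1}\hbar^n\rho_n$ with $\rho_n\in\Omega^0(\operatorname{End}(E_0))$ solving $\bar D\circ\rho = \rho\circ D_0$ on $\Gamma(M,E)$; since $D_0$ is a degree-$1$ derivation of $\Omega^\bullet(\operatorname{End}(E_0))$ with $D_0\id = 0$, this is equivalent to the single equation
\begin{equation*}
    D_0\rho = -\,A\circ\rho \qquad\text{in } \Omega^1(\operatorname{End}(E_0))\formpow{\hbar},
\end{equation*}
where $D_0\rho := D_0\circ\rho - \rho\circ D_0$ is the induced differential on endomorphism-valued forms.

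Extracting the coefficient of $\hbar^n$ turns this into the recursion
\begin{equation*}
    D_0\rho_n = -\,D_n - \sum_{m=1}^{n-1} D_m\circ\rho_{n-m} =: -\,\Theta_n ,
\end{equation*}
whose right-hand side depends only on $\rho_1,\dots,\rho_{n-1}$ (with $\Theta_1 = D_1$). So I proceed by induction on $n$: given $\rho_1,\dots,\rho_{n-1}$, I must solve $D_0\rho_n = -\Theta_n$, and by the hypothesis $\mathrm{H}^1(\operatorname{End}(E_0),D_0) = 0$ a solution exists as soon as $\Theta_n$ is $D_0$-closed.

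Proving $D_0\Theta_n = 0$ is the heart of the argument and the only non-formal step. Set $\rho^{(n)} := \id + \sum_{k=1}^{n-1}\hbar^k\rho_k$; the recursion for $k\leq n-1$ gives $D_0\rho^{(n)} + A\circ\rho^{(n)} \equiv \hbar^n\Theta_n \pmod{\hbar^{n+1}}$. Apply $D_0$ to the left-hand side and use three facts: (i) $D_0^2 = 0$ on $\Omega^\bullet(\operatorname{End}(E_0))$, because $D_0$ is flat on $E_0$; (ii) the graded Leibniz rule, giving $D_0(A\circ\rho^{(n)}) = (D_0 A)\circ\rho^{(n)} - A\circ(D_0\rho^{(n)})$; and (iii) flatness of $\bar D$, which says $D_0 A + A^2 = 0$ (the curvature of $D_0 + A$ vanishes and $D_0^2 = 0$, where $A^2$ denotes the wedge product paired with composition of endomorphisms). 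Substituting $D_0\rho^{(n)} \equiv -A\circ\rho^{(n)} \pmod{\hbar^n}$ and noting that $A = O(\hbar)$ absorbs the discrepancy into order $\hbar^{n+1}$, one finds
\begin{equation*}
    D_0\big(D_0\rho^{(n)} + A\circ\rho^{(n)}\big) \equiv -\,A^2\circ\rho^{(n)} + A^2\circ\rho^{(n)} = 0 \pmod{\hbar^{n+1}} .
\end{equation*}
Comparing $\hbar^n$-coefficients yields $D_0\Theta_n = 0$, closing the induction and producing $\rho = \id + \sum_{n\geq 1}\hbar^n\rho_n$.

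Finally, $\rho$ is invertible in $\Omega^0(\operatorname{End}(E_0))\formpow{\hbar}$ since it equals $\id$ modulo $\hbar$, and the identity $\bar D\circ\rho = \rho\circ D_0$ shows that $\rho$ restricts to a bijection $\ker(D_0) \to \ker(\bar D)$ on $\Gamma(M,E)$, i.e. an $\R\formpow{\hbar}$-linear isomorphism $\mathrm{H}^0(E,D_0)\xrightarrow{\sim}\mathrm{H}^0(E,\bar D)$. Since $D_0$ is $\R\formpow{\hbar}$-linear, $\mathrm{H}^0(E,D_0) = \mathrm{H}^0(E_0,D_0)\formpow{\hbar}$, giving the claimed isomorphism; transporting the $\star$-product on the target (a $\star$-subalgebra of $\Gamma(M,E)$ by the Fedosov property $\bar D(f\star g) = \bar Df\star g + f\star\bar Dg$) back through $\rho^{-1}$ then supplies the deformation quantisation of $\smooth{M}$ needed for Theorem \ref{thm:defo_quan_Poi}. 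I expect the main obstacle to be the careful bookkeeping of signs and of the composition-versus-wedge structure in $\Omega^\bullet(\operatorname{End}(E_0))$ when verifying $D_0\Theta_n = 0$; the rest is a standard Fedosov-type order-by-order recursion.
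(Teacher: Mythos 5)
Your proposal is correct and follows exactly the route the paper intends: the paper only remarks that ``$\rho$ is constructed recursively as a result of the vanishing of cohomology,'' and your order-by-order gauge-transformation argument --- setting up $\bar D\circ\rho=\rho\circ D_0$, reducing to $D_0\rho_n=-\Theta_n$, verifying $D_0\Theta_n=0$ from flatness of $D_0$ and $\bar D$, and invoking $\mathrm H^1(\operatorname{End}(E_0),D_0)=0$ --- is precisely the standard Fedosov-type recursion being alluded to. The details you supply (including the invertibility of $\rho$ modulo $\hbar$ and the identification $\mathrm H^0(E,D_0)=\mathrm H^0(E_0,D_0)\formpow{\hbar}$) are all sound.
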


The proof of the lemma is similar to the previous one, where $\rho$ is constructed recursively as a result of the vanishing of cohomology. Combining Proposition \ref{prop:III_groth_con} and Lemma \ref{lem:III_quan_H0}, we obtain an isomorphism $\tau\colon \smooth{M}\formpow{\hbar} \to \mathrm H^0(E,\bar D)$. An explicit star product on $\smooth{M}$ is given by 
\begin{equation*}
    (f \star g)(x) = \left[\tau^{-1}{\left( \tau(f)_x(y) \star \tau(g)_x(y)  \right)}\right]_{y=0}.
\end{equation*}
It remains to check that $f \star g$ satisfies the conditions for a star product on $\smooth{M}$ and that it recovers the Poisson bivector $\pi$ in the first order of $\hbar$. For the details refer to \cite{cattaneo02global}. This concludes the proof of Theorem \ref{thm:defo_quan_Poi}.

\chapter*{Conclusion}
\addcontentsline{toc}{chapter}{Conclusion}

We covered the basic aspects of deformation quantisation and deformation theory which culminates in the $\linf$-quasi-isomorphism theorem. We have proven the formality theorem in $\R^d$ following Kontsevich's paper \cite{kon97}. This provides a classification theorem for deformation quantisation at least locally on any Poisson manifold. The globalisation of star product is outlined following Cattaneo, Felder and Tomassini's work \cite{cattaneo02global}. For the complete proof of formality on general Poissons we refer to Dolgushev \cite{dolgushev05}. 

There are some aspects in \cite{kon97} and the subsequent paper \cite{kontsevich99op} we did not cover in this dissertation. An important point is the non-uniqueness of Kontsevich star product. The moduli space of deformation quantisations can be identified with a principal homogeneous space of the Grothendieck--Teichm\"uller group. This perspective is studied in \cite{kontsevich99op}. On the other hand, Kontsevich proved in \cite{kon97} that the formality quasi-isomorphism he constructed is canonical in the sense that it preserves the cup products on the differential graded Lie algebras. In addition, we deliberately avoid using the language from formal geometry in the dissertation. It is not an essential ingredient of the formality theorem, but it does provide geometric intuitions for $\linf$-morphisms. 

Some problems regarding deformation quantisation remains open after Kontsevich's work. For example, his approach to the problem was entirely based on formal deformation quantisation. One may postulate under what condition can we obtain a finite radius of convergence for the star product. For such strict deformation quantisation, it is possible to study the representation theory, which has a closer relation to quantum mechanics, in a similar way to geometric quantisation. A comprehensive theory for strict deformation quantisations and their representations on a general Poisson manifold is yet to be discovered.

In summary, it is clear that Kontsevich's work not only concludes a long-standard conjecture in deformation quantisation, but, more importantly, generates new insights into geometry and mathematical physics. In particular, we have seen the extensive utilisation of homological methods and higher algebras in quantum field theory in recent studies. This demonstrates once more the tight connection between pure mathematics and theoretical physics.


\nocite{atiyah}
\nocite{arnold}
\nocite{takhtajan}
\nocite{kontsevich02defo}
\nocite{schlessinger68}
\nocite{waldmann05}
\nocite{huybrechts05}
\printbibliography[heading=bibintoc]

\end{document}